\theoremstyle{plain}
\newtheorem{theorem}{Theorem}
\newtheorem{lemma}{Lemma}
\newtheorem{sublemma}{Sublemma}
\newtheorem{corollary}{Corollary}
\theoremstyle{definition}
\newtheorem{definition}{Definition}
\theoremstyle{remark}
\newtheorem{remark}{Remark}
\newcommand{\In}{\textbf{In}}
\newcommand{\Out}{\textbf{Out}}
\newcommand{\InOut}{\textbf{In/Out}}
\newcommand{\Field}{{\textbf{field} }}
\newcommand{\StateC}{}
\newcommand{\State}[1]{#1 \;}
\newcommand{\Comment}{\tcp*[r]}
\newcommand{\WhileC}[1]{\While(\tcc*[f]{#1})}
\newcommand{\ForAllC}[1]{\ForAll(\tcc*[f]{#1})}
\newcommand{\IfC}[1]{\If(\tcc*[f]{#1})}
\newcommand{\vdown}[1]{{#1}_{\downarrow}}
\newcommand{\vup}[1]{{#1}_{\uparrow}}
\newcommand{\vbot}[1]{{#1}_{\bot}} 
\newcommand{\bigO}{\mathcal{O}}
\newcommand{\Lfixed}{L_{\mathrm{fixed}}}
\newcommand{\qacc}{q_{\text{acc}}} 
\newcommand{\qrej}{q_{\text{rej}}}
\newcommand{\IPrec}{\mathrm{IPrec}}
\newcommand{\ISucc}{\mathrm{ISucc}}
\newcommand{\Prev}{\mathrm{Prev}}
\newcommand{\Next}{\mathrm{Next}}
\newcommand{\nextOf}{\mathrm{next}}
\newcommand{\ipred}{\mathrm{ipred}}
\newcommand{\isucc}{\mathrm{isucc}}
\renewcommand{\state}{\mathrm{state}} 
\newcommand{\tier}{\mathrm{tier}}
\newcommand{\nextIndex}{\mathrm{next\_index}}
\newcommand{\nextState}{\mathrm{next\_state}}
\newcommand{\lastState}{\mathrm{last\_state}}
\newcommand{\lastSymbol}{\mathrm{last\_symbol}}
\newcommand{\dir}{\mathrm{dir}}
\newcommand{\NIL}{\texttt{NIL}}
\newcommand{\timeOf}{\mathrm{time}}
\newcommand{\indexOf}{\mathrm{index}}
\renewcommand{\output}{\mathrm{output}}
\renewcommand{\symbol}{\mathrm{symbol}}
\begin{document}

\begin{abstract}
The $\mathsf{P}$ versus $\mathsf{NP}$ problem asks whether every language verifiable in polynomial time can also be decided in deterministic polynomial time. 
In this paper, we present a constructive proof that $\mathsf{P} = \mathsf{NP}$ by introducing a universal, 
graph-based deterministic framework applicable to all $\mathsf{NP}$ problems without requiring reduction to an $\mathsf{NP}$-complete problem. 
We model computational transitions as edges within a unified graph structure, where edges correspond to the steps of a deterministic verifier Turing machine for all possible certificates. 
Due to the overlap of edges among computation paths, the total cardinality of the edge set remains polynomially bounded. 
 Furthermore, by employing a certificate-oblivious verifier Turing machine—whose head movement is independent of the certificate contents—we force all computation paths to align their edge transitions.
A key feature of our approach is that each extension step enforces global consistency via a local infeasibility trimming tool. 
This mechanism systematically preserves valid $\mathsf{NP}$ paths that lead to the target edge under polynomial verification, 
ensuring the graph remains globally feasible at every stage without explicit enumeration. 
This represents a paradigm shift from searching over exponential certificates to the incremental extension of verified edges. 
Since our construction decides all $\mathsf{NP}$ problems in deterministic polynomial time, it provides a direct resolution to the $\mathsf{P}$ versus $\mathsf{NP}$ question 
and demonstrates that every $\mathsf{NP}$ problem can be solved in deterministic polynomial time via incremental graph edge extension.
\end{abstract}

\begin{CCSXML}
<ccs2012>
<concept>
<concept_id>10003752.10003777.10003778</concept_id>
<concept_desc>Theory of computation~Complexity classes</concept_desc>
<concept_significance>500</concept_significance>
</concept>
<concept>
<concept_id>10003752.10003809.10003635.10010038</concept_id>
<concept_desc>Theory of computation~Dynamic graph algorithms</concept_desc>
<concept_significance>500</concept_significance>
</concept>
</ccs2012>
\end{CCSXML}
\ccsdesc[500]{Theory of computation~Complexity classes}
\ccsdesc[500]{Theory of computation~Dynamic graph algorithms}

\keywords{P versus NP, P=NP, NP-completeness, deterministic polynomial time, computation graph, feasible graph, graph pruning, computational complexity}

% --------------------
% Title & authors (ACM style)
% --------------------
\title{Graph-Based Deterministic Polynomial Framework for NP Problems}

\author{Changryeol Lee}
\authornote{The core framework was initiated at Yonsei University; further development and completion were conducted by the author as an Independent Researcher.}
\affiliation{%
  \institution{Independent Researcher}
  \country{Republic of Korea}
}
\email{changryeol@kaist.ac.kr}

% --- DOI 및 ACM 참조 정보 제거 ---
\settopmatter{printacmref=false} 
\setcopyright{none}
\renewcommand\footnotetextcopyrightpermission[1]{} % 저작권 각주 제거
\pagestyle{plain} % 헤더/푸터를 단순하게 (페이지 번호만 남김)

% DOI 관련 필드 비우기
\acmConference[]{}{}{}
\acmDOI{}
\acmISBN{}
\acmBooktitle{}
% -------------------------------

\maketitle

\tableofcontents

\clearpage

\section{Introduction}
The question of whether $\mathrm{P}$ equals $\mathrm{NP}$ remains one of the most fundamental open problems in theoretical computer science.
It asks whether every problem in $\mathsf{NP}$, whose membership can be verified in polynomial time given a suitable certificate, can also be decided by a deterministic Turing machine in polynomial time.
A resolution of this question would have far-reaching consequences for cryptography, optimization, artificial intelligence, and formal verification.
Despite extensive research over several decades, no polynomial-time algorithm has been found for any NP-complete problem, nor has it been proven that such algorithms cannot exist.

To address these challenges, our framework departs from restricted methodologies. Unlike relativizing or algebrizing approaches that rely on the oracle-independent analysis of NTMs, we utilize a constructive algorithm involving a verifier DTM that exploits the specific transition structures of deterministic computation. Furthermore, we do not assume a priori super-polynomial bound identified by the natural proofs barrier. Instead, we focus on a constructive polynomial-time decidability. These aspects are discussed in detail in \cref{sec:related_works}.

In this paper, we pursue such a perspective by introducing a \textbf{graph-based deterministic polynomial framework} designed to resolve the existential certificate structure inherent in $\mathsf{NP}$ decision problems. 
Rather than explicitly enumerating and checking exponentially many certificates, our approach represents the entire certificate space within a unified, polynomially bounded computation graph. 
This is achieved by employing a verifier TM directly—a component essential to any $\mathsf{NP}$ problem—thereby bypassing the need for reduction to other $\mathsf{NP}$-complete problems. 
Furthermore, to guarantee global consistency, we use a grid-aligned footmark structure derived from a certificate-oblivious Turing machine, which can be mechanically constructed from any general verifier with only polynomial overhead.

A key innovation in our model lies in the augmented structure of the graph nodes. Unlike intuitive models where a node represents only the tape position, symbol, and current state, each node in our framework is defined as a multifaceted 6-tuple. 
It incorporates the transition count (tier), the state and symbol, and crucially, the former state and former tape symbol for each specific cell. 
By embedding these historical parameters directly into the node structure, the graph inherently maintains execution consistency, 
allowing the deterministic simulation to trace causal dependencies across the entire computation space without loss of historical context within the polynomially bounded graph.

The central idea is to incrementally construct a computation graph that integrates transitions corresponding to all possible certificates. Because these paths share a common origin and extensively overlap, the total number of edges in this unified structure remains strictly polynomial. This approach triggers a paradigm shift: instead of the existential verification of exponential certificates, we perform deterministic structural analysis on a polynomial-sized graph manifold. By incrementally extending valid edges through local consistency checks, our framework resolves $\mathsf{NP}$ problems in deterministic polynomial time, effectively replacing global exponential search with a local, structure-preserving extension process

We formalize this idea through the concept of a \emph{Feasible Graph}, which maintains consistent paths from the start node to the edge under verification while trimming inconsistent or unnecessary edges. 
A key feature of our approach is that \textbf{each extension step enforces global consistency via a local inconsistency trimming tool}. 
To guarantee global consistency, we use a grid-aligned footmark structure derived from a certificate-oblivious Turing machine, which can be mechanically constructed from any general verifier with only polynomial overhead.
This structure organizes computation paths into a spatial grid, where each coordinates of the vertices are synchronized across all possible certificate branches. 
By aligning these coordinates, we effectively project the nondeterministic branching of certificates onto a deterministic, lattice-like graph, ensuring that local transition consistency translates directly into global path validity. 
This structure underpins our \emph{total collapse} property: if a certain essential edge—one belonging to all valid computation paths leading to the target—is missing, the feasible graph immediately collapses to an empty set

This mechanism ensures that the graph remains globally consistent at every stage by systematically preserving consistent $\mathsf{NP}$ paths while eliminating structural unnecessary edges.
The method for detecting these redundant edges involves the sequential elimination of implausible edges until the feasible graph collapses. This collapse identifies the \emph{essential edges} necessary for computation; 
by identifying these critical components, the algorithm can locate and prune such edges that do not contribute to a valid path reaching the target edge.

By reducing nondeterministic certificate verification to deterministic structural analysis and trimming, the framework yields a deterministic polynomial-time algorithm for $\mathsf{NP}$ decision problems. This provides a direct and constructive resolution of the $\mathsf{P} = \mathsf{NP}$ question.
To achieve this, we organize our proof into a hierarchical framework. \Cref{sec:roadmap_highlevel_proof} provides a high-level strategy of our simulation, detailing how we transition from existential search to structural analysis through the lens of Deterministic Polynomial Complexity and Verification Targets. The subsequent sections then provide related work, the formal definitions and rigorous proofs for each layer of this framework.

%%%%%%%%%%%%%%%%%%%%%%%%%%%%%%%%%%%%%%%%%%%%%%%%%%%%%%%%%%%%%%%%%%%%%%%%%%%%%%%%
% Section 2: Proof Strategy and Roadmap
%%%%%%%%%%%%%%%%%%%%%%%%%%%%%%%%%%%%%%%%%%%%%%%%%%%%%%%%%%%%%%%%%%%%%%%%%%%%%%%%

\section{Roadmap and High-Level Proof} \label{sec:roadmap_highlevel_proof}

This section provides a structural overview and a high-level synthesis of the entire proof architecture. We first establish the global roadmap that governs our derivation, followed by the core foundations of our computational model. 
To ensure a clear conceptual grasp of the framework’s architecture, we present the core logic in a top-down manner, focusing on structural mechanisms and deterministic convergence. 
The proofs in this section are intended as intuitive and constructive demonstrations that elucidate the functional interplay of the system. 
For the rigorous, bottom-up formalization—including the exhaustive derivation of each lemma and the inductive verification of the algorithm’s correctness—the reader is referred to the detailed proofs in the corresponding sections.

\subsection{Proof Strategy and Roadmap}
\label{subsec:roadmap}

The fundamental challenge in proving $P=NP$ lies in deterministically deciding the existence of an accepting certificate without an exponential brute-force search. Our approach achieves this by transforming the search problem into a structural analysis of a polynomially-bounded graph.

\subsubsection{The Core Philosophy: From Search to Structure}
\label{subsec:philosophy}

The central paradigm shift of this proof is the transition from the \textbf{existential verification} of exponential certificates to the \textbf{deterministic analysis} of a polynomial-sized graph manifold. 

Traditionally, deciding a language $\mathcal{L} \in \textsf{NP}$ requires verifying a witness from an immense search space of $\Sigma^{p(n)}$. Our strategy bypasses this exponential barrier by constructing a \textbf{Universal Computation Graph} $G_U$. Instead of "searching" for a single valid certificate, we embed the "footmarks" of \textit{all} potential computation walks into a unified structure. By doing so, the problem of finding a certificate is reduced to identifying a globally consistent "signal" (a feasible walk) amidst a background of "structural noise" (obsolete or orphaned edges). 

This reduction ensures that the non-deterministic search is replaced by a deterministic pruning process, where the structural integrity of the graph serves as the decider for the existence of a valid computation.

\subsubsection{The Foundation: 6-Tuple Model and Polynomial Universe}
The bedrock of our strategy is the definition of a finite, polynomially-bounded configuration space that encompasses all possible execution traces.
\begin{itemize}
    \item \textbf{Computation Model:} By encoding each node as 6-tuples $(index, tier, state, symbol, last\_state, last\_symbol)$, we ensure that the total number of nodes representing all potential execution paths is strictly bounded by a polynomial complexity $p(n)$ (\Cref{subsec:comp_graph}).
    \item \textbf{Universal Footmarks:} As will be proven in Lemma~\ref{lem:poly-bounded-graph}, the union of all computation walks across the entire certificate space—the \textbf{Footmarks}—is bounded by $\bigO(p(n)^3)$. This provides the mathematical basis for embedding the nondeterministic search space into a polynomial-sized manifold.
\end{itemize}

\subsubsection{Layer 1: The Utility Layer (The Feasible Graph)}
Building on the Foundation, this layer instantiates the operational graph and enforces structural integrity.
\begin{itemize}
    \item \textbf{Feasible Graph Construction:} Utilizing the Dynamic Computation Graph (\Cref{subsec:dynamic_graph}), we construct a \textbf{Feasible Graph} $G_f$ by removing all edges that violate local consistency rules.
    \item \textbf{Feasibility Maintenance:} This layer ensures the graph remains a universal container for valid certificates. Per Lemma~\ref{lem:preservation_of_feasible_walks}, it preserves every valid \textbf{computation walk} to the target extension candidate edge while providing the first level of structural refinement necessary for global analysis.
	\item \textbf{Total Collapse:} This layer also ensures that the graph collapses into an empty structure whenever a certain essential edge is eliminated. This mechanism operates on the grid-aligned structure derived from a certificate-oblivious Turing machine, where head movement is independent of the certificate content once the problem instance is fixed.
\end{itemize}

\subsubsection{Layer 2: The Strategic Layer (Verification of Global Consistency)}
This layer (Section~\ref{sec:walk_verification}) bridges the gap between local consistency and global validity. To ensure that the target extension candidate edge $e_t$ is not merely a locally valid transition but a component of a globally consistent path, we employ a \textbf{Cascading Collapse Mechanism} to filter or certify the global viability of edges:

\begin{enumerate}
    \item \textbf{Selection and Candidate Inclusion:} From the current feasible graph $G_{f}$, we identify a potential computation walk $W_p$. If $W_p$ contains the \textbf{verification target edge} $e_t$, this walk serves as a witness for the candidate edge's inclusion in the final graph.
    
    \item \textbf{Trial Elimination of Implausible Edges:} To verify the necessity and stability of $e_t$, the algorithm simulates the removal of "implausible" edges—the first splitting edge appear in compuation walk cannot reach to the target edge. We then observe whether the graph undergoes a \textit{Total Collapse} of paths dependent on those edges.
    
    \item \textbf{Criticality Identification:} By analyzing the structural response to these simulated removals, the algorithm distinguishes between \textbf{computing-targeted walks} (globally valid trajectories) and \textbf{computing-futile noise} (locally consistent segments that cannot form a complete, halting execution).
    
    \item \textbf{Iterative Redundancy Removal:} Through the systematic elimination of edges that do not critically contribute to any valid computation walk containing $e_t$, the algorithm certifies whether the target edge truly belongs to the \textbf{Universal Footmarks} of an accepting computation.
\end{enumerate}

\subsubsection{Layer 3: The Application Layer (Deterministic Decision Procedure)}
The final layer (Section~\ref{sec:np_is_p}) executes the deterministic simulation and final decision.
\begin{itemize}
    \item \textbf{Boundary-Guided Extension:} The algorithm incrementally expands a verified set $H$ across the universal footmark graph by testing each candidate edge's ability to support a valid walk. This deterministic traversal replaces the traditional non-deterministic search.
    \item \textbf{Edge Promotion:} An edge is "promoted" to the verified history $H$ only after its global necessity is certified by the Strategic Layer's validity tests.
\end{itemize}

\subsubsection{Convergence and Complexity}
As established, the total number of edges in the universal graph is bounded by $|E| = \bigO(p(n)^3)$. Since each iteration of the pruning cycle identifies and removes at least one structural redundancy or confirms an edge's validity, the algorithm is guaranteed to converge in polynomial time.

The process concludes with a definitive deterministic decision:
\begin{itemize}
    \item \textbf{Accept:} If any promoted edge reaches a node in the accept state $q_{acc}$, certifying the existence of at least one valid computation walk.
    \item \textbf{Reject:} If the graph stabilizes such that no further extensions are possible and no verified path to $q_{acc}$ exists.
\end{itemize}

By transforming the exponential existential search for a certificate into a deterministic structural analysis of a polynomially-bounded graph, we demonstrate that any language $\mathcal{L} \in \textsf{NP}$ is decidable in deterministic polynomial time, establishing $P = NP$.

\paragraph{\textbf{Summary of Proof Flow:}}
$\text{Foundation}$ (Computitaion Graph) $\xrightarrow{\text{Layer 1}}$ $G_f$ (Local Feasibility) $\xrightarrow{\text{Layer 2}}$ Verified $e_t$ (Global Consistency) $\xrightarrow{\text{Layer 3}}$ $H$ (Deterministic Footmarks) $\implies \mathsf{P} = \mathsf{NP}$.

\subsubsection{Deterministic Complexity Hierarchy and Polynomial Convergence}
\label{subsec:complexity_hierarchy}

The computational efficiency of the framework is rooted in its nested deterministic structure. Unlike nondeterministic branching, our algorithm operates through four distinct layers of polynomial iterations. The total time complexity is bounded by the product of these layers, where $|E| = \bigO(p(n)^3)$ denotes the total number of edges in the universal computation graph.

\begin{itemize}
    \item \textbf{Layer 1: Footmarks Extension Loop}
    \begin{enumerate}
        \item \textit{Extension Loop ($\times |E|$):} Drives the iterative growth of the verified set $H$.
        \item \textit{Candidate Validation Loop ($\times |E|$):} Identifies potential candidates and validates them for promotion to $H$.
    \end{enumerate}

    \item \textbf{Layer 2: Global Consistency and Strategic Pruning Loop}
    \begin{enumerate}
        \item \textit{Elimination of Futile edges ($\times |E|$):} Removal of confirmed redundant/futile edges.
        \item \textit{Trial Detection ($\times |E|$):} Secondary iteration for trial eliminationsto to conduct the validity test for the target edge $e_t$.
    \end{enumerate}

    \item \textbf{Layer 3: Feasible Graph Construction Loop}
    \begin{enumerate}
        \item \textit{Convergence Loop ($\times |E|$):} This loop repeats the sweep process until the state of the feasible graph remains unchanged, ensuring the structural stability of the entire manifold and the finality of the pruned results.
        \item \textit{Bidirectional Sweep ($\times |E|$):} The systematic construction of the graph by collecting locally consistent edges. This consists of \textit{Horizontal Sweeps} (to ensure reachability and connectivity across tiers) and \textit{Vertical Sweeps} (to ensure history consistency and causal integrity across configurations).
    \end{enumerate}

    \item \textbf{Layer 4: Data Structure and Primitive Operations ($\times \text{t}(n) < |E|$):} 
    Handles low-level manipulation of the 6-tuple space and adjacency lists.
\end{itemize}

By expressing the total execution time as a nested product of these hierarchical layers, the complexity $T(n)$ can be formally upper-bounded. Given that each layer introduces nested iterations over $|E|$, the cumulative complexity is:
\[
T(n) = \left(|E|_{\text{ext}} \times |E|_{cand} \right) \times \left( |E|_{\text{elim}} \times |E|_{\text{detect}} \right) \times \left( |E|_{\text{conv}} \times |E|_{\text{sweep}} \right) \times \text{t}(n) = \bigO(|E|^6 \cdot \text{t}(n)).
\]
Substituting $|E| = \bigO(p(n)^3)$, we obtain $T(n) = \bigO(p(n)^{18} \cdot \text{t}(n)) \approx \bigO(p(n)^{20})$. This remains strictly within deterministic polynomial bounds, establishing $\mathsf{P} = \mathsf{NP}$.

\subsection{Polynomial Bounded Computation Model} \label{sec:polynomial_bounded_model}
We project the dynamic execution of verifier $M$ onto a static graph $G=(V, E)$. To ensure that the entire certificate space $\Sigma^{q(n)}$ can be handled deterministically, we define the fundamental unit of our computation graph as a Computation Node. The formal definitions and complete structural proofs regarding this section are provided in Appendix \cref{subsec:comp_graph}.

\begin{definition}[Computation Node]
A node is a 6-tuple $v = (i, q, \sigma, \vdown{q}, \vdown{\sigma}, t)$, where $(q, \sigma)$ denotes the current state and symbol at tape index $i$ and tier $t$, and $(\vdown{q}, \vdown{\sigma})$ represents the preceding state and symbol at the same cell. For $t=0$, we define $\vdown{q} = \bot$ and $\vdown{\sigma} = \bot$, where $\bot$ means none.
\end{definition}

The computation graph forms a lattice bounded by its height (maximum tier) and width (index span). A \textbf{computation walk} $W = (v_0, v_1, \dots, v_n)$ is a sequence of computation nodes representing the step-by-step execution of a Turing machine $M$. Formally, for each step $k$, the node $v_k$ represents the 6-tuple configuration of the tape cell pointed to by the head after $k$ transitions. Note that a computation walk is a simple path without cycles in the graph due to the monotonic increment of the node tier at each cell.
The power of this representation lies in its structural economy. While the number of certificates is exponential, the union of all possible computation walks—which we term the Footmarks Graph $F(\mathcal{W})$—remains strictly within polynomial in size.

\begin{lemma}[Structural Bounds of Footmarks]
For a verifier $M$, the total number of vertices and edges in the footmarks $F(\mathcal{W})$ are bounded by $\bigO(p(n)^k)$, where $\mathcal{W}$ denotes the set of any possible computation walks for $M$ and $p(n)$ is a polynomial in the input length $n$.
\end{lemma}
\begin{proof}
By definition, the verifier $M$ terminates within polynomial $p(n)$ steps. Each node is constrained by a fixed coordinate $(i, \text{tier})$, where $i$ is the cell index and $\text{tier}$ represents the transition count at that cell. Since the head visits at most $p(n)$ cells and each cell can undergo at most $p(n)$ transitions, the width  and height of the graph are both $\bigO(p(n))$. Given that $|Q|$ and $|\Gamma|$ are constants, the total configuration space per coordinate remains $\bigO(1)$. Thus, $|V(G)|=\bigO(p(n)^2), |E(G)|=\bigO(|V(G)|^2)$.
\end{proof}
This polynomial bound on the graph size prevents exponential explosion and establishes the feasibility of deterministic trimming. Besides these polynomial bounds, the fundamental structural constraint of our model is \textbf{spatially localized continuity}. We use the prefix \textit{index-} to denote relations that occur within the same spatial coordinate (the same Pole) across the temporal axis (tiers). To formalize the spatial segments between Poles, we introduce the concept of an edge slice.
An \textbf{Edge Slice} $E_i$ is defined as the set of all edges incident to nodes at indices $i$ and $i+1$, representing the spatial interval between Pole $i$ and Pole $i+1$. All edges $e \in E_i$ are associated with the same spatial interval but are distributed across various tiers $k \in \{0, \dots, p(n)\}$.

For a node $u \in V_{i, t}$ located on Pole $i$ (the $i$-th cell at tier $t$), we distinguish between path-specific and structural relations. Within any specific computation walk $W$, the \textbf{index-predecessor} of $u$ is the node at the same index in the immediate lower tier ($t-1$), while its \textbf{index-successor} is the corresponding node in the immediate higher tier ($t+1$). 

Extending this to the computation graph structure, an \textbf{index-precedent} of $u$ is any node $\vdown{u}$ that can serve as an index-predecessor in a computation walk; specifically, in addition to the index and tier conditions, the last state and the last symbol of $u$ must match the state and the symbol of $\vdown{u}$, respectively. Conversely, an \textbf{index-succedent} of $u$ is any node $\vup{u}$ that can function as an index-successor where, in addition to $u$ belonging to the index-precedents of $\vup{u}$, the transition output at $u$ matches the symbol at $\vup{u}$.

Similarly, for an edge $e$ residing in Edge Slice $E_i$, we define its temporal and structural relations as follows. Within any specific computation walk $W$, the \textbf{index-predecessor} of $e$ is the former edge in the same edge slice, while its \textbf{index-successor} is the latter edge. Extending this to the computation graph structure, an \textbf{index-precedent} of $e$ is any edge $\vdown{e}$ that can serve as an index-predecessor in a computation walk; due to the direction change (folding) in the pole containing the tail of $e$, it suffices that the tail of $e$ belongs to the index-precedent of the head of $\vdown{e}$ and the head of $\vdown{e}$ forms a folding index-precedent chain starting from tail of $e$. Conversely, an \textbf{index-succedent} of $e$ is any edge $\vup{e}$ that can serve as an index-successor in a computation walk; due to the direction change in the pole containing the head of $e$, it suffices that the head of $e$ belongs to the index-succedent of the tail of $\vup{e}$ and the tail of $\vup{e}$ belongs to a folding index-succedent chain from the head of $e$.

Since any computation walk corresponds to a valid transition sequence, it must satisfy both \textbf{Vertical Edge Integrity}---requiring the existence of an index-precedent ($\IPrec(e)$) and an index-succedent ($\ISucc(e)$) where applicable---and \textbf{Horizontal Integrity}, which ensures adjacency to neighboring edges unless the edge is at the initial or final boundary. These structural constraints, which form the mechanical basis for our Feasible Graph construction, are further refined in \cref{sec:trimming_tool_feasible_graph}.

\subsection{Edge Extension for All Computation Paths (The Top Layer)} \label{sec:edge_extension_for_footmarks}

The top layer of our algorithm is the \textbf{Incremental Extension} of the polynomial-sized footmarks of all computation walks across every possible certificate. This stage serves as the global driver for the step-by-step construction of the entire machine's transition trajectory, ensuring that every newly added edge is globally verified through our local inconsistency trimming tool.

The extension algorithm operates on an $\mathsf{NP}$ problem $X$, a certificate-oblivious verifier $M$, and a certificate of polynomial length $m$. Since all $\mathsf{NP}$ problems are defined by a verifier and polynomial-length certificates and the verifier can be converted to oblivious TM (see appendix \cref{sec:computation_theory}), this approach remains generally applicable. 

To maintain polynomial efficiency, the algorithm replaces exhaustive certificate space searching with an incremental extension of the footmarks graph $G_{\text{footmark}}$. We identify a set of extension \textbf{Candidate Edges} ($E_{\text{cand}}$) based on two structural criteria:

(i) \textbf{Boundary Edge Condition}: the edge $e$ must be incident to a node already in $G_{\text{footmark}}$ but not yet integrated into the verified component, while either possessing a valid \textit{index-precedent edge} in $G_{\text{footmark}}$ \textbf{or} acting as a \textbf{floor edge}---a \textbf{frontier edge} of its respective edge slice---to ensure vertical consistency; and 
(ii) \textbf{Certificate Area Enumeration}: within the designated tape area for certificate strings (representing branching computation walks), we generate candidates for \textbf{all possible symbols} $\sigma \in \Sigma$ at the tier $0$ node of each pole (each cell) to serve as new floor edges. \\
This mechanism effectively transforms the exponential enumeration of certificates into a deterministic, parallel structural edge extension.

\begin{algorithm}[!ht] 
    \SetKwFunction{VerificationWalkToTarget}{VerificationWalkToTarget}
    \DontPrintSemicolon
    
    \caption{Deterministic Edge Extension and Halting} \label{higalg:incremental_extension_of_footmarks}
    
    \Input{Verifier TM $M$, Problem instance $X$, Length of certificate string $m$.}
    \Output{Accept (Yes) or Reject (No).}
    
    \BlankLine
    Initialize $G_{\text{footmark}}$ with the verifier Turing machine $M$. \;
    Extend $G_{\text{footmark}}$ the initial edge with the initial state of $M$ and the first symbol of $X$ \;
    
    \While{True}{
        Identify Candidate Boundary Edges($E_{cand}$)\;
        Let  $existsNewEdge \gets \False$ \;

        \ForEach{edge $e \in E_{cand}$}{
            \If{\VerificationWalkToTarget($G_{\text{footmark}}+e, e$) = \True} {
                $G_{\text{footmark}} \gets G_{\text{footmark}} \cup \{e\}$; Set $existsNewEdge \gets \True$ \;
            }
        }
        
        \If{$\exists e \in G_{\text{footmark}}$ such that $e$ reaches $q_{acc}$}{
            \Return{\textsf{Accept (Yes)}}\;
        }
        \ElseIf(\tcc*[f]{No new edges were added to $G_{\text{footmark}}$}){$existsNewEdge=\False$}{
            \Return{\textsf{Reject (No)}}\;
        }
    }
\end{algorithm}

While this iterative extension captures all potential computation paths, its validity hinges on a robust filtering process capable of determining the existence of a valid computation walk that contains the target candidate edge. The formal mechanism is explained in Appendix \cref{sec:np_is_p}.

\begin{lemma}[Correctness of Incremental Footmarks Construction] \label{highlem:footmark_construction_correctness}
Assuming that the procedure \VerificationWalkToTarget{} correctly determines the existence of a computation walk to any target edge $e$ in polynomial time, the \cref{higalg:incremental_extension_of_footmarks} correctly determines the existence of an accepting certificate. Specifically, the algorithm halts and returns \textsf{'Accept (Yes)'} if and only if there exists a verified edge reaching an accepting state $q_{\text{acc}}$, and it halts and returns \textsf{'Reject (No)'} if and only if no further edge extensions to the footmark graph $G_{\text{footmark}}$ are possible, implying the absence of any computation walk reaching an accepting state. The entire process terminates within polynomial time.
\end{lemma}

\begin{proof}
The proof proceeds by analyzing the structural completeness of the candidate set and the monotonic convergence of the iterative extension process.
\begin{itemize}
\item \textbf{Completeness of Candidate Selection} 
The set of candidate edges $E_{cand}$ is constructed to be exhaustive within the spacetime constraints. For any existing node in the footmark graph, $E_{cand}$ includes all adjacent edges that are floor edges or possess a valid \textit{index-precedent} (former transition) within the same \textit{edge slice}. Specifically, at the \textit{floor edges} of the \textit{certificate area}, the algorithm includes transitions for \textbf{all possible tape symbols} $\sigma \in \Sigma$. This ensures that $E_{cand}$ necessarily contains the edge incident to the tier $0$ node.

\item \textbf{Correctness of Verification and Soundness} 
The \VerificationWalkToTarget{} procedure correctly identifies the existence of a globally consistent computation walk to any given candidate edge in polynomial time, by the premise.  Consequently, if a valid computation path to an accepting state $q_{acc}$ exists, the algorithm will eventually incorporate the corresponding edge into $G_{footmark}$ and return \textsf{Yes}. Conversely, if no such path exists—meaning all potential trajectories terminate in rejection and no further extension is possible—no edge reaching $q_{acc}$ will ever pass the verification sieve, and the algorithm will correctly conclude with \textsf{No}.

\item \textbf{Polynomial Time Complexity and Convergence} 
The total number of potential edges in the spacetime lattice is bounded by $|E| = \bigO(p(n)^k)$. Since each iteration of the extension adds at least one verified edge or terminates, the number of boundary edges is at most $|E|$. Given that the size of the footmarks is polynomial and \VerificationWalkToTarget{} operates in polynomial time, the entire algorithm terminates within \textbf{deterministic polynomial time}. 
\end{itemize}
When no further edges can be added to $G_{footmark}$ and $q_{acc}$ has not been reached, the algorithm correctly returns \textsf{'No'}, signifying that no valid certificate exists for the input.
\end{proof}

\subsection{Verification of Computation Path to the Target Edge (The Filter)} \label{sec:verification_path_to_candidate_edge}

To maintain the integrity of the extension, every candidate edge must pass through a verification mechanism. This sieve induces a structural collapse of the graph, isolating the unique, deterministic computation walk to the \textit{verification target edge} $e_t$, which serves as the extension candidate. For this process, we utilize the \textit{trimming tool} described in the subsequent section, which eliminates locally inconsistent edges while preserving any computation walk to the target; conversely, the graph collapses to empty if some essential edges are missing. The formal proof and detailed explanation are provided in Appendix \cref{sec:walk_verification}.

To determine the existence of a computation walk to the verification target edge $e_t$, we categorize computation walks and edges based on their structural contribution. A \textbf{computing-targeted walk} is a valid computation path that successfully reaches $e_t$, whereas a \textbf{computing-futile walk} is a maximal comutation walk that does not contain $e_t$. Correspondingly, we define edges within these structures: an edge is \textbf{computing-effective} if it belongs to at least one computing-targeted walk. In contrast, a \textbf{computing-futile edge} is one that belongs exclusively to computing-futile walks, thus contributing nothing to the reachability of the target. Furthermore, a computing-effective edge is considered a \textbf{computing-redundant edge} if its removal does not eliminate all computing-targeted walks—that is, alternative computation paths to $e_t$ remain available within the graph.

The verification proceeds by systematically identifying and removing edges that do not contribute to a valid computation walk. This verification process is structured into two functional layers: the \textbf{lower layer} identifies \textit{computing-redundant} or \textit{computing-futile} edges, while the \textbf{higher layer} executes the iterative removal of these detected edges. If the mechanism successfully confirms a computing-targeted walk to $e_t$, the algorithm returns \textbf{True}; if no such path exists or the graph is reduced to an empty set, it returns \textbf{False}.

\begin{algorithm}[!ht]
    \DontPrintSemicolon
    \SetKwFunction{FeasibleGraphTrim}{FeasibleGraphTrim}
    \SetKwFunction{DetectRedundantFutileTargetEdge}{DetectRedundantFutileTargetEdge}    

    \caption{Verifying the existence of valid computation walks to the target edge}
    
\Input{The target edge $e_t$ and  a computation graph $G$(footmarks augmented by $e_t$)}
\Output{\True if exists, otherwise \False}
\BlankLine
\Function(\tcc*[f]{Detection (Lower Layer)}){\DetectRedundantFutileTargetEdge{$G', e_t$}}{
\While{$G'$ is not empty} {
    Select an arbitrary computation walk $W_{p} \subseteq G'$\;
    \If {$W_{p}$ is a computing-targeted walk (contains $e_t$)}{
        \Return $e_t$ \tcp*{Target confirmed}
    }
    Let $G'_{pre} \gets G'$ \tcp*{\textbf{Record} current state as $G'_{pre}$}
    Identify the first splitting edge $e_s \in W_{p}$ (or the final edge if no splitting exists)\;
    Let $G' \gets \FeasibleGraphTrim(G'-e_s, \{e_t\})$ \tcp*{Pruning walk $W_{p}$}
}
Extend $G'$ with the next edges of the final edges $E_o$ of computing-futile walks\;
Let $G' \gets \FeasibleGraphTrim(G'_{pre}-e_s, E_o)$ \tcp*{Preserve computing-futile walks}
\lIf(\tcc*[f]{No computing-futile walks exist}){$G'$ is empty} {
    \Return \textsc{Nil} 
} \Else{
    Select the first edge $e$ on a computing-futile walk in $G'_{pre}$ such that $e \notin W_{p}$ \;
    \Return $e$ \tcp*[f]{Identified as redundant or futile}
}
}
\BlankLine

\Function(\tcc*[f]{Main Loop (Higher Layer)}){\VerificationWalkToTarget{$G, e_t$}} {
\While{$G$ is not empty} {
    Let $e \gets$ \DetectRedundantFutileTargetEdge($G, e_t$)\;
    \lIf{$e = e_t$}{\Return \True}
    \lElseIf(\tcc*[f]{No verification-target walks}){$e = \textsc{Nil}$}{ \textbf{break} }
    Let $G \gets$ \FeasibleGraphTrim{$G-e, \{e_t\}$} \tcp*{Permanently remove $e$ from $G$} 
}
\Return \False \;
}
\end{algorithm}

The detection mechanism (lower layer) operates through the following steps:
\textbf{(i) Selection and Candidate Inclusion:} The algorithm selects a computation walk $W_p$ in the current graph. If $W_p$ contains target edge $e_t$, this confirms the existence of a computing-targeted walk.
\textbf{(ii) Trial Elimination of Implausible Edges:} To detect unnecessary edges, the algorithm simulates the removal of an ``implausible'' edge---the first splitting edge in the computing-futile walk. Then it observes whether the graph undergoes a \textit{Total Collapse}--- the emptiness of the feasible graph due to the absence of any computing-targeted walks.
\textbf{(iii) Criticality Identification:} If the trial removal causes a collapse of the feasible graph, it indicates that $W_p$ contains essential \textbf{computing-effective edges}. Consequently, the algorithm re-computes the feasible graph excluding the essential edge while preserving the computing-futile walks, and identifies the first edge in the re-computed graph where another computation walk deviates from $W_p$.

\begin{lemma}[Correctness of Detecting Computing-Futile/Redundant Edges] \label{highlem:detect_futile_redundant_edge}
Assume that the procedure \FeasibleGraphTrim{} correctly removes, in polynomial time, all edges lacking vertical or horizontal consistency while preserving any computation walks to the designated final edges but collapses to empty if the essential edge before the second splitting edge is missing. Then, the algorithm \DetectRedundantFutileTargetEdge{} correctly identifies either a computing-futile or a computing-redundant edge, returns the target edge $e_t$ (if a valid walk exists), or returns \textsc{Nil} (if no such walk is possible)---all within deterministic polynomial time.
\end{lemma}

\begin{proof}
The proof relies on the structural response of the trimmed graph $G'$ to the removal of the first splitting edge $e_s$.
\begin{itemize}
\item \textbf{Identification of Computing-Targeted Walks and Structural Collapse:}
If the selected $W_p$ contains the verification target edge $e_t$, the algorithm returns $e_t$, confirming the identification of a computing-targeted walk. Conversely, if the trimmed graph $G'$ becomes empty after the temporary removal of $e_s$, it implies that $e_s$ was an \textbf{essential edge}---a necessary component for every possible \textit{computing-targeted walk} in the current $G'$---by the premise. This collapse demonstrates that $W_p$ is a computing-targeted walk up to the removed edge $e_s$. Since the trimming process preserves computing-futile walks while the essential edge is removed, any remaining structure in $G'_{pre}$ indicates the existence of only computing-futile walks.

\item \textbf{Identification of Computing-Redundant or Computing-Futile Edges:}
Since the selected $W_p$ coincides with a computing-targeted walk up to the removed edge $e_s$, the first edge $e$ that diverges from $W_p$ within a computing-futile walk serves as a definitive structural witness of either a transition leading to a dead-end (futile) or an unnecessary bypass (redundant). By selecting such an edge $e \notin W_p$, the algorithm effectively isolates an edge that does not contribute to all computing-targeted walks. If no such edge exists, the algorithm correctly returns \textsc{Nil}, signifying that no computing-targeted walk exists in the graph.

\item \textbf{Polynomial Time Convergence:}
The \texttt{while} loop executes at most $|E|$ iterations, as each step either confirms the target or temporarily removes at least one edge. Since \FeasibleGraphTrim{} is performed in polynomial time $\mathcal{O}(\text{poly}(n))$ by the premises, and each internal operation—such as the recording of $G'_{pre}$ and edge selection—is performed in $\mathcal{O}(\text{poly}(n))$, the overall complexity is $\mathcal{O}(|E| \cdot \text{poly}(n))$. Since $|E|$ is polynomially bounded by the input size, the algorithm converges within $\mathcal{O}(\text{poly}(n))$.
\end{itemize}
\end{proof}

\begin{corollary}[Correctness and Polynomial Bound of Verification] \label{highcor:verification_walk_correctness}
If the \FeasibleGraphTrim{} procedure correctly removes all inconsistent edges in polynomial time with preserving all computing-targeted walks and total collapse property, then the \VerificationWalkToTarget{} procedure correctly identifies the existence of a valid computation walk to the target edge in polynomial time. 
\end{corollary}

\begin{proof}
The proof follows from the polynomial size of the edge set $E$ and the correctness of the sub-procedures. First, by \cref{highlem:detect_futile_redundant_edge}, \DetectRedundantFutileTargetEdge{} correctly detects computing-futile or redundant edges and preserves at least one computation walk to the target edge if it exists, or identifies the target if a computing-targeted walk to $e_t$ is confirmed. Next, each iteration of the \texttt{while} loop involves the \textit{permanent removal} of at least one edge $e$ from the graph $G$. Since $|E|$ is of polynomial size and the detection mechanism is bounded as per \cref{highlem:detect_futile_redundant_edge}, the total complexity is $\bigO(|E| \cdot \text{poly}(n)) = \bigO(\text{poly}(n))$.
\end{proof}

\subsection{Feasible Graph as a Trimming Tool (The Utility)} \label{sec:trimming_tool_feasible_graph}
The efficiency of global verification is sustained by our trimming utility. By enforcing local consistency via bi-directional sweeps across edge slices, this tool removes `dangling' or inconsistent edges that cannot form a part of any computation walk or a valid sequence of transitions.
The formal proof and detailed explanation are provided in Appendix \cref{sec:feasible_graph}.

The stability of the \textit{Feasible Graph} is maintained through a unified consistency check. For any edge $e \in E_i$ to survive the trimming operator, it must satisfy both vertical and horizontal integrity constraints, with specific exceptions for the temporal marginal anchors of a computation walk. We define an edge that is horizontally or vertically adjacent to another as a \textit{step-adjacent edge}. To accommodate the temporal limits of a finite execution, we define the following boundary edge sets:
\begin{itemize}
    \item \textbf{Floor Edge:} The first edge to appear in an edge slice $E_i$ within a valid computation walk. Edges whose head nodes are at tier $0$ belong to this category.
    \item \textbf{Cover Edge:} The set of all structural potential edges that can serve as a \textit{Ceiling Edge}---the last edge to appear in an edge slice $E_i$ within a valid computation walk---on any computation walk to the designated final edges within the entire graph.
\end{itemize}
To maintain the integrity of the computation lattice, any edge failing to satisfy local consistency is defined as a \textbf{Step-Pendant Edge}. These are classified into two categories:
\begin{itemize}
    \item \textbf{Horizontal Step-Pendant:} An edge $e \in E_i$ that is neither an \textit{Initial} nor a designated \textit{Final} edge, yet lacks at least one adjacent edge.
    \item \textbf{Vertical Step-Pendant:}  An edge $e \in E_i$ that is not a \textit{Cover Edge} but lacks an \textbf{Index-Succedent} in $E_i$, or is not a \textit{Floor Edge} but lacks an \textbf{Index-Precedent} in $E_i$.
\end{itemize}
An edge that is not a horizontally step-pendant is considered \textbf{Index-Adjacent} to its neighboring edge slices, ensuring a consistent transition across the spacetime manifold. 

\begin{algorithm}[!ht]
    \DontPrintSemicolon
    \SetKwFunction{SweepEdges}{SweepEdges}
    \SetKwFunction{StepUp}{StepUpEdges}
    \SetKwFunction{StepDown}{StepDownEdges}
    
    \caption{Feasible Graph Construction via Structural Trimming}
    
    \Input{Initial graph $G$, designated final edges $E_f$.}
    \Output{Feasible graph $H$ with respect to $E_f$.}
    \BlankLine
    \Function{\FeasibleGraphTrim{$G, E_f$}}{
        $H \gets G$, $C \gets$ Compute Cover Edges with respect to $E_f$ in $G$\;
        \While{$H$ is changing and not empty}{
            %\tcp{Forward and Backward sweeps to ensure horizontal-vertical consistency}
            $H \gets$ \SweepEdges{$H, C, E_f, \text{min\_idx}, +1$} \tcp*{min\_idx:minimum edge index of H}
            $H \gets$ \SweepEdges{$H, C, E_f, \text{max\_idx}, -1$}\tcp*{max\_idx:maximum edge index of H}
        }
        \Return{$H$}\; 
    } 

    \Function{\SweepEdges{$G, C, E_f, i, d$}}{
        $H_{new} \gets \emptyset$\;
        \While{edge slice $E_i$ exists}{

            $I \gets$ \StepUp{$G, H_{new}, E_i, E_{i-d}$} \tcp*{Step 1: Filtering Upward}
            $H_{new} \gets H_{new} \cup$ \StepDown{$G, I, C \cap I$} \tcp*{Step 2: Filtering Downward}
            $i \gets i + d$\;
        }
        \Return{$H_{new}$}\;
    }

    \Function{\StepUp{$G, H, E_{curr}, E_{prev}$}}{
        Find floor edges $E_b$ in $E_{curr}$\;
        Let $I \gets \{ \text{Index-Succedent upward chains from } E_b \mid \text{index-adjacent to } E_{prev} \}$\;
        \Return{$I$}\;
    }

    \Function{\StepDown{$G, I, C_{cover}$}}{
        Let $I' \gets \{ \text{Index-Precedent downward chains from } C_{cover} \}$\;
        \Return{$I'$ }\;
    }
\end{algorithm}

\begin{lemma}[Correctness of Feasible Graph Trimming]\label{highlem:feasible_graph_trimming_correctness}
The  \FeasibleGraphTrim{} procedure extracts a consistent subgraph $H$  from the computation graph $G$ within polynomial time such that:
\begin{enumerate}
    \item $H$ contains no \textbf{Step-Pendant Edges} (i.e., all local inconsistencies are trimmed).
    \item All \textbf{Feasible Walks} (computation walks to the designated final edges) are strictly preserved.
	\item Removing an essential edge before the second-splitting edge leads to a total collapse of $H$.
\end{enumerate}
\end{lemma}

\begin{proof}
\begin{itemize}
\item \textbf{Elimination of Step-Pendant Inconsistency:}
Assume for contradiction that a \textsf{Step-Pendant} edge $e$ remains in the converged graph $H$. 
If $e$ is a \textbf{Horizontal Step-Pendant}, it must lack an index-adjacent edge in its neighboring edge slice. However, the \textit{Step-Up} phase only incorporates edges that are \textbf{index-adjacent}; any edge failing this condition is  never added during iterative sweeps. 
If $e$ is a \textbf{Vertical Step-Pendant}, it must lack an \textit{index-precedent} (if $e \notin \text{Floor}$) or an \textit{index-succedent} (if $e \notin \text{Cover}$). In the \textit{Step-Up} phase, an edge is retained only if it possesses a valid index-precedent or is a \textsf{Floor} edge. In the \textit{Step-Down} phase, it is retained only if it has an \textit{index-succedent} or is a \textsf{Cover} edge. 
Thus, the existence of any \textsf{Step-Pendant} contradicts the iterative sweep mechanism of the algorithm.

\item \textbf{Preservation of Feasible Walks:}
Assume a feasible walk $W \subseteq G$ (leading to a designated final edge) is not preserved in $H$. Let $e \in W$ be the \textbf{first edge} of $W$ to be removed during the trimming process.
Since $W$ is a valid computation path, all its constituent edges are by definition \textbf{index-adjacent} to their respective neighboring edge slices. Thus, $e$ cannot be a Horizontal Step-Pendant.
Regarding vertical consistency, since $W$ starts from a \textit{Floor edge} and terminates at a \textit{Cover edge}, every intermediate edge $e \in W$ necessarily has an index-precedent and an index-succedent within $W$. 
Since $e$ is the \textit{first} to be removed, its neighbors in $W$ must still exist in the graph at that moment, satisfying its adjacency requirements.
Therefore, $e$ cannot be identified as a Step-Pendant, and no such ``first removed edge" can exist. It follows that $W \subseteq H$ is preserved.

\item \textbf{Total Collapse:}
Let the time value be the number of transitions required to reach the node, which is well-defined due to the certificate-oblivious property. We proceed by induction on $n = |E(G-E_r)|$, where $E_r$ is a set of removed feasible edges (edges on feasible walks) containing an essential edge located before the second-splitting edge.
First, there must exist at least one \textit{Step-Pendant} edge $e'$ in $G' \setminus E_r$ (where $G' = G \cup F$). 
This holds because the feasible graph of $G \setminus E_r$ is empty by the inductive hypothesis, and any edge $f \in F$—the set of edges incident to nodes with maximum time—cannot share a \textit{step-adjacent} edge with others, nor can it convert an existing \textit{step-pendant} edge into a non-\textit{step-pendant} one.
Second, if the \textit{step-pendant} edge $e'$ is a feasible edge, we apply the inductive hypothesis to $G' \setminus E'_r$ where $E'_r = E_r \cup \{e'\}$. 
Otherwise, we apply it to $G'' \setminus E_r$ where $G'' = G' \setminus \{e'\}$.

\item \textbf{Complexity:}
The total number of edges $|E|$ is polynomial with respect to the input size. Each iterative sweep constructs a graph of size at most $|E|$ using a polynomial number of edge operations. Since each sweep removes at least one edge and never reinstates it, the process terminates in at most $|E|$ sweeps. Consequently, the total complexity remains \textbf{polynomial}. 
\end{itemize}
\end{proof}

\subsection{P=NP:Complexity and Convergence.}

\begin{theorem}[Main Result: $\mathsf{P}=\mathsf{NP}$]
For every language $\mathcal{L}$ in the complexity class $\mathsf{NP}$, there exists a deterministic algorithm that decides $\mathcal{L}$ in polynomial time. Consequently, $\mathsf{P} = \mathsf{NP}$.
\end{theorem}

\begin{proof}
The proof is established through the integration of the results from the preceding sections. 
First, according to \cref{highlem:feasible_graph_trimming_correctness}, any computation graph can be reduced to a feasible graph in polynomial time; this process eliminates all locally inconsistent edges while strictly preserving all computation walks to the designated final edges but results in a total collapse (an empty graph) if the essential edge before the second splitting edge is missing.
Building upon this result, the \VerificationWalkToTarget{} procedure, as established in \cref{highcor:verification_walk_correctness}, accurately \textbf{decides} the existence of \textit{computing-targeted walks} in a given computation graph within polynomial time. 
Finally, as \textbf{proven} by \cref{highlem:footmark_construction_correctness}, the incremental extension of $G_{footmark}$ via the exhaustive candidate edges in \cref{higalg:incremental_extension_of_footmarks} utilizes the existence of computing-targeted walks to each edge to decide the existence of an accepting certificate for any language $\mathcal{L} \in \mathsf{NP}$ in polynomial time. 
This sequence of deterministic polynomial-time operations ensures that $\mathsf{NP} \subseteq \mathsf{P}$, which, given the trivial inclusion $\mathsf{P} \subseteq \mathsf{NP}$, leads to the conclusion that $\mathsf{P} = \mathsf{NP}$.
\end{proof}

Crucially, every stage of our framework is constructed upon unit-edge operations—specifically, the incremental addition of edges, step-by-step removal of edges, or graph traversals designed to avoid revisiting edges---yielding an overall  time complexity of $\bigO(|E|^k)$. A formal correctness proof and comprehensive analysis of the time complexity is provided in \cref{subsec:reduction_from_np_to_p} of the Appendix.

\section{Related Work} \label{sec:related_works}
The foundations of the $\mathsf{P}$ vs $\mathsf{NP}$ problem were established through the development of $\mathsf{NP}$-completeness theory. The Cook–Levin theorem \cite{cook1971complexity} introduced the notion of $\mathsf{NP}$-completeness by showing that the Boolean satisfiability problem is complete for $\mathsf{NP}$ under polynomial-time reductions. Subsequently, Karp demonstrated that a wide range of fundamental combinatorial problems are $\mathsf{NP}$-complete \cite{karp1972reducibility}, establishing these problems as central objects in complexity theory.

Despite decades of sustained effort, traditional approaches have encountered fundamental barriers. Relativizing techniques \cite{baker1975relativizing}, Natural Proofs \cite{razborov1993natural}, and Algebrizing techniques \cite{aaronson2008algebrizing} have each identified distinct structural limitations that any valid proof must transcend. 

Specifically, the \textbf{Natural Proofs} barrier demonstrates the inability to establish a super-polynomial lower bound. Our framework suggests that this difficulty arises from the non-existence of the lower bound itself. In this sense, the ``hardness'' identified by Natural Proofs is not an obstacle to our construction; rather, it is a structural reflection of the inherent polynomial-time solvability of $\mathsf{NP}$ problems within our unified graph manifold.

The Relativization Barrier \cite{baker1975relativizing} for $\mathsf{P}=\mathsf{NP}$ proofs arises from a fundamental constraint in the functional simulation of Nondeterministic Turing Machines (NTMs) such as an $NP$ decider. It demonstrates that if a proof mechanism remains invariant even when augmented with a black-box oracle---implying the proof must hold across all oracle worlds---it cannot resolve the $\mathsf{P}=\mathsf{NP}$ question. This is because there exists a specific oracle $B$ such that $\mathsf{P}^B \neq \mathsf{NP}^B$, creating a logical contradiction for any oracle-independent proof attempt. This barrier implies that any successful proof must transcend such oracle-independent methods. 

Our framework, however, is inherently immune to this barrier because it bypasses the simulative paradigm of NTM entirely. Instead of ``running'' or ``tracing'' NTM branches, our approach provides a \textbf{constructive derivation} of the computation manifold within a static spacetime lattice. Crucially, the object of our structural analysis is the \textbf{Deterministic Verifier (DTM)}, not the NTM where an oracle $B$ would exert asymmetrical power. By shifting the focus to the fixed $\delta$-logic of the Verifier and utilizing the \textbf{topological information} of edges and paths---specifically the local and global connectivity invariants of the transition graph---we operate in a domain where the query-complexity gap of oracle-augmented NTMs is a logical non-sequitur. Furthermore, since the introduction of an oracle would destroy the very geometric manifold and fixed alphabet constraints we analyze, this is not an oracle-independent proof; thus, the relativization thesis is categorically inapplicable to this white-box structural model.

Similarly, our framework is exempt from the Algebrization Barrier \cite{aaronson2008algebrizing}, which targets proof techniques relying on arithmetization—the mapping of Boolean computations to low-degree polynomials. As with the relativization barrier, our methodology avoids the arithmetization of NTMs. We represent the verifier DTM transitions directly as edges in a graph without any intermediate polynomial mapping or functional extension; thus, our framework remains outside the mathematical domain where an ``extended algebraic oracle'' can be applied. Consequently, the proposed methodology bypasses both the NTM-based and algebraic constraints that have historically limited $\mathsf{P}$ vs $\mathsf{NP}$ research.

The long-standing complexity barriers—relativization, algebrization, and natural proofs—are not inherent obstacles to the $\mathsf{P} = \mathsf{NP}$ identity itself, but rather limitations of specific proof techniques. By shifting the perspective from black-box NTM-based lower bounds to an explicit, graph-based simulation of deterministic transitions, we provide a constructive methodology that operates beyond these traditional constraints.

To operationalize this structural shift, we ground our framework in the formal foundation of Certificate-Oblivious Turing Machines (OTMs), defined by head movements that are independent of the certificate content for a given input. 
The concept of obliviousness, originally formalized by Pippenger and Fischer~\cite{pippenger1979relations}, establishes that any Turing machine can be simulated by an oblivious counterpart with only polynomial overhead. 
By leveraging this transformation, we standardize the computational structure, ensuring that the input length—rather than the specific certificate values—governs the progression of the verification process.

\section{Preliminaries}

In this section, we introduce key definitions and concepts relevant to our study, providing a rigorous mathematical framework to support our analysis. This includes essential \textit{computation-theoretic} and \textit{graph-theoretic} concepts, such as formal definitions of NP, verifiers, and certificates, which lay the foundation for our approach.

Throughout this paper, we use \textit{function parameters} exclusively for input values, while \textit{procedure parameters} are annotated with directionality indicators:  
$\In$ denotes an input parameter (read-only),  
$\Out$ denotes an output parameter (write-only), and  
$\InOut$ denotes a parameter that is both read and modified during execution.

\subsection{Computation Theory} \label{sec:computation_theory}

A Turing machine consists of a tape and a tape head.
The tape is an infinite sequence of cells indexed by the integers, each containing a symbol from a finite alphabet.
The tape head can read the symbol in the current cell, write a symbol to the same cell, and move by one cell to the left or to the right.

Formally, a Turing machine is a tuple
\[
M = (Q, \Sigma, \Gamma, \delta, q_0, F),
\]
where:
\begin{itemize}
\item $Q$ is a finite non-empty set of states,
\item $\Gamma$ is a finite non-empty tape alphabet containing the blank symbol $\epsilon$,
\item $\Sigma \subseteq \Gamma \setminus \{\epsilon\}$ is the input alphabet,
\item $q_0 \in Q$ is the initial state,
\item $F \subseteq Q$ is the set of final (halting) states, and
\item $\delta$ is a transition relation
\[
\delta \subseteq ((Q \setminus F) \times \Gamma) \times (\Gamma \times \{-1, +1\} \times Q).
\]
\end{itemize}

A transition $((q, \sigma), (\sigma', d, q')) \in \delta$ indicates that when the machine is in state $q$ and reads symbol $\sigma$, it may write symbol $\sigma'$, move the tape head in direction $d$, and enter state $q'$.
The input is written on the tape starting at cell $0$, and the tape head initially points to cell $0$.
This convention is fixed throughout the paper and is adopted for notational convenience.

With this definition, each instruction can be represented as a 5-tuple $(q, \sigma, \sigma', d, q')$ where:
$q \in Q$ is the current state, 
$\sigma \in \Gamma$ is the current tape symbol, 
$\sigma' \in \Gamma$ is the symbol to write, 
$d \in \{-1,+1\}$ is the movement direction, and 
$q' \in Q$ is the next state.

An execution of a Turing machine program is a sequence of such instructions, denoted as $I_1, I_2, \cdots, I_n$, where $n$ is the number of instructions executed.

\begin{definition}
If a Turing machine has at least two transitions defined for the same state and input symbol, then it is called \textbf{nondeterministic}; otherwise, it is called \textbf{deterministic}~\cite{hein1996theory}.
We denote a deterministic Turing machine as DTM and a nondeterministic Turing machine as NTM.
\end{definition}

In a deterministic Turing machine, the transition relation $\delta$ becomes a partial function:
\[
\delta : (Q \setminus F) \times \Gamma \to \Gamma \times \{-1, +1\} \times Q.
\]

\begin{definition}
The \textbf{cell index} is defined as the offset from the starting cell, such that the cells to its right are assigned positive indices and those to its left are assigned negative indices.
\end{definition}

A \textbf{decision problem} is a computational problem where the output is either ``yes'' or ``no''. Such problems are typically addressed using a specific kind of Turing machine known as an \textbf{acceptor}.

\begin{definition}\label{def:acceptor}
An \textbf{acceptor Turing machine} is a Turing machine that always halts in either an accepting state or a rejecting state. This is functionally equivalent to a recognizer, but the term ``acceptor'' emphasizes its role in deciding acceptance. The final states of an acceptor consist only of a designated accepting state $\qacc$ and rejecting state $\qrej$.
\end{definition}
Since the set of final states for an acceptor is fixed as $F = \{\qacc, \qrej\}$, we can more specifically denote an acceptor Turing machine as a tuple:$M_{A} = (Q, \Sigma, \Gamma, \delta, q_0, \qacc, \qrej)$,
where $\qacc, \qrej \in Q$ are the distinct accepting and rejecting states, respectively. In this case, the transition relation $\delta$ is defined over $(Q \setminus \{\qacc, \qrej\}) \times \Gamma$.

Accordingly, a decision problem is said to be \textbf{decidable} if there exists an acceptor Turing machine that halts on every input and correctly decides whether to accept or reject.

\begin{definition}[NP via Nondeterministic Turing Machine]
A decision problem is in the class \textbf{NP} if it can be solved by a nondeterministic Turing machine within polynomial time. More formally, a language $\mathcal{L} \subseteq \Sigma^*$ belongs to \textsf{NP} if and only if there exists a nondeterministic Turing machine $M$ and a polynomial $p(\cdot)$ such that for every $X \in \Sigma^*$:
\begin{itemize}
    \item Every computation path of $M$ on input $X$ halts within $p(|X|)$ steps.
    \item $X \in \mathcal{L} \iff$ there exists at least one computation path of $M$ on input $X$ that reaches an \texttt{accept} state.
\end{itemize}
\end{definition}

An equivalent characterization of \textsf{NP} uses the notion of an \textbf{efficient certifier}, capturing the verifier-based perspective of nondeterministic computation.

\begin{definition}[NP via Verifier and Certificate]~\cite[p.464]{Kleinberg2005-qp} \label{def:certifier-based_definition}
A decision problem (or language) $\mathcal{L}$ belongs to the class \textsf{NP} if there exists a deterministic polynomial-time algorithm $B$, called an \textbf{efficient certifier} for $\mathcal{L}$, such that:
\begin{enumerate}
  \item $B$ is a deterministic algorithm that runs in polynomial time;
  \item $B$ takes two inputs: an instance $X$ and a certificate $Y$, and outputs ``yes'' or ``no'';
  \item There exists a polynomial function $q$ such that for every string $X$, we have $X \in \mathcal{L}$ if and only if there exists a certificate $Y$ with $|Y| \le q(|X|)$ such that $B(X, Y) = \text{``yes''}$.
\end{enumerate}
\end{definition}

While the above provides a high-level algorithmic view, a more rigorous formulation is required for analyzing the mechanical transitions of computation. This leads to a reformulation using acceptor Turing machines, where the physical layout of the input tape is explicitly considered.

\begin{definition}[NP via Acceptor Verification]
A language $\mathcal{L} \subseteq \Sigma^*$ belongs to \textsf{NP} if there exists a deterministic acceptor Turing machine $M$ (the verifier) and polynomials $p(\cdot)$ and $q(\cdot)$ such that for every $X \in \Sigma^*$ and  $Y \in \Sigma^*$ with $|Y| \le q(|X|)$:
\begin{itemize}
    \item $M$ halts on the input string $X\#Y$ within $p(|X| + |Y| + 1)$ steps, where the $+1$ accounts for the delimiter in the concatenated input.
    \item $X \in \mathcal{L} \iff \exists Y$ such that $M$ reaches an \texttt{accept} state on input $X\#Y$.
\end{itemize}
\end{definition}
\begin{remark}
It is important to emphasize that not every string $Y \in \Sigma^*$ within the length bound $q(|X|)$ is necessarily a valid or well-formatted certificate for $X$. These candidates may consist of arbitrary symbol sequences that do not constitute a correct proof; the verifier $M$ \textbf{is designed to} reject such invalid candidates by the definition. The essential property of \textsf{NP} is the existence of \textit{at least one} string $Y$ (among the exponential space of all possible candidates) that causes $M$ to reach an \texttt{accept} state when $X \in \mathcal{L}$. Throughout this paper, we use the term \textbf{problem instance} to refer to the fixed sequence $X\#$, and we use \textbf{verifier} to refer to the deterministic acceptor $M$ to maintain consistency with the standard computational model.
\end{remark}

This definition implies that every $NP$ problem has a deterministic verifier machine $M$ where $Y$ consists of all the symbols of $\Sigma$.

Beyond ordinary verifier Turing machines, there exists a class of structured verifiers where head trajectories remain uniform across different certificates. To formally analyze computation paths that are as independent as possible from the specific symbols processed—thereby ensuring a consistent computational layout—we introduce the concept of an oblivious Turing machine.
\begin{definition}[Oblivious Turing Machine] \label{def:oblivious_tm}
A deterministic Turing machine $M = (Q, \Sigma, \Gamma, \delta, q_0, F)$ is said to be \textbf{oblivious} if the position of its tape head at any computation step $t$ is uniquely determined by the length of the input string $|X|$ and the step index $t$, completely independent of the content of $X$. 
More formally, let $H_M(X, t)$ denote the cell index pointed to by the tape head of $M$ at step $t$ given an input $X \in \Sigma^*$. Then $M$ is oblivious if for any two inputs $X, X' \in \Sigma^*$ satisfying $|X| = |X'|$, it holds that:
\[
H_M(X, t) = H_M(X', t) \quad \text{for all } t \ge 0.
\]
\end{definition}

\begin{theorem}[Pippenger-Fischer Theorem] \label{thm:pippenger_fischer} 
Let $M$ be a standard deterministic multi-tape Turing machine that decides a language $\mathcal{L}$ within time $T(n)$ for an input of length $n$. Then, there exist functionally equivalent oblivious Turing machines that decide $\mathcal{L}$ such that:
\begin{enumerate}
    \item On any input $X \in \Sigma^*$, the final halting state ($\qacc$ or $\qrej$) and the output written on the tape are identical to those of $M$;
    \item There exists a \textbf{two-tape} oblivious Turing machine $M_{\text{obl}}^{(2)}$ whose total number of computation steps is bounded by $\bigO(T(n) \log T(n))$;
    \item There exists a \textbf{one-tape} oblivious Turing machine $M_{\text{obl}}^{(1)}$ whose total number of computation steps is bounded by $\bigO(T(n)^2)$.
\end{enumerate}
\end{theorem}

\begin{remark} \label{rem:verifier_otm}
In the original paper by Pippenger and Fischer~\cite{pippenger1979relations}, Theorem~3 formally highlights the two-tape result to achieve the $\bigO(T(n) \log T(n))$ upper bound. However, the one-tape $\bigO(T(n)^2)$ simulation mechanism (analyzed as `Version I` in their proof) is a well-established foundational result. Since our proposed architecture is inherently based on a single-tape framework, we utilize the one-tape variant presented in Item 3. This eliminates the need for complex multi-tape synchronization and aligns perfectly with our structural constraints.

To achieve obliviousness within a single tape, the simulator systematically sweeps across the entire used tape boundaries during every step, maintaining a dedicated marker to keep track of the virtual head's position.
This choice is fundamentally driven by architectural compatibility and implementation efficiency. Since this head marker is maintained within an enlarged alphabet space rather than altering the raw input stream, the input format remains strictly identical to that of the original Verifier Turing Machine. 
Consequently, this one-tape approach ensures that the input format of the constructed oblivious verifier Turing machine can be regarded as completely identical to that of the original verifier Turing machine, thereby fundamentally avoiding any requirement for input format translations or algorithmic modifications.
\end{remark}

We describe a simplified version of the transformation adapted to a single-tape Turing machine without modifying the original input format. 
The alphabet is enlarged to include a head marker $\sigma^H$ for each symbol $\sigma$ and a boundary marker $\$$.
 Furthermore, the states are extended to embed the direction and operation mode, while the original halting states are preserved.

\begin{remark}[Mechanical transform to 1-Tape Oblivious Verifier Turing Machine]
To achieve an oblivious simulation of a verifier Turing machine on a 1-tape model with quadratic-factor overhead, we define the following deterministic mechanical protocol:

\begin{enumerate}
    \item \textbf{Tape Initialization (Pre-processing):}
    Let $L = s_0 s_1 \dots s_n$ be the input string. The tape $L'$ is preprocessed before the main execution step as:
    \[ L' = \$ \cdot s_0^H \cdot s_1 \cdot \dots \cdot s_n \cdot \$ \]
    where $\$$ denotes the unique boundary markers at indices $-1$ and $n$, and $s_0^H$ (with the head marker at index $0$) represents the first symbol coupled with the head marker. The machine is initialized in state $(Build, Right)$ at index $0$, which transitions to $q_0$ with an embedded $(Scan, Right)$ mode upon completion, where $q_0$ is the initial state of the original Turing machine.

    \item \textbf{Execution Cycles (Step):}
    The simulation proceeds through three mutually exclusive operational modes:
    \begin{itemize}
        \item \textbf{Scan Mode $(Scan, Dir)$:} The machine traverses the tape in direction $Dir \in \{Left, Right\}$. The transition logic is prioritized as follows:
        \begin{itemize}
            \item \textbf{Move Trigger:} Upon encountering a symbol marked with $H$ (e.g., $s_i^H$), the machine executes the original transition rule and shifts to \textit{Move Mode}.
            \begin{itemize}
                \item \textbf{Match ($Dir_{TM} = Dir$):} Execute the original transition by changing the state, writing the output symbol $s'_i$, and transitioning to \textit{Move Mode}.
                \item \textbf{Mismatch ($Dir_{TM} \neq Dir$):} Maintain \textit{Scan Mode} and continue scanning in $Dir$, effectively ignoring the head marker to preserve scan continuity.
            \end{itemize}
            \textit{Example:} Given an original transition $(q, A) \rightarrow (p, B, Right)$:
            \begin{itemize}
                \item If $Dir = Right$, the machine triggers: $$(q\_Scan\_Right, A^H) \rightarrow (p\_Move\_Right, B, Right)$$.
                \item If $Dir = Left$, the machine bypasses the symbol: $$(q\_Scan\_Left, A^H) \rightarrow (q\_Scan\_Left, A^H, Left)$$.
            \end{itemize}
            \item \textit{Boundary Trigger:} Upon encountering the boundary marker $\$$, the machine transitions to \textit{Expand Mode}.
        \end{itemize}
        \item \textbf{Move Mode $(Move, Dir)$:} The machine updates the virtual head position using the head marker and returns to \textit{Scan Mode} while retaining the direction.
            \textit{Example:} $(p\_Move\_Right, B) \rightarrow (p\_Scan\_Right, B^H, Right)$.
        \item \textbf{Expand Mode $(Expand, Dir)$:} A deterministic procedure to accommodate the infinite tape requirement:
        \begin{enumerate}
            \item Replace the boundary marker $\$$ with the empty symbol $\epsilon$.
            \item Shift the boundary by generating a new $\$$ at the adjacent cell in direction $Dir$.
            \item Update the scan direction $Dir' \leftarrow \neg Dir$ and resume $(Scan, Dir')$.
        \end{enumerate}
    \end{itemize}
	Each round-trip cycle executes exactly one transition of the original Turing machine.

    \item \textbf{Formal Properties:}
    \begin{itemize}
        \item \textit{Decoupling:} Expanding the data domain $\Gamma$ using the control symbols $(\$, \sigma^H \text{for all} \sigma \in \Gamma)$ ensures data integrity for each symbol $\sigma \in \Gamma$.
        \item \textit{Deterministic Completeness:} The protocol maintains boundary markers dynamically, ensuring sufficient tape space at each step.
        \item \textit{Initialization Safety:} The $(Build, Right)$ state is a non-revisitable pre-processing phase that guarantees the initial head marker is at index $0$ and boundary markers are correctly positioned at indices $-1$ and $n$, ensuring the simulation begins in a valid initial state $q_0$ with the embedded $(Scan, Right)$ mode without premature expansion.
    \end{itemize}
\end{enumerate}
\end{remark}

\subsection{Graph Theory}\label{sec:graph_theory}
As the behavior of a Turing machine will later be modeled using graph structures,
we next introduce basic notions from graph theory that will be used to describe and analyze
such representations.

A directed graph is an ordered pair $G=(V,E)$ where $V$ is the set of vertices and $E \subseteq V \times V$ is the set of directed edges.

\newcommand{\init}{\mathrm{init}}
\newcommand{\term}{\mathrm{term}}
\newcommand{\Incoming}{\mathrm{In}}
\newcommand{\Outgoing}{\mathrm{Out}}
For an edge $e=(u,v) \in E$, defined as an ordered pair of vertices, $u$ is the \textbf{initial vertex} (or \textbf{tail}) of $e$, and $v$ is the \textbf{terminal vertex} (or \textbf{head}) of $e$. We denote these by $\init(e)$ and $\term(e)$, respectively.
Thus, an edge $e$ points from its tail to its head.

In a directed graph, an edge $e = (u, v)$ is said to be an \textbf{incoming edge} to vertex $v$ and an \textbf{outgoing edge} from vertex $u$.
For any vertex $w \in V(G)$, the sets of all incoming and outgoing edges incident to $w$ in graph $G$ are denoted by $\Incoming_G(w)$ and $\Outgoing_G(w)$, respectively. If the context is clear, the subscript $G$ may be omitted.
For convenience, the notation $e \in G$ means that $e$ is an edge of $G$, i.e., $e \in E(G)$.

The order of a graph $G$, denoted by $|V(G)|$, is the number of vertices.
We define the \textbf{size of a graph} $G$ to be the number of its edges, denoted by $|E(G)|$.

A \textbf{degree} of a node $v$, denoted by $\textrm{deg}(v)$, is the total number of edges incident to it. Specifically, in a directed graph, the degree is the sum of its \textit{in-degree} (number of incoming edges) and \textit{out-degree} (number of outgoing edges). A \textbf{pendant edge} is then defined as an edge incident to a node with a degree of $1$.

A \textbf{walk} of length $k$ in a directed graph $G$ is traditionally defined as an alternating sequence $v_0 e_0 v_1 e_1 \dots e_{k-1} v_k$ of vertices and edges such that $e_i = (v_i, v_{i+1})$ for all $0 \le i < k$. For notational convenience, a walk can be uniquely represented as an \textbf{edge sequence} $W = e_0 e_1 \dots e_{k-1}$ when $k > 0$. Alternatively, it can be denoted as a \textbf{vertex sequence} $W = (v_0, v_1, \dots, v_n)$, which is particularly useful when analyzing the internal attributes of each vertex along the walk. A \textbf{subwalk} is any contiguous subsequence of $W$ that itself constitutes a valid walk.

A path is a walk in which all vertices are distinct (i.e., no vertex is repeated). A path in a directed graph $G$ is called maximal if it cannot be extended in $G$ while maintaining the path property.
For an edge $e = (u, v)$ in $G$, both vertices $u$ and $v$ are said to be \textbf{incident to} the edge $e$.
For a nonempty set of edges $E$, we say that the subgraph $\langle E \rangle$ (or $G[E]$) is the subgraph \textbf{induced by} $E$ if it consists of all edges in $E$ and all vertices incident to at least one edge in $E$.

\begin{definition}[Graph Extension and Removal]
Let $G = (V, E)$ be a graph and $H \subseteq G$ be a subgraph.
For a set of edges $F \subseteq E(G)$, let
\[
V(F) := \{ u \in V(G) \mid \exists v \text{ such that } (u,v)\in F \text{ or } (v,u)\in F \}.
\]
We define:
\begin{itemize}
    \item The extension of $H$ by $F$, denoted $H + F$, as the graph
    \[
        H + F := (V(H) \cup V(F),\, E(H) \cup F).
    \]
    \item The removal of $F$ from $H$, denoted $H - F$, as the graph
    \[
        H - F := (V(H),\, E(H) \setminus F).
    \]
\end{itemize}
For a single edge $e$, we write $H + e$ and $H - e$ instead of
$H + \{e\}$ and $H - \{e\}$, respectively.
\end{definition}

\subsection{Summary of Notations}
The following table summarizes the key symbols and notations used throughout the paper. 
For clarity, only the most important or frequently referenced symbols are included. 
A more comprehensive list of technical terms and definitions is provided in \cref{appendix:terminology}.

\clearpage

\begin{table}[htbp]
\centering 
\caption{Summary of Key Symbols}
\begin{tabularx}{\textwidth}{XlXlX}
\toprule
\textbf{Symbol} & \textbf{Description} & \textbf{Reference} \\
\midrule
$G = (V, E)$ & Computation graph with vertex set $V$ and edge set $E$ & \cref{sec:graph_theory} \\
$Q$ & Set of Turing machine states & \cref{sec:computation_theory} \\
$\Gamma$ & Tape alphabet & \cref{sec:computation_theory} \\
$s'$ & Symbol to write on tape as per transition function (element of $\Gamma$) &	\cref{subsec:comp_graph}\\
$s$ & Current symbol read from tape (element of $\Gamma$) & \cref{subsec:comp_graph}\\
$\Sigma$ & Input alphabet ($\Sigma \subseteq \Gamma \setminus \{\epsilon\}$) & \cref{sec:computation_theory} \\
$\epsilon$ & Blank symbol on tape & \cref{sec:computation_theory} \\
$\delta$ & Transition function $\delta(q, \sigma) = (q', \sigma', d)$ & \cref{sec:computation_theory} \\
$p(n)$ & Polynomial time bound for verifier & \cref{sec:computation_theory} \\
$q_{\mathrm{acc}}$, $q_{\mathrm{rej}}$ & Accept / Reject states & \cref{sec:computation_theory} \\
$F$ & Set of final states  & \cref{sec:computation_theory} \\
$V_{j,t}^{q,\sigma}$ & Transition case at cell $j$, tier $t$, state $q$, symbol $\sigma$ & \cref{subsec:comp_graph} \\
$W$ & Computation walk (sequence of edges) & \cref{subsec:comp_graph} \\
$e_f$ & Final edge of a computation walk & \cref{subsec:feasible_graph_concept} \\
$e_t$ & Verification target edge for the footmarks extension & \cref{sec:walk_verification} \\
$E_f$ & Set of designated final edges & \cref{subsec:feasible_graph_concept} \\
$E_i$ & Edge slice  in $G$ with index $i$, i.e., $E_i = \{e \in E(G) \mid \mathsf{index}(e) = i\}$. & \cref{subsec:comp_graph}  \\
$C^E{(k)}$ & Step-extended component of depth $k$, constructed from $E_f$. &  \cref{subsec:feasible_graph_concept} \\
$\widehat{C}, C$ & Set of cover edges reachable via ceiling-adjacency from $E_f$. & \cref{subsec:feasible_graph_concept}\\
$\mathsf{MSEC}_G(E_f)$ & Maximal step-extended component of $E_f$ in $G$.& \cref{subsec:feasible_graph_concept} \\
$\indexOf(v)$ & Cell index (tape position) of node $v$ & \cref{subsec:comp_graph} \\
$\tier(v)$ & Tier (number of transitions at the cell of $index(v)$) & \cref{subsec:comp_graph} \\
$\state(v)$ & Current state of computation node $v$ & \cref{subsec:comp_graph} \\
$\symbol(v)$ & Current tape symbol at node $v$ & \cref{subsec:comp_graph} \\
$\lastState(v)$ & Former state at the cell of $index(v)$ before current node & \cref{subsec:comp_graph} \\
$\lastSymbol(v)$ & Former symbol at the cell of $index(v)$ before current node& \cref{subsec:comp_graph} \\
$\nextState(v)$ & State after transition from $v$ & \cref{subsec:comp_graph} \\
$\output(v)$ & Symbol written by transition from $v$ & \cref{subsec:comp_graph} \\
$\nextIndex(v)$ & Cell index moved to after transition from $v$ & \cref{subsec:comp_graph} \\
$\IPrec_G(v)$ & Index-precedent transition case of node $v$ & \cref{subsec:comp_graph} \\
$\ISucc_G(v)$ & Index-succedent nodes of $v$ in graph $G$ & \cref{subsec:comp_graph} \\
$\indexOf(e)$ & Edge index, i.e., $\min(\indexOf(u), \indexOf(v))$ for edge $e = (u,v)$ & \cref{subsec:comp_graph} \\
$\mathrm{init}(e)$ & Initial vertex of edge $e = (u, v)$, i.e., $u$ & \cref{subsec:comp_graph} \\
$\term(e)$ & Terminal vertex of edge $e = (u, v)$, i.e., $v$ & \cref{subsec:comp_graph} \\
$F(\mathcal{W})$ & Footmarks graph induced by walk set $\mathcal{W}$ & \cref{subsec:comp_graph} \\
$G^{(i)}$ & The $i$-th pruned graph in the pruning sequence. & \cref{subsec:redundant_futile_edge_detection} \\
$W_i$ & The $i$-th attempted walk selected from $G^{(i)}$, which is not feasible. & \cref{subsec:redundant_futile_edge_detection} \\
$R$&  \makecell[l]{A critical attempted walk (or graph) that removes all feasible walks \\during the pruning process.} & \cref{subsec:redundant_futile_edge_detection} \\
\bottomrule
\end{tabularx}
\label{tab:key_symbols}
\end{table}
\clearpage
\section{Turing Machine Computation Model}\label{sec:comp_model}

This section formalizes a geometric framework that maps Deterministic Turing Machine (DTM) operations onto a discrete graph structure. The objective is to transform temporal transition sequences into a spatial manifold, enabling a polynomial-time analysis of the entire certificate space of an $\mathsf{NP}$ verifier. The logical progression of this model is as follows:

\begin{enumerate}
    \item \textbf{Atomic Formalization:} We define \textbf{Computation Nodes} and \textbf{Transition Cases} to encode DTM configurations and local histories (tiers) into a 6-tuple topological unit.
    \item \textbf{Structural Confinement:} We prove that the \textbf{Footmarks Graph}---the union of all valid computation walks across all possible certificates---is strictly contained within a \textbf{Polynomial Bound} of $\bigO(p(n)^3)$. 
    \item \textbf{Proof Roadmap:} Leveraging this bound, we provide a strategic roadmap demonstrating how the $\mathsf{P}$ versus $\mathsf{NP}$ question is resolved through deterministic pruning and structural verification within this constrained space.
\end{enumerate}

By establishing these bounds upfront, we provide the structural guarantee that our subsequent verification procedures remain computationally tractable, shifting the problem from exhaustive search to deterministic geometric analysis.
\subsection{Concept of Computation Model}\label{subsec:comp_graph}

For each cell, we can consider the prior state $\vdown{q}$ and the prior tape symbol $\vdown{\sigma}$ before the current symbol is written by the last instruction executed (or transition occurred) at the cell in a DTM  including the current head position, current state, current tape symbol.
We also consider the number of times transitions occurred at each cell of the Turing machine.

Now we can define a computation node (or vertex) as a 6-tuple, where a directed edge between them denotes a transition or an instruction of the Turing machine.

\begin{definition}[Computation Node]\label{def:computation_node}
Given a deterministic Turing machine $M = (Q, \Sigma, \Gamma, \delta, q_0, F)$, a \textbf{computation node} is defined as a 6-tuple $v = (i, t, q, \sigma, \vdown{q}, \vdown{\sigma})$, representing a local configuration of a tape cell, where:
\begin{itemize}
    \item $i \in \mathbb{Z}$ representing the tape cell index, denoted by $\indexOf(v)$.
    \item $t \in \mathbb{N}_0$ representing the \textbf{tier}, the number of transitions that have occurred at cell $i$ prior to this configuration, denoted by $\tier(v)$.
    \item $q \in Q$ representing the state after the $t$-th transition at cell index $i$, denoted by $\state(v)$.
    \item $\sigma \in \Gamma$ representing the tape symbol after the $t$-th transition at cell index $i$, denoted by $\symbol(v)$.
    \item $\vdown{q} \in Q \cup \{\bot\}$ representing the former state right before the $t$-th transition at cell $i$, denoted by $\lastState(v)$. If $t=0$, $\vdown{q} = \bot$.
    \item $\vdown{\sigma} \in \Gamma \cup \{\bot\}$ representing the former tape symbol right before the $t$-th transition at cell $i$, denoted by $\lastSymbol(v)$. If $t=0$, $\vdown{\sigma} = \bot$.
\end{itemize} 
A node $v$ is called an \textbf{initial node} if $i=0, t=0, q=q_0, \vdown{q} = \bot$, and $\vdown{\sigma} = \bot$. The set of initial nodes $V_0$ denotes the starting configuration of $M$.

\textbf{Deterministic Projections:} 
Since $M$ is deterministic, the transition function $\delta$ is uniquely defined for the pair $(\state(v), \symbol(v))$. For each node $v$, we denote the resulting state, written symbol, and head direction as $\nextState(v)$, $\output(v)$, and $\dir(v)$, respectively, satisfying:
\[ \delta(\state(v), \symbol(v)) = (\nextState(v), \output(v), \dir(v)) \]
\end{definition}

\begin{remark}[Domain of Computation Nodes]
The set of all computation nodes $V$ is designed to encompass every theoretically possible logical configuration of a tape cell. Consequently, $V$ includes nodes representing configurations that cannot occur in any valid execution of $M$. By defining such a comprehensive domain, we can treat the identification of a valid computation as a filtering process (triming) over this potential configuration space.
\end{remark}

\begin{remark}[Clarification of Symbol Roles]
In a computation node $v = (i, t, q, \sigma, \vdown{q}, \vdown{\sigma})$, it is important to distinguish the roles of $\sigma$ and $\vdown{\sigma}$. The component $\sigma$ represents the \emph{current tape symbol}, which is the output written by the $t$-th transition at cell $i$. By contrast, $\vdown{\sigma}$ denotes the \textbf{pre-transition symbol} that was present in cell $i$ before the most recent transition occurred (i.e., before the transition that caused the tape head to leave the current cell).
In particular, $\vdown{\sigma}$ is \emph{not} the output of the most recent transition; rather, $\sigma$ is the last output symbol, while $\vdown{\sigma}$ records the former content of the cell.
\end{remark}

\begin{definition}[Transition Case] \label{def:transition_case}
A \textbf{transition case} $T$ (specifically $V_{i,t}^{q,\sigma}$) is defined as the set of all computation nodes $v \in V$ such that $\indexOf(v)=i$, $\tier(v)=t$, $\state(v)=q$, and $\symbol(v)=\sigma$.

\textbf{Deterministic Projections for Cases:} 
Since all nodes $v \in T$ share the same state $q$ and symbol $\sigma$, the deterministic projections are uniquely determined by the case $T$. We denote them as $\nextState(T)$, $\output(T)$, and $\dir(T)$, satisfying:
\[ (\nextState(T), \output(T), \dir(T)) = \delta(q, \sigma) \]
\end{definition}

\begin{remark}[Bounded Variability and Future Determinism]
The transition case $T$ functions as a structural unit that groups nodes based on their current configuration at $(i, t)$. This categorization highlights a key property of DTMs in our graph:
\begin{itemize}
    \item \textbf{History-Independent Future:} While nodes within a single case $T$ may represent different histories (i.e., different $\vdown{q}, \vdown{\sigma}$), their subsequent behavior—the state handed to the next cell, the symbol written to the tape, and the movement direction—is identical.
    \item \textbf{Internal Cardinality:} The number of nodes in each transition case is $|Q| \times |\Gamma|$ for $t > 0$, and exactly $1$ for $t = 0$.
    \item \textbf{Global Complexity:} For any coordinate $(i, t)$, there are exactly $|Q| \times |\Gamma|$ transition cases, ensuring that the local complexity of the graph is independent of the input length $n$. These constraints ensure that even though the computation graph explores a large configuration space, each coordinate $(i, t)$ is represented by a finite and constant number of states (bounded by $|Q|^2 \times |\Gamma|^2$), facilitating the pruning of invalid computation paths.
\end{itemize}
\end{remark}

\begin{definition}[Computation Graph]
Given a deterministic Turing machine $M$, the \textbf{computation graph} $G = (V, E)$ is a directed graph where $V$ is the set of all possible computation nodes as defined in Definition \ref{def:computation_node}. A directed edge exists from node $u$ to node $v$, denoted by $(u, v) \in E$, only if:
\[ |\,\indexOf(v) - \indexOf(u)\,| = 1 \]
When the context is clear, we refer to such a graph simply as a \emph{computation graph} $G$. When specified, we refer to $G$ as a \textbf{computation graph with a set of initial nodes $V_0$} to denote a computation graph whose initial nodes are only those in $V_0$.
\end{definition}

\begin{remark}[Finiteness and Complexity]
In general, a computation graph may be infinite. However, since we restrict our attention to halting Turing machines with time complexity $T(n)$ and space complexity $S(n)$, the resulting graph $G$ is established as a finite directed graph. The number of vertices is bounded by $S(n) \times T(n) \times |Q|^2 \times |\Gamma|^2$, ensuring the graph remains within a polynomial size relative to the input length and the number of transitions.
\end{remark}

\begin{definition}\label{def:width-height-definition}
Given a computation graph $G$, the \textbf{width} of $G$ (denoted by $w$) is defined as the difference between the maximum and minimum indices of the vertices in $G$:
\[ w = \max_{v \in V(G)} \indexOf(v) - \min_{v \in V(G)} \indexOf(v). \]
The \textbf{height} of $G$ (denoted by $h$) is defined as the maximum tier among all vertices in $G$:
\[ h = \max_{v \in V(G)} \tier(v). \]
\end{definition}

\begin{proposition}\label{prop:graph_size_bound}
Since edges exist only between nodes $u, v$ such that $|\indexOf(u) - \indexOf(v)| = 1$, the total number of edges $|E(G)|$ is bounded by:
\[ |E(G)| \le w \cdot h^2 \cdot |Q|^2 \cdot |\Gamma|^2, \]
where $Q$ is the set of states and $\Gamma$ is the tape alphabet of the underlying Turing machine.
\end{proposition}

\begin{definition}[Edge Index and Direction]\label{def:edge_index_dir} 
For an edge $e = (u, v)$ in a computation graph $G$, the \textbf{index of an edge} is defined as the minimum index of its endpoints: $\indexOf(e) = \min(\indexOf(u), \indexOf(v))$. The \textbf{direction of an edge} represents the head's displacement: $\dir(e) = \indexOf(v) - \indexOf(u)$.
\end{definition}

\begin{definition}[Edge Slice]\label{def:edge_slice}
For each integer $i \in \mathbb{Z}$, the \textbf{edge slice} $E_i$ is the set of all edges in $G$ with index $i$:
\[ E_i = \{ e \in E(G) \mid \indexOf(e) = i \} \]
An edge slice is treated as a formally indexed set $(i, E_i)$. In particular, $E_i$ remains a distinguished, index-specific slice even if $E_i = \emptyset$. 
\end{definition}

\begin{remark}[Physical Interpretation of Slices]
Geometrically, an edge slice $E_i$ represents the set of all possible transitions that cross the boundary between tape cell $i$ and cell $i+1$. Specifically, an edge $e=(u,v)$ belongs to $E_i$ if it corresponds to a move from $i$ to $i+1$ (where $\dir(e)=1$) or from $i+1$ to $i$ (where $\dir(e)=-1$). This construction allows us to analyze the computation as a flow of information across discrete spatial boundaries.
\end{remark}

\begin{definition}[Index-Predecessor] \label{def:index-predecessor}
Let $W$ be a walk in $G$. The \textbf{index-predecessor} of a node $v$ on $W$, denoted by $\ipred_W(v)$, is the last node $p$ appearing before $v$ on $W$ such that $\indexOf(p) = \indexOf(v)$. If $v$ is the first node on $W$ with that index, we define $\ipred_W(v) = \bot$.

Similarly, the \textbf{index-predecessor} of an edge $e$ on $W$, denoted by $\ipred_W(e)$, is the last edge $e_p$ appearing before $e$ on $W$ such that $\indexOf(e_p) = \indexOf(e)$. If no such edge exists, $\ipred_W(e) = \bot$.
\end{definition}

\begin{definition}[Index-Successor] \label{def:index-successor}
Let $W$ be a walk in $G$. The \textbf{index-successor} of a node $v$ on $W$, denoted by $\isucc_W(v)$, is the first node $s$ appearing after $v$ on $W$ such that $\indexOf(s) = \indexOf(v)$. If $v$ is the last node on $W$ with that index, we define $\isucc_W(v) = \bot$.

Similarly, the \textbf{index-successor} of an edge $e$ on $W$, denoted by $\isucc_W(e)$, is the first edge $e_s$ appearing after $e$ on $W$ such that $\indexOf(e_s) = \indexOf(e)$. If no such edge exists, $\isucc_W(e) = \bot$.
\end{definition}

\begin{definition}[Computation Walk] \label{def:computation_walk}
A \textbf{computation walk} $W = (v_0, v_1, \dots, v_n)$ is a sequence of computation nodes representing the step-by-step execution of a Turing machine $M$. Formally, for each step $k$, the node $v_k$ represents the 6-tuple configuration of the tape cell pointed to by the head after $k$ transitions. Equivalently, a walk $W$ is a computation walk if and only if for every node $v_k$, the following consistency conditions are satisfied:

\begin{enumerate}
    \item \textbf{Initial Vertex Condition:} $v_0$ is an \textbf{initial node} (i.e., $\indexOf(v_0) = 0$, $\tier(v_0) = 0$, and $\state(v_0) = q_0$, where $q_0$ is the initial state of the corresponding Turing machine $M$).
    \item \textbf{Tier Consistency:} The tier of $v_k$ reflects its sequential visit count to cell $\indexOf(v_k)$:
    \[ \tier(v_k) = \begin{cases} \tier(\ipred_{W}(v_k)) + 1 & \text{if } \ipred_{W}(v_k) \neq \bot \\ 0 & \text{if } \ipred_{W}(v_k) = \bot \end{cases} \]

    \item \textbf{State/Symbol History (Index-Predecessor Flow):} The node $v_k$ inherits the state and symbol recorded during the cell's previous visit:
    \[ (\lastState(v_k), \lastSymbol(v_k)) = \begin{cases} (\state(\ipred_{W}(v_k)), \symbol(\ipred_{W}(v_k))) & \text{if } \ipred_{W}(v_k) \neq \bot \\ (\bot, \bot) & \text{if } \ipred_{W}(v_k) = \bot \end{cases} \]
    \item \textbf{Transition Consistency:} The transition $\delta(\state(v_k), \symbol(v_k)) = (q', \sigma', d)$ at node $v_k$ uniquely determines the configuration of subsequent nodes in the walk:
    \begin{itemize}
        \item \textbf{Head Flow:} The resulting state $q' = \nextState(v_k)$ must match the state of the next node, $\nextState(v_k) = \state(v_{k+1})$ for $k < n$.
        \item \textbf{Cell Flow:} The written symbol $\sigma' = \output(v_k)$ must match the symbol read by the cell's next visitor, $\output(v_k) = \symbol(\isucc_{W}(v_k))$ if $\isucc_{W}(v_k) \neq \bot$.
        \item \textbf{Displacement:} The direction $d = \dir(v_k)$ must satisfy the spatial constraint $\dir(v_k) = \indexOf(v_{k+1}) - \indexOf(v_k)$.
    \end{itemize}
\end{enumerate}
\end{definition}

\begin{remark}[Edges as Transitions]
In this framework, each directed edge $e_k = (v_k, v_{k+1})$ in a computation walk represents a single transition step of $M$. The node $v_k$ provides a local snapshot of the cell currently scanned by the head. Specifically, for an instruction $I = (q, \sigma, \sigma', d, q')$ where $\delta(q, \sigma)=(q', \sigma', d)$; let $E_I$ be the set of edges compatible with $I$. An edge $(u, v)$ belongs to $E_I$ only if $u \in V_{i,t}^{q,\sigma}$ and $v \in \bigcup_{s \in \Gamma, t' \in \mathbb{N}_0} V_{i+\Delta, t'}^{q', s}$ for some $i \in \mathbb{Z}$ and $t \in \mathbb{N}_0$, where $\Delta$ is the spatial displacement (1 for $R$, $-1$ for $L$). This formally maps the state-transition logic of the Turing machine onto the structural connectivity of the computation graph.
\end{remark}

\begin{remark}[Computation Walk-Path Equivalence and Terminology] \label{rem:equivalence_of_computation_walk_path}
From the consistency conditions, it follows that for any vertex $v$ on a computation walk $W$, $\tier(\isucc_W(v)) = \tier(v) + 1$ and $\tier(v) = \tier(\ipred_W(v)) + 1$. This strict monotonicity of the tier attribute ensures that no vertex can be repeated in $W$. Consequently, every computation walk is structurally a \textbf{simple path}. In particular, a \textbf{maximal computation walk} is a maximal path that is a computation walk. This implies that a computation walk $W$ in a computation graph $G$ is maximal if no $W+e$ computation walk exists for any $e \in E(G) \setminus E(W)$. 

To avoid confusion with general connectivity paths in graph theory, we strictly adhere to the following terminology:
\begin{itemize}
    \item We use the term \textbf{computation walk} exclusively to denote a sequence that satisfies all consistency conditions in Definition~\ref{def:computation_walk}. 
    \item A walk in $G$ that violates any of these conditions is referred to simply as a \textbf{graph walk} or a sequence of edges, and is not considered a computation walk. 
    \item While the term \textbf{valid computation walk} may be used for emphasis, the adjective is technically redundant: by definition, an "invalid computation walk" is a contradiction in terms and does not exist within our framework.
\end{itemize}
Furthermore, since no vertex or edge is repeated, a computation walk $W$ can be treated as an induced subgraph of $G$ consisting of the vertex and edge sets of $W$. This allows for the direct application of graph-theoretic properties to a computation walk representing an individual execution.
\end{remark}

\begin{remark}[Determinism and Path Bijectivity]
Since $M$ is a deterministic Turing machine, a fixed input (and certificate) string uniquely determines the entire execution sequence. In the context of our graph, this implies that for a given starting configuration, there exists at most one valid computation walk of any particular length. Consequently, each maximal computation walk that terminates in a halting configuration corresponds bijectively to a valid deterministic execution of the machine for the provided input.
\end{remark}

\begin{definition}[Surface of a Computation Walk]\label{def:surface}
For a computation walk $W$, the \textbf{surface} $\mathcal{S}(W)$ is defined as a sequence of transition cases indexed by cell positions $i \in \mathbb{Z}$:
\[ \mathcal{S}(W) = \{ T_i \mid i \in \mathbb{Z} \} \]
where each $T_i$ represents the status of cell $i$ after the execution of $W$, satisfying:
\begin{itemize}
    \item $T_i = T$ if $T$ is the transition case of the \textbf{last node} $v \in W$ such that $\indexOf(v) = i$.
    \item $T_i = \bot$ if no node $v \in W$ satisfies $\indexOf(v) = i$ (i.e., cell $i$ has not been visited).
\end{itemize}

\end{definition}
\begin{remark}[Functional Role and Composition of the Surface]
The primary purpose of the surface $\mathcal{S}(W)$ is to track the evolving state of the tape and the transition history of each cell. It serves as a dynamic interface that bridges the completed computation with its potential future steps by maintaining the following information for each cell $i$:
\begin{itemize}
    \item \textbf{Transition History:} For visited cells, $\mathcal{S}(W)$ stores the transition case $T$ of the most recent visit. This provides the \textbf{current tape symbol} ($\sigma = \output(T)$) for future reads and the \textbf{history indicators} ($\state(T), \symbol(T)$) required to define the $\lastState$ and $\lastSymbol$ of the next node at that index.
    \item \textbf{Computational Frontier:} The surface explicitly distinguishes between visited and unvisited regions. A cell $i$ is unvisited if $T_i = \bot$, denoting regions that remain in their initial state. When a cell is visited for the first time, it enters the surface at \textbf{tier 0} (where $\tier(v)=0$ and $\lastState(v)=\lastSymbol(v)=\bot$).
\end{itemize}
Consequently, the surface corresponding to the final node of a walk $W$ contains precisely the necessary and sufficient information to determine the \textbf{immediate next transition}, acting as a compressed snapshot of the machine's operational status.
\end{remark}

\begin{figure}[t]
	\centering
	\includegraphics[width=1.0\columnwidth]{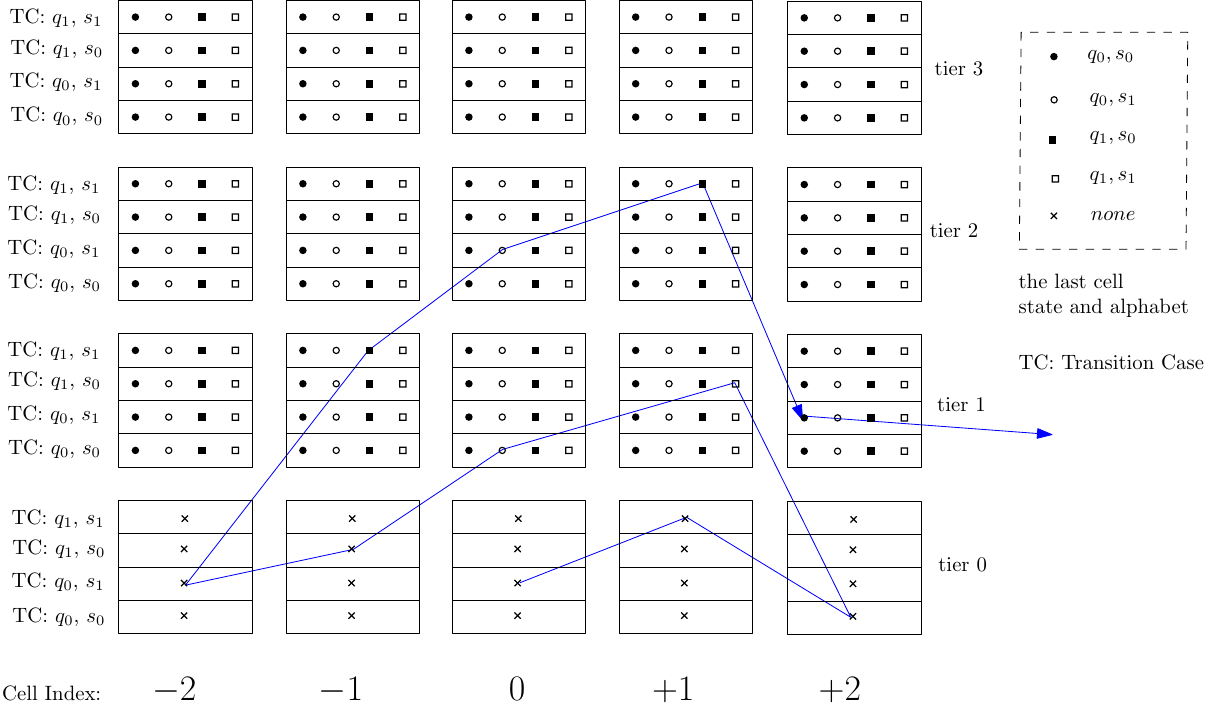}
	\caption{Turing machine Computation Model}
	\label{fig:computation_model}
	\Description{A two-dimensional representation of a computation graph organized as a grid where each cell corresponds to a specific index and tier.}
\end{figure}

Since a computation node encapsulates comprehensive information beyond just the machine state and tape symbol—specifically by incorporating temporal (tier) and historical (last state/symbol) data—it provides a sufficient basis for a complete simulation of a Turing machine. 
A formal demonstration of how computation walks are constructed upon these surfaces is provided in Appendix \ref{subsec:construct_computation_walk}.

\begin{definition}\label{def:footmarks_graph}
Let $\mathcal{W}$ be a set of computation walks. The \textbf{footmark graph} $F(\mathcal{W})$ is the graph defined by the union of all vertices and edges appearing in the walks of $\mathcal{W}$. Formally:
\[ 
V(F(\mathcal{W})) = \bigcup_{W \in \mathcal{W}} V(W), \quad E(F(\mathcal{W})) = \bigcup_{W \in \mathcal{W}} E(W). 
\]
When no ambiguity arises, we may \textit{simply} refer to $F(\mathcal{W})$ as the \emph{footmarks}. For an edge $e$ (or a set of edges $E$), we refer to $F(\mathcal{W}) + e$ (or $F(\mathcal{W}) + E$) as the \textbf{$e$-augmented} (or \textbf{$E$-augmented}) \textbf{footmarks}.
\end{definition}

\begin{definition}
A computation node $v$ is called a \textbf{folding node} (or \textbf{folding vertex}) if there exist two edges $e$ and $f$ incident to $v$, where $e$ is an incoming edge, $f$ is an outgoing edge, and $\indexOf(e) = \indexOf(f)$.
In this case, both $e$ and $f$ are called \textbf{folding edges} of $v$. \textit{If needed, we refer to $e$ as the incoming folding edge and $f$ as the outgoing folding edge of $v$.}
\end{definition}

\begin{remark}
The existence of folding nodes implies that for an edge $e = (u,v)$, its index-predecessor edge $\vdown{e} = (v', u')$ does not necessarily consist of the index-predecessor nodes of $v$ and $u$ in a straightforward manner. This discrepancy occurs because folding nodes allow the computation walk to traverse edges with the same index through a sequence of \textbf{"folded'' (direction-changing)} transitions, decoupling the immediate predecessor relationship from the global index-predecessor node relationship.
\end{remark}

The structural consistency of a computation walk implies that for any node $v$ with $\tier(v) > 0$, the state and symbol it inherits must originate from a unique prior configuration. While $\ipred_W(v)$ denotes the specific node in a walk, we can generalize this to the transition case from which $v$ must have descended.
\begin{definition}[Index-Precedent] \label{def:index-precedent_nodes}
The \textbf{index-precedent} of a computation node $v \in V(G)$ with $\tier(v) > 0$ is the unique transition case $P$ satisfying:
\[ (\indexOf(P), \tier(P), \state(P), \symbol(P)) = (\indexOf(v), \tier(v) - 1, \lastState(v), \lastSymbol(v)). \]
This unique case is formally denoted by $\IPrec_G(v)$, or simply $\IPrec(v)$ when the graph context is clear. Since $P$ is a transition case of a deterministic Turing machine, all nodes $u \in P$ share a unique transition rule, and consequently, all outgoing edges from $P$ share the same direction $\dir(P)$ determined by $\delta(\state(P), \symbol(P))$.
\end{definition}

\begin{remark}[Deterministic Structural Integrity]
The determinism of the transition function implies that for any node $v$ in the footmark graph $G = F(\mathcal{W})$, the symbol $\symbol(v)$ is not arbitrary; it must be the "written" result of its index-precedent. 
Specifically, $\symbol(v) = \output(P)$, where $P = \IPrec_G(v)$. This ensures that every node in the footmarks is logically consistent with the cell's history.
\end{remark}

\begin{lemma}[Properties of Index-Predecessor Edges] \label{lem:property_of_index_predecessor}
Let $e = (u,v)$ be an edge in a computation walk $W$ in a computation graph $G$.
Then the index-predecessor of $e$ on $W$ satisfies the following properties:
\begin{enumerate}
    \item The index-predecessor edge $\vdown{e}$ with $\indexOf(\vdown{e}) = \indexOf(e)$ must originate from a node in $\IPrec_G(v)$.
    \item The index-predecessor edge $\vdown{e}$ must have the opposite direction to $e$.
\end{enumerate}
\begin{proof}
\textbf{(1) Inclusion in $\IPrec_G(v)$:}  
Suppose, for the sake of contradiction, that there exists an index-predecessor $\vdown{e}=(v',u')$ of $e=(u,v)$ on $W$ such that $\indexOf(\vdown{e}) = \indexOf(e)$ but $v' \notin \IPrec_G(v)$.  
By the definitions of an index-predecessor node and a computation walk, let $w = \ipred_W(v) \in \IPrec_G(v)$ be the node immediately preceding $v$ in the sequence of nodes on the walk $W$.  
Since $W$ is a connected path, there must exist a subwalk $W_1$ from $v'$ to $w$ and a subsequent subwalk $W_2$ from $w$ to $u$.  

Because the indices of adjacent nodes in a computation walk differ by exactly $\pm 1$, the walk $W_2$ starting from $w$ (where $\indexOf(w) = \indexOf(v)$) must traverse an edge $e'$ with $\indexOf(e') = \indexOf(e)$ to reach $u$.  
However, as $W$ is a simple path, there is no overlap of edges between $W_1$ and $W_2$, implying $e' \neq \vdown{e}$. 
The existence of $e'$ after $\vdown{e}$ contradicts the assumption that $\vdown{e}$ is the \textit{index-predecessor} (the last such edge before $e$).  
Hence, all such edges $\vdown{e}$ must originate from a node in $\IPrec_G(v)$.

\textbf{(2) Direction of index-predecessor:}  
Let $e = (u,v)$ be an edge and its index-predecessor on $W$ be $\vdown{e} = \ipred_W(e) = (\vdown{v}, w)$, where $\vdown{v} = \ipred_W(v)$.  
Suppose, for contradiction, that $e$ and $\vdown{e}$ have the same direction. Let $i = \indexOf(u)$.  
Then $|\indexOf(\vdown{v}) - i| > 0$ while $\indexOf(u) - i = 0$, implying that a subwalk exists from $\vdown{v}$ toward $u$ along $W$.  

To reach the index $i$ of node $u$ from the position of $\vdown{v}$, the walk must traverse at least one edge $e'$ with $\indexOf(e') = \indexOf(e)$. 
However, the existence of such an edge $e'$ between $\vdown{e}$ and $e$ contradicts the assumption that $\vdown{e}$ is the \textit{index-predecessor} (the last such edge before $e$) on $W$.  
Therefore, $\ipred_W(e)$ must have the opposite direction to $e$.
\end{proof}
\end{lemma}
\begin{lemma}\label{lem:folding_node_property}
Given the footmark graph $G$ of a set of computation walks, if $v$ is a folding node, then all edges incident to $v$ share the same index.
\end{lemma}

\begin{proof}
First, observe that all outgoing edges from any node must have the same index. This is because each edge represents a transition at a specific tape cell in a deterministic Turing machine (DTM), and the direction of movement from a given configuration (state and symbol) is deterministic.
Next, we consider the incoming edges to node $v$.
 Let $e=(u,v)$ be a incoming edge of the same index with the outgoing folding edges. 

Suppose, for contradiction, that there exists another incoming edge $f = (w, v)$, the different index, from opposite direction to $e$ with:

\textbf{Case 1: $\tier(v) = 0$.} \\
A node $v$ with $\tier(v) = 0$ represents the first time a tape cell with $\indexOf(v)$ is visited in any computation walk within the footmarks $G$. By the structure of the graph, tier $0$ nodes cannot have incoming edges from nodes with a larger absolute index, as edge indices change by at most $\pm 1$ along a walk. Thus, all incoming edges of $v$ have the same index. Then, both $e$ and $f$ should income from smaller absolute index, which is contradiction to the assumption that $dir(e) \ne dir(f)$.

\textbf{Case 2: $\tier(v) > 0$.} \\ 
Let $W$ be a computation walk containing edge $e = (u, v)$. By the definition of a walk, there exists an index-predecessor edge $\vdown{e} = ({\vdown{v}}_1, u')$ such that $\indexOf(\vdown{e}) = \indexOf(e)$.
Then, there must exist an index-predecessor edge $\vdown{f} = ({\vdown{v}}_2, w')$ such that $\indexOf(\vdown{f}) = \indexOf(f)$ in some computation walk $W'$. Note that $\dir(e) \ne \dir(\vdown{e})$ and $\dir(f) \ne \dir(\vdown{f})$ by \cref{lem:property_of_index_predecessor}.

Since both ${\vdown{v}}_1$ and ${\vdown{v}}_2$ belong to $\IPrec_G(v)$. Since these nodes share the same state and symbol, their outgoing transitions must have the same direction and index due to the determinism of the transition function at the corresponding configuration. 
However, this implies that $\vdown{e}$ and $\vdown{f}$ must have the same index, which in turn implies $\indexOf(e) = \indexOf(f)$. This contradicts the assumption that $e$ and $f$ have different indices. This contradiction completes the proof.

\begin{figure}
	\centering
	\includegraphics{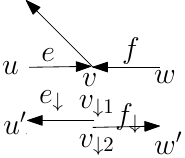}
	%\fbox{\rule[-.5cm]{4cm}{4cm} \rule[-.5cm]{4cm}{0cm}}
	\caption{Incoming Folding Edge Direction with tier >0}
	\label{fig:incoming_folding_edge_direction}
\Description{
    A diagram illustrating the contradiction in Case 2 of the proof. 
    A node $v$ is shown with two incoming edges, $e$ and $f$, originating from different directions, and a single outgoing folding edge. 
    The diagram traces $e$ and $f$ back to their respective index-predecessors, $\vdown{e}$ and $\vdown{f}$, which originate from nodes ${\vdown{v}}_1$ and ${\vdown{v}}_2$ in the $\IPrec_G(v)$ set. 
    Because ${\vdown{v}}_1$ and ${\vdown{v}}_2$ represent the same DTM configuration, their outgoing transitions are forced to have the same index and direction. 
    The visual highlights that if $e$ and $f$ were to have different indices, it would necessitate their index-predecessors to also have different indices, which directly contradicts the determinism of the DTM at the index-predecessor level.
}
\end{figure}

\end{proof}

\begin{definition}\label{def:index-succedent_nodes}
The \textbf{index-succedent} of a computation node $v$ in a computation graph $G$ is defined as the set
\[
\ISucc_G(v) = \{ s \in V(G) \mid \symbol(s)=\output(v), v \in \IPrec_G(s)  \}.
\]
Equivalently, for each $s \in \ISucc_G(v)$, we have $v \in  \IPrec_G(s)$ and $\symbol(s)=\output(v)$.
When the context is clear, we may simply write $\ISucc(v)$ instead of $\ISucc_G(v)$.
 \end{definition}

\begin{definition}
Let $e$ be an edge of the computation graph, and let $\indexOf(e)$ denote its index as defined in \cref{def:edge_index_dir}.

\begin{itemize}
    \item A \emph{left-adjacent edge} of $e$ is an edge adjacent to $e$ whose index is $\indexOf(e) - 1$.
    \item A \emph{right-adjacent edge} of $e$ is an edge adjacent to $e$ whose index is $\indexOf(e) + 1$.
    \item A \emph{aligned-adjacent edge} of $e$ is an edge adjacent to $e$ whose index is exactly $\indexOf(e)$.
\end{itemize}
\end{definition}

\begin{definition}[Previous and Next Edges]
Let $G = (V, E)$ be a directed computation graph, and let $W = (e_1, e_2, \dots, e_k)$ be a computation walk in $G$. For any edge $e \in E$, we define the following:
\begin{itemize}
    \item \textbf{Walk-based previous and next edge:} \\
    If $e = e_i$ for some $1 < i < k$, then we define the previous and next edge within the walk $W$ as:
    \[
    \mathrm{prev}_W(e_i) = e_{i-1}, \quad \nextOf_W(e_i) = e_{i+1}.
    \]
    For boundary cases, $\mathrm{prev}_W(e_1)$ and $\nextOf_W(e_k)$ are undefined, which we denote as $\mathrm{prev}_W(e_1) = \bot$ and $\mathrm{next}_W(e_k) = \bot$. \\
    These denote individual edges or the null symbol $\bot$, and are written in lowercase.

    \item \textbf{Graph-based previous and next edges:} \\
    Regardless of any walk, we define the set of graph-adjacent edges:
    \[
    \Prev_G(e) = \{ e' \in E \mid \term(e') = \init(e) \}, \quad
    \Next_G(e) = \{ e' \in E \mid \init(e') = \term(e) \}.
    \]
    These denote the sets of incoming and outgoing edges adjacent to $e$ in the graph $G$, and are written in capitalized form to reflect their set-valued nature.
If the context is clear, $_G$ can be omitted.
\end{itemize}
\end{definition}

\begin{definition}\label{def:index-precedent_index-succedent_edges}
Let $(u,v)$ be an edge in a computation graph. Then:

\begin{enumerate}
	\item \textbf{Index-precedent edges:}  
	The \emph{index-precedent} of $e=(u,v)$, denoted by $\vdown{e}$, is the set of edges $\vdown{e} = (v', u')$ such that:
	\begin{itemize}
	    \item $v' \in \IPrec(v)$, and
	    \item there exists an \emph{IPrec-folding node chain} $(u_0, \dots, u_m)$ ($m \ge 0$) such that $u_0 = u$, $u_m = u'$, $u_{i+1} \in \IPrec(u_i)$ for all $0 \le i < m$, and each $u_i$ is a folding node for $0 < i < m$.
	\end{itemize}
	An edge $\vdown{e}$ is referred to as a \emph{direct index-precedent} if $m \le 1$ (equivalently, $u = u'$ or $u' \in \IPrec(u)$); otherwise, $\vdown{e}$ is referred to as an \emph{indirect index-precedent}.

    \item \textbf{Index-succedent edges:}  
	The \emph{index-succedent} of $e=(u,v)$, denoted by $\vup{e}$, is the set of edges $\vup{e} = (v', u')$ such that: 
	\begin{itemize}
		\item $u' \in \ISucc(u)$, and 
		\item there exists an \emph{ISucc-folding node chain} $(v_0, v_1, \dots, v_n)$ ($n \ge 0$) such that $v_0 = v, v_n = v'$, $v_{i+1} \in \ISucc(v_i)$ for all $0 \le i < n$, and each $v_i$ is a folding node for $0 < i < n$.
    \end{itemize}
	An edge $\vup{e}$ is referred to as an \emph{direct index-succedent} if $n \le 1$ (equivalently, $v = v'$ or $v' \in \ISucc(v)$); otherwise $\vup{e}$ is referred to as an \emph{indirect index-succedent}
\end{enumerate}

\begin{figure}
    \centering
    \includegraphics[width=0.8\textwidth]{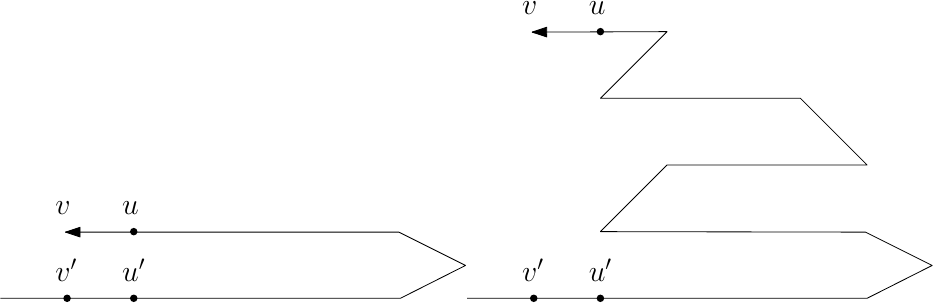}
    \caption{The Index-Precedent of an Edge}
    \label{fig:index-precedent_of_edge}
    \Description{A visualization of index-precedent edges. The left side illustrates a direct index-precedent where the edge $(u, v)$ is immediately preceded by $\vdown{e}$ in a single walk. The right side illustrates an indirect index-precedent through a folding node, demonstrating how a walk's history at the same tape index is preserved and linked across the footmark graph.}
\end{figure}

\begin{figure}
    \centering
    \includegraphics[width=0.8\textwidth]{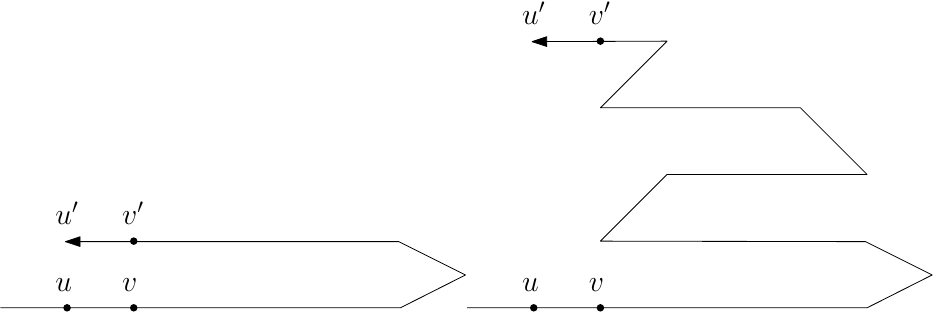}
    \caption{The Index-Succedent of an Edge}
    \label{fig:index-succedent_of_edge}
    \Description{A visualization of index-succedent edges. The left side shows a direct index-succedent where the next visit to the same tape index occurs immediately following the current edge's transition. The right side shows an indirect index-succedent via a folding node, highlighting the existence of paths where the next visit to the index is reached through shared nodes in the consolidated graph structure.}
\end{figure}
 For an edge $e$, we write $\IPrec_G(e)$ and $\ISucc_G(e)$ to denote
    its index-precedent and index-succedent sets.
    When the underlying graph $G$ is clear from context, the subscript $G$
    may be omitted.
\end{definition}
\begin{remark}The operators $\IPrec(\cdot)$ and $\ISucc(\cdot)$ are used for both
vertices and edges.
The intended meaning is determined by the argument type, and no ambiguity
arises in context.
\end{remark}

\subsection{Dynamic Computation Graph and Polynomial Footmarks Size} \label{subsec:dynamic_graph}
In this section, we establish the formal properties of the Dynamic Computation Graph and prove that the total size of its Footmarks remains polynomially bounded, despite the exponential number of possible certificates.

To decide a language $\mathcal{L} \in \mathsf{NP}$, the simulation must account for the union of all valid computation walks corresponding to every possible certificate $Y \in \Sigma^m$. We refer to this union of all these computation walks as the \textbf{Footmarks} of the verifier, as defined in \cref{def:footmarks_graph}. However, the precise spatial and temporal bounds of these computation walks—specifically the maximum tape head excursion and the total number of transitions—cannot be determined a priori without executing the simulation itself.
Consequently, rather than initializing a static, worst-case structure that may be computationally infeasible, we employ a graph that grows adaptively. This \textbf{Dynamic Computation Graph} ensures that we only allocate memory and processing time for configurations that are reachable under at least one valid certificate, allowing the simulation to scale naturally with the complexity of the verifier's behavior.

\begin{definition}[Dynamic Computation Graph]
Let $M$ be a deterministic Turing machine on input of length~$n$.  
The \textbf{dynamic computation graph} $G$ is constructed incrementally by the simulation algorithm.  
Each node represents a configuration of $M$, and each edge corresponds to a transition.  
An edge is added to $G$ only when it is visited or verified during the simulation.  
This approach avoids explicitly constructing the full computation graph in advance,  
which may be infeasible when its width and height are unknown.
\end{definition}

When implemented using dynamic array data structures, the expansion cost of the graph is amortized constant time, though it may reach linear time relative to its current size in the worst case.
Formally, the graph $G$ is organized as a two-level dynamic array structure:
\begin{itemize}
\item \textbf{First-level Array:} Indexed by tape cell positions (cell indices), this array expands as the tape head explores previously unvisited regions.
\item \textbf{Second-level Arrays:} Each entry in the first-level array points to a second-level dynamic array indexed by tiers, representing the successive computation layers for that specific cell.
\end{itemize}
This two-level architecture guarantees amortized constant-time access and insertion for nodes and edges within a fixed cell index.
 However, when a new tape cell index is accessed for the first time, the first-level array may require resizing. 
 In the worst case, this reallocates and copies all previously stored nodes and edges—whose total size is bounded by $\bigO(wh^2)$—resulting in a worst-case temporal overhead of $\bigO(wh^2)$.

Each node and transition case in the graph is also designed to encapsulate key computational properties:

\begin{itemize}
    \item For any transition case $T$ stored on the surface, the values $\nextState(T)$, $\output(T)$, and $\nextIndex(T)$ can be computed in \textbf{constant time}, without referring back to the transition function $\delta$.
    \item This is possible because $T$ already includes both the current state and the scanned symbol, and the transition rule for each $(q, s)$ pair is unique in a deterministic Turing machine.
    \item By encoding the results of $\delta(q,s)$ directly into each transition case object, or by storing the transition function $\delta$ itself within the node's local context, the simulation avoids repeated lookups and redundant recomputations during graph traversal.
\end{itemize}

To manipulate edges efficiently within the computation graph, each \textbf{edge slice} $E_i$ (as defined in \cref{def:edge_slice}) is implemented using a dynamic array or a hash map structure. This ensures that each slice remains a distinct, addressable unit, even if no edges currently exist at index $i$. Specifically, the implementation maps each node (at index $i$ or $i+1$) to an \texttt{AdjacencyList} object, which partitions the node's incident edges into four specialized categories:
\begin{itemize}
\item \texttt{left\_incoming}, \texttt{left\_outgoing}, \texttt{right\_incoming}, and \texttt{right\_outgoing}.
\end{itemize}
This representation enables near-constant-time identification of folding edges and ensures seamless connectivity between adjacent layers ($E_{i-1}, E_{i+1}$). Operations such as retrieving incident edges or verifying the existence of an edge between specific nodes are performed in time proportional to the graph's height $h$ (effectively $\bigO(p(n))$), thereby maintaining the overall polynomial efficiency of the simulation.

This structure supports the flexible, efficient, and scalable construction of the computation graph without requiring prior knowledge of the total tape length or computation depth.
By doing so, it preserves the theoretical guarantees of the verification procedure while remaining highly suitable for practical simulation or model-checking tasks.
Detailed implementation specifications for these data structures are provided in \cref{sec:appendix_computation_graph}.

With the dynamic computation graph and the data structures described above, it is evident that simulating the verifier for a particular certificate can be performed in polynomial time. 
The length of each individual computation walk is at most $\bigO(p(n))$, and more generally, it is bounded by the product of the width $w$ and the height $h$ of the computation.
The union of all computation walks for every possible certificate can be constructed in exponential time using a brute-force approach, as detailed in \cref{subsec:simulate_verifier_exp}.
However, our primary objective is to obtain the Footmarks of all computation walks efficiently by avoiding redundant simulations of overlapping edges. 
Since all computation walks originate from the same initial node and share a fixed tape region (as illustrated in \cref{fig:verifier_tape_area}) within the computation graph $G_M$ of the verifier $M$, such overlap is not only inevitable but serves as the fundamental basis for our polynomial-time optimization.
\begin{figure}
	\centering
	\includegraphics[width=0.8\textwidth]{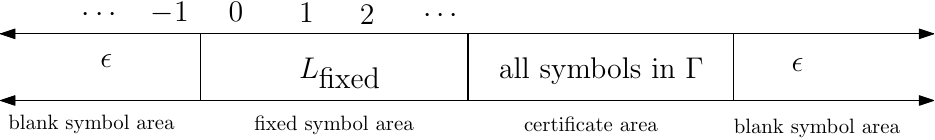}
	%\fbox{\rule[-.5cm]{4cm}{4cm} \rule[-.5cm]{4cm}{0cm}}
	\caption{Turing Machine Tape Area For Verifier}
	\label{fig:verifier_tape_area}
	\Description{A linear Turing machine tape layout. From left to right: a blank region, a fixed tape area ($L_{\textrm{fixed}}) starting at index 0 representing the problem instance, a certificate area of length m for variable input, and a final blank region to the right.}
\end{figure}

\begin{lemma}[Polynomial Bounds of Footmarks of Computation Walks of All NP Certificates]
\label{lem:poly-bounded-graph}
Let $M = (Q, \Gamma, \delta, q_0, \qacc, \qrej)$ be a verifier Turing machine for an NP problem,
and suppose that for any input $L$ and any certificate of length at most $m$,
$M$ halts within $p(n)$ steps, where $n = |L| + m$ and $p$ is a polynomial.

Let $\mathcal{W}$ be the set of all computation walks of $M$ on all such certificates,
and let $F(\mathcal{W})$ be the footmark graph of $\mathcal{W}$.

Then the footmark graph $F(\mathcal{W})$ satisfies:
\begin{enumerate}
    \item its height $h$ is $\bigO(p(n))$,
    \item its width $w$ is $\bigO(p(n))$,
    \item the total number of vertices is $\bigO(p(n)^2)$,
    \item the total number of edges is $\bigO(p(n)^3)$.
\end{enumerate}
\begin{proof}
By \cref{def:footmarks_graph}, the footmark graph $F(\mathcal{W})$ is the graph defined by the union of all vertex sets and the union of all edge sets appearing in any computation walk in $\mathcal{W}$.
Hence, every computation walk of $M$ on any certificate is fully contained in
$F(\mathcal{W})$.

By the definitions of width and height given in \cref{def:width-height-definition}, we can bound the structural size of $F(\mathcal{W})$ as follows:

\begin{enumerate}[label=\textbf{\arabic*.}, wide=0pt]
    \item \textbf{Height.} 
    The tier of a computation node counts the number of transitions
    that have occurred at the corresponding tape cell prior to the configuration represented by the node.
    Since the verifier halts within $p(n)$ transitions on every computation walk,
    the tier value of any node appearing in any walk is at most $p(n)$.
    Therefore, the maximum tier over all nodes in $F(\mathcal{W})$ is bounded by
    $\bigO(p(n))$, and hence the height of $F(\mathcal{W})$ is $\bigO(p(n))$.

    \item \textbf{Width.}
    The width of the computation graph is defined as the range of tape indices
    visited during the computation.
    Since the verifier runs in time $p(n)$, the head can visit at most $p(n)$ distinct
    tape cells along any computation walk.
    Consequently, the width of $F(\mathcal{W})$, which is the maximum tape index range
    used by any walk in $\mathcal{W}$, is $\bigO(p(n))$.

    \item \textbf{Vertex bound.} 
	A node in the computation graph is identified by its tape index, tier, and the corresponding transition cases.  
	There are $|Q|\cdot|\Gamma|$ possible transition cases.  
	For tier-0 nodes, each transition case corresponds to exactly one node.  
	For higher tiers ($t \ge 1$), each transition case may give rise to at most $|Q|\cdot|\Gamma|$ nodes,  
	corresponding to all possible combinations of prior state and prior tape symbol.  
	Hence, at any fixed tier and tape index, the total number of nodes is bounded by
	\[
	(|Q|\cdot|\Gamma|) \times (|Q|\cdot|\Gamma|) = (|Q|\cdot|\Gamma|)^2 = \bigO(1),
	\]
	since $|Q|$ and $|\Gamma|$ are constants.  
	
	Therefore, the total number of vertices in $F(\mathcal{W})$ is bounded by
	\[
	\bigO(h \times w \times (|Q|\cdot|\Gamma|)^2) = \bigO(p(n)^2).
	\]
       \item \textbf{Edge bound.} 
       For any node $u$ in a computation graph, its outgoing edges must satisfy the spatial constraint $|\indexOf(u) - \indexOf(v)| = 1$. Since the target index has at most $h$ instantiated tiers, the out-degree of each vertex is bounded by $h$ . Therefore, for $G=F(\mathcal{W})$, the total number of edges is:
\[ |E(G)| \le |V(G)| \cdot h = \bigO(p(n)^2) \cdot \bigO(p(n)) = \bigO(p(n)^3). \]
\end{enumerate}

This completes the proof.
\end{proof}

\end{lemma}

\begin{remark}[Structural Density, Asynchronicity, and Polynomial Collapse]
The polynomial bound established in \cref{lem:poly-bounded-graph} represents a fundamental structural collapse of the exponential NP-verification space into a polynomially-sized geometric manifold. 

While the number of distinct certificates $Y \in \Sigma^m$ is exponential ($2^m$), the total number of available configurations in the computation graph is strictly limited. By the \textbf{Pigeonhole Principle}, a vast majority of these exponential computation walks must necessarily traverse overlapping vertices and edges. 

This structural density is further characterized by the \textbf{asynchronicity of tiers}: for an edge $(u, v)$, the tier of the terminal node $v$ is not necessarily related to the tier of the source node $u$. For instance, a tape head moving from a frequently visited cell (high tier) to a previously unvisited one will connect to a tier-0 node. Although global computation time always increases, these local tier counts capture "disparate local histories," allowing the graph to encode complex folding behaviors while remaining within the $\bigO(p(n)^3)$ bound. This ensure that any consistency-checking procedure over the verifier's entire certificate space remains tractable, ultimately bridging the gap between non-deterministic search and deterministic verification.
\end{remark}

To exploit the structural density and the resulting overlap of edges described above, we propose an incremental construction of the footmarks by extending edges one by one, thereby avoiding redundant simulations of overlapping trajectories. This approach necessitates a robust membership verification mechanism to determine whether a given edge is already contained within the existing footmarks. The following two sections are dedicated to the formal specification and analysis of this mechanism.

%%%%%%%%%%%%%%%%%%%%%%%%%%%%%%%%%%%%%%%%%%%%%%%%%%%%%%%%%%%%%%%%%%%%%%%%%%%%%%%%
% Section 4: Feasible Graph
%%%%%%%%%%%%%%%%%%%%%%%%%%%%%%%%%%%%%%%%%%%%%%%%%%%%%%%%%%%%%%%%%%%%%%%%%%%%%%%%
\section{Feasible Graph} \label{sec:feasible_graph}

To transition from exponential-time exhaustive search to a deterministic polynomial-time decision procedure, we introduce the \textbf{Feasible Graph}. This structure serves as the primary analytical tool for the pruning mechanism established in \cref{sec:roadmap_highlevel_proof}. The fundamental role of the feasible graph is to isolate and preserve all computation walks that terminate at a specific \textit{designated final edge}---the candidate edge currently under evaluation. Simultaneously, it provides a mechanism to systematically identify and eliminate edges that are mutually incompatible with any valid computation walk ending at said edge.

By refining the global computation graph into this feasible subgraph, we shift our analytical focus from the mere existence of paths to the \textbf{structural consistency} of the footmarks. To facilitate this refined analysis, we develop the following topological constructs:
\begin{itemize}
    \item \textbf{Ceiling- and Floor-edges:} These define the vertical boundaries of a computation walk within the tiered structure of the graph.
    \item \textbf{Step-pendant and Step-extended Components:} These formalize the structural building blocks required to define and construct the feasible graph.
    \item \textbf{Feasible vs. Infeasible Edges/Walks:} These categorize edges and walks based on their compatibility with the designated final edge, distinguishing valid execution traces from structurally inconsistent ones.
\end{itemize}

These concepts capture localized propagation and interaction patterns within the computation graph, providing the theoretical foundation for the polynomial-time simulation algorithm detailed in the subsequent sections.

\subsection{Feasible Graph on Computation Graph} \label{subsec:feasible_graph_concept}

In this subsection, we introduce key concepts and terminology related to the feasible graph. We begin by defining essential terms and the notion of step-extended components. Using these, we then formally define the feasible graph. 

%Given a graph $G$ and a set of edges $C$, referred to as the set of ceiling edges, a \textbf{step-pendant edge} is an edge $e$ such that $e$ is a ex-pendant edge, or $\IPrec_G(e) = \emptyset$, or $e \notin C$ and $\ISucc_G(e) = \emptyset$. 
%An edge $e$ is said to be \textbf{step-adjacent to} an edge $f$ if $e$ is adjacent to $f$, or $e \in \ISucc_G(f)$, or $e \in \IPrec_G(f)$.

\begin{definition} \label{def:floor_ceiling_edges}
Let $W$ be a computation walk in a computation graph.

An edge $e \in W$ is called a \textbf{ceiling edge} if it has no index-successor edge in the computation walk $W$.

An edge $e \in W$ is called a \textbf{floor edge} if it has no index-predecessor edge in the computation walk $W$.
\end{definition}

A floor edge corresponds to the first edge at a given edge index, and a ceiling edge corresponds to the last edge at a given edge index within a walk.  
Thus, we refer to floor or ceiling edges by their edge index, and the \emph{$i$-th ceiling edge} of $W$ refers to the unique ceiling edge with edge index $i$ in the walk $W$.

\begin{lemma}\label{lem:floor_edge_condition}
An edge $e = (u, v)$ is a floor edge if and only if $\tier(v) = 0$ for any computation walk.
\end{lemma}
\begin{proof}[Proof Sketch]
\begin{itemize}
\item \textbf{(If direction.)}
Assume that $\tier(v)=0$ but $e=(u,v)$ is not a floor edge.
Then there exists another edge whose terminal node is an index-predecessor of $v$.
However, any node incident to such an edge must lie on the same tape index as $v$, and therefore must have a tier smaller than that of $v$, implying a negative tier.
This contradicts the non-negativity of the tier index.

\item \textbf{(Only-if direction.)}
Assume that $e=(u,v)$ is a floor edge but $\tier(v) > 0$.
Then there exists a node of tier $0$ at the same tape index.
Along the computation walk from that tier-$0$ node to $v$,
an index-predecessor of $v$ must appear.
This contradicts the assumption that $e$ is a floor edge.
Therefore $\tier(v)=0$.

The complete formal proof is provided in Appendix
\cref{proof:lem:floor_edge_condition}.
\end{itemize} 
\end{proof}
\begin{remark}[Global Character of Floor Edges]
Although the notion of a \emph{floor edge} is defined locally with respect to a computation walk, 
\cref{lem:floor_edge_condition} shows that an edge $(u,v)$ is a floor edge in \emph{any} computation walk if and only if $\tier(v) = 0$.  
Therefore, the classification of floor edges is globally consistent and independent of any particular walk.  
This global characterization enables reasoning about floor edges without reference to specific walks.
\end{remark}

\begin{definition}[Ex-Pendant Edge] \label{def:ex_pendant_edge}
Let $e \in E(G)$ be an edge with index $i$ in the computation graph $G$.  
\begin{itemize}
    \item $e$ is \emph{left-pendant} if there exists no edge in $G$ adjacent to $e$ with index $i - 1$, and the node of $e$ with index $i$ is not a folding node.  
    \item $e$ is \emph{right-pendant} if there exists no edge  in $G$ adjacent to $e$ with index $i + 1$, and the node of $e$ with index $i+1$ is not a folding node.  
    \item $e$ is \emph{both-pendant} if it is both left-pendant and right-pendant.
    \item $e$ is \emph{ex-pendant} if it is either left-pendant or right-pendant.
\end{itemize}
\end{definition}

\begin{definition}[Ceiling-Adjacent Edge]\label{def:ceiling_adjacent}
Let $e = (v, w)$ be an edge in a computation graph $G$, and let $E_f$ be a set of designated final edges. 
An edge $f = (u, v')$ is said to be \emph{weakly ceiling-adjacent} to $e$ toward $E_f$ if there exists a vertex sequence $(v_0, v_1, \dots, v_n)$ such that:
\begin{itemize} 
    \item $v_0 = v'$ and $v_0$ is a non-folding node (the terminal node of $f$);
    \item for all $0 \le i < n$, $v_i \in \IPrec_G(v_{i+1})$;
    \item every vertex $v_i$ for $0 < i < n$ is a folding node;
    \item either ($v_n = v$ and $v$ is a folding node) or ($v_n = w$ and $e = (v, w) \in E_f$).
\end{itemize}

If, in addition to being weakly ceiling-adjacent, there exists a path from $f$ to $e$ in $G$ such that no edge in the path—other than $f$ itself—shares the same index as $f$, then $f$ is said to be \emph{ceiling-adjacent} to $e$ toward $E_f$.
 This includes cases where $e$ and $f$ are directly adjacent.
When the context is clear, we simply refer to $f$ as being ceiling-adjacent to $e$.

\begin{figure}[H]
  \centering
  \begin{subfigure}[t]{0.4\textwidth}
    \includegraphics[width=0.9\textwidth]{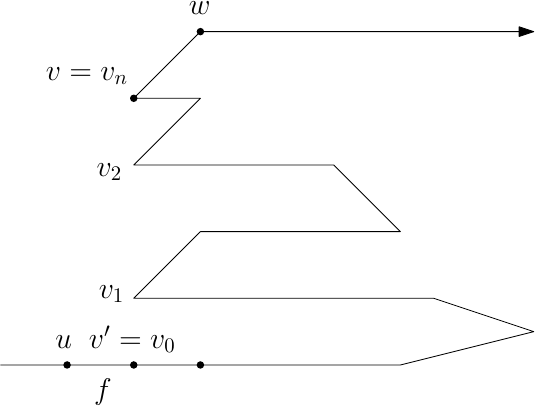}
    \caption{Ceiling-adjacent edge ($v$ is a folding node)}
    \label{fig:ceiling_adjacent_edge_folding_edge}
    \Description{Illustration of a weakly ceiling-adjacent relationship where the vertex sequence $(v_0, \dots, v_n)$ terminates at $v_n = v$. Here, $v$ is a folding node, reflecting a structural transition within the computation graph.}
  \end{subfigure}
  \hfill
  \begin{subfigure}[t]{0.4\textwidth}
    \includegraphics[width=0.9\textwidth]{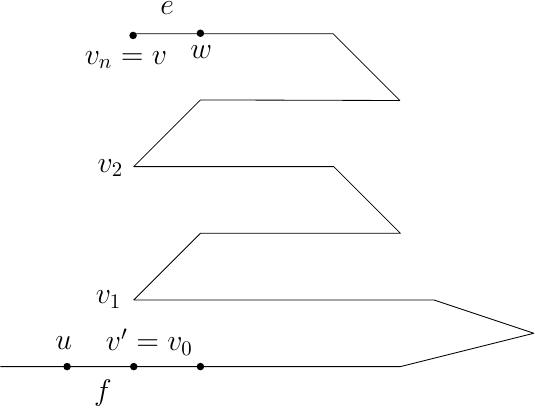}
    \caption{Ceiling-adjacent edge ($e \in E_f$)}
    \label{fig:ceiling_adjacent_edge_final_edge}
    \Description{Ceiling-adjacency toward a final edge $e = (v, w) \in E_f$. In this case, the sequence terminates at $v_n = w$, directly connecting the chain to the termination set of the computation walks.}
  \end{subfigure}
\end{figure}
\end{definition}

\begin{remark}
If $e \in E_f$, there may exist two distinct ceiling-adjacent edges $f$ satisfying
$\indexOf(f) = \indexOf(e) \pm 1$.
For non-final edges (i.e., $e \notin E_f$), any ceiling-adjacent edge $f$ satisfies
$\indexOf(f) = \indexOf(e) - \dir(e)$.
\end{remark}

\begin{lemma}[Characterization of Ceiling Edges] \label{lem:ceiling_edge_characterization}
Let $W$ be a computation walk in a computation graph $G$ with final edge $e_f$, and let $H = G[E(W)]$.
 An edge $e \in W$ is a ceiling edge if and only if it satisfies one of the following:
 \begin{enumerate}
 \item $e = e_f$, or
 \item $e$ is ceiling-adjacent in $H$ to another ceiling edge $e_c \in W$ such that $|\indexOf(e_f) - \indexOf(e_c)| < |\indexOf(e_f) - \indexOf(e)|.$
 \end{enumerate}
 In other words, an edge $e$ is a ceiling edge if and only if it is ceiling-adjacent in $H$ to a ceiling edge $e_c$ that is closer to the final edge $e_f$ in terms of index distance.
 \end{lemma}

\begin{proof}[Proof Sketch]
The proof proceeds by induction on the index distance $| \indexOf(e_f) - \indexOf(e) |$ within the walk $W$.
\begin{itemize}
\item \textbf{Base Case:} When $e = e_f$, it is trivially a ceiling edge as it has no index-successors in $W$.
\item \textbf{Inductive Step:} We assume the property holds for all edges closer to $e_f$. For an edge $e$, we show that its ceiling-adjacency to $e_c$ precludes any index-successors between them: in direct adjacency, no intermediate edges exist, and in indirect adjacency via folding nodes, any potential index-successor would require an impossible transition from a folding node back toward the index of $e$ (\cref{lem:folding_node_property}).
\end{itemize}
By establishing that each index admits at most one ceiling edge in a valid computation walk, we conclude that ceiling-adjacency uniquely propagates the "ceiling" status from the final edge $e_f$ back to the preceding stages. The complete formal induction is provided in \cref{app:proof_ceiling_edge}.
\end{proof}

This lemma serves to propagate the ceiling edge property backwards along a computation walk via ceiling-adjacency.

While floor edge is globally well-defined within the graph structure, independent of any local walk or surface assignment, ceiling edge is solely dependent on the computation walk it belongs to, 
as the same edge may function as a ceiling edge in one walk but not in another.
Thus, we require a graph-theoretic notion of ceiling edge that holds globally as well, rather than being dependent on a particular walk.
To formulate this, we introduce a family of edges that play a symmetric role to floor edges in a computation graph, not in a walk. 

\begin{definition}[Cover and Ex-Cover Edges] \label{def:cover_edges}
Let $G$ be a computation graph and $E_f$ be the set of designated final edges of the computation walks. 

Let $C = (c_0, c_1, \dots, c_k)$ be a finite sequence of edges such that $c_0 = e$ and $c_k \in E_f$. Depending on the connectivity and adjacency properties of this chain, we categorize $e$ as follows:

\begin{enumerate}
    \item $e$ is a \textbf{cover edge} if for every $j = 0, \dots, k-1$, the edge $c_j$ is \emph{ceiling-adjacent} to $c_{j+1}$. The set of all cover edges is denoted by $\widehat{C}$.
    
    \item $e$ is an \textbf{ex-cover edge} if the following broader conditions are satisfied:
    \begin{itemize}
        \item for every $j = 0, \dots, k-1$, $c_j$ is \textbf{weakly ceiling-adjacent} to $c_{j+1}$;
        \item for every $j = 0, \dots, k-1$, there exists a path in $G$ from $c_j$ to $c_k$.
    \end{itemize}
    The set of all such edges is denoted by $\widehat{C}_{ex}$.
\end{enumerate}
We refer to the sequence $C$ as a \emph{cover edge chain} or an \emph{ex-cover edge chain}, respectively.
\end{definition}

\begin{lemma}[Ceiling Edges are Cover Edges]  \label{lem:cover_edge_includes_feasible_ceiling_edges}
Let $\mathcal{W}_f$ be a set of computation walks in a computation graph $G$, and let $E_f$ be the set of designated final edges of all walks in $\mathcal{W}_f$.  
Let $C_f$ denote the set of ceiling edges that appear in walks from $\mathcal{W}_f$.

Then $C_f \subseteq \widehat{C}$, where $\widehat{C}$ is the set of cover edges of $G$ with respect to $E_f$, as defined in \cref{def:cover_edges}.
\begin{proof}[Proof Sketch]The inclusion follows directly from the structural characterization of ceiling edges in \cref{lem:ceiling_edge_characterization}. By definition, any ceiling edge $c \in C_f$ belongs to a walk that forms a backward chain of ceiling-adjacency reaching an element in $E_f$. Since the cover set $\widehat{C}$ is initialized with $E_f$ and is closed under the same backward ceiling-adjacency relations, every such $c$ is necessarily captured in $\widehat{C}$. The formal verification that these adjacency relations are preserved under graph superposition is provided in \cref{app:proof_lem_cover_edge}.\end{proof}
\end{lemma}

With the formalizations of floor and cover edges established, the following definition integrates these vertical (tier-based) and horizontal (index-based) boundary conditions.
This synthesis allows us to define the \textbf{step-pendant edge}, a critical construct used to identify edges that are structurally incapable of sustaining a valid computation walk within the computation graph.

\begin{definition}[Step-Pendant Edge] \label{def:step_pendant_edge}
Let $G$ be a computation graph with initial nodes $V_0$, and $E_f \subseteq E(G)$ be a set of designated final edges. Let $E_{\mathrm{init}}$ denote the set of all outgoing edges from $V_0$, and $\widehat{C}_{ex}$ be a given set of some ex-cover edges toward $E_f$ as defined in \cref{def:cover_edges}.

An edge $e = (u,v) \in E(G)$ is said to be \textbf{step-pendant with respect to $E_f$} if it satisfies any of the following conditions:
\begin{enumerate}
    \item \textbf{Horizontally Step-Pendant:} $e$ is an ex-pendant edge such that ($e \notin E_{\mathrm{init}} \cup E_f$)  or ($e$ is both-pendant and $e \in (E_{\mathrm{init}} \cup E_f) \setminus (E_{\mathrm{init}} \cap E_f)$);
    \item \textbf{Vertically Step-Pendant:} 
    \begin{itemize}
        \item $\IPrec_G(e) = \emptyset$ and $\mathrm{tier}(v) > 0$;
        \item $\ISucc_G(e) = \emptyset$, $e$ is not a cover edge toward $E_f$, and $e \notin \widehat{C}_{ex}$.
    \end{itemize}
\end{enumerate}
Henceforth, we denote an edge satisfying these conditions as an \textbf{$E_f$-step-pendant edge}. By default, $\widehat{C}_{ex} = \emptyset$ if not otherwise specified.
\end{definition}

The following definition incorporates vertical adjacency relationships to ensure consistency with index-based transitions:
\begin{definition}[Step-Adjacency] \label{def:step_adjacency}
An edge $e \in E(G)$ is said to be \textbf{step-adjacent} to an edge $f$ if and only if:
\begin{itemize}
    \item $e$ is adjacent to $f$ in the standard graph-theoretic sense, or
    \item $e \in \ISucc_G(f)$, or 
    \item $e \in \IPrec_G(f)$.
\end{itemize}
\end{definition}

To define the key notion of a feasible graph as a residual component obtained from another, we first introduce the concept of a step-extended component based on the above notion of step-pendant edges.

\begin{definition}[Step-Extended Component]
\label{def:step_extended_component}
Given a computation graph $G$ and a set of \textbf{base edges} $E_R \subseteq E(G)$, the \textbf{step-extended component} of $E_R$ is the subgraph $C_E \subseteq G$ defined recursively as follows:

Let $C_E^{(0)}$ be the subgraph of $G$ consisting of the edges in $E_R$ and their incident vertices. For each $i \ge 0$, define $C_E^{(i+1)}$ by adding to $C_E^{(i)}$ all edges $e \in E(G) \setminus E(C_E^{(i)})$ (and their incident vertices) such that:
\begin{enumerate}
    \item[(i)] $e$ is step-adjacent to some edge in $E(C_E^{(i)})$, and
    \item[(ii)] $e$ is a step-pendant edge in the subgraph $G - E(C_E^{(i)})$.
\end{enumerate}

The process terminates at step $k$ when no such edges can be added. The resulting subgraph $C_E^{(k)}$ is called the \textbf{maximal step-extended component} of $E_R$ in $G$, denoted by $\mathsf{MSEC}_G(E_R)$.
\end{definition}

\begin{definition}[Feasible Graph] \label{def:feasible_graph}
Given a computation graph $G$ with initial nodes $V_0$ and a set of designated final edges $E_f \subseteq E(G)$, let $E_{\mathrm{init}}$ denote the set of all outgoing edges from $V_0$. The set of \textbf{base edges} $E_R$ is defined as:
\[
E_R := \{ e \in E(G) \mid e \text{ is $E_f$-step-pendant}\}.
\]

The \textbf{feasible graph} $G_f$, denoted by $\mathsf{Feasible}(G)$, is obtained from $G$ by removing the edges and internal vertices of the maximal step-extended component $\mathsf{MSEC}_G(E_R)$, and any remaining isolated vertices $I(G)$:
\[
G_f := (G - E(\mathsf{MSEC}_G(E_R))) - I(G).
\]
\end{definition}

With notion of ceiling-adjacent edge and step-adjacent edge, the following notion of adjaceny also used to construct feasible graph algorithmically.

\begin{definition}[Index-Adjacent Edge]\label{def:index_adjacent_edges}
Let $G$ be a computation graph, $V_0$ the set of initial vertices, and $E_f$ the set of designated final edges.  
Let $E_i$ be an edge slice with index $i$, i.e., $E_i = \{ e \in E(G) \mid \indexOf(e) = i \}$.

An edge $f = (u, v)$ in $G$ is said to be \emph{index-adjacent} to $E_i$ with respect to $V_0$ and $E_f$ if it satisfies at least one of the following conditions:
\begin{itemize}
    \item $f$ is adjacent to an edge in $E_i$;
    \item $i = \indexOf(f) - \dir(f)$ and ($u$ is a folding node \textbf{or} $u \in V_0$);
    \item $i = \indexOf(f) + \dir(f)$ and ($v$ is a folding node \textbf{or} $f \in E_f$).
\end{itemize}
When the context is clear or the condition satisfied without $V_0$ and $E_f$, we simply say $f$ is index-adjacent to $E_i$.
\end{definition}

\SetKwFunction{ComputeFeasibleGraph}{ComputeFeasibleGraph}
\SetKwFunction{ComputeCoverEdges}{ComputeCoverEdges}
\subsection{Feasible Graph Construction Algortihm} \label{subsec:feasible_graph_construction}

We present the algorithm \ComputeFeasibleGraph{}, which constructs a feasible graph from a computation graph $G$, a set of start vertices $V_0$, and a set of designated final edges $E_f$. Here, $G$ is intended to be a subgraph of the augmented footmark graph corresponding to computation walks of a previously simulated deterministic machine, extended by the edges in $E_f$.

The output of the algorithm is guaranteed to satisfy the conditions of a feasible graph as defined in \cref{def:feasible_graph}.
Specifically, the algorithm removes all maximal step-extended components that do not contain any edge in $E_f$, yielding a subgraph
\[
G' = \ComputeFeasibleGraph{}
\]
that preserves all valid computation walks starting from $V_0$ and reaching the edges of $E_f$.

This algorithm uses a subroutine \ComputeCoverEdges{} to compute cover edges for each walk, which is invoked once at the beginning. We first present this subroutine and prove its correctness and time complexity. Then, we present the main algorithm and verify its correctness and complexity.

The following subalgorithm computes the cover edges defined in \cref{def:cover_edges}. 

\begin{algorithm}
\caption{Compute Cover Edges of subgraph of Footmarks of Computation Walks} \label{alg:compute_cover_edges}
\Input  {$G$: subgraph of footmark graph of walks, $E_f$: designated final edges}
\Output {Set $C$ of cover edges of $G$ with respect to $E_f$}
\Function{\ComputeCoverEdges{$G, E_f$}}{
\State{Let $C \gets E_f$} \label{alg_line:compute_cover_edges:initial_cover_edge}
\State{Let $Q \gets$ a queue with all the edge of $E_f$}
\While{$Q$ is not empty}{
    \State{Dequeue $e$ from $Q$}
    \State{Let $E_c$ be the set of ceiling-adjacent edge to  $e$}
    \ForAll{edge $f$ in $E_c \setminus C$}{
        \If{there exists a  $f$--$e$ path $P$ such that $\indexOf(f') \neq \indexOf(f)$ for all edges $f' \in P \setminus \{f\}$} {
            \State{Add $f$ to $C$ and Add $f$ to $Q$} \label{alg_line:compute_cover_edges:add_connected_cover_edge}
        }
    }
}
\State{\Return $C$}
}
\end{algorithm}

\begin{sublemma}[Time Complexity of Computing Cover Edges]
\label{lem:time_computing_cover_edges}
Let $h$ be the height of the computation graph and $w$ be its width.
Then the time complexity of the algorithm
\ComputeCoverEdges{} in \cref{alg:compute_cover_edges}
is bounded by $\bigO(h^5 w^2)$.
\begin{proof}
The computation graph contains at most $\bigO(h^2 w)$ edges.
Each edge is inserted into the queue $Q$ at most once, since once an edge
is added to the cover set $C$, it is marked and never re-enqueued.

Consider a single iteration of the while-loop when an edge $e$ is dequeued
from $Q$.
If $e$ is a folding edge, the algorithm computes the set of weakly
ceiling-adjacent edges to $e$.
By \cref{sublem:time_complexity_weakly_ceiling_adjacent_edges}, this step takes
$\bigO((h^2+\log w) \log h)$ time and returns at most $\bigO(h)$ edges.

For each such edge $f$, the algorithm checks whether there exists an
$f$--$e$ path $P$ such that
$\indexOf(f') \neq \indexOf(f)$ for all edges
$f' \in P \setminus \{f\}$.
This reachability test can be performed by a graph traversal over at most
$\bigO(h^2 w)$ edges, and therefore takes $\bigO(h^2 w)$ time per edge $f$.
Hence, the total cost of the reachability checks for a fixed edge $e$ is
$\bigO(h \cdot h^2 w) = \bigO(h^3 w)$.

Combining the above, the total work per edge processed from $Q$ is
\[
O((h^2+\log w) \log h  + h^3 w) = \bigO(h^3 w).
\]

Since at most $\bigO(h^2 w)$ edges are processed in the queue $Q$, the overall
time complexity of the algorithm is
\[
O(h^2 w) \cdot \bigO(h^3 w) = \bigO(h^5 w^2).
\]
\end{proof}
\end{sublemma}

The main algorithm proceeds as described in \cref{alg:feasible_graph}.

\begin{algorithm} \small
\SetKwFunction{SweepEdges}{SweepEdges}
\SetKwFunction{StepDownEdges}{StepDownEdges} 
\SetKwFunction{StepUpEdges}{StepUpEdges}
\caption{Feasible Graph of Subgraph of Extended Footmarks of Computation Walks} \label{alg:feasible_graph}
\Input  {$G$: a computation graph (subgraph of the extended footmark graph $F(\mathcal{W}) + E_f$), \\$V_0$: initial vertices, $E_f$: designated final edges}
\Output {Feasible graph $H$ of $G$ with respect to $E_f$}
\Function{\ComputeFeasibleGraph{$G, V_0, E_f$}} {
    \StateC{Let $C \gets$ \ComputeCoverEdges{$G, E_f$}} \Comment{Initialize cover edge set from designated final edges}
    \StateC{Let $H \gets G$ and $n \gets 0$} \Comment{Initialize feasible graph and previous edge count}
    \While(\tcc*[f]{Until no further changes or empty}){$|E(H)|>0$ \textbf{and} $n \ne |E(H)|$} {
        \StateC{Let $n \gets |E(H)|$} \Comment{Number of edges before sweep}
        \State{Let $i \gets$ (the minimum edge index of $E(H)$)}
        \StateC{$H \gets$ \SweepEdges{$H, C, V_0, E_f, i, +1$}}\Comment{Update graph by sweeping edges from left to right}
	\If{|E(H)|==0} {
		\State{\Return $H$}
	}
        \State{Set $i \gets$ (the maximum edge index of $E(H)$)}
        \StateC{$H \gets$ \SweepEdges{$H, C, V_0, E_f, i, -1$}}\Comment{Update graph by sweeping edges from right to left}
    }
    \StateC{\Return $H$}\Comment{Return the final feasible graph}
}
\Function{\SweepEdges{$G, C, V_0, E_f, i, d$}}{
    \StateC{Let $H \gets$ an empty graph}\Comment{Initialize empty graph to store feasible edges}
    \State{Let $E_i$ be the edge slice of $G$ with index $i$}
    \While{$E_i$ is not empty }{ 
        \State{Let $E_j$ be the edge slice of $H$ where $j=i-d$}
        \StateC{Let $I \gets$ \StepUpEdges{$G, H, E_i, E_j, E_f, V_0, E_f$}}\Comment{Expand edges upward from previous index layer}
        \StateC{Let $H \gets$ \StepDownEdges{$G, H, I, C_i$}} \Comment{Add edges downward to form feasible graph}
        \StateC{$i \gets i + d$}\Comment{Move to the next index in direction $d$}
        \State{Set $E_i$ be the edge slice of $G$ with index $i$}
    }
    \StateC{\Return $H$}\Comment{Return the constructed feasible graph}
}

\Function{\StepDownEdges{$H, I, C$}}{
\State{Let $Q$ be a queue with all the edges of $C \cap I$} \label{alg_line:step_down_edges:ceiling edge}
\State{Let $E_v \gets \emptyset$ and $I' \gets \emptyset$}
\WhileC{Step downward}{$Q$ is not empty} {
    \State{Dequeue $e$ from $Q$, and add $e$ to $E_v$}
    \State{Add $e$ to $I'$} \label{alg_line:step_down_edges:add_to_I_prime}
    \State{Enqueue all the edges of $\IPrec_{G}(e) \setminus E_v$ to $Q$}  \label{alg_line:step_down_edges:add_index-precedentss}
}
\State{Set $H \gets H \cup I'$}
\State{\Return $H$}
}

\Function{\StepUpEdges{$G, H, Es, Hs, V_0, E_f$}} {
\StateC{Let $Eb \gets$ all floor edges in edge slice $Hs$} \Comment{Edges $e=(u,v)$ in $Hs$ with  $tier(v)=0$, see \cref{lem:floor_edge_condition}}
\State{Let $Q$ be a queue with all the edges of $Eb$} \label{alg_line:stepupedges:compute_floor_edges}
\State{Let $E_v \gets \emptyset$ and $I \gets \emptyset$}
\WhileC{Step upward}{$Q$ is not empty}{
    \State{Dequeue $e$ from $Q$ and add $e$ to $E_v$}
    \If{$e$ is index-adjacent to edge slice $Hs$ for $V_0, E_f$ } {
        \State{Add $e$ to $I$} \label{alg_line:step_up_edges:add_to_I}\label{alg_line:step_up_edges:add_index_adjacent_edges}
        \State{Enqueue all the edges of $\ISucc_{G}(e) \setminus E_v$ to $Q$}\label{alg_line:step_up_edges:enqueue_index-succedent_edges}
    }
}
\State{\Return $I$}
}
\end{algorithm}

\begin{lemma}[Absence of Non-Designated Step-Pendant Edges] 
\label{lem:no_step_pendant_edge_in_feasible_graph}
Let $H = \mathsf{Feasible}(G)$ be the graph constructed from a computation graph $G$ with an initial vertex set $V_0$ via \cref{alg:feasible_graph}, with respect to the designated final edge set $E_f$. Let $C = \widehat{C}$ be the set of cover edges computed with respect to $E_f$. Then, $H$ contains no $E_f$-step-pendant edges.
\end{lemma}
\begin{proof}[Proof Sketch]
The proof establishes that \SweepEdges{} acts as a structural filter that preserves only edges with bidirectional connectivity. 

First, the algorithm ensures that no \textbf{horizontally step-pendant} edges remain in $H$. By \cref{sublem:index_adj_equivalence}, an edge is horizontally step-pendant if and only if it fails the index-adjacency criteria. Since the \StepUpEdges{} phase explicitly requires index-adjacency for inclusion (\cref{alg_line:step_up_edges:add_index_adjacent_edges}), such edges are systematically excluded from the forward-reachable set $I$. 

Second, the recursive propagation from anchor sets (floor edges and cover edges) prevents the inclusion of \textbf{vertically step-pendant} edges. An edge lacking an upward precedence ($\IPrec$) or downward succession ($\ISucc$) fails to be enqueued during the structural expansion of $I$ and $I'$, respectively. 

Consequently, the intersection $H = I \cap I'$ is guaranteed to be free of any structural discontinuities, whether sequential (horizontal) or hierarchical (vertical). The detailed exhaustive case analysis is provided in \cref{app:proof_lem_no_step_pendant}.
\end{proof}

\begin{sublemma}[No Pruning beyond Step-Pendant Edges]\label{lem:only_step_extended_component_removed}
Given a computation graph $G$ with an initial vertex set $V_0$ and a designated final edge set $E_f$, let $H = \mathsf{Feasible}(G)$ be the graph constructed via \cref{alg:feasible_graph}. Let $C=\widehat{C}$ be the set of cover edges. Then, any edge $e \in G \setminus H$ is contained in some step-extended component $C_E$ of $G$ based on $E_R$, where $E_R$ is the set of all $E_f$-step-pendant edge.
\end{sublemma} 
\begin{proof}[Proof Sketch]
This property establishes the precision of the algorithm by showing that any pruned edge $e \in G \setminus H$ must belong to some \textbf{step-extended component} $C_E$ rooted at the base set $E_R$. We establish this by contradiction. Suppose there exists an edge $e_0 \notin H$ that does not belong to any $C_E$. 

We analyze the iteration $k$ where $e_0$ is removed ($e_0 \in H^{(k-1)}$ but $e_0 \notin H^{(k)}$) through the following cases:

\begin{itemize}
    \item Case 1: $e_0$ is not step-pendant in $H^{(k-1)}$. 
    By \cref{sublem:index_adj_equivalence}, $e_0$ maintains bidirectional index-adjacency and possesses both an upward index-precedent ($\IPrec$) and a downward index-succedent ($\ISucc$), unless it is a floor or cover edge. In the \StepUpEdges{} and \StepDownEdges{} phases, such an edge necessarily satisfies the inclusion criteria and is added to $I \cap I' = H^{(k)}$. This directly contradicts the assumption that $e_0$ was removed.

    \item Case 2: $e_0$ is step-pendant in $H^{(k-1)}$ but not step-adjacent to any previously removed edge in $C^{(k-1)}$.
    In this case, $e_0$'s pendant status must be inherent to the original graph $G$. Thus, $e_0 \in E_R$, making it a \textbf{base edge} ($C_E^{(0)}$) of a step-extended component. This contradicts the assumption $e_0 \notin C_E$.

    \item Case 3: $e_0$ is step-pendant in $H^{(k-1)}$ and is step-adjacent to some $e' \in C^{(k-1)}$.
    If $e'$ already belongs to a step-extended component (by inductive hypothesis on previously removed edges), then $e_0$ satisfies the recursive definition of the same component ($C_E^{(i)} \to C_E^{(i+1)}$) because it is step-adjacent to a member and is step-pendant in $G \setminus C^{(k-1)}$. This again contradicts $e_0 \notin C_E$.
\end{itemize}
Since all cases lead to a contradiction, every removed edge must belong to a step-extended component. Detailed formal derivations and inductive steps are provided in \cref{app:proof_lem_only_step_extended}.
\end{proof}

\begin{corollary}[Equality between Extended Component and Removed Set] \label{cor:removed_is_step_extended}
The entire set of edges removed by \cref{alg:feasible_graph} constitutes a step-extended component rooted at $E_R$  where $E_R$ is the set of all $E_f$-step-pendant edges in $G$.
\end{corollary}
\begin{proof}
Let $C_{\text{Removed}} \subset E(G) \setminus E(H)$ be the set of edges removed up to a certain point in the algorithm, and assume $C_E = C_{\text{Removed}} \cup E_R$ is a step-extended component, Let $e$ be the next edge selected for removal. Since $e$ is removed by the algorithm, it must be $E_f$-step-pendant in the current graph $G - C_E$.

To show that $C_E' = C_E \cup \{e\}$ remains a step-extended component, we consider the following cases:
\begin{enumerate}
    \item If $e \in E_R$, then $e$ is already a base edge of the component. Thus, $C_E' = C_E$ trivially satisfies the definition of a step-extended component.
    \item If $e \notin E_R$, then $e$ was not $E_f$-step-pendant in the original graph $G$. Its $E_f$-step-pendant status in $G - C_E$ must therefore have been induced by the removal of its neighbors in $C_{\text{Removed}}$. This implies that $e$ is step-adjacent to at least one edge in $C_E$.
\end{enumerate}
Since $e$ is $E_f$-step-pendant in $G - C_E$ and is step-adjacent to $C_E$, the set $C_E \cup \{e\}$ satisfies the recursive construction of a step-extended component according to \cref{def:step_extended_component}. By induction, the final set of removed edges $E(G) \setminus E(H)$, being the union of all such $e$ and $E_R$, constitutes a step-extended component.
\end{proof}
% --- Lemma 4.2: Correctness of Feasible Graph Construction ---

\begin{lemma}[Correctness of Feasible Graph Construction Algorithm] \label{lem:constructing_feasible_graph}
Let $G$ be a computation graph with a set of initial vertices $V_0$, and $E_f$ the set of designated final edges. Let $H$ be the graph constructed by \cref{alg:feasible_graph} with respect to $E_f$, and let $G_f = \mathsf{Feasible}(G)$ be the feasible graph as defined in \cref{def:feasible_graph}. 
Then, $H = G_f$. 
Specifically, an edge $e \in E(G)$ is contained in $H$ if and only if it does not belong to the maximal step-extended component $\mathsf{MSEC}_G(E_R)$, where $E_R$ is the set of all $E_f$-step-pendant edges of $G$.
\end{lemma}
\begin{proof}
Let $C_{\text{removed}} = E(G) \setminus E(H)$ denote the set of edges removed by \cref{alg:feasible_graph}. 

First, by \cref{sublem:correctness_cover_edges}, the set of cover edges $\widehat{C}$ is correctly identified. Based on this, \cref{cor:removed_is_step_extended} ensures that $C_{\text{removed}}$ constitutes a step-extended component rooted at the base edges $E_R$. By definition, $E_R$ consists of all edges that are $E_f$-step-pendant in $G$.

Suppose, for the sake of contradiction, that $C_{\text{removed}}$ does not constitute the \emph{maximal} step-extended component $\mathsf{MSEC}_G(E_R)$. This implies that there exists some edge $e \in E(H)$ that is step-adjacent to an edge $e' \in C_{\text{removed}}$ and is $E_f$-step-pendant in the subgraph $H = G - C_{\text{removed}}$. 

However, by \cref{lem:no_step_pendant_edge_in_feasible_graph}, the graph $H$ constructed by the algorithm contains no $E_f$-step-pendant edges outside the anchor sets. Since $e$ is step-adjacent to a removed edge $e' \in C_{\text{removed}}$, it cannot be an anchor edge, contradicting the existence of such a step-pendant edge $e$ in $H$.

Therefore, $C_{\text{removed}}$ must be the maximal step-extended component stemming from $E_R$. It follows that $H = G - \mathsf{MSEC}_G(E_R)$, which exactly satisfies the definition of the feasible graph $G_f$.
\end{proof}

\begin{lemma}[Time Complexity of \ComputeFeasibleGraph{}] \label{lem:feasible_graph_time_complexity}
Let $G$ be the input computation graph with width $w$ and height $h$, and $m = |E(G)| = \bigO(wh^2)$ be the number of edges. The worst-case time complexity of \ComputeFeasibleGraph{} is:
\[
T_f = O\bigl(w^{2} h^{4}(h \log h + \log w)\bigr).
\]

\begin{proof}
The complexity consists of a one-time initialization followed by an iterative refinement process:
\begin{itemize}
    \item \textbf{Initialization}: \ComputeCoverEdges{} runs once in $\bigO(h^{5} w^{2})$ time.
    \item \textbf{Iterative Refinement}: Each iteration of the refinement loop removes at least one edge, leading to at most $m = \bigO(wh^2)$ iterations. Each iteration invokes \SweepEdges{} twice (bidirectionally), with each call costing $\bigO(w h^2 (h \log h + \log w))$ as per \cref{lemma:sweepedges_complexity_final}.
\end{itemize}
Summing these costs, we obtain:
\[
T_f = \bigO(h^5 w^2) + O\bigl(m \cdot (wh^2(h \log h + \log w))\bigr).
\]
Substituting $m = \bigO(wh^2)$ into the iterative term:
\[
O(wh^2 \cdot wh^2(h \log h + \log w)) = \bigO(w^2 h^4 (h \log h + \log w)).
\]
Since the initialization cost $\bigO(h^5 w^2)$ is dominated by $\bigO(w^2 h^5 \log h)$, the total complexity simplifies to:
\[
T_f = O\bigl(w^{2} h^{4}(h \log h + \log w)\bigr).
\]
Given that both $w$ and $h$ are polynomially bounded by the input size, the algorithm's execution time is strictly polynomial.
\end{proof}
\end{lemma}
Throughout the remainder of the paper, $T_f$ denotes the worst-case running time of \ComputeFeasibleGraph{}.

\subsection{Feasible Walk Preservation}
In this subsection, we focus on the structural properties of the feasible graph constructed by the previous algorithm. We introduce several key concepts related to walks and edges within this graph, which are essential for understanding how the feasible graph captures valid computation paths.

We distinguish different types of computations walks and edges according to their relationship with feasibility and their role in the overall graph structure. These distinctions will be crucial for subsequent analysis and proofs, particularly in establishing that all feasible walks ending with the designated final edges are fully contained in the feasible graph.

The formal definitions follow below, after which we proceed with important lemmas that characterize the properties of these walks and edges.

Now we can define the category of walks and edges in the feasible graph.

\begin{definition}[Classification of Walks and Edges] \label{def:classification_walks_edges}
Let $G$ be a feasible graph with respect to a set of designated final edges $E_f$. We classify the computation walks and edges in $G$ as follows:

\begin{itemize}
    \item A \textbf{feasible walk} is a computation walk in $G$ that contains at least one edge from $E_f$.
    \item A \textbf{feasible edge} is any edge that belongs to at least one \textit{feasible walk}.
    \item An \textbf{embedded walk} is a maximal computation walk in $G$ that is not a feasible walk but consists entirely of \textit{feasible edges}. (These represent valid prefix paths that fail to reach $E_f$ within the current graph $G$ but are composed of structurally necessary edges.)
    \item An \textbf{obsolete walk} is a maximal computation walk in $G$ that is not a feasible walk and contains at least one non-feasible edge. 
    \item An \textbf{obsolete edge} is an edge that belongs to an \textit{obsolete walk} but is not a \textit{feasible edge}.
    \item An \textbf{orphaned edge} is an edge in $G$ that does not belong to any valid computation walk (neither feasible nor obsolete).
\end{itemize}

For precision, we use the phrase \textbf{with respect to $E_f$} to denote this classification. When $E_f = \{e_f\}$ is a singleton, we refer to these as walks \textbf{to $e_f$}. For consistency, the term \textbf{feasible walk} and \textbf{feasible edge} is applied not only to the computed feasible graph but also to the original computation graph from which the feasible graph is derived.
\end{definition}

It is clear that the categories of walks defined above are mutually disjoint. The same holds for the corresponding categories of edges.

\begin{remark}[Context-Dependency of Classifications] \label{rem:context_dependency}
The classification of computation walks and edges exhibits a fundamental distinction in their dependence on the target edge set $E_f$ (designated as the final edges):
\begin{itemize}
    \item \textbf{Goal-Oriented Categories:} The definitions of \emph{feasible} and \emph{obsolete} walks (and their corresponding edges) are strictly relative to the designated final edge set $E_f$. They distinguish between trajectories that successfully certify a computation and those that fail to reach the target within the current graph $G$.
    \item \textbf{Structural Category:} In contrast, the definition of an \textbf{orphaned edge} is independent of $E_f$. An edge is orphaned if it cannot participate in \emph{any} valid computation walk, regardless of its eventual destination. This represents a more fundamental, structural invalidity where the local configuration cannot be part of any consistent execution trace.
\end{itemize}
\end{remark}

\begin{figure}[htbp]
    \centering
    % 첫 번째 그림
    \begin{subfigure}[b]{0.45\textwidth}
        \centering
        \includegraphics[width=\textwidth]{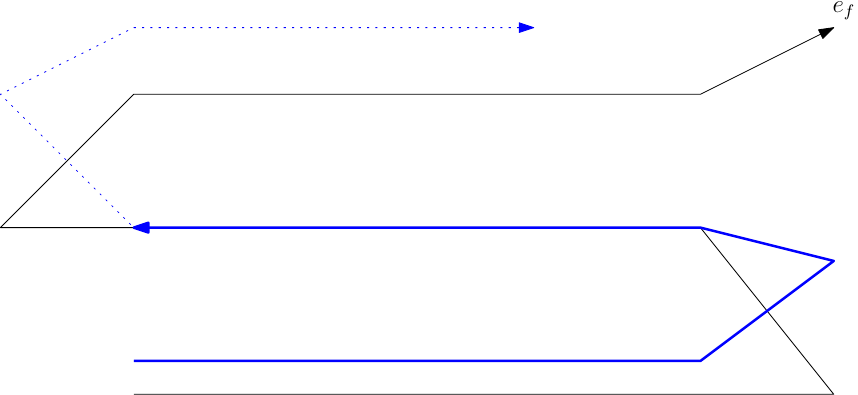}
        \caption{Obsolete Walk and Edge}
        \label{fig:obsolete_walk_and_edge}
        \Description{An obsolete edge starts on a feasible walk—a path terminating at the designated final edge—but deviates and ends prematurely at a vertex that is not part of any feasible trajectory, representing a structural dead end.}
    \end{subfigure}
    \hfill
    % 두 번째 그림
    \begin{subfigure}[b]{0.45\textwidth}
        \centering
        \includegraphics[width=\textwidth]{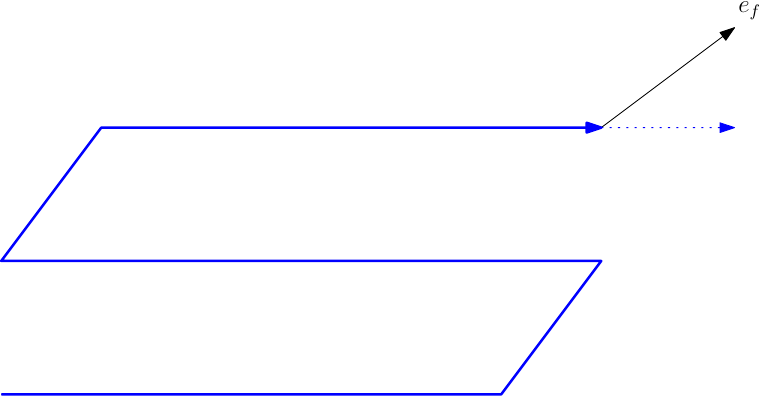}
        \caption{Embedded Walk}
        \label{fig:embedded_walk}
        \Description{An embedded walk consists of edges that are entirely subsumed within a feasible walk. It terminates at a vertex already visited by a feasible trajectory, showing that while the walk itself is truncated, its edges do not violate the feasible graph's structure.}
    \end{subfigure}

    \vspace{0.5cm}

    % 네 번째 그림 (세 번째 케이스)
    \begin{subfigure}[b]{0.45\textwidth}
        \centering
        \includegraphics[width=\textwidth]{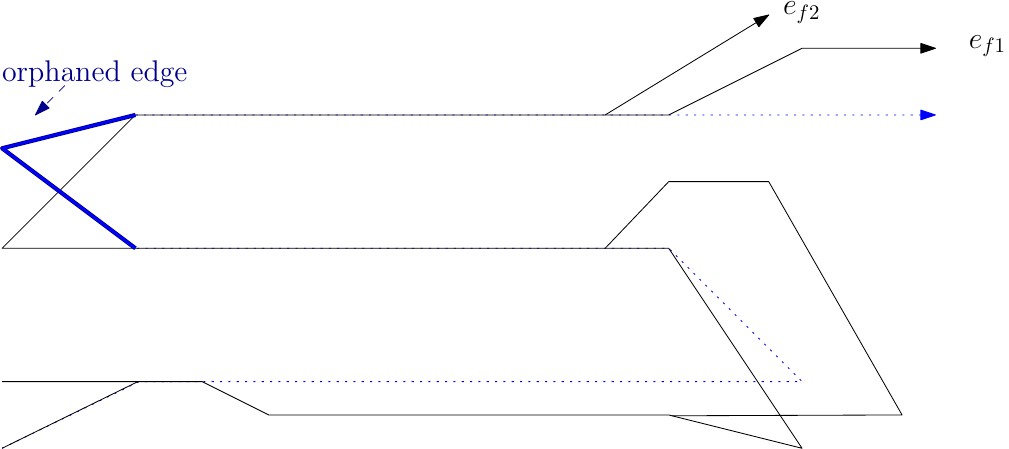}
        \caption{Orphaned Edge}
        \label{fig:orphaned_walk_and_edge}
        \Description{An orphaned edge begins at an intermediate vertex of a feasible walk, diverges into a sequence of non-feasible edges, and subsequently re-converges at a later vertex on a feasible walk. Despite the re-connection, the divergent segment remains inconsistent with the designated final edge.}
    \end{subfigure}

    \caption{Illustrations of different types of walks and edges in a feasible graph: obsolete, embedded, and orphaned.}
    \label{fig:walk_edge_types}
\end{figure}

We introduce the following types of edges, which are used to manipulate and analyze the graph structure in the process of re-constructing a feasible graph.

\begin{definition}
A \textbf{merging edge} is an incoming edge to a computation node $v$ whose in-degree is greater than $1$ and out-degree is not $0$ in a computation graph.
A \textbf{splitting edge} is an outgoing edge from a computation node $v$ whose out-degree is greater than $1$ and in-degree is not $0$ in a computation graph.
\end{definition}

\textbf{Note.} The notion of a \emph{feasible walk} plays a central role in both pruning the computation graph and verifying the validity of computation paths. As we progressively remove infeasible portions of the graph, feasible walks serve as the structural backbone that preserves all valid computation paths. This makes them essential in the design of our deterministic simulation of the verifier. The following lemma shows that the feasible graph always preserves any computations walks ending with the given designated final edges.

\begin{lemma}[Preservation of Feasible Walks] \label{lem:preservation_of_feasible_walks}
Let $G$ be a computation graph with a set of initial nodes $V_0$, and $E_f \subseteq E(G)$ a set of designated final edges. Let $G_f = \mathsf{Feasible}(G)$ be the feasible graph defined in \cref{def:feasible_graph}. Then, every computation walk $W$ in $G$ that starts at some $v \in V_0$ and ends with an edge $e \in E_f$ is fully contained in $G_f$ and remains a feasible walk in $G_f$.
\end{lemma}

\begin{proof}
Suppose, for the sake of contradiction, that there exists a computation walk $W = (e_0, e_1, \dots, e_k)$ such that $e_0 \in E_{\mathrm{init}}$ and $e_k \in E_f$, but $W \not\subseteq G_f$.
Since $G_f$ is a feasible graph, it is by definition the result of recursively removing step-extended components. 
Any edge $e \in G \setminus G_f$ must therefore have been an $E_f$-step-pendant edge at the moment of its removal. 
Let $e_i$ be the \textbf{first edge} of $W$ to be removed during the construction of the feasible graph of $G$ with respect to $V_0$ and $E_f$. Let $G'=G$ if $e_i$ belongs to the base edges of the maximal step-extended component; otherwise, let $G'=G-C_E$, where $C_E$ is the set of edges in the step-extended components removed right before the inclusion of $e_i$.

By the definition of the feasible graph (\cref{def:feasible_graph,def:step_extended_component}), only $E_f$-step-pendant edges can be removed from $G'$, regardless of whether $G'=G$ or not. We check the conditions of \cref{def:step_pendant_edge} for $e_i$:

\begin{enumerate}
    \item \textbf{Ex-pendant condition}: In a computation walk, only the initial edge $e_0$ and the final edge $e_k$ can be ex-pendant. However, $e_0 \in E_{\mathrm{init}}$ and $e_k \in E_f$ are explicitly excluded from the $E_f$-step-pendant edge set and thus cannot be the first removed edge. For any $0 < i < k$, $e_i$ is connected to $e_{i-1}$ and $e_{i+1}$, so it is not ex-pendant.
    
    \item \textbf{Precedent condition}: 
If $\tier(v) > 0$ for $e_i = (u,v)$, it must have an index-precedent edge in $W$ (as it is a computation walk). Since $e_i$ is the \textit{first} edge of $W$ to be removed, its index-precedent still exists in $G'$, hence $\IPrec_{G'}(e_i) \neq \emptyset$. 
If $\tier(v) = 0$, $e_i$ is a floor edge, which by definition (\cref{def:step_pendant_edge}) does not satisfy the step-pendant condition even though it has no index-precedent edges.
    
    \item \textbf{Succedent/Cover condition}: Similarly, if $e_i$ is not a ceiling edge, it has an index-successor in $W$ which still exists in $G'$, so $\ISucc_{G'}(e_i) \neq \emptyset$. If $e_i$ \textit{is} a ceiling edge, it is a cover edge with respect to $E_f$ by \cref{lem:cover_edge_includes_feasible_ceiling_edges}. Since cover edges (and ex-cover edges $\widehat{C}_{ex}$) are protected in the step-pendant definition, $e_i$ cannot satisfy the third condition.
\end{enumerate}

In all cases, $e_i$ cannot be a step-pendant edge at the time of its removal. This contradicts the definition of $\mathsf{MSEC}$. Therefore, no edge of $W$ can be removed, and the entire walk is preserved in $G_f$.
\end{proof}

\begin{remark}[Structural Persistence of Obsolete and Orphaned Edges]
The persistence of obsolete and orphaned edges in the feasible graph $G_f$ of a computation graph $G$ stems from the structural configuration where the cover edge set is defined more broadly than the ceiling edge set. Consequently, an edge may not be identified as step-pendant even if it lacks a valid successor within the ceiling constraints, allowing such computationally invalid structures to remain structurally supported.
While \cref{lem:preservation_of_feasible_walks} guarantees that all valid computation walks to the designated final edges are preserved within $G_f$, the converse does not hold: the mere inclusion of a designated final edge in $G_f$ does not guarantee the existence of a feasible walk. In other words, while the collapse of a structure in $G_f$ confirms the non-existence of a feasible walk, its survival provides a necessary but insufficient condition for feasibility. This gap necessitates the Deterministic Pruning Mechanism introduced in the next section to verify whether a designated final edge actually forms a complete feasible walk.
\end{remark}

\subsection{Total Collapse}
In this section, we establish the `Total Collapse' of the feasible graph structure under constrained graph conditions. 
By demonstrating that certain elimination of vital edges necessarily triggers a cascade of structural contradictions—mediated by the topological constraints of grid-aligned edges and the non-existence of step-pendant edges—we prove that the system must reduce to an empty state upon the removal of particular essential edges that belong to all feasible walks.

\begin{definition}[Certificate-Oblivious Verifier TM]
A verifier Turing machine $M$ is \textbf{certificate-oblivious} with respect to a problem instance $L_{fixed}$ if the tape head trajectory $H_M(L_{fixed}, Y, t)$ is uniquely determined by the certificate length $m = |Y|$ and the time step $t$, 
independent of the content of $Y$, for all $t$ up to the halting time $T(L_{fixed}, Y)$.
Formally, for any two certificates $Y, Y'$ with $|Y| = |Y'|$, the following holds:
$$H_M(L_{fixed}, Y, t) = H_M(L_{fixed}, Y', t) \quad \text{for all } t \le \min(T(L_{fixed}, Y), T(L_{fixed}, Y'))$$
\end{definition}

\begin{remark}
\begin{itemize}
\item Scope of Obliviousness: The prefix "Certificate-" denotes that the obliviousness property is specific to the certificates, allowing for potential dependence on the fixed problem instance $L_{fixed}$.
\item Generalization: Any standard Oblivious Turing Machine (OTM) is inherently a Certificate-Oblivious Verifier TM, as its trajectory is strictly independent of the input content $Y$ and halting time is generally fixed or independent of $Y$.
\end{itemize}
\end{remark}
By the properties of Certificate-Oblivious Verifier TM, the head position $Pos(t)$ is invariant across all computational branches.
 Given our Grid-aligned mapping, this invariance implies that all nondeterministic choices are mapped onto identical spatial coordinates in the footmark graph. Consequently, the set of all potential computation paths collapses into a single, deterministic trajectory.

\begin{definition}[Grid-Aligned Footmarks]
A footmark graph $G = (V, E)$ is said to be \textbf{Grid-Aligned} if for any two computation walks $W_1, W_2 \in \mathcal{W}$ and any step $k$ such that $|W_1| \ge k$ and $|W_2| \ge k$, 
the vertices $v_k \in W_1$ and $u_k \in W_2$ satisfy:
$$(\tier(v_k), \indexOf(v_k)) = (\tier(u_k), \indexOf(u_k))$$
Specifically, for any vertex $v$ reached at the $k$-th step of a computation walk, we define $\timeOf_G(v) = k$, which is well-defined due to the grid-aligned property. 
For any subgraph $G' \subseteq G$, the coordinates and time step of any vertex $u \in G'$ are defined relative to the original context of $G$.
To simplify notation, we omit the subscript $G$ when the context is clear; thus, $\timeOf(u)$ refers to $\timeOf_G(u)$ for any $u \in G' \subseteq G$.
\end{definition}

\begin{lemma}[Certificate-Oblivious to Grid-Aligned Correspondence]\label{lem:cotm_to_grid_aligned}
A Certificate-Oblivious Verifier TM $M_{\mathrm{COTM}}$ generates a footmark graph $G$ of $\mathcal{W}$ that satisfies the Grid-Aligned property where $\mathcal{W}$ is a set of all computation walks corresponding to all certificates for a particular problem instance.
\end{lemma}

\begin{proof}
By the definition of a Certificate-Oblivious Verifier, the tape head position $H_M(\Lfixed, Y, t)$ is uniquely determined by the time step $t$ and is independent of the certificate content $Y$.
Let $W = (v_0, v_1, \dots, v_n)$ and $W' = (v'_0, v'_1, \dots, v'_m)$ be two computation walks in $G$. Let $W_j$ be the subwalk of $W$ from $v_0$ to $v_j$. Let $c_{i, j}$ denote the last node visited in the $i$-th cell by the walk $W_j$. 
We prove the claim that $(\tier(v_j), \indexOf(v_j))$ is identical for all walks $W$ at each step $j$, and $\tier(c_{i, j}) = \tier(c'_{i, j})$ for all $i, j$.

Suppose, for the sake of contradiction, that $G$ is not Grid-Aligned. Then there exists a minimum index $k$ such that $(\tier(v_k), \indexOf(v_k)) \neq (\tier(v'_k), \indexOf(v'_k))$ or $\tier(c_{i, k}) \neq \tier(c'_{i, k})$ for some $i$.
By the minimality of $k$, it holds that for all $j < k$, the coordinate trajectories are identical: $(\tier(v_j), \indexOf(v_j)) = (\tier(v'_j), \indexOf(v'_j))$ and $\tier(c_{i, j}) = \tier(c'_{i, j})$ for all $i$.

By the definition of a CSOTM, the tape head position—and consequently the $\indexOf$—is identical for all $t$ up to the halting time. Thus, $\indexOf(v_k) = \indexOf(v'_k)$. Let $l = \indexOf(v_k)$.
Furthermore, the tier at step $k$ is given by $\tier(v_k) = \tier(c_{l, k-1}) + 1$. Since $\tier(c_{l, k-1}) = \tier(c'_{l, k-1})$ due to the minimality of $k$, it follows that $\tier(v_k) = \tier(c'_{l, k-1}) + 1 = \tier(v'_k)$.
For all $i \neq l$, the state of the $i$-th cell remains unchanged: $\tier(c_{i, k}) = \tier(c_{i, k-1}) = \tier(c'_{i, k-1}) = \tier(c'_{i, k})$.
For $i = l$, since $v_k$ is the last node of $W_k$ in the $l$-th cell and $\tier(c_{l, k}) = \tier(c'_{l, k})$, it follows that $\tier(v_k) = \tier(v'_k)$.

This contradicts the assumption that the coordinates or tier values differ at step $k$. Thus, for any two computational walks $W$ and $W'$, the coordinate trajectory $(\indexOf, \tier)$ is identical, implying that $G$ is Grid-Aligned.
\end{proof}

\begin{lemma}[Total Collapse of the Feasible Graph]\label{lem:total_collapse}
Let $G$ be a subgraph of grid-aligned footmark graph with a unique initial node hosting at least one valid feasible walk with respect to the designated final edge $e_f$. 
Let $G_f = \mathsf{Feasible}(G -e)$ denote the residual feasible graph obtained after removing a single edge $e \in E(G)$. 
If $e$ is an essential edge (i.e., contained in every valid feasible walk) situated prior to the second splitting edge of any feasible walk, then the residual structure undergoes an unconditional total collapse: $G_f = \emptyset$.
\end{lemma}

\begin{proof}
We prove this structural collapse by establishing a stronger, generalized claim: \emph{For any subgraph $G$ of a footmark graph,
if a set of removed feasible edges $E_r$ contains an essential edge $e$ (which belongs to every valid computation walk in $G$ terminating in $E_f$) located prior to the second splitting edge, then the resulting feasible graph $\mathsf{Feasible}(G - E_r)$ is empty.
This holds under the structural constraint that all target nodes $v$ of edges in $E_f$ share an identical time value $\timeOf(v)$ according to the grid-aligned property.}
Note that due to the grid-aligned property, an identical time value $\timeOf(v)$ implies an identical spatial coordinate $(index(v), tier(v))$. Consequently, all target nodes of $E_f$ reside on the same grid point.

We proceed by induction on the total number of edges $n = |E(G-E_r)|$.

\paragraph{Base Case ($n=0$):} 
if  $|E(G-E_r)|=0$, trivially feasible graph of $G-E_r$ is empty, hence $G_f = \emptyset$.

\paragraph{Inductive Hypothesis (IH):} 
Assume that for any context graph $G-E_r$ containing exactly $i$ edges for such essential edge $e \in E_r$, the removal of such an essential edge guarantees total collapse, yielding $G_f = \emptyset$.

\paragraph{Inductive Step ($n=i+1$):} 
The proof proceeds in two stages: first, we establish the existence of at least one step-pendant edge $e'$ within $G' \setminus E_r$ via contradiction. Second, we apply mathematical induction by pruning $e'$, showing that the remaining graph collapses while preserving the consistency of the feasible graph structure with respect to $E_f$.

Consider a graph $G' - E_r$ containing $i+1$ edges with an essential edge $e \in E_r$. 
To trigger the inductive step, we must show that $G' - E_r$ necessarily contains \emph{at least one} step-pendant edge. 
Suppose, for contradiction,  that $G' - E_r$ contains no step-pendant edges. Let $F$ be the set of all edges whose target nodes possess the maximum time value in the grid-aligned graph. 
By the maximality of time, any edge $f \in F$ has neither next edges nor index-succedent edges (see \cref{lem:ex-pendant_edges_with_no_index-succedent}). 
Consequently, $f$ must belong to $E_f$; otherwise, by \cref{def:step_pendant_edge}, $f$ would be flagged as a step-pendant edge, triggering an immediate contradiction. 
Furthermore, if $e$ were a non-splitting edge, its immediate predecessors $\Prev(e)$ would become step-pendant upon $e$'s erasure. Thus, $e$ must be a splitting edge.

We categorize the removal of an edge $e$ into three exhaustive cases based on its essentiality in $G = G' - F$ and its relation to the set $F$:
(1) $e$ is essential, (2) $e \in F$ (non-essential), and (3) $e \notin F$ and is non-essential. We demonstrate that each case leads to a structural contradiction regarding the existence of step-pendant edges.

\begin{itemize}[leftmargin=*]
	\item \textbf{Case 1: $e \notin F$, and $e$ is essential in $G$.} \\
	Let $f$ be an edge in $F$. Note that throughout the following sub-cases, any shared incident node or step-adjacent structure between $e$ and $f$ forces identical $(index, tier)$ coordinates due to the grid-alignment property, which we use to derive structural contradictions.
	
	Suppose $f$ shares step-adjacent edges with any edge in $E_r$; then, by grid-alignment, these edges must share identical $(index, tier)$ coordinates with an incident node of $f$. This is impossible, as all edges with the maximum time value are constrained to belong to $F$.
	
	\begin{itemize}
	    \item \textbf{Case where $e$ and $f$ share index-precedent edges:} 
	    Since $e$ and $f$ both become index-succedents of a common index-precedent edge $e_p$, the $(index, tier)$ coordinates must be shared between the source nodes of $e$ and $f$. 
	    
	    \item \textbf{Case where $e$ and $f$ share index-succedent edges:} 
	    Since $e$ and $f$ both become index-precedents of a common index-succedent edge $e_p$, the $(index, tier)$ coordinates must be shared between the target nodes of $e$ and $f$. 
	    
	    \item \textbf{Case where $f$ is a merging edge to $e$ (sharing a next edge):} The target (head) nodes of $e$ and $f$ are identical.
	    
	    \item \textbf{Case where $f$ is a splitting edge from $e$ (sharing a previous edge):} The source nodes of $e$ and $f$ are identical. 
	\end{itemize}
	
	Since sharing any step-adjacent edge with $e$ leads to a structural contradiction, $f$ cannot share any such edges with $e$. Let $E_f^*$ denote the set of all previous edges of $F$. 
	Now, we examine whether a step-pendant edge $e'$ of $G$ remains step-pendant in $G' = G + F$ by investigating the step-adjacent edges of $f \in F$:
	
	\begin{itemize}[leftmargin=*]
	    \item \textbf{Case where $e' \in \ISucc(f)$ or $e' \in \Next(f)$:} This case is impossible, as $f$ is chosen to be an ex-pendant edge with no index-succedent edges.
	    
	    \item \textbf{Case where $e' \in \Prev(f)$:} Since $e' \in E_f^*$, the edge $e'$ cannot be step-pendant due to the presence of next-edges. Consequently, the addition of $f$ does not alter the step-pendancy status of $e'$, because the role of $f$ as a next-edge for $e'$ is already accounted for in the feasibility constraints.
	    
	    \item \textbf{Case where $e' \in \IPrec(f)$:} By \cref{index-succedent_from_cover_edge}, if $f$ is an index-succedent edge of $e'$, then $e'$ is a cover edge in $G$ toward $E_f^*$. This implies that $e'$ is not a step-pendant edge in $G$ due to the lack of index-succedent edges; thus, the addition of $f$ does not alter this status.
	\end{itemize}
	
	Consequently, the step-pendancy status of the step-adjacent edges of $e'$ in $G' - E_r$ remains identical to their configuration in $G - E_r$. Since the feasible graph of $G - E_r$ is empty by the IH, $G' - E_r$ must contain a step-pendant edge, which contradicts the initial assumption that $G' - E_r$ is step-pendant free.

	\item \textbf{Case 2: $e \in F$, implying $e$ is not essential in $G$.} \\
	By definition, any $e \in F$ is a designated final edge in $E_f$ since $e$ is the essential edge. As established, $e$ must be a splitting edge; otherwise, its removal would trigger an immediate step-pendancy in its previous edges, contradicting our assumption. 
	
	Since the first essential splitting edge in our grid-aligned model must be a floor edge that terminates in $E_f$ (by \cref{lem:no_infeasible_first_splitting_edge}), any essential splitting edge $e$ must coincide with an edge in $E_f$. 
	
	Furthermore, if $e$ and $f$ (where $f \in F$) share a previous edge as a splitting junction, we examine the following structural sub-cases:
	\begin{itemize}[leftmargin=*]
	    \item \textbf{Both are Floor Edges:} The existence of $f$ provides an alternative feasible walk, which directly contradicts the essentiality of $e$.
	    \item \textbf{Either is a non-floor edge:} This configuration violates \cref{lem:no_infeasible_first_splitting_edge}, which explicitly excludes the existence of infeasible edges at the first splitting junction, rendering this configuration structurally impossible.
	\end{itemize}
	Consequently, the removal of $e$ leads to an immediate structural contradiction, confirming that this case is impossible.
   
	\item \textbf{Case 3: $e \notin F$ and $e$ is non-essential in $G$.} \\
		In this scenario, $e$ is non-essential in $G$ due to the existence of alternative feasible paths. However, the introduction of the terminal set $F$ prunes these alternatives, promoting $e$ to an essential edge in $G' - E_r$. This leads to the infeasibility of $W_f + f$ for some feasible walk $W_f$ previously feasible in $G$, as the history constraints imposed by an edge $f \in F$ render the computation walk invalid.
	    
	    The infeasibility arises at some ceiling edge $e_c$ of $W_f$, where $\init(e_c) \notin \IPrec(\term(f))$. If $e_c$ were not a cover edge, it would necessarily become a step-pendant edge, contradicting the assumption that $G' - E_r$ is step-pendant free. 
	    
	    Crucially, $e_c$ cannot be a cover edge because $G'$ is a grid-aligned footmark graph: the target nodes of all edges in $F$ share an identical time step and $(index, tier)$ coordinates. This synchronization renders the existence of a valid cover edge for such a configuration structurally impossible. Thus, $e_c$ is forced to be step-pendant, leading to an immediate contradiction. 
	   
\end{itemize}
Through these exhaustive cases, we demonstrate that the assumption of a step-pendant-free graph $G' - E_r$ is untenable. 
This structural necessity proves that the pruning of an essential edge consistently results in the emergence of a step-pendant edge, thereby satisfying the inductive requirement.
Consequently, $G' - E_r$ is mathematically guaranteed to host at least one step-pendant edge, say $e'$. By designating $e'$ as the first edge extracted by the fixed-point pruning sequence, we reduce the graph $G'$ to a smaller configuration that remains within the scope of our inductive hypothesis.

\begin{itemize}[leftmargin=*]
    \item \textbf{Case A: $e'$ is not contained in any feasible walk.} \\
    In this case, removing $e'$ does not eliminate any valid feasible traces.
    Thus, $e \in E_r$ remains a strictly essential edge in the reduced graph $G'' = G' -e'$. Since $|E(G''-E_r)| = i$, the Inductive Hypothesis (IH) applies directly to $G''$. Consequently, the feasible graph of $G'' -E_r$ is empty, implying that the feasible graph of $G' - E_r$ is also empty.
    \item \textbf{Case B: $e'$ is contained within some feasible walks.} \\
	By including $e'$ in the set of pruned edges, we define $E_r' = E_r \cup \{e'\}$. Since $e'$ is a step-pendant edge, its removal reduces the total number of edges in the graph, i.e., $|E(G' - E_r')| = i$. This strictly satisfies the condition for the Inductive Hypothesis. As the removal of a step-pendant edge does not invalidate the essentiality of $e$, the reduced graph $G' - E_r'$ collapses to an empty feasible graph by the IH, which in turn confirms the structural collapse of $G' - E_r$.
\end{itemize}

By mathematical induction, the removal of an essential edge $e$ located before the second splitting boundary triggers the total structural collapse of the graph: $G_f = \emptyset$ for $G-E_r$ where $E(G-E_r)=n \ge 1$.
\end{proof}

\begin{remark}[Necessity of the Grid-Aligned Footmarks Property]
While the preceding lemmas establish local dependency properties, \cref{lem:total_collapse} necessitates the global constraint of a \textbf{Grid-Aligned Footmarks} structure. This condition is fundamental, acting as the linchpin that triggers the total structural collapse of the feasible graph.
It is important to note that \cref{lem:total_collapse} strictly applies to grid-aligned footmark graphs. 
The essentiality of an edge $e$ and the subsequent collapse of the feasible graph $G_f$ rely on the fact that every edge in $G$ is part of a valid computation walk. 
If the graph were a general augmented graph where $e_t$ does not belong to a deterministic computation walk, or if it lacks the \textbf{Grid-Aligned} property, the removal of an "essential" edge \textbf{might not} lead to an empty feasible graph, as the structural consistency guaranteed by the Turing machine's transition function would be absent.
\end{remark}

\section{Verification of Computation Walks}\label{sec:walk_verification}

In this section, we present a deterministic procedure to verify the existence of a valid computation walk containing a \textit{verification target edge}. While the static \ComputeFeasibleGraph{} algorithm identifies a subgraph $G_f$ of a computation graph containing all feasible walks to designated final edges, the feasible graph $G_f$ often retains persistent obsolete and orphaned structures. To resolve these, we employ a dynamic pruning strategy focusing on \textbf{implausible edges}.

Our approach proceeds through three targeted stages:

\begin{itemize}
   \item \textbf{Exploratory Selection of Candidate Walks.} We first attempt to select an arbitrary valid computation walk from the initial nodes on the feasible graph. If it successfully reaches the verification target edge, the participating edges are confirmed to be included in valid computation walks to the verification target edge.

   \item \textbf{Pruning of Implausible Merging Edges.} A first splitting edge is categorized as an \textbf{implausible edge} if its necessity for maintaining a computation walk containing a verification target edge is not yet established. Instead of assuming their invalidity \textit{a priori}, we treat them as structural suspects. We test these edges by observing the graph's stability upon their trial removal; if the removal triggers the collapse of the target computation walk, the edge is deemed essential for maintaining a feasible walk to the verification target edge."

    \item \textbf{Detection and Isolation of Futile Structures.} We further refine the graph by isolating clusters that mimic walk-like connectivity but lack global validity. This is achieved by identifying critical walks whose removal would induce a structural collapse. This stage systematically strips away the ``structural noise'' caused by obsolete walks that survived the initial feasibility computation.

    \item \textbf{Verification through Cascading Collapse.} The final verification is achieved through a controlled, recursive elimination process. By invoking \ComputeFeasibleGraph{} after the removal of a futile edge, we induce a \textbf{cascading collapse} of all structures tethered to it. In this stage, we systematically target edges that are redundant to the existence of a targeted computation walk. If the verification target edge remains stable and reachable via a consistent path after such removal, we confirm that the remaining structure is progressively refined toward a unique, minimal computation walk.
\end{itemize}

\paragraph{The Pruning Mechanism}
The core of the pruning strategy lies in identifying \textbf{computing-futile edges}---those that belong exclusively to \textbf{computing-futile walks}. An edge is deemed removable only if its removal does not violate the feasibility of any \textit{targeted} walks designated for preservation.

To refine the graph without losing essential computation walks, we distinguish between walks based on their strategic necessity during the pruning process. 

\begin{definition}[Computing-Redundant and Computing-Futile Edges]\label{def:computing-redundant-futile-edge}
Let $G$ be a subgraph of the $e_t$-augmented footmarks $G_U$ for a verification target edge $e_t$.
\begin{itemize}
    \item \textbf{Computing-targeted walk:} A valid computation walk in $G$ that reaches a \textit{verification target edge} in $e_t$.
    \item \textbf{Computing-futile walk:} A maximal computation walk in $G$ that does not reach any edge in $e_t$.
    \item \textbf{Computing-embedded walk:} A maximal computation walk in $G$ composed entirely of edges from other \textit{computing-targeted} walks, yet functioning as a \textit{computing-futile walk} itself. 
    \item \textbf{Nested computing-futile walk:} A walk $W$ in $G$ is a \textit{nested computing-futile walk} if it is a subwalk of another computing-futile or computing-targeted walk, and there exists some edge $f \in E(G_U) \setminus E(G)$ such that the $f$-augmented walk $W + f$ forms a valid computation walk in $G_U$. 
    \item \textbf{Computing-effective edge:} An edge $e \in E(G)$ that belongs to some computing-targeted walk in $G$ to $e_t$.
    \item \textbf{Computing-essential edge:} An edge $e \in E(G)$ that is strictly contained within every valid computing-targeted walk in $G$ to $e_t$.
    \item \textbf{Computing-redundant edge:} A computing-effective edge $e \in E(G)$ for some computing-targeted walk in $G$ to $e_t$, such that there exists another computing-targeted walk to $e_t$ in the feasible graph of $G - e$.
    \item \textbf{Computing-futile edge:} An edge $e \in E(G)$ that does not belong to any computing-targeted walk in $G$ to $e_t$. Equivalently, removing $e$ from $G$ does not destroy the existence of any computing-targeted walk to $e_t$.
\end{itemize}
\end{definition}
\begin{figure}[htbp]
        \centering
        \includegraphics[width=0.5\textwidth]{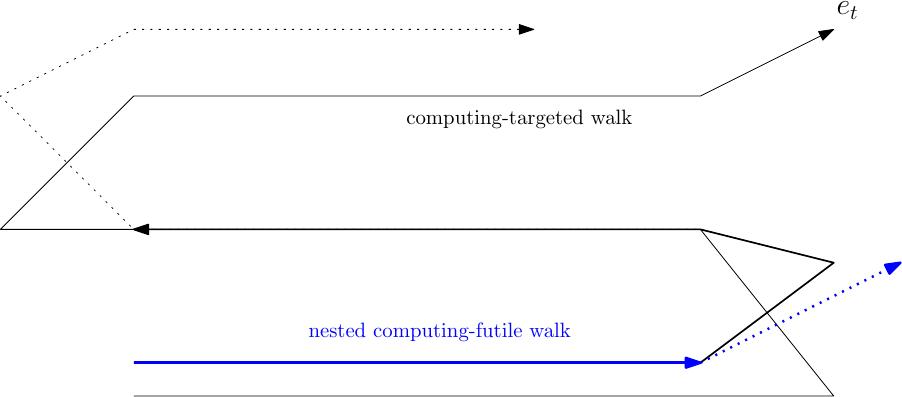}
        \caption{Nested Computing-futile Walk}
        \label{fig:nested_futile_walk}
        \Description{A walk that is subsumed within an obsolete walk, branching from a feasible trajectory but terminating prematurely within the graph's observable boundary. A dotted line extends from its terminal vertex, indicating a potential continuation that remains outside the feasible structure and leads to a structural dead end.}
\end{figure}

The term ``computing-futile'' does not indicate relative disjointness from a single computing-targeted walk, but rather that the edge is excluded from all computing-targeted walks.

\begin{remark}[Necessity of Critical Computation Walk Exploration]
The fundamental justification for exploring the \textit{critical attempted walk} lies in the discrepancy between local and global consistency. In our framework, an edge may satisfy all \textit{local consistency} requirements (i.e., transition rules and local reachability), yet fail to maintain \textit{global consistency} (i.e., forming a complete valid computation that reaches $e_t$). 

An \textbf{embedded walk} serves as a prime example: while it is composed of edges from computation walks containing verification target edges that are locally valid, the walk itself is functionally futile because it provides no novel contribution to the global verification. Therefore, the algorithm must traverse the critical walk to its terminus to definitively identify whether a walk is truly effective or merely an embedded futile sequence. This global verification, conducted within polynomial bounds, ensures that the pruning process is both sound and exhaustive.
\end{remark}

\SetKwFunction{PruneWalk}{PruneWalk}
\SetKwFunction{FindFirstSplittingEdgeOrFinalEdge}{FindFirstSplittingEdgeOrFinalEdge}
\SetKwFunction{ExtendFutileWalks}{ExtendFutileWalks}

\subsection{Pruning Walk Strategy for Implausible Edges}

In this subsection, we define the instrumental pruning mechanisms designed to isolate and test \textbf{implausible edges}. Here, pruning is treated not merely as a reduction step, but as a diagnostic tool to probe the structural necessity of specific edges. We focus on the most vulnerable points of a computation walk: the initial \textit{splitting edge} or, in its absence, the \textit{final edge}. By targeting these structural junctions, we can observe how the global reachability of the graph responds to their removal.

To provide a robust foundation for subsequent identification, we introduce two distinct pruning modalities:

\begin{enumerate}
    \item \textbf{Target-Oriented Pruning:} This mode selectively preserves only \textbf{computing-targeted walks}, systematically eliminating walks identified as futile using feasible graph. This is used to extract the minimal core required for verification.
    \item \textbf{Conservative Futile-Preserving Pruning:} This more nuanced mode prunes specific futile structures while intentionally retaining other \textbf{computing-futile walks}. By maintaining a subset of futile structures, this tool allows the algorithm to distinguish between \textit{embedded} noise and \textit{truly futile} edges in the later stages of verification.
\end{enumerate}

These two pruning tools serve as the operational basis for the subsections to follow. By manipulating the presence of implausible edges under these different constraints, we can definitively categorize an edge as truly futile if its removal consistently fails to trigger a cascading collapse of the designated target structures.

\begin{algorithm}[!ht]
\caption{Pruning an Edge Given a Computing-Futile Walk} \label{alg:pruning_an_edge_of_walk}
\Input{$G$: Feasible Graph, $e_t$: Verification target edge, $G_U$: $e_t$-augmented footmarks, \\ $W$: Computation Walk, $preserveFutile$: flag indicating whether to preserve computing-futile walks}
\Output{The feasible graph $G'$ in which an edge of $W$ removed.}

\Function{\PruneWalk{$G_U, G, v_0, e_t, W, preserveFutile$}} {
\State{Let $E_o \gets \emptyset$ and $E_f \gets \{e_t\}$}
\State{Set $e' \gets$ \FindFirstSplittingEdgeOrFinalEdge{$G,W$}}\label{alg_line:find_first_splitting_edge}
\If{preserveFutile} {
	\tcp{Add the extended futile edges to the designated final edges}
	\State{\ExtendFutileWalks{$G_U, G, E_o, E_f$}}
}
\StateC{Let $G' \gets$ \ComputeFeasibleGraph{$G-e', \{v_0\}, E_f \cup E_o$}}\Comment{Remove $e'$ from feasible graph if exists}\label{alg_line:find_obsolete_edge:recompute_feasible_graph}
\StateC{\Return $G'-E_o$}\Comment{Do not recover removed edge, this ensure polynomial time complexity}
}

\Function{\FindFirstSplittingEdgeOrFinalEdge{$G, W$}} {
\State{Let $e$ be the first edge of computation walk $W$}
\While{$e$ is not the final edge of walk $W$} {\label{alg_line:find_first_splitting_edge:while_start}
    \If{$e$ is splitting edge} {
        \State{\Return $e$}
    }
    \State{$e \gets \nextOf_W(e)$ }
}
\StateC{\Return $e$} \Comment{It returns final edge of computation walk if no splitting edge found}
}

\Procedure{\ExtendFutileWalks{$G_U:\In, G :\InOut, E_o :\Out, E_f: \In$}} {
   \ForAll{edge $e=(u,v) \in E(G_U)$} {
        \If{ $e \not\in E(G)$ \textbf{and} $u \ne t$ for any edge $(s,t) \in E_f$} {
	        \If{$u$ is incident to any node in $V(G)$ \textbf{and} and $tier(v)>0$ and $\IPrec_G(v) \ne \emptyset$ } {
	             \State{Let  $G \gets G+e$}
	             \State{Let $E_o \gets E_o \cup \{e\}$}
	        }
	}
    }
}
\end{algorithm}

\begin{definition}[Extendable and Extended Computing-Futile Edge]\label{def:extended_futile_edge}
Let $\mathcal{W}$ be a set of computation walks, and let $G_U$ denote $e_t$-augmented footmark graph for $\mathcal{W}$ with target verification edge $e_t$.  
Let $G \subseteq G_U$ be the current feasible graph, and let $W \subset W' \in \mathcal{W}$ be a computing-futile walk with respect to $e_t$.  

\begin{itemize}
    \item An edge $e$ of $E(W') \setminus E(G)$ is called an \emph{extendable computing-futile edge} if $W$ is not a nested computing-futile walk, and the sequence $W+e$ (the concatenation of $W$ and $e$) is the valid computation walk.  
   
    \item An \emph{extended computing-futile edge} is an extendable futile edge (from $E_o$) that is included in the augmented graph $G + E_o$ where $E_o$ is the set of all extendable edges in $G$.
\end{itemize}
\end{definition}

The formal establishment of the correctness of extending extendable computing-futile edges via the procedure \ExtendFutileWalks{}, including \cref{lem:add_final_edges_correctness} and its proof, is deferred to the Appendix.

\begin{lemma}[Correctness of Pruning a Walk]\label{lem:correctness_of_prunewalk}
Let $G$ be a subgraph of $e_t$-augmented footmarks with a unique initial node $v_0$, and let $W$ be a computing-futile walk to $e_t$ in $G$. If $\PruneWalk{}$ is executed, the resulting graph $G'$ satisfies the following:
\begin{itemize}
    \item If the \textsf{preserveFutile} flag is true and at least two such walks exist, then at least one computing-targeted or computing-futile walk from $G$ is preserved in $G'$.
    \item $G'$ is a subgraph of $G$ with at least one fewer edge than $G$, where the first splitting edge (or the final edge if no such edge exists) on a computing-targeted or computing-futile walk is removed.
    \item A nested computing-futile walk is not preserved in $G'$ unless the computing-futile walk containing it is also preserved.
    \item If $G$ is a grid-aligned footmark and the computation walk $W$ contains a computing-essential edge as its first splitting edge, then $G'$ is an empty graph.
\end{itemize}
\begin{proof}
By \cref{lem:correctness_of_find_first_splitting_edge}, a removable edge $e'$ is selected on $W$ as the first splitting edge that does not belong to all computation walks in $G$ if more than one computation walk exists in $G$. 
If the \textsf{preserveFutile} flag is true, by \cref{lem:add_final_edges_correctness}, all next edges of computing-futile walks are extended and stored in $E_o$ except for nested computing-futile walks. 
By \cref{lem:preservation_of_feasible_walks}, this ensures that computing-futile walks not containing $e'$ remain valid after \ComputeFeasibleGraph{} is invoked; otherwise, all computing-targeted walks not containing $e'$ are preserved.

Then, by \cref{lem:constructing_feasible_graph}, the final output is a feasible subgraph of $G - e'$, defined with respect to $E_o \cup E_f$, and all step-pendant edges are removed, where $E_o$ is the set of extended computing-futile edges if \textsf{preserveFutile} is true, and empty otherwise.

Therefore, $\PruneWalk{}$ correctly computes a feasible graph $G'$ in which the first splitting edge $e'$ is removed from a computing-futile walk $W$, while preserving at least one computation walk (either computing-targeted or computing-futile). 
Since no additional edges are added, $G'$ has at least one fewer edge than $G$. If \textsf{preserveFutile} is true, there is at least one preserved computing-futile walk that is not a nested computing-futile walk of another.

Regarding the fourth property, suppose the first splitting edge of $W$ is a computing-essential edge. By definition, a computing-essential edge is contained in every computing-targeted walk, and thus in every feasible walk. 
By \cref{lem:total_collapse}, the existence of such an edge imposes a structural constraint that forces a total collapse, yielding an empty feasible graph $G'$
\end{proof}
\end{lemma}

\begin{remark}
The algorithm ensures that at least one edge is removed during each iteration of the outer loop by removing extended computing-futile edges before it returns. By maintaining the invariant that a removed edge is never recovered in any subsequent phase, we guarantee a monotonic reduction of the feasible graph $G$. This structural decay is essential to ensure that the number of feasible graph updates is polynomially bounded, preventing already removed edges in the feasible graph from re-entering the computation and ensuring convergence toward a minimal feasible walk.
\end{remark} 

\begin{lemma}[Time Complexity of \PruneWalk{}]
\label{lem:prune_walk_time_complexity}

Let $G$ be a feasible graph of width $w$ and height $h$, with
$|E(G)| = \bigO(wh^2)$.  
Let $T_f$ denote the worst-case time complexity of \ComputeFeasibleGraph{}.

Then, the worst-case time complexity of \PruneWalk{} is bounded by $T_f$.

\begin{proof}
The procedure \PruneWalk{} consists of the following steps:

\begin{itemize}
    \item \textbf{Calling \FindFirstSplittingEdgeOrFinalEdge{}:}  
    This function scans the given computation walk $W$ and returns either
    the first splitting edge or the final edge.
    Since the length of $W$ is at most $\bigO(wh)$, this step takes
    $\bigO(wh)$ time.

    \item \textbf{Calling \ExtendFutileWalks{}:}  
    By \cref{lem:add_final_edges_time_complexity}, this step runs in
    $\bigO(wh^3)$ time.

    \item \textbf{Calling \ComputeFeasibleGraph{}:}  
    This recomputes the feasible subgraph of $G - e'$ using the updated
    sets $E_o$ and $\{v_0\}$.
    By \cref{lem:feasible_graph_time_complexity}, this step runs in
    \[
    T_f = O\bigl(w^{2}h^{4}(h\log h+\log w)\bigr).
    \]
\end{itemize}

All other operations incur lower-order costs.
Therefore, the total worst-case running time of \PruneWalk{}
is dominated by the call to \ComputeFeasibleGraph{}, and is
bounded by $T_f$.
\end{proof}
\end{lemma}

\SetKwFunction{FindDisjointEdge}{FindDisjointEdge} 
\subsection{Detecting Computing-Redundant or Computing-Futile Edges} \label{subsec:redundant_futile_edge_detection}

To prune unnecessary or redundant edges from a feasible graph $G$, we first aim to identify computation walks that do not contribute uniquely to any target configuration. This is achieved by locating either a \emph{computing-futile edge} or a \emph{computing-redundant edge} within computing-futile or computing-embedded walks. 

The core detection mechanism utilizes the pruning tools established in the previous subsection. The process is centered on the observation of a \textbf{cascading collapse} triggered by the experimental removal of a computing-futile walk. The strategy operates as follows:

\begin{enumerate}
    \item \textbf{Select an Arbitrary Computation Walk:} We take an arbitrary computation walk. If it is a computing-targeted walk, further detection is not necessary, otherwise it proceeds to the following step with the selected computing-futile walk.
    \item \textbf{Experimental Elimination of Futile-Computing Walk:} We temporarily remove a computing-futile walk from the current feasible graph even if it contains computing-effective edges. 
    \item \textbf{Detection of Strategic Necessity via Collapse:} If the removal of the computation walk $W_c$ triggers a total collapse—specifically, the emergence of an empty graph—it confirms that $W_c$ was a vital constituent of a \textbf{computing-targeted walk}.
    \item \textbf{Isolation through Futile-Preservation:} To isolate the truly futile components, we re-invoke the \textbf{Conservative Futile-Preserving Pruning} tool. By intentionally preserving known \textit{computing-futile walks} while $W_c$ is absent, we create a structural contrast.
    \item \textbf{Identification of Redundancy:} In this contrastive state, an edge $e \in G$ that remains within the structure, yet was not part of the computation walk $W_c$ itself, is identified as a \textbf{computing-futile edge} (or a \textbf{computing-redundant edge}). 
\end{enumerate}

This sophisticated triangulation—testing the necessity of some edges of the $W_c$ while shielding futile backgrounds— allows the algorithm to pinpoint ``structural noise'' that local consistency checks could never identify. 
By observing what remains functional even when the futile background is preserved, we ensure that the pruning process is both surgically precise and globally sound.

\begin{definition}[Pruned Graphs and Critical Attempted Walk]\label{def:pruned_graph_and_attempted_walk}
Let $G^{(0)}$ be a feasible graph with respect to a set of designated final edges $E_f=\{e_t\}$ where $e_t$ is a verification target edge.

For each $i \ge 0$, define $G^{(i+1)}$ as the graph obtained by applying
\PruneWalk{} to $G^{(i)}$ without preserving computing-futile walks, using a maximal computation walk $W_i$ in $G^{(i)}$, chosen arbitrarily among those that are not compuing-targeted walk to $e_t$.

Let $m < |E(G)|$ be the minimal integer such that, in $G^{(m)}$, either every remaining computation walk is feasible with respect to $E_f$, or $E(G^{(m)}) = \emptyset$.

Let $R := W_m$, called the \emph{critical attempted walk}, obtained on the maximal pruned graph at which the pruning process terminates.

We refer to:
\begin{itemize}
  \item $G^{(i)}$ as the $i$-th \textbf{pruned graph},
  \item $W_i$ (for $0 \le i < m$) as an \textbf{attempted walk},
  \item $G^{(m)}$ as the \textbf{maximal pruned graph},
  \item $R$ as the \textbf{critical attempted walk}.
\end{itemize}
\end{definition}

The following \cref{alg:find_target_redundant_futile_edge} attempts to isolate an edge that can be safely pruned without eliminating all computing-targeted walks.
It first takes a computation walk, if it finds a computing-targeted walk by investigating whether it contains the target edge $e_t$, then it returns immediately.
 Otherwise, it iteratively removes edges in the walks from the graph using \PruneWalk{}, detecting a critical attempted walk which removes all the computing-targeted walk to $e_t$.

When a critical attempted walk is detected, the algorithm recomputes the feasible graph by removing the first splitting edge of the detected walk while preserving computing-futile walks from the last graph that still contained a computing-targeted walk.
The critical attempted walk must contain the feasible walk up to the removed first splitting edge; as otherwise, the feasible graph would not become empty. 
Subsequently, the algorithm identifies a computing-futile walk using the next attempted walk on the re-pruned graph. 
Finally, it invokes \FindDisjointEdge{} to locate an edge that is not shared with the critical walk (represented by graph $R$), ensuring its redundancy or disjointness.
Since the first edge of a computing-futile walk that does not belong to the feasible walk is inherently either part of a computing-futile walk or at least a computing-redundant walk, its identification is guaranteed.

\begin{algorithm}[!ht]
\caption{Find Computing-Redundant or Computing-Disjoint Edge for Pruning} \label{alg:find_target_redundant_futile_edge}
\SetKwFunction{TakeArbitraryWalk}{TakeArbitraryWalk}
\SetKwFunction{FindTargetRedundantFutileEdge}{FindTargetRedundantFutileEdge}
\Input{$e_t$: The verification target edge, $G_U$: $e_t$-augmented footmarks, \\$G$: Original Feasible graph, $v_0$: unique Initial node} 
\Output{Computing-redundant or computing-futile Edge of feasible graph $G$ to $e_t$}
\Function{\FindTargetRedundantFutileEdge{$G_U, G, v_0, e_t$}} {
\StateC{Let $E_f \gets \{e_t\}$}\Comment{$E_f$: set of verification target edge}
\State{Let $R \gets$ an empty graph}
\While(\tcc*[f]{Loop until a computing-targeted/computing-futile edge found}){$G$ is not empty} {\label{alg_line:find_target_redundant_futile_edge:start_loop}
    \State{Let $W \gets$ \TakeArbitraryWalk{$G, v_0$}}\label{alg_line:take_arbitrary_walk}
    \If(\tcc*[f]{$W$ is computing-targeted walk}){$e_t$ in $W$} {
        \StateC{\Return {$e_t$}} \Comment{$e_t$ is not always at the end of $W$} 
    }
    \Else(\tcc*[f]{$W$ can be computing-embedded walk}) {
        \State{Set $H \gets$ \PruneWalk{$G_U, G, v_0, e_t, W, \False$}} 
	 \If(\tcc*[f]{$H$ contains no computing-targeted walk}){$H$ is empty} {
            \State{Set $G \gets$ \PruneWalk{$G_U, G, v_0, e_t, W, \True$}} \label{alg_line:prune_walk}
	        \StateC{\Return \FindDisjointEdge{$W, G$}}\label{alg_line:find_disjoint_edge}\Comment{Return futile or redundant edge}
        }
        \Else{
	     \State{Set $G \gets H$}
        }
    }
} \label{alg_line:find_target_redundant_futile_edge:end_loop}
\State{\Return $\NIL$}
}

\Function(\tcc*[f]{$R$ is a critical attempted walk in $G$}){\FindDisjointEdge{$R, G$}} { 
\State{Let $e$ be the first edge of walk $R$}
\While{$e$ is not NIL} {\label{alg_line:search_disjoint_edge:while_start}  
    \If{$e \not\in E(G)$} { \label{alg_line:search_disjoint_edge:select_disjoint_edge}
		\StateC{Let $E_N \gets \Outgoing(\init(e))$} \Comment{ $E_N$ does not contain $e$}
        \State{\Return an edge $f \in E_N$ if $|E_N|>0$ otherwise \NIL}
    }
    \StateC{$e \gets \nextOf_R(e)$}\Comment{If $\nextOf_R(e)$ does not exist then $e$ is \NIL}
}
\State{\Return \NIL}
}
\end{algorithm}

\begin{algorithm}[!ht]
\caption{Deterministic Walk Generation} \label{alg:take_arbitrary_walk}
\SetKwFunction{TakeArbitraryWalk}{TakeArbitraryWalk}
\Input{$G$: A feasible graph, $v_0$: Initial node}
\Output{$W$: A computation walk in $G$}
\Function{\TakeArbitraryWalk{$G, v_0$}} {
    \State{Let $S$ be an empty dynamic array representing the transition surface}
    \State{Let $W$ be an empty list of edges}
    \State{Let $e$ be the first available edge in $G$ incident with node in $v_0$,  otherwise \NIL}
    \While{$e \ne \NIL$} {
        \State{Update surface $S[\text{index}(u)]$ with the transition case of node $u$}
        \State{Append $e$ to walk $W$}
        \tcp{$S[i]=\IPrec_G(v')$ where $(u',v')$ is a next edge}
        \State{Set $e \gets$ the first edge of $\text{Next}_G(e)$ consistent with surface $S$, or $\NIL$ if none exists}
    } 
    \State{\Return $W$}
}
\end{algorithm}
\begin{remark}[Determinism and Reproducibility] 
Although \TakeArbitraryWalk{} involves selection steps, the overall pruning process remains \emph{deterministic} provided a consistent rule (e.g., "the first available edge") is applied. Crucially, any such consistent selection rule leads to the same correctness outcome, ensuring that the algorithm is both predictable and reproducible.
\end{remark}

\begin{sublemma}\label{sublem:correctness_find_futile_edge}
Let $R$ be the critical attempted walk that prunes all computing-targeted walks from the $i$-th pruned graph $G^{(i)}$ by applying \PruneWalk{} without preserving computing-futile walks, where $G^{(0)}$ is the initial feasible graph. 
Let $W$ be a computing-futile walk on the feasible graph $G'$ obtained by applying \PruneWalk{} to $G^{(i)} - e_r$ while preserving computing-futile walks, where $e_r$ is the first splitting edge of $R$ (if one exists), or the last edge of $R$ otherwise. 
Let $e_s$ be the first edge of walk $W$ that is not contained in walk $R$. Then, $e_s$ is a computing-futile edge unless it is a computing-redundant edge.
\end{sublemma}

\begin{proof}
\begin{figure}[ht]
  \centering
	\includegraphics[width=0.8\textwidth]{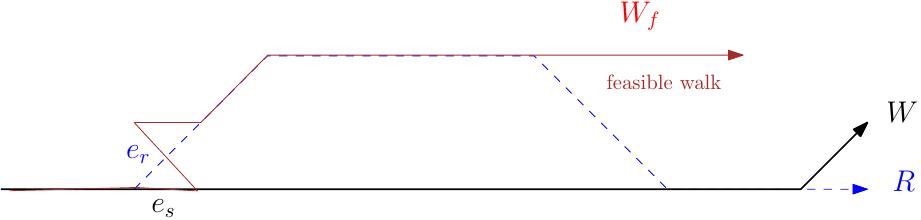}
	\caption{A computing-effective edge on computing-futile walk after all computing-targeted walks removed}
	\label{fig:effective_edge_on_futile_walk}
	\Description{A diagram showing a contradiction where a computing-targeted walk connects edge $e_s$ on walk $W$ to edge $e_r$ on walk $R$. }
\end{figure}

If $e_s$ is a \textit{computing-redundant edge}, the lemma holds trivially. Assume, for the sake of contradiction, that $e_s \in W$ (where $e_s \notin R$) is a \textit{computing-effective edge}.

Note that $e_s$ is an essential edge in the critical attempted walk $R$ by \cref{lem:total_collapse}, implying that $e_s$ is also a computing-effective edge in the initial feasible graph $G^{(0)}$. 
Let $W_f$ be a computing-targeted walk containing $e_s$. Since $e_s$ is not a computing-redundant edge, $W_f$ must contain both $e_r$ and $e_s$; otherwise, the feasible graph obtained from $G^{(i)} - e_r$ would fail to prune all computing-targeted walks, which contradicts the definition of $R$ as the critical attempted walk.

As illustrated in \cref{fig:effective_edge_on_futile_walk}, we demonstrate that any computation walk containing $e_s$ cannot contain $e_r$. Let $u$ be the common initial node of $e_s$ and $e_r$.

\begin{itemize}
    \item[1)] \textbf{Case where $e_s$ follows $e_r$ in $W_f$:} Any node occurring after $e_r$ must have a strictly higher tier than $u$ by the tier consistency of the computation walk defined in \cref{def:computation_walk}. However, the initial node of $e_s$ is $u$, which shares the same tier, leading to a contradiction.
    
    \item[2)] \textbf{Case where $e_s$ precedes $e_r$ in $W_f$:} Similarly, any node occurring after $e_s$ must have a strictly higher tier than $u$. Since the initial node of $e_r$ is $u$, this again leads to a contradiction.
\end{itemize}

In both cases, we reach a contradiction. Therefore, $e_s$ must be either a computing-futile edge or a computing-redundant edge, as defined in \cref{def:computing-redundant-futile-edge}.
\end{proof}

\begin{lemma}[Correctness of \FindTargetRedundantFutileEdge{}]
\label{lem:find_feasible_or_disjoint_walk}

Let $G_U$ be an $e_t$-augmented footmark graph. Given a feasible graph $G \subseteq G_U$ with a unique initial node $v_0$ and a verification target edge $e_t$,
\FindTargetRedundantFutileEdge{} in \cref{alg:find_target_redundant_futile_edge} satisfies the following:

\begin{enumerate}
    \item \textbf{If a computing-targeted walk exists in $G$,} then the algorithm returns :
    \begin{itemize}
 	\item the target verification edge $e_t$ contained in a computing-targeted walk in $G$, or
        \item a computing-redundant edge or a computing-futile edge to $e_t$ in $G$, provided that $G_U$ is grid-aligned footmark graph.
    \end{itemize}

    \item \textbf{If no computing-targeted walk exists in $G$,} then the algorithm returns :
    \begin{itemize}
        \item a computing-futile edge, or
        \item $\NIL$.
    \end{itemize}
\end{enumerate}

Moreover, the algorithm always terminates in finite time.

\begin{proof}
Let $G^{(i)}$ denote the graph obtained after the $i$-th iteration of the main
\texttt{while} loop
(from line \ref{alg_line:find_target_redundant_futile_edge:start_loop} to
line \ref{alg_line:find_target_redundant_futile_edge:end_loop} in \cref{alg:find_target_redundant_futile_edge}).
Let $n_w(G)$ denote the number of computing-targeted walks in $G$, and let
$n'_w(G)$ denote the number of computing-targeted or computing-futile walks in $G$.

Assume, for the sake of contradiction, that the algorithm violates the statement
of the lemma; namely, either a computing-targeted walk exists in $G$ but the
algorithm returns neither the verification target edge nor a computing-redundant/futile edge even though $G_U$ is a grid-aligned footmark graph,
or no computing-targeted walk exists in $G$ but the algorithm returns neither
a computing-futile edge nor $\NIL$.

Let $G^{(k-1)} \neq \emptyset$ be a pruned graph. Suppose that after applying \PruneWalk{} to $G^{(k-1)}$ without preserving computing-futile walks, the resulting graph becomes $G^{(k)} = \emptyset$. 

Specifically, if the removal of computing-essential edges during iteration $k$ triggers a total collapse leading to $G^{(k)} = \emptyset$, it follows from \cref{lem:correctness_of_prunewalk} that the preceding state $G^{(k-1)}$ must have contained at least one feasible walk, provided that the $e_t$-augmented footmark graph $G_U$ was constructed such that $e_t$ is contained in some computation walks. 

This implies that $G^{(k)}$ serves as a \textbf{maximal pruned graph}; any pruning beyond this point is pointless (see \cref{def:pruned_graph_and_attempted_walk}), as it would destroy the last remaining structures capable of sustaining a computing-targeted walk in $G^{(0)}$.

  By construction, this pruning step strictly decreases the total number of computation walks $n'_w(G^{(k-1)})$, effectively terminating the process only after all potential computing-targeted walks have been exhausted. This implies:
\[
n'_w(G^{(k-1)}) > n'_w(G^{(k)}),
\]
and
\[
n_w(G^{(k)}) = 0.
\]

Let $W_c$ be the walk selected by \TakeArbitraryWalk{} during iteration
$k$. Let $H$ be the feasible graph obtained from $G^{(k-1)}$ by applying
\PruneWalk{} while preserving computing-futile walks.
If $W_c$ contains a target verification edge $e_t$, the algorithm returns
such an edge, contradicting the assumption.
Therefore, $W_c$ is a computing-futile walk. By definition, $W_c$ is
either a computing-embedded walk consisting solely of computing-effective edges, or a computing-futile walk containing at least one computing-futile edge.

We consider the following exhaustive cases.
\begin{itemize}
\item \textbf{Case 1:} $n_w(G^{(k)}) > 0$.\\
This contradicts the choice of $k$, since applying \PruneWalk{} to
$W_c$ in iteration $k$ results in an empty graph, which implies $n_w(G^{(k)}) = 0$.

\item \textbf{Case 2:} $n_w(G^{(k)}) = 0$ and $n'_w(H) > 0$.\\
Since all computing-targeted walks have been eliminated, $W_c$ serves as the critical attempted
walk $R$.
In this state, the algorithm proceeds to apply \PruneWalk{} to $G$ while preserving computing-futile walks, 
and subsequently calls \FindDisjointEdge{} on the resulting graph. 
By \cref{sublem:correctness_find_futile_edge}, \FindDisjointEdge{} returns a computing-redundant or computing-futile edge, which contradicts the initial assumption.

\item \textbf{Case 3:} $n_w(G^{(k)}) = 0$ and $n'_w(H) = 0$.\\
In this case, the algorithm returns $\NIL$.
Assume for contradiction that a computing-targeted walk existed in the
original graph $G^{(0)}$.
First, the walk $W_c$ selected by \TakeArbitraryWalk{} cannot be a computing-targeted walk;
otherwise, the algorithm would return a target verification edge contained in it.
Therefore, there must exist a computation walk $W' \neq W_c$ that diverges from
$W_c$ at some edge. Otherwise, $W_c$ would be the only computation walk in $G^{(k-1)}$ and would
necessarily be computing-targeted, contradicting the assumption that the algorithm did
not return the target edge.

Since $G^{(k-1)}$ contains no computing-futile walk by the assumption (otherwise $H$ contains a computing-futile walk), 
$W'$ cannot be an computing-futile walk in $G^{(k-1)}$.

Hence, $W'$ must be a computing-targeted walk distinct from $W_c$, since any computation walk must belong to either computing-targeted or computing-futile walks.

This contradicts the maximality of the pruning process, which eliminates all
computing-targeted walks.
Therefore, no computing-targeted walk existed in the input graph $G=G^{(0)}$, contradicting the assumption.

\end{itemize}
In all cases, we obtain a contradiction. Hence, the algorithm always returns one
of the valid outputs. Since each pruning step strictly decreases the number of
computation walks and the graph is finite, the algorithm terminates in finite
time.
\end{proof}
\end{lemma}

\begin{lemma}[Time Complexity of Computing a Redundant or Futile Edge]
\label{lem:find_target_or_futile_edge_time}

Let $G$ be a feasible graph of width $w$ and height $h$, with
$|E(G)| = \bigO(wh^2)$.
Let $T_f$ denote the worst-case time complexity of
\ComputeFeasibleGraph{}.

Then, the worst-case time complexity of
\FindTargetRedundantFutileEdge{} is bounded by
\[
O(wh^2 \cdot T_f).
\]

\begin{proof}
We analyze the cost of a single iteration of the main loop in
\FindTargetRedundantFutileEdge{}.

\begin{itemize}
    \item \textbf{\TakeArbitraryWalk{}:}
    This procedure constructs a computation walk of length at most
    $\bigO(wh)$, since the walk spans at most $w$ index positions and at most
    $h$ vertices per index. Hence, this step takes $\bigO(wh)$ time.

	\item \textbf{\PruneWalk{}:} By \cref{lem:prune_walk_time_complexity}, this step runs in $O(T_f)$ time. Since it is invoked at most twice per iteration, the total time for this step remains $O(T_f)$.

    \item \textbf{\FindDisjointEdge{} (executed once upon termination):}
	This procedure scans a walk $W$ of length $O(wh)$. Upon encountering a missing edge $e \in W$ (i.e., $e \notin E(G)$), it performs an outgoing edge lookup and membership check in $R$, taking $O(\log h)$ time, and then terminates. 
    As the scan performs at most $O(wh)$ membership checks, the total complexity is $O(wh \log h)$
\end{itemize}

All steps other than \PruneWalk{} incur lower-order costs.
Thus, each iteration of the main loop runs in $\bigO(T_f)$ time.

Let $k$ be the number of iterations.
By \cref{lem:correctness_of_prunewalk}, at least one edge is removed from
the feasible graph in each iteration.
Since $|E(G)| = \bigO(wh^2)$, the number of iterations is bounded by
$k = \bigO(wh^2)$.

Therefore, the total worst-case running time of
\FindTargetRedundantFutileEdge{} is
\[
O(k \cdot T_f) = \bigO(wh^2 \cdot T_f).
\]
\end{proof}
\end{lemma}

\SetKwFunction{VerifyExistenceOfWalk}{VerifyExistenceOfWalk} 
\SetKwFunction{FindTargetRedundantFutileEdge}{FindTargetRedundantFutileEdge}
\SetKwFunction{ComputeFeasibleGraph}{ComputeFeasibleGraph}

\subsection{Verifying Computation Walk}\label{subsec:verification_of_computation_walk}

We now present a formal algorithm to verify the existence of a valid computation walk containing the verification target edge within a given computation graph. This verification procedure is designed to ensure both the \emph{soundness} (only valid walks are preserved) and \emph{completeness} (no valid walk is excluded) of the underlying grid-aligned graph semantics.

To determine whether a \textit{computing-targeted walk} exists containing the verification target $e_t$, we define the procedure \VerifyExistenceOfWalk{} in \cref{alg:walk_verification}. 
The algorithm first constructs a feasible subgraph with respect to the target edge $e_t$. It then enters a refinement loop, iteratively identifying and removing edges that are categorized as either \textit{computing-redundant} or \textit{computing-futile} through the detection mechanism established in the previous subsection.

If the process confirms that $e_t$ is stable and reachable even after the exhaustive removal of non-essential structures, or if $e_t$ is directly validated by \FindTargetRedundantFutileEdge{}, it certifies the existence of a computing-targeted walk. Conversely, if the graph collapses such that $e_t$ is no longer reachable, it proves that no such walk existed in the original configuration.

This verifier algorithm plays a role analogous to an \textbf{NP verifier}: it operates in deterministic polynomial time over a computation graph whose size is strictly bounded by $\bigO(wh^2)$ (where $w$ and $h$ denote its width and height, respectively). Since the number of edges is polynomial, the iterative pruning of \textit{computing-futile} or \textit{redundant} edges must terminate within polynomial bounds, effectively reducing the graph to its minimal, unique computation walk if one exists.

\begin{algorithm}
\caption{Verification of Walk} \label{alg:walk_verification}
\Input{$G$: a computation graph, $v_0$: unique initial node, $e_t$: verification target edge}
\Output{The Verification Result \True/\False}
\Function{\VerifyExistenceOfWalk{$G_U, v_0, e_t$}} {
\State{Let $E_f \gets \{e_t\}$ and $V_0 \gets \{v_0\}$}
\State{Set $G \gets$ \ComputeFeasibleGraph{$G_U, V_0, E_f $}}
\While{$G$ contains any edge of $E_f$} { \label{alg_line:walk_verification:start_of_while}
    \State{Let $e \gets$ \FindTargetRedundantFutileEdge{$G_U, G, v_0, e_t$}} \label{alg_line:walk_verification:find_obsolete_edge}
    \If{$e = e_t$} {
        \State{\Return \True}
    }
    \ElseIf{$e =NIL$} {
        \State{\Return \False}
    }
    \State{Set $G \gets G-e$} \label{alg_line:walk_verification:removing_an_edge}
    \State{Set $G \gets$ \ComputeFeasibleGraph{$G, V_0, E_f$}}\label{alg_line:walk_verification:recompute_feasible_graph}
} \label{alg_line:walk_verification:end_of_while}
\State{\Return \False}
}
\end{algorithm}

\begin{lemma}[Correctness of Verifying Existence of Walk] \label{lem:verify_walk_correctness}
The procedure \VerifyExistenceOfWalk{} in \cref{alg:walk_verification} correctly decides the existence of a computing-targeted walk containing $e_t$ within an $e_t$-augmented footmark graph $G$ with a unique initial node $v_0$.
Specifically, the algorithm returns \True only if there exists a computation walk containing $e_t$ in $G$. Furthermore, if $G$ is a grid-aligned footmark graph, the existence of such a walk ensures that the algorithm returns \True.
\end{lemma}

\begin{proof}
Let $G^{(0)}$ be the initial feasible graph constructed by \ComputeFeasibleGraph{} from the input graph $G$ with the initial vertex set $V_0=\{ v_0 \}$, and the target edge set $E_f = \{e_t\}$.
Let $G^{(i)}$ denote the feasible graph at \cref{alg_line:walk_verification:recompute_feasible_graph} in $i$-th iteration of the \texttt{while} loop.

We define the following loop invariant, maintained at the beginning of each iteration of the \texttt{while} loop in Line~\ref{alg_line:walk_verification:start_of_while}.

\textbf{Invariant:} A computation walk from some node in $V_0$ to $e_t$ exists in $G^{(i)}$ if and only if one existed in the initial graph $G^{(0)}$.

The correctness is established by verifying that the invariant holds initially, is preserved through each iteration, and ultimately guarantees the validity of the final result.

\paragraph{(1) Initialization.}
Initially, $G^{(0)} \gets$ \ComputeFeasibleGraph{}, which preserves all feasible walks with respect to $E_f$ by \cref{lem:preservation_of_feasible_walks} where $e_t\in E_f$. Hence, a computing-targeted walk to $e_t$ exists in $G^{(0)}$ if and only if one existed in the original graph $G$. Thus, the invariant holds prior to the first iteration.

\paragraph{(2) Maintenance.}
At each iteration, if an edge $e$ returned by the function \FindTargetRedundantFutileEdge{} is not the verification target edge, the algorithm removes it; otherwise, it terminates and returns \True.

By \cref{lem:find_feasible_or_disjoint_walk}, if $e \ne e_t$, then $e$ is either computing-redundant or computing-futile. Therefore, removing $e$ (i.e., constructing $G - e$) preserves at least one computing-targeted walk to $e_t$.

After removing $e$, the algorithm recomputes the feasible subgraph using \ComputeFeasibleGraph{}, which preserves all feasible walks with respect to $E_f$ by \cref{lem:preservation_of_feasible_walks}. Thus, any computing-targeted walk to $E_f$ that existed prior to the removal of $e$ continues to exist in the updated graph. In particular, since edges are only removed (never added), no new walks to $e_t$ can be introduced during pruning. Therefore, the invariant is maintained in $G^{(i+1)}$.

\paragraph{(3) Termination.}
The algorithm must terminate since each iteration removes at least one edge from $G^{(i)}$, and the total number of edges is finite. Therefore, the number of iterations is bounded by $|E(G)|$, and $G^{(i)}$ will eventually become empty when no further edges remain.

The algorithm terminates in one of two cases:
\begin{itemize}
    \item \textbf{Case 1:} The algorithm returns \textbf{True} when the function \FindTargetRedundantFutileEdge{} returns the verification target edge $e_t$. By the invariant, a feasible walk with respect to $e_t$ must exist in $G^{(i)}$, and therefore must have existed in $G^{(0)}$. (Soundness)

    \item \textbf{Case 2:} The algorithm returns \textbf{False} when \FindTargetRedundantFutileEdge{} returns \NIL, or if the computed feasible graph $G^{(i)}$ contains no edges from $E_f$.
 By the invariant, no feasible walk with respect to $e_t$ exists in $G^{(i)}$, and hence no computing-targeted walks existed in $G^{(0)}$. (Completeness)
\end{itemize}

Hence, \VerifyExistenceOfWalk{} returns \textbf{True} if and only if a computing-targeted walk to $e_t$ exists in the initial feasible graph $G^{(0)}$.

Therefore, the algorithm correctly verifies the existence of a computation walk containing $e_t$, completing the proof.

\end{proof}

\begin{lemma}[Time Complexity of Verification of Walk]
\label{lem:verify_existence_of_walk_time}
Let $G$ be a computation graph of width $w$ and height $h$, with
$|E(G)| = \bigO(wh^2)$.
Let $T_f$ be the worst-case time complexity of
\ComputeFeasibleGraph{}, and let $T_v$ be the worst-case
time complexity of \VerifyExistenceOfWalk{}.
Then,
\[
T_v = \bigO(w^{2}h^{4} \cdot T_f).
\]

\begin{proof}
The algorithm \VerifyExistenceOfWalk{} consists of a
\textbf{while}-loop that iteratively removes edges from the feasible
graph until either the target edge $e_t$ is found or no further edges
remain.

We analyze the cost of a single iteration.

\begin{itemize}
    \item \textbf{\FindTargetRedundantFutileEdge{}:}
    This step runs in $\bigO(wh^{2} \cdot T_f)$ time by
    \cref{lem:find_target_or_futile_edge_time}.

    \item \textbf{Edge removal and update:}
    Removing an edge from the adjacency structure costs $\bigO(h)$ time,
    which is negligible compared to the dominant terms.

    \item \textbf{Feasible graph recomputation:}
    Recomputing the feasible graph incurs an additional $T_f$ cost.
\end{itemize}

Thus, the per-iteration cost is dominated by
$\bigO(wh^{2} \cdot T_f)$.

At least one edge is removed in each iteration.
Since the computation graph $G$ contains at most $\bigO(wh^2)$ edges, the total
number of iterations is bounded by $\bigO(wh^2)$.

Therefore, the overall worst-case time complexity is
\[
T_v = \bigO(wh^2 \cdot wh^2 \cdot T_f)
    = \bigO(w^{2}h^{4} \cdot T_f).
\]
\end{proof}
\end{lemma}

In this section, we established that the proposed algorithm \VerifyExistenceOfWalk{} correctly and completely decides the existence of a computing-targeted walk ending at a given final edge within the computation graph. 
Furthermore, we proved that the algorithm runs in polynomial time with respect to the input size, providing an effective and efficient procedure for verifying computation walks.

\section{Proof of \textsf{P=NP}}\label{sec:np_is_p}

In this section, we establish that $P=NP$ by constructing a deterministic polynomial-time algorithm that simulates the branching behavior of a deterministic verifier for multiple symbols. The proof is organized into three progressive subsections:

\begin{itemize}
    \item \textbf{Incremental Extension of Footmarks Graph:} We describe an incremental edge extension procedure to the verified footmark graph by selectively adding edges corresponding to valid transitions. This process leverages the \VerifyExistenceOfWalk{} mechanism developed in the previous section, forming the deterministic core for simulating multiple transitions from a single computation node.
    \item \textbf{Transformation to Polynomial-Time Simulation:} We demonstrate how the traditionally exponential-time simulation of all certificates can be reduced to polynomial time by utilizing the above edge-extension strategy, effectively bypassing the brute-force search space.
    \item \textbf{Generalization to NP:} We show that this simulation strategy is applicable to all problems within the NP class, thereby concluding that $NP \subseteq P$, which implies $P = NP$.
\end{itemize}

\subsection{Extending Footmarks of Computation Walks}\label{subsec:extending_footmarks}

In this subsection, we present a method to incrementally construct a \textbf{footmark graph} representing all valid computation walks by extending a previously verified subgraph of the footmarks. The algorithm begins with a known subgraph $H$---defined as the verified history where all edges belong to at least one valid computation walk---and identifies its \textit{boundary edges}, which are potential transitions connecting $H$ to unexplored nodes within the full computation space $G$.

The extension proceeds through a rigorous \textit{selection-and-verification} cycle:
\begin{enumerate}
\item \textbf{Candidate Selection:} The boundary edges which have index-precedent edges in the verified footmarks $H$, or which are floor edges, are identified as potential candidates for extending the current verified domain $H$. At this stage, these edges are considered under minimal restrictions to ensure the completeness of the search space.
\item \textbf{Targeted Verification:} Each candidate edge is designated as a \textit{verification target edge} ($e_t$). We then invoke the \VerifyExistenceOfWalk{} procedure to determine if $e_t$ can be part of a globally consistent \textit{computing-targeted walk} to the target edge $e_t$.
\item \textbf{Promotion to Footmarks:} An edge is promoted to a \textit{footmarks edge} and integrated into $H$ only if its feasibility is confirmed. This ensures that every edge in $H$ is anchored to a valid path reaching the final accept or reject states.
\end{enumerate}

This iterative process continues until an accepting node is reached or no further extension is possible because no candidate edges can be further expanded. By transforming the exploration of an exponential certificate space into a sequence of deterministic verification steps, this strategy provides a polynomial-time mechanism for exploring valid computation paths without encountering exponential branching.

\begin{definition}
Given computation graph $G$ and its subgraph $H$, a \emph{boundary edge} of $H$ in $G$ is defined as any edge $(u, v)$ in $G$ such that $u \in V(H)$ and $v \notin V(H)$.
\end{definition}

\begin{algorithm}[ht]
\caption{Compute Footmarks of Computation Walks and Determine Acceptance} \label{alg:compute_footmarks}
\Input{$G$: Dynamic Computation Graph, $H$: Graph of Verified Edges, $v_0$: Unique Initial Node}
\Output{Whether an accept state $\qacc$ is reachable by a maximal computation walk}
\SetKwFunction{IsAcceptedOnFootmarks}{IsAcceptedOnFootmarks}
\SetKwFunction{CollectBoundaryEdges}{CollectBoundaryEdges}
\SetKwFunction{ExtendByVerifiableEdges}{ExtendByVerifiableEdges}
\Function{\IsAcceptedOnFootmarks{$G, H, v_0, \qacc, \qrej$}} {
\State{Let $Q \gets$ \CollectBoundaryEdges{$G, H, \qrej$}}
\While(\tcc*[f]{Extend $H$ by valid computation edges}){$Q \ne \emptyset$} {
    \State{Let $Ev \gets \emptyset$}
    \StateC{\ExtendByVerifiableEdges{$v_0, Q, H, E_v$}}\Comment{$E_v$: verified extension edges}
    \If(\tcc*[f]{No feasible edge extended}){$E_v = \emptyset$} {
        \State{\Return \False}
    }
    \ElseIf{there exist $state(v)=\qacc$ for some $(u,v)\in E_v$} {
        \State{\Return \True}
    }
    \StateC{$Q \gets$ \CollectBoundaryEdges{$G,H$}}\Comment{newly collected boundary edges}
}
\State{\Return \False}
}

\Function{\CollectBoundaryEdges{$G, H, \qrej$}} {
\State{Let $Q$ be the empty set of edges}
\ForAll(\tcc*[f]{Collect boundary edges}){node $v$ in $V(H)$} {
    \State{Let $h \gets$ the maximum tier of all node $w \in V(H)$ with $\indexOf(w)=\nextIndex(v)$}
    \ForAll{edge $e'=(v,w) \in \Outgoing_G(v)$ with $\tier(w) \le h+1$ and $\state(w)\ne \qrej $} {
        \If{$e' \notin E(H)$ and ($e'$ is a floor edge or $\IPrec_{H'}(e')$ is not empty) where $H'=H+e'$} {
             \State{Add $(v,w)$ to $Q$}
        }
    }
}
}

\Procedure{\ExtendByVerifiableEdges{$v_0$: \In, $Q$:\In, $H$:\InOut, $E_v$:\Out}} {
\ForAll{edge $e \in Q$} {
    \If{\VerifyExistenceOfWalk{$H+e, v_0, e$}} {
        \State{Add $e$ to $E_v$}
        \State{$H \gets H + e$}
    }          
}
}
\end{algorithm}

\begin{lemma}[Correctness of \ExtendByVerifiableEdges{}]
\label{lem:traverse-feasible-edges}
Let $G$ be a computation graph, $H \subseteq G$ a grid-aligned footmark graph, and $v_0$ a unique initial vertex.
Let $Q$ be the set of boundary edges $(u,v)$ such that $u \in V(H)$ and $v \in V(G) \setminus V(H)$.
Let $E_v$ denote the set of edges verified to be computing-effective by \ExtendByVerifiableEdges{} and subsequently added to $H$.
Then, the algorithm correctly collects into $E_v$ exactly those edges $e = (u,v) \in Q$ for which a computation walk to $e$ exists within the resulting incremented grid-aligned footmark graph $H$.

\begin{proof}
We prove correctness by invariant-based induction on the number of processed edges from the queue $Q$.
\begin{itemize}
\item \textbf{Invariant:}  
The set $E_v$ contains exactly those edges $(v,w)$ such that a computation walk exists from the initial vertex $v_0$ to $e = (v,w)$ in the current graph $H$ (the grid-aligned footmark graph) constructed so far.  
Moreover, $H$ contains exactly those edges for which a computation walk exists from $v_0$.

\item \textbf{Base Case:}  
Initially, $E_v = \emptyset$ and $Q$ consists of the set of boundary edges from $V(H)$ to $V(G) \setminus V(H)$.  
No edges have been verified, and $H$ contains only the initially verified subgraph. Hence, the invariant holds trivially.

\item \textbf{Maintenance:}  
Assume that after the $k$-th iteration, the invariant holds. Now consider the $(k+1)$-th edge $e = (v,w) \in Q$:
\begin{itemize}
\item[-] If \VerifyExistenceOfWalk($H + e, v_0, e$) returns \texttt{True}, then by \cref{lem:verify_walk_correctness}, there exists a computation walk to $e$ in $H + e$. In this case, $e$ is added to both $E_v$ and $H$.
\item[-] Otherwise, $e$ is skipped, and $E_v$ remains unchanged. 
\end{itemize}

Thus, only feasible edges are added, and no infeasible edge is mistakenly included. The invariant remains true.

\item \textbf{Termination:}  
Since each edge in $Q$ is processed at most once and $Q$ is finite (bounded by $|E(G)|$), the procedure terminates after a finite number of steps.

\item \textbf{Conclusion:}  
By induction, upon termination, $E_v$ contains exactly all edges in $Q$ for which a computation walk exists in $H + e$ from $v_0$ to $e$.  
Thus, the procedure is both sound (only grid-aligned footmark edges are included) and complete (all grid-aligned footmarks edges are found).  
Moreover, the updated graph $H$ contains exactly the edges reachable via computation walks from $v_0$.
\end{itemize}
\end{proof}
\end{lemma}

\begin{lemma}[Correctness of Checking Acceptance of Footmarks] \label{lem:correctness_of_acceptance_in_footmarks}
Let $G$ be a computation graph, and let $H \subseteq G$ be a subgraph representing the grid-aligned footmarks of some computation walks;
 that is, $H$ is initialized to contain exactly those vertices reachable from $v_0$ via previously verified computation walks.  
Then the algorithm \IsAcceptedOnFootmarks{} terminates and satisfies:

\begin{enumerate}
    \item It expands $H$ by adding all feasible edges and vertices reachable from $v_0$ via computation walks in $G$.
    \item It returns \texttt{True} if and only if there exists such a walk that reaches an accepting state $\qacc$.
\end{enumerate}

\begin{proof}
We prove correctness by maintaining a loop invariant across iterations of the \texttt{while} loop in the algorithm.
\begin{itemize}
\item \textbf{Invariant:} At the beginning of each iteration:
\begin{itemize}
    \item $H$ contains exactly those vertices and edges that are reachable from $v_0$ by a computing-targeted walk,
    \item $Q$ contains only boundary edges that have not yet been verified as feasible,
    \item If an accepting state $q_\text{acc}$ has been reached by any verified walk, the algorithm returns immediately with \texttt{True}.
\end{itemize}

\item \textbf{Base Case:}  
Initially, $Q$ contains all boundary edges of $H$ with index-precedent consitency collected by \CollectBoundaryEdges{}.  
No edge has yet been verified as feasible.  
Since $H$ is assumed to be constructed from valid prior computation steps, the invariant holds.

\item \textbf{Inductive Step (Maintenance):}  
Suppose the invariant holds at the start of an iteration.
\begin{itemize}
\item The algorithm invokes \ExtendByVerifiableEdges{} with the current $Q$ and $H$.
\item By \cref{lem:traverse-feasible-edges}, this procedure correctly identifies and adds all feasible footmark edges from $Q$ to $H$, and records them in $E_v$.
\item If any such edge leads to a node in accepting state $\qacc$, the algorithm terminates and returns \texttt{True}.
\item If no feasible footmark edge is found ($E_v = \emptyset$), then no further expansion is possible, and the algorithm correctly returns \texttt{False}.
\item Otherwise, the verified footmark edges in $E_v$ are removed from $Q$, and the algorithm proceeds.
\end{itemize}

Hence, the invariant continues to hold: $H$ grows only via footmark edges, $Q$ shrinks by removing only verified entries, and accepting paths are immediately detected.

\item \textbf{Termination:}  
In each iteration, at least one edge is removed from $Q$ and never re-added.  
Since $Q$ contains only boundary edges, and the number of edges in $G$ is finite, the total number of iterations is bounded by $|E(G)|$.  
Therefore, the loop terminates in finite time.

\item \textbf{Conclusion:}  
Upon termination, $H$ contains all vertices and edges reachable from $v_0$ by computing-targeted walks in $G$.  
The function returns \texttt{True} if and only if there exists such a walk that reaches an accepting state.  
\end{itemize}
\end{proof}
\end{lemma}

\begin{remark}[Re-evaluation of Candidate Edges]
It is important to note that an edge $e$ that fails the validation test in a given iteration is not permanently discarded. As the subgraph $H$ expands through the addition of other footmark edges, the structural conditions for $e$ (such as being part of a computation-targeted walk) may subsequently be satisfied. 
Therefore, the algorithm ensures that such candidate edges are re-evaluated in light of the updated footmarks $H$, allowing previously unverified edges to be identified as feasible in later steps.
\end{remark}

\begin{lemma}[Time Complexity of Checking Acceptance of Footmarks]  
\label{lem:time-complexity-compute_footmarks}  
Let $H$ be a constructed dynamic computation graph with width $w$ and height $h$ via iterative edge extension,  
and let $T_v$ denote the time complexity of \VerifyExistenceOfWalk{}. 
Then, the total time complexity of \IsAcceptedOnFootmarks{} in \cref{alg:compute_footmarks} is bounded by
\[
O(w^2 h^4 \cdot T_v).
\]
\begin{proof}
In \IsAcceptedOnFootmarks{}, the graph is extended at most along the edges of the footmark graph of all computation walks of all certificates (see \cref{def:footmarks_graph}),
 whose total number of edges $|E(H)|$ is $\bigO(wh^2)$ by definition.

At each iteration, at least one edge is extended; otherwise, the loop terminates because $E_v$ becomes empty.  

Each call to \ExtendByVerifiableEdges{} costs $\bigO(wh^2 \cdot T_v)$ by \cref{lem:time_complexity_extending_edges_by_verification},  
and each edge is processed at most once.  
Hence, the cumulative cost of all verification calls is $\bigO(|E(H)| \cdot (wh^2) \cdot T_v)=\bigO(w^2 h^4 \cdot T_v)$.  

Similarly, each call to \CollectBoundaryEdges{} costs $T_c = \bigO(wh^3)$ and can be invoked at most $\bigO(wh^2)$ times,  
resulting in a total cost of $\bigO(wh^2 \cdot T_c) = \bigO(w^2 h^5)$.
This is asymptotically dominated by the verification cost $\bigO(w^2 h^4 \cdot T_v)$ (see \cref{lem:verify_existence_of_walk_time}).  

The computation graph $H$ is maintained as a dynamic data structure.  
When a tape cell index is accessed for the first time, extending the underlying structure may require copying existing data, incurring a worst-case cost of $\bigO(wh^2)$.  
Since this occurs at most once per cell index, the overall overhead from dynamic graph expansion is $\bigO(wh^2)$,  
which is also asymptotically dominated by $\bigO(w^2 h^4 \cdot T_v)$.  

Therefore, the time complexity of \IsAcceptedOnFootmarks{} is $\bigO(w^2 h^4 \cdot T_v)$.
\end{proof}
\end{lemma}

In this subsection, we presented algorithms that, starting from already visited nodes, explore boundary edges to expand and verify all valid computation paths within polynomial time. We proved their correctness and analyzed their time complexity by leveraging the verification algorithm for computation walks to specific edges.

\SetKwFunction{SimulateVerifierForAllCertificates}{SimulateVerifierForAllCertificates}
\subsection{Polynomial-Time Algorithm for NP Problems: Simulating All Certificates}\label{subsec:simulate_verifier_poly}

In this subsection, we present a deterministic polynomial-time algorithm that simulates an NP verifier over a dynamic computation graph. This simulation replaces the non-deterministic certificate-guessing process with a structured traversal of all valid computation paths. By doing so, the algorithm provides a deterministic decision procedure for NP verification, supporting the foundational claim that \textsf{P} = \textsf{NP}.

The primary objective here is to bridge the gap between verifying a computation walk in a fixed graph and simulating a verifier over the entire space of potential certificates. To prove \textsf{P} = \textsf{NP}, it is not enough to show that a computation walk can be found in an arbitrary graph; we must demonstrate that our dynamically constructed footmark graph $H$ faithfully encapsulates the footmarks of \emph{all} possible NP-certificates.

During the simulation, the algorithm maintains two synchronized structures:
\begin{itemize}
    \item \textbf{NP Dynamic Computation Graph $G$:} This represents the universal computation space that encodes every potential transition corresponding to any valid NP-certificate and its associated computation walk.
    \item \textbf{Footmarks Subgraph $H$:} This records the verified history of previously explored computation walks, acting as a repository of "guaranteed" paths to ensure efficiency and prevent duplicated verification.
\end{itemize}

Unlike exponential-time simulations that explicitly construct a complete computation tree for each certificate, our method expands the grid-aligned footmark graph $H$—representing the union of computation walks for all certificates—\emph{on demand} based on verified feasibility. This controlled expansion ensures that we only generate the specific subset of the universal graph $G$ necessary to decide the language, thereby maintaining polynomial-time execution.

By the definition of \textsf{NP}, every language $\mathcal{L}$ is governed by a verifier $M$. Rather than reducing $\mathcal{L}$ to a specific problem, our framework directly analyzes the state-transition graph of $M$ over the symbolic space of all possible certificates $Y \in \Sigma^{q(|X|)}$. 
Furthermore, for every deterministic Turing machine, there exists an equivalent oblivious Turing machine with the same input and at most a quadratic overhead in running time (see \cref{thm:pippenger_fischer} and \cref{rem:verifier_otm}).
This transformation allows us to impose a grid-aligned structure on the corresponding footmarks, which is necessary for ensuring the polynomial-time efficiency of our graph expansion algorithm."

We represent this space as a grid-aligned Footmarks Graph where each node encapsulates the 6-tuple configuration of $M$ at a given time and tape position. 

We adopt the deterministic certificate-oblivious \textsf{NP} verifier $M = (Q, \Sigma, \Gamma, \delta, q_0, \qacc, \qrej)$ and the input configuration—consisting of the problem instance, the fixed input string $\Lfixed$, and the certificate length $m$—as formally defined in \cref{subsec:simulate_verifier_exp}.
 The simulation processes the fixed problem instance $\Lfixed$ followed by a symbolic certificate region, as illustrated in \cref{fig:verifier_tape_area}. 
 The underlying machine model and input tape structure are identical to those employed in the brute-force algorithm detailed in Appendix \cref{subsec:simulate_verifier_exp}.

The transition from a non-deterministic search to a deterministic simulation rests upon two formal pillars:
\begin{itemize}
    \item \textbf{Soundness:} If a computation walk is identified in the dynamically constructed graph $G$, there must exist at least one specific certificate string that induces that sequence of transitions in the original verifier Turing Machine.
    \item \textbf{Completeness (The Universal Coverage):} The dynamic construction process must ensure that every edge capable of leading to an accepting configuration under \emph{any} certificate is included in $G$. This ensures that no valid certificate is overlooked during the deterministic traversal.
\end{itemize}

The algorithm \SimulateVerifierForAllCertificates{} iteratively applies the verification mechanisms developed in previous sections to expand the frontier of $H$ within the bounds of $G$. This approach relies on a crucial structural observation: rather than re-evaluating every certificate explicitly, it suffices to incrementally extend the verified boundary from known footmarks. By transforming the exponential breadth of certificates into a polynomial growth of verified edges, we achieve a deterministic simulation of the non-deterministic verifier.

Crucially, since the verifier $M$ operates deterministically once a certificate symbol is fixed, the transition from each node in $G$ is uniquely determined by the choice of the symbol. Consequently, we refer to the \textbf{outgoing edges} of a node as its \textbf{next edges}. It is important to note that because multiple certificate symbols (e.g., $\{T, F\}$) may be available at a given configuration, a single node can give rise to multiple next edges, each serving as a \textbf{floor edge} for its respective branching computation path.

\SetKwFunction{Initialize}{Initialize}
\SetKwFunction{GetNextEdges}{GetNextEdges}
\begin{algorithm}[H]
\caption{Simulate Verifier For All Certificates} \label{alg:compute_verifier_in_polynomial}
\Input{
$\Lfixed$: problem instance string ending the delimiter `\#', $m$: certificate length,\\ 
$q_0$: initial state of verifier TM, $\Sigma$: input alphabet, $\delta$: transition function, \\
$Q$: Set of states of machine, $\qacc$: Accepting state, $\qrej$: Rejecting state
}
\Output{
  The decision result: \texttt{Yes} if any certificate makes verifier accept, otherwise \texttt{No}
}
\Function{\SimulateVerifierForAllCertificates{$\Lfixed, m, q_0, \Sigma, \delta, Q, \Gamma, \qacc, \qrej$}} {
\State{Let $G$ be a NPDynamicComputationGraph}
\State{Let $G$.\Initialize{$\Lfixed, m, q_0, Q, \Sigma, \delta, \Gamma, \{ \qacc, \qrej \}$}}
\StateC{Let $s \gets \Lfixed[0]$} \Comment{Problem Instance is not empty string}
\StateC{Let $v_0$ be the unique node in $V_{0,0}^{q_0,s}$} \Comment{$v_0$: vertex at index 0, tier 0, state $q_0$, symbol $s$}
\StateC{Let $E_0 \gets$ $G$.\GetNextEdges{$v_0, 0$}}\Comment{$\Outgoing_G(v_0)$}
\State{Let $V_0 \gets \{v_0\}$}
\StateC{Let $H \gets G(V_0,E_0)$}\Comment{Computation Graph}
\If(\tcc*[f]{$v_0$:Initial vertex where $state(v_0)=q_0$}){\IsAcceptedOnFootmarks{$G, H, V_0, \qacc, \qrej$}} {
    \State{\Return \texttt{Yes}}
}
\State{\Return \texttt{No}}
}
\end{algorithm}

\begin{algorithm}
\SetKwFunction{Initialize}{Initialize}
\SetKwFunction{GetFloorNextEdges}{GetFloorNextEdges}
\SetKwFunction{GetNonFloorNextEdges}{GetNonFloorNextEdges}
\SetKwFunction{GetNextEdges}{GetNextEdges}

\caption{NP Dynamic Computation Graph } \label{alg:np_dynamic_computation_graph}
\AlgDescription{This Graph is a Dynamic Computation Graph for NP verifier input string and all certificates}
\SetKwFunction{NPDynamicComputationGraph}{NPDynamicComputationGraph}
\Class{\NPDynamicComputationGraph \textbf{extends} \textsf{DynamicComputationGraph}} {
\State{\Field $q_0$ : a Turing machine state representing the initial state}
\State{\Field $Q$ : Set of all states of the Turing machine}
\State{\Field $\Lfixed$ : Fixed tape input string (problem instance string ending with the delimiter `\#')}
\State{\Field $m$ : Integer representing length of certificate}
\State{\Field $\Sigma$: Set of all input symbols of the Turing machine}
\State{\Field $\Gamma$: Set of all symbols of the Turing machine}
\State{\Field $\delta$:  Transition function of the Turing machine}
\State{\Field $V$: Dynamic Array representing computation node}
\State{\Field $F$: Set of final state of the Turing machine}

\Function{\Initialize{$\Lfixed', m', q_0', Q', \Sigma', \delta', \Gamma', F'$}} {
	\State{Set $(q_0,\Lfixed, m, \Sigma, Q, \delta, \Gamma, F) \gets (q_0', \Lfixed', m', \Sigma', Q', \delta, \Gamma', F')$}
	\State{Set $\Gamma \gets \Sigma' \cup \{\epsilon\}$ if $\Gamma'$ is $\NIL$ otherwise $\Gamma'$}
}
 
\Function{\GetNextEdges{$v, t_m$}} {
        \State{Let $E \gets$ \GetFloorNextEdges{$v$}}
        \State{Let $E' \gets$ \GetNonFloorNextEdges{$v, t_m$}}
	\State{\Return $E \cup E'$}
}

\Function{\GetNonFloorNextEdges{$v, t_m$}} {
	\State{Let $E \gets \emptyset$}
	\State{Let $(i', q') \gets (\nextIndex(v), \nextState(v))$}
	\ForAll{$(s', t')$ such that $s' \in \Gamma$ \textbf{and} $0<t' \le t_m$} {
		\ForAll{$v' \in V_{i',0} ^{q',s'}$} {
	                \State{Add $(v,v')$ to $E$} 
		}
	}
	\State{\Return $E$}
}

\Function{\GetFloorNextEdges{$v$}} {
	\State{Let $E \gets \emptyset$}
	\State{Let $E$ be an empty set of computation edges}
	\State{Let $(i', q') \gets (\nextIndex(v), \nextState(v))$}
	\If{$\Lfixed[i']$ is defined} { 
	  	\State{Let $v' \gets$ the unique node $u \in V_{i',0}^{q',s'}$ with $s'=\Lfixed[i']$}
                \State{Add $(v,v')$ to $E$} 
	}
	\ElseIf{$i'<0$ \textbf{or}  $i' > |\Lfixed|+m$} {
		\State{Let $v' \gets$ the unique node in $V_{i',0}^{q',s'}$ with $s'=\epsilon$}
                \State{Add $(v,v')$ to $E$} 
        }
        \Else{
		\ForAll{$s' \in \Sigma$} {
			\State{Let $v' \gets$ the unique node in $V_{i',0}^{q',s'}$}
	                \State{Add $(v,v')$ to $E$} 
		}
	}
	\State{\Return $E$}
}
}
\end{algorithm}

\begin{remark}[Structural Inheritance of Computation Graphs]
For the convenience of formal proof and to clarify the relationship between different graph models, we adopt the concept of \textbf{inheritance} from object-oriented programming. The graph $G$ utilized in this simulation is treated as a \textit{specialized subclass} of the general computation graph defined in \cref{sec:comp_model}. It inherits all fundamental properties of the base graph while incorporating specific constraints—such as the deterministic transition rules of $M$ and the branching floor edges dictated by certificate symbols—to model the verifier’s execution space precisely.
\end{remark}

\begin{sublemma}\label{lem:floor_edge_only_for_valid_input_string}
The method \GetFloorNextEdges() returns exactly the set of floor edges
that are consistent with the problem instance string $\Lfixed$ and all possible certificate
strings of length $m$.
\begin{proof}
The method \GetFloorNextEdges() generates a floor edge $(u,v)$ where $v$
corresponds to the tape cell indexed by $i'$, at which the next read or write
operation occurs.
The following cases are distinguished:
\begin{itemize}
\item If $i' < |\Lfixed|$ (input region), the algorithm deterministically
returns the unique edge to the node labeled with $s' = \Lfixed[i']$.

\item If $i' \in [|\Lfixed|, |\Lfixed|+m)$ (certificate region), the
algorithm returns all edges to nodes associated with with each $s' \in \Sigma$, thereby
enumerating all possible certificate symbols at that position.

\item If $i' < 0$ or $i' \ge |\Lfixed|+m$ (out-of-bounds access), the algorithm
returns exactly the edge labeled with the blank symbol $\epsilon$.
\end{itemize}

Hence, every floor edge consistent with the problem instance string $\Lfixed$ and some certificate
string of length $m$ is generated by \GetFloorNextEdges(), and no other
edges are returned.
\end{proof}
\end{sublemma}

\begin{lemma}[Completeness of Simulating Verifier for All Certificates] 
\label{lem:completeness_of_simulate_verifier_for_all_certificate}
Let $M$ be a certificate-oblivious NP verifier Turing machine and $\Lfixed=L\#$ be an input string for a problem instance.
For any certificate $Y$ on a problem instance $\Lfixed$, \SimulateVerifierForAllCertificates{} in \cref{alg:compute_verifier_in_polynomial} correctly simulates all valid accepting computation walks. That is, for every accepting computation walk corresponding to some $Y$, the constructed grid-aligned footmark graph $H$ contains all necessary edges to simulate it. 

\begin{proof}
We proceed by contradiction.
Note that the footmark graph generated by the certificate-oblivious TM is grid-aligned by \cref{lem:cotm_to_grid_aligned}.
Suppose there exists a valid accepting computation walk $W$ for some certificate $Y$ such that $M(\Lfixed \#Y)$ accepts, but $W$ cannot be fully simulated. Then there exists a node $v$ in $W$ and a valid next node $v'$ such that the edge $e=(v,v')$ is missing from the constructed graph $H$.
\begin{itemize}
\item \textbf{Case 1: $e$ is a floor edge.}  
By \cref{lem:floor_edge_only_for_valid_input_string}, every floor edge that is
consistent with the input string $\Lfixed$ and some certificate of length $m$ is necessarily generated by \GetFloorNextEdges{}.
Since $W$ is a valid accepting computation walk for some certificate $Y$, the corresponding floor edge $e$ must be included in $G$, yielding a contradiction.
Hence, every valid floor edge is included in $H$.

\item \textbf{Case 2: $e$ is a non-floor edge.}  
These are transitions involving the surface of the computation walk:
\begin{itemize}
    \item \GetNonFloorNextEdges{} returns all edges consistent with the verifier’s transition function up to tier $t_m = h+1$, where $h$ is the maximum tier among the nodes in the \textbf{already expanded footmarks} $H$ at the corresponding index. These edges constitute the candidate set collected in \CollectBoundaryEdges{}, ensuring a controlled and incremental expansion of the frontier.
    \item The surface of the walk uniquely determines the tiers of non-floor edges, ensuring no valid edge is omitted.
\end{itemize}

In both cases, the edge $e$ must be present in $G$, contradicting the assumption that it is missing.

\item \textbf{Conclusion.}  
Since the floor edges corresponding to the problem instance string and all potential certificates are exhaustively included based on the input instance $\Lfixed$, and the non-floor edges are uniquely determined by the surface, every accepting computation walk for any certificate is represented in $G$.
Thus, the algorithm is complete.
\end{itemize}
\end{proof}
\end{lemma}
\begin{remark}[Remark on Algorithmic Dependency]
The case distinction above is enforced by the structural constraints established in the preceding subsections. In particular, \CollectBoundaryEdges{} restricts candidate boundary edges to those whose head tier is at most $h+1$, where $h$ denotes the maximum tier among nodes in the \textbf{already expanded footmarks} $H$ at the corresponding next index. As a result, floor edges are treated separately since their validity depends only on consistency with the input string $\Lfixed$ (cf.~\cref{lem:floor_edge_only_for_valid_input_string}), whereas non-floor edges are generated only when justified by the surface via the predicate $\IPrec_{H'}(e')$.
\end{remark}

\begin{lemma}[Correctness of Simulating Verifier for All Certificates]\label{lem:soundness_of_simulate_verifier_for_all_certificate}
Let $M$ be a certificate-oblivious NP verifier and $\Lfixed=L\#$ be an input string for a problem instance.
Then, \SimulateVerifierForAllCertificates{} in \cref{alg:compute_verifier_in_polynomial} returns \texttt{Yes} if and only if there exists a certificate $Y$ of length $m$ such that $M(L\#Y) = \texttt{accept}$.
\begin{proof}

First, suppose that \SimulateVerifierForAllCertificates{} returns \texttt{Yes}.
By \cref{lem:correctness_of_acceptance_in_footmarks}, there exists a computation walk reaching an accepting state $\qacc \in F$.
Moreover, by \cref{lem:floor_edge_only_for_valid_input_string}, every floor edge of this walk corresponds to a symbol of some certificate $Y$ of length $m$.
Hence, the walk simulates a valid accepting execution of the verifier $M$ on input $L\#Y$, and therefore $M(L\#Y)=\texttt{accept}$.

Now, we prove that if there exists a computation walk $W$ for some certificate $Y$ such that the verifier $M$ on input $L\#Y$ reaches an accepting configuration, then \SimulateVerifierForAllCertificates{} constructs a corresponding path in its dynamically constructed computation graph and returns \texttt{Yes}.

Assume, for contradiction, that \SimulateVerifierForAllCertificates{} includes a computation walk in the graph $H$ (constructed via \IsAcceptedOnFootmarks{}) that is not realizable by any valid certificate $Y$ of length $m$ for input $\Lfixed$.

Let $W = (e_0, e_1, \dots, e_k)$ be a partial valid computation walk, and suppose $W' = (e_0, e_1, \dots, e_k, e_{k+1})$ contains an edge $e_{k+1}$ that is not consistent with any certificate. 
The footmark graph generated by the certificate-oblivious TM is grid-aligned by \cref{lem:cotm_to_grid_aligned}.
Since \IsAcceptedOnFootmarks{} only explores edges returned by \GetNonFloorNextEdges{}, and that procedure generates only edges consistent with the verifier’s transition rules, this leads to a contradiction.

Now note that, except for the floor edges of $W$, all other edges are uniquely determined by the \emph{surface} of the computation walk, 
i.e., the sequence of the last transition cases indexed by cell positions as defined in \cref{def:surface}.

By \cref{lem:floor_edge_only_for_valid_input_string}, the returned edge $(v, v')$ by \GetFloorNextEdges{} is consistent with a feasible configuration of the verifier on some input $\Lfixed$ within the certificate length $m$, meaning no invalid edge can be introduced into $H$. 
This contradicts the assumption that $H$ contains a computation walk not realizable by any valid certificate.

Finally, since the algorithm returns \texttt{Yes} only when a node labeled $\qacc \in F$ is reached,
and such a node is reachable only via a valid computation walk corresponding to some certificate $Y$,
the procedure is sound.

Furthermore, every accepting computation walk is expanded per \cref{lem:completeness_of_simulate_verifier_for_all_certificate}, ensuring that it returns \texttt{Yes} if such a walk exists, as \IsAcceptedOnFootmarks{} returns \texttt{True} by \cref{lem:correctness_of_acceptance_in_footmarks}.

\end{proof}
\end{lemma}

\begin{lemma}[Time Complexity of Simulating Verifier for All Certificates]  
\label{lem:time-complexity-simulate}  
Let $H$ be the footmark graph with width $w$ and height $h$ constructed based on the NP computation graph $G$, and let
$T_f$ be the time complexity of \ComputeFeasibleGraph{}.

Then the total time complexity of \SimulateVerifierForAllCertificates{} is bounded by
\[
 O\bigl(w^{4} h^{8} T_f\bigr).
\]
\begin{proof}
The dominant computation in \SimulateVerifierForAllCertificates{} in \cref{alg:compute_verifier_in_polynomial} is the call to \IsAcceptedOnFootmarks{}, which systematically explores and extends verified edges to construct the footmark graph $H$.

By \cref{lem:time-complexity-compute_footmarks}, \IsAcceptedOnFootmarks{} performs at most $\bigO(w^2h^4)$ calls to \VerifyExistenceOfWalk{}, each of which costs $\bigO(w^{2}h^{4} T_f)$ time by \cref{lem:verify_existence_of_walk_time}.

All other steps in \SimulateVerifierForAllCertificates{}, including graph initialization and dynamic edge generation, are polynomially bounded in $w$ and $h$ and are asymptotically dominated by the cost of the main subroutine.

Therefore, the total time complexity is
\[
O\bigl(w^2h^4 \cdot w^{2}h^{4} T_f\bigr)
= O\bigl(w^{4} h^{8} T_f\bigr).
\]
\end{proof}
\end{lemma}

\begin{corollary}[Polynomial-Time Complexity of Simulating Verifier for All Certificates]  
\label{cor:poly_time_simulate}  
Let $M$ be a polynomial-time certificate-oblivious NP verifier that halts in at most $p(n)$ steps on any input of size $n$, and let $G$ be the corresponding NP computation graph. 
Then \SimulateVerifierForAllCertificates{} runs in polynomial time with respect to $n$, bounded by  
\[
O\bigl(p(n)^{20}\bigr).
\]

\begin{proof}  
Let $T_f$ be the time complexity of \ComputeFeasibleGraph{}, and $w$ and $h$ be the width and height of the footmark graph constructed based on $G$, respectively.
By \cref{lem:poly-bounded-graph}, the footmark graph $H$ satisfies $w = \bigO(p(n))$ and $h = \bigO(p(n))$.  

According to \cref{lem:feasible_graph_time_complexity}, the time complexity of \ComputeFeasibleGraph{} is:
\[
T_f = \bigO(w^2 h^4 (h \log h + \log w)) = \bigO(p(n)^6 \cdot p(n) \log p(n)) = \bigO(p(n)^7 \log p(n)).
\]  

From \cref{lem:time-complexity-simulate}, the total time complexity of \SimulateVerifierForAllCertificates{} is:
\[
O(w^4 h^8 T_f) = \bigO(p(n)^{12} \cdot p(n)^7 \log p(n)) = \bigO(p(n)^{19} \log p(n)).
\]  

Since $\bigO(p(n)^{19} \log p(n)) \subset \bigO(p(n)^{20})$, the entire simulation runs in polynomial time in the input size $n$.  
\end{proof}  
\end{corollary}

\subsection{Reduction from NP to P via Feasible Graph Simulation} \label{subsec:reduction_from_np_to_p}

Given any problem in \textsf{NP}, let $M_0$ be its polynomial-time verifier. There exists an oblivious Turing machine $M$, which generate grid-aligned footmark graph, with the same input and polynomial overhead for $M_0$.
We construct a universal computation graph $G$ that encapsulates all potential transitions of $M$ on the problem instance input $\Lfixed$ across the entire space of certificates of length $m$. 
This graph is not explicitly enumerated; rather, it is incrementally explored and extended by generating only those edges verified as feasible through the \textit{walk verification mechanism} established in \cref{sec:walk_verification} and the \textit{feasible graph construction} in \cref{subsec:feasible_graph_construction}.

The algorithm \SimulateVerifierForAllCertificates{} subsequently determines whether an accepting computation path exists within the extended graph. By identifying the \textbf{footmarks} of all valid computation walks, the algorithm effectively collapses the non-deterministic existential search over an exponential number of certificates into a deterministic, boundary-guided traversal of the computation graph. This transformation ensures that the search for a valid certificate is conducted within strictly polynomial time-complexity bounds, thereby completing the reduction from \textsf{NP} to \textsf{P}.

\begin{theorem}[$\mathsf{P} = \mathsf{NP}$]
The algorithm \SimulateVerifierForAllCertificates{} decides any language $\mathcal{L} \in \mathsf{NP}$ in deterministic polynomial time. Specifically, the integration of iterative footmark graph extension and global walk verification ensures that an accepting certificate is identified if and only if one exists. Consequently,  
\[
\mathsf{P} = \mathsf{NP}.
\]
\end{theorem}

\begin{proof}
We prove that \SimulateVerifierForAllCertificates{} solves any $\mathsf{NP}$ problem in polynomial time by leveraging the preceding lemmas and the time complexity analysis. 

Recall that $\mathsf{NP}$ admits a \emph{deterministic polynomial-time verifier}: there exists a verifier $M_0$ such that for every language $\mathcal{L} \in \mathsf{NP}$, membership of an instance $X$ in $\mathcal{L}$ is decided by the verifier $M_0$ on the concatenated input $X \# Y$, where $\#$ denotes a delimiter and $Y$ denotes the certificate. Furthermore, there exists oblivious (especially certificate-oblivious) Turing Machine with the same input, the same halting state and quadratic overhead.
 Therefore, it suffices to show that \SimulateVerifierForAllCertificates{} correctly and efficiently simulates this verifier $M$.

\begin{enumerate}
    \item \textbf{Correctness and Certificate Coverage:}  
    The algorithm systematically simulates the verifier for all possible certificates $Y$ of length $m$ for the problem instance $\Lfixed=X\#$ and $m=|Y|$. 
    By \cref{lem:soundness_of_simulate_verifier_for_all_certificate,lem:completeness_of_simulate_verifier_for_all_certificate}, the simulation is sound and complete: it explores all computation-targeted walks corresponding to every potential concatenated string $X\#Y$ and returns \texttt{Yes} if and only if there exists a certificate $Y$ such that $M(X \# Y)$ accepts. Thus, the algorithm correctly identifies the existence of valid certificates.

    \item \textbf{Polynomial-Time Complexity:}  
    From \cref{cor:poly_time_simulate},  \SimulateVerifierForAllCertificates{} runs within the bound
    \[
    T(n) = \bigO(p(n)^{20}),
    \]
    where $p(n)$ is a polynomial function of the total input size $n = |X \# Y| =  |X| + 1 + m$. Since $m$ is polynomially bounded by $|X|$, $T(n)$ remains a polynomial function of the instance size $|X|$.
    By letting $n' = |X|$, we obtain the final complexity bound:$$T(n') = \bigO(p'(n')^{20}).$$
    Furthermore, $M$ incurs at most quadratic overhead compared to $M_0$, and any RAM algorithm running in polynomial time can be simulated by a deterministic Turing machine with at most cubic overhead. Hence, the algorithm belongs to the class $\mathsf{P}$.

    \item \textbf{Conclusion:}  
    Since \SimulateVerifierForAllCertificates{} decides any language $\mathcal{L} \in \mathsf{NP}$ deterministically in polynomial time, it follows that $\mathsf{NP} \subseteq \mathsf{P}$. Combined with the established inclusion $\mathsf{P} \subseteq \mathsf{NP}$, we establish:
    \[
    \boxed{\mathsf{P} = \mathsf{NP}}.
    \]
\end{enumerate}
\end{proof}
\begin{remark}
It is worth noting that a verifier can be constructed to be certificate-oblivious without external transformation, further simplifying the simulation process.
\end{remark}

\section{Implications and Discussion}

The proof presented in this paper establishes a deterministic polynomial-time simulation of NP verification via feasible graph construction, thereby implying that $P = NP$ within the proposed computational framework.
This result has substantial implications for complexity theory as well as for the broader study of algorithmic computation.

First, while this paper focuses on the theoretical complexity of the reduction, the practical realization of certificate-oblivious Turing machines remains a natural avenue for future investigation. 
Theoretical models for such architectures suggest that deterministic overhead can be effectively mitigated, further reinforcing the viability of the proposed reduction.

Second, from a theoretical standpoint, the result clarifies the relationship between decision problems that admit polynomial-time verification and those that admit polynomial-time deterministic computation.
In particular, it demonstrates that the nondeterministic decision process can be systematically replaced by a structured deterministic exploration of computation walks, without incurring superpolynomial overhead.
This provides a new perspective on the role of nondeterminism in classical complexity theory.

A direct consequence of $P = NP$ is that NP-complete problems, previously regarded as intractable in the worst case, admit polynomial-time deterministic decision procedures.
As a result, the distinction between the complexity classes $\mathsf{P}$ and $\mathsf{NP}$ collapses, yielding
\[
\mathsf{P} = \mathsf{NP} = \mathsf{co\text{-}NP}.
\]
This collapse reshapes the standard hierarchy of time-bounded complexity classes and necessitates a reexamination of several foundational assumptions in computational complexity.

Since the correctness proof is formally established by the existence of a deterministic $\mathsf{NP}$ verifier, the proposed simulation framework is not restricted to any specific $\mathsf{NP}$-complete problem. By the definition of $\mathsf{NP}$, for all $\mathsf{NP}$ problems whose certificates can be represented as a sequence of symbols, the \SimulateVerifierForAllCertificates{} algorithm can be applied directly to their problem-specific verifiers without modification. In this sense, the simulation—functioning through the iterative extension of the footmark graph via computation walk verification with feasible graphs—is not merely a theoretical device tied to a particular $\mathsf{NP}$ verifier, but a general computational framework for $\mathsf{NP}$ verification. This suggests that the proposed construction captures the structural essence of $\mathsf{NP}$ computation, rather than relying on problem-specific encodings, thereby providing a uniform algorithmic perspective across the entire class $\mathsf{NP}$.

Importantly, this result does not imply that all problems become efficiently solvable in practice.
The polynomial-time bounds established in this work arise from a simulation framework involving the construction and traversal of computation graphs whose width and height are polynomially bounded but potentially large.
Consequently, while the algorithm operates in polynomial time in the formal sense, the associated constants and exponents may render direct implementations impractical for large input sizes.

In particular, many cryptographic systems rely on average-case hardness assumptions and specific algebraic structures rather than worst-case NP-hardness alone.
Accordingly, the implications for cryptography are primarily theoretical: existing cryptographic constructions are not immediately invalidated by this result, but their underlying assumptions warrant careful reexamination in light of the equivalence between $\mathsf{P}$ and $\mathsf{NP}$.

Similarly, in areas such as combinatorial optimization, artificial intelligence, and machine learning, NP-Complete formulations frequently arise.
While this work establishes that such problems are decidable in polynomial time in principle, translating the proposed simulation framework into practically efficient algorithms remains an open challenge.
Bridging this gap will require further investigation into optimization strategies, structural restrictions, and potentially new computational paradigms.

It is also important to note that this result does not collapse higher complexity classes.
In particular, the separation between polynomial time and exponential time remains intact: $\mathsf{P} \subsetneq \mathsf{EXPTIME}$ continues to hold.
Thus, the result preserves the broader stratification of time complexity classes while resolving the specific relationship between $\mathsf{P}$ and $\mathsf{NP}$.

Finally, this work opens several directions for future research.
These include refining the feasible graph construction to reduce polynomial overhead, investigating interactions with probabilistic and interactive complexity classes such as $\mathsf{BPP}$ and $\mathsf{IP}$, and exploring whether similar simulation techniques can yield new insights into space-bounded complexity classes such as $\mathsf{PSPACE}$.
More broadly, the feasible graph perspective suggests a unifying structural approach to computation that may inform both theoretical analysis and future algorithmic design.

\section{Conclusion}
In this paper, we introduced a new computation model that enables a deterministic polynomial-time simulation of NP verification, thereby establishing $P = NP$ within the proposed framework.
Rather than attempting to simulate nondeterministic Turing machines directly, our approach focuses on the deterministic simulation of polynomial-time verifiers.
This shift provides a novel perspective on the relationship between nondeterminism and deterministic computation in complexity theory.

Central to our construction are the notions of a computation graph, a feasible graph, and the grid-aligned footmarks of all computation walks. 
Given a verifier $M$ for an $\mathsf{NP}$ language, we construct a computation graph representing all possible transitions over the input and certificate space, 
extract its grid-aligned footmarks subgraph via polynomial-time verification, and deterministically explore all valid computation walks. 
This yields an explicit polynomial-time reduction of the nondeterministic decision process---from existential verification over exponential certificate spaces to a deterministic decision. 

As a consequence, every NP problem admits a deterministic polynomial-time decision procedure, and the long-standing open problem of whether $P = NP$ is resolved within this framework.
Beyond the resolution of this question, the proposed approach provides structural insight into NP computation, revealing how nondeterministic behavior can be captured and simulated through deterministic graph-based exploration.

More broadly, the reduction-based methodology developed in this work suggests a general framework for the deterministic simulation of nondeterministic computation.
This perspective may prove useful in analyzing complexity classes beyond NP and in developing new tools for understanding the fine structure of computational complexity.

Future work will focus on refining the feasible graph construction to reduce polynomial overhead, investigating practical optimizations, and exploring the implications of this framework for related areas such as cryptography, combinatorial optimization, and artificial intelligence.
In addition, extending the model to encompass broader classes of nondeterministic computation may yield further insights into the foundations of complexity theory.

\clearpage
\bibliographystyle{ACM-Reference-Format}
\bibliography{crlee}  %%% Uncomment this line and comment out the ``thebibliography'' section below to use the external .bib file (using bibtex) .

\clearpage
\appendix
\section{Terminology and Definitions}\label{appendix:terminology}
\begin{table}[htbp]
\centering
\caption{Summary of Key Terms and Definitions (Computation Graph)}
\begin{tabularx}{\textwidth}{l|X|l}
\toprule
\textbf{Term} & \textbf{Description} & \textbf{Reference} \\
\midrule 
\textbf{Computation Node} & A 6-tuple $(i, t, q, \sigma, \vdown{q}, \vdown{\sigma})$ representing a cell's local configuration at cell index $i$ and tier $t$. & \cref{subsec:comp_graph} \\
\textbf{Computation Graph} & A directed graph $G=(V, E)$ where vertices are computation nodes and edges represent unit head displacements ($|\Delta \text{index}| = 1$). & \cref{subsec:comp_graph} \\
\textbf{Edge Index/Dir} & $\indexOf(e) = \min(\indexOf(u), \indexOf(v))$ and $\dir(e) = \indexOf(v) - \indexOf(u)$, representing the head's position and movement. & \cref{subsec:comp_graph} \\
\textbf{Edge Slice ($E_i$)} & The formally indexed set of all edges in $G$ sharing the same index $i$. & \cref{subsec:comp_graph} \\
\textbf{Tier} & A hierarchical level of computation nodes and transition cases indicating the number of visiting of the cell. &\cref{subsec:comp_graph}\\
\textbf{Folding Node} & A computation node where incoming and outgoing incident edges share the same cell index. & \cref{subsec:comp_graph} \\
\textbf{Width / Height} & The span of cell indices ($w$) and the maximum tier reached ($h$) in the graph $G$. & \cref{subsec:comp_graph} \\
\textbf{Footmarks $F(\mathcal{W})$} & The subgraph formed by the union of all vertices and edges in a set of computation walks $\mathcal{W}$. & \cref{subsec:comp_graph} \\
\textbf{$e$-augmented Footmarks} & The graph $F(\mathcal{W}) + e$, representing the footmarks expanded by a specific edge $e$ or edge set $E$. & \cref{subsec:comp_graph} \\
\makecell[tl]{\textbf{Dynamic} \ \textbf{Computation Graph}} & A graph constructed incrementally during simulation, adding edges only as they are visited or verified. & \cref{subsec:comp_graph} \\
\textbf{Index-Predecessor} & The last node/edge appearing before the current element on a walk $W$ with the same cell index. & \cref{subsec:comp_graph} \\
\textbf{Index-Successor} & The first node/edge appearing after the current element on a walk $W$ with the same cell index. & \cref{subsec:comp_graph} \\
\textbf{Computation Walk} & A sequence of edges on the computation graph representing a Turing machine execution path. Also referred to as a computation path due to injectivity. & \cref{subsec:comp_graph} \\
\textbf{Transition Case} & A set of computation nodes sharing the same cell index, current state, symbol, and tier. & \cref{subsec:comp_graph} \\
\textbf{Index-Precedent} & A transition case matching the last state and symbol of a computation node (tier difference is +1). & \cref{subsec:comp_graph} \\
\textbf{Index-Succedent} & Set of nodes $v'$ for which a given node $v$ is the index-precedent. & \cref{subsec:comp_graph} \\
\makecell[tl]{\textbf{Previous/Next Edge} \\ \textbf{(Walk-based )}} & For edge $e_i$ in a computation walk, the previous is $e_{i-1}$ and the next is $e_{i+1}$. & \cref{subsec:comp_graph} \\
\makecell[tl]{\textbf{Previous/Next Edges} \\ \textbf{(Graph-based)}} & For edge $e = (u, v)$, all incoming edges to $u$ and outgoing edges from $v$ in a computation graph. & \cref{subsec:comp_graph} \\
\textbf{Surface} & The sequence of latest transition cases for each cell position, as visited by a computation walk. & \cref{subsec:comp_graph} \\
\bottomrule
\end{tabularx}
\label{tab:terms_definitions_computation_graph}
\end{table}

\begin{table}[htbp]
\centering
\caption{Summary of Key Terms and Definitions(Feasible Graph \& Work Verification)}
\begin{tabularx}{\textwidth}{l|X|l}
\toprule
\textbf{Term} & \textbf{Description} & \textbf{Reference} \\
\midrule
\textbf{Floor Edge} & An edge that has no index-predecessor edge, representing the bottom boundary of a computation walk. & \cref{subsec:feasible_graph_concept} \\
\textbf{Ceiling Edge} & An edge that has no index-successor edge, representing the top boundary of a computation walk. & \cref{subsec:feasible_graph_concept} \\
\textbf{Cover Edge} & An edge for which there exists a ceiling-adjacent sequence containing ceiling edges ending with the designated final edge set in the graph. & \cref{subsec:feasible_graph_concept} \\
\textbf{Ex-pendant Edge} & Edge incident to either a source or a sink node. & \cref{subsec:feasible_graph_concept} \\
\textbf{Step-pendant Edge} & Edge that is ex-pendant, or has no index-precedents or no index-succedents in the graph. & \cref{subsec:feasible_graph_concept} \\
\makecell[tl]{\textbf{Step-extended} \\ \textbf{Component}} & A recursively built component consisting of step-pendant edges that are step-adjacent to a given edge set. & \cref{subsec:feasible_graph_concept} \\
\textbf{Index-Adjacent Edge} & An edge adjacent to an edge slice $E_i$ via direct adjacency, folding nodes, initial vertices ($V_0$), or final edges ($E_f$). & \cref{subsec:feasible_graph_concept} \\
\textbf{Feasible Walk} & A computation walk that ends with an edge in the designated final edge set. & \cref{subsec:feasible_graph_concept} \\
\textbf{Feasible Edge} & An edge belonging to at least one feasible walk. & \cref{subsec:feasible_graph_concept} \\
\textbf{Embedded Walk} & A maximal computation walk that is not a feasible walk but consists entirely of feasible edges. & \cref{subsec:feasible_graph_concept} \\
\textbf{Obsolete Walk} & A maximal computation walk that is not a feasible walk and contains at least one non-feasible edge. & \cref{subsec:feasible_graph_concept} \\
\textbf{Obsolete Edge} & An edge that belongs to an obsolete walk but is not a feasible edge. & \cref{subsec:feasible_graph_concept} \\
\textbf{Orphaned Edge} & An edge that does not belong to any valid computation walk (neither feasible nor obsolete). & \cref{subsec:feasible_graph_concept} \\
\textbf{Merging Edge} & Incoming edge at a node with in-degree greater than 1 and out-degree not zero. & \cref{subsec:feasible_graph_concept} \\
\textbf{Splitting Edge} & Outgoing edge at a node with out-degree greater than 1 and in-degree not zero. & \cref{subsec:feasible_graph_concept} \\
\makecell[tl]{\textbf{Computing-} \ \textbf{targeted Walk}} & A valid computation walk  that contains a verification target edge. (Functionally equivalent to a feasible walk). & \cref{sec:walk_verification} \\
\makecell[tl]{\textbf{Computing-} \ \textbf{futile Walk}} & A maximal computation walk that does not reach any verification target edge. & \cref{sec:walk_verification} \\
\makecell[tl]{\textbf{Computing-} \ \textbf{effective Edge}} & An edge that belongs to at least one computing-targeted walk. & \cref{sec:walk_verification} \\
\makecell[tl]{\textbf{Computing-} \ \textbf{redundant Edge}} & A computing-effective edge whose removal does not eliminate the existence of all computing-targeted walks. & \cref{sec:walk_verification} \\
\makecell[tl]{\textbf{Computing-} \ \textbf{futile Edge}} & An edge that belongs to no computing-targeted walk. (Functionally equivalent to non-feasible edges). & \cref{sec:walk_verification} \\
\textbf{$i$-th Pruned Graph} & The graph obtained after prunning a walk $i$ times on original feasible graph. & \cref{subsec:redundant_futile_edge_detection} \\
\textbf{Attempted Walk} & An arbitrary computing-futile walk in the (pruned) feasbile graph. & \cref{subsec:redundant_futile_edge_detection} \\
\makecell[tl]{\textbf{Maximal Pruned}\\ \textbf{Graph}} & The pruned graph at the minimal stage where no further pruning occurs. & \cref{subsec:redundant_futile_edge_detection} \\
\makecell[tl]{\textbf{Critical Attempted} \\ \textbf{Walk}} & The attempted walk that prunes all computing-targeted walks in the computation graph. & \cref{subsec:redundant_futile_edge_detection} \\
\textbf{Boundary Edge} & The outgoing edge from subgraph $H$ to outside of $H$ in computation graph. & \cref{subsec:extending_footmarks} \\
\bottomrule
\end{tabularx}
\label{tab:terms_definitions_feasible_graph}
\end{table}

\clearpage
\section{Computation Walk Simulation Algorithm}
\SetKwFunction{SimulateTuringMachine}{SimulateTuringMachine}
\subsection{Computation Walk Construction Algorithm} \label{subsec:construct_computation_walk}

To provide a formal foundation for simulating the verifier across the entire certificate space, we first describe a general-purpose simulation procedure for a deterministic Turing machine. 
This algorithm models the machine's computation as a walk over the computation graph introduced in \cref{subsec:comp_graph}, explicitly tracking the evolving \textbf{surface}—effectively the \textbf{sequence of the latest transition cases} (precedents) that determine the \textbf{prior state and symbol} for each node.

We consider a standard deterministic Turing machine $M = (Q, \Sigma, \Gamma, \delta, q_0, F)$. Given an initial state $q_0$, an input string $L \in \Sigma^*$, the transition function $\delta$, and a set of halting states $F$, 
the procedure \SimulateTuringMachine{} returns both the final halting state and the resulting computation walk $W$, providing a rigorous trace of the machine's state-symbol evolution.
\SetKwFunction{ProceedNextTransition}{ProceedNextTransition}

\begin{algorithm}
\caption{SimulateTuringMachine($q_0, L, \delta, F$)} \label{alg:simulation_turing_machine}
\Input{Initial State $q_0$, Input string $L$, Transition functions $\delta$, \newline
   \indent  A set of halting states $F$ (including the accept state)}
\Output{The final halting state reached by the machine and the computation walk $W$}
\AlgDescription{\SimulateTuringMachine{} simulates the behavior of the Turing machine and constructs its corresponding computation walk.
The detailed structure and implementation of computation nodes and transition cases are provided in \cref{sec:appendix_computation_graph}.}

\Function{\SimulateTuringMachine($q_0, L, \delta, F$)}{
\StateC{Let $V$ is a 2D dynamic array of 2D array of transition cases containing computation nodes}\Comment{$V[i][t][q][l]$: $V_{i,t}^{q,l}$, See \cref{alg:transition_case} in \cref{sec:appendix_computation_graph}}
\State{Let $s_0 \gets L[0]$ if $L$ is non-empty; otherwise  $s_0 \gets \epsilon$ }
\StateC{Let $v_0 \gets$ the node $v \in V_{0,0}^{q_0,s_0}$} \Comment{unique node at tier-0 transition case}
\StateC{Let $S \gets$ an empty dynamic array of transition cases}	\Comment{$S$: the surface (the last transition cases)}
\StateC{Let $W \gets$ an empty list of edges}	\Comment{$W$ is computation walk}
\State{Let $v \gets v_0$; Let $S[0] \gets V_{0,0}^{q_0,s_0}$}
\While{$state(v) \not\in F$}{
    \State{Set $(v,W) \gets$ \ProceedNextTransition{$v, \delta, L, S, W, V$}}
}
\State{\Return $state(v), W$}
%\State{ \Return $state(t)$ where $(s,t)=e$}
}
\Function{\ProceedNextTransition($v, \delta, L, S, W, V$)}{
    \StateC{Let $(q, l, d) \gets \delta(state(v),symbol(v))$} \Comment{$q$: next state,\ $l$: output symbol,\ $d$: direction ($+1$ or $-1$)}
    \StateC{Set $output(S[j]) \gets l$ where $j=index(v)$} \Comment{Update tape symbol to the output of the transition}
    \StateC{Let $i \gets index(v)+d$}\Comment{$i \gets next\_index(v)$}
    \If{$S[i]$ is defined} {
        \State{Let $s \gets \mathrm{output}(S[i])$}
        \State{Let $T \gets S[i]$}
        \State{Let $t \gets tier(T)+1$}
    }
    \Else{
        \State{Let $s \gets L[i]$ if $0 \le i < |L|$; otherwise $s \gets \epsilon$}
        \State{Let $T \gets \phi$}
        \State{Let $t \gets 0$}
    }
    \StateC{Let $v' \gets$ the node $u \in V_{i,t}^{q,s}$ such that $\IPrec(u)=T$} \label{alg_line:compute_next_node} \Comment{if $T=\phi$, v' is tier-0 node}
    \State{Append $(v, v')$ to the end of $W$}
    \StateC{Set $S[i] \gets V_{i,t}^{q,s}$}\Comment{Update surface at index $i$ to the current transition case}
    \State{\Return $(v', W)$}
}
\end{algorithm}

\begin{lemma}
A walk constructed by \cref{alg:simulation_turing_machine} is a computation walk representing the transition sequence, and the final state returned is the same as the final state of the simulated Turing machine.
\begin{proof}
We prove the lemma by induction on the number $n$ of calls to \ProceedNextTransition{}.

\textbf{Induction hypothesis:}  
Assume that after $(n - 1)$ calls to \ProceedNextTransition{}, the following conditions hold:
\begin{itemize}
    \item The current node $v$ represents the configuration of the Turing machine after $(n - 1)$ transitions;
    \item For every visited tape index $j$, $S[j]$ contains the latest transition case. For all tape indices other than the current one,  $\output(S[j])$ equals the current symbol at cell $j$;
    \item The computation walk $W$ records the correct sequence of $(n - 1)$ transitions using computation nodes with 6-tuple configurations.
\end{itemize}

\textbf{Base case ($n = 0$):}  
Before any transitions, the algorithm starts with node $v_0 \in V_{0,0}^{q_0, s_0}$ where $s_0$ is the initial tape symbol at index 0.  
The computation walk $W$ is empty. The surface $S$ is also empty except that $S[0]$ is set to $V_{0,0}^{q_0, s_0}$ corresponding to the initial tape head position.  
This correctly represents the initial configuration of the Turing machine.  
Hence, the claim holds trivially.

\textbf{Induction step:}  
Now consider the $n$-th call to \ProceedNextTransition{} with the current node $v$.

\begin{enumerate}
    \item The transition function computes $(q, l, d) = \delta(\state(v), \symbol(v))$,  
    where $l$ is the symbol to be written to the current tape cell, and $d \in \{-1, +1\}$ is the head movement direction.

    \item Let $j = \indexOf(v)$ be the current tape position, and update $S[j]$ by setting $\output(S[j]) \gets l$.  
    This models the write operation performed by the Turing machine at tape cell $j$ by the $n$-th transition.

    \item Let $i = j + d$ be the index of the next tape cell to visit.

    \item To determine the contents and transition history at index $i$, the algorithm
	distinguishes the following two cases depending on whether $S[i]$ is defined:
    \begin{itemize}
	    \item \textbf{If $S[i]$ is defined:}  
	    This means that tape cell $i$ has already been visited.  
	    $T := S[i]$ is the most recent transition case at index $i$.  
	    $s := \output(T)$ is the symbol currently stored at cell $i$, and
	    $t := \tier(T) + 1$.
	
	    \item \textbf{If $S[i]$ is undefined:}  
	    This means that tape cell $i$ has not yet been visited.  
	    $s := L[i]$ if $0 \le i < |L|$; otherwise, $s := \epsilon$.  
	    Since no prior transition case exists at index $i$,
	    $T := \phi$ and $t := 0$.
    \end{itemize}
This case distinction ensures that the simulation accurately reflects both the initial tape content and the symbols updated by prior transitions at index $i$.

    \item Let $v' \in V_{i,t}^{q,s}$ be the unique computation node such that $\IPrec(v') = T$.  
    This node $v'$ represents the 6-tuple configuration corresponding to the result of the $n$-th transition.

    \item Append the edge $(v, v')$ to the computation walk $W$.

    \item Set $S[i] \gets V_{i,t}^{q,s}$, recording the new transition case for cell $i$.
\end{enumerate}

By the induction hypothesis and the construction above:
\begin{itemize}
    \item The updating tape content reflects the write operation to cell $j$ by setting $\output(S[j]) \gets l$; thus, for all visited tape cells with index $k \ne i=j+d$, $\output(S[k])$ equals the current symbol at cell $k$.
    \item The next node $v'$ reflects the new state $q$ and symbol $l$ and correctly encodes the last transition history via $\IPrec(v') = T$ on the surface $S$;
    \item The computation walk $W$ now records $n$ transitions using computation nodes;
    \item The surface $S$ is correctly updated to the lastest transition cases for all the visited tape cell.
\end{itemize}
Hence, the induction hypothesis is preserved after the $n$-th call to \ProceedNextTransition{}.

The while loop in \SimulateTuringMachine{} terminates when $\state(v) \in F$, i.e., when the machine reaches a final state.  
Therefore, upon termination, the algorithm returns the final state of the simulated Turing machine and a computation walk $W$ that correctly simulates its behavior.

\end{proof}
\end{lemma}

Even though this algorithm directly uses the transition function $\delta$,
once the surface is constructed, the values $\nextState(T)$, $\output(T)$, and $\nextIndex(T)$ for any transition case $T$ on the surface can be computed without explicitly referring to $\delta$,
provided that the data structure is designed to include such properties.
This is possible because each transition case already encodes the current state and symbol, and the result of the transition is uniquely determined in a deterministic Turing machine.

\SetKwFunction{BruteForceSimulateVerifierForAllCertificates}{BruteForceSimulateVerifierForAllCertificates}
\subsection{Brute-Force Simulation of the Verifier (EXP version)}\label{subsec:simulate_verifier_exp}

This subsection provides a formal procedure to simulate the verifier Turing machine $M$ across the entire space of possible certificates $Y \in \Sigma^m$.
 To decide the language $\mathcal{L} \in \mathsf{NP}$, the simulation must exhaustively enumerate every candidate certificate, explore the resulting execution paths, and determine if $M$ reaches an accepting state for at least one such path. This procedure systematically maps the verifier’s computation onto the computation graph framework, capturing the evolution of configurations across the entire certificate ensemble.
 We first present a naive, brute-force simulation algorithm. Due to the exhaustive branching on every possible certificate symbol, this version inherently operates in exponential time.
  Nevertheless, this algorithm is \textbf{logically complete}—it serves as a rigorous baseline that establishes the correctness of the verification process and provides the conceptual foundation upon which our optimized, polynomial-time simulation is constructed in \cref{subsec:simulate_verifier_poly}.

Let $M = (Q, \Sigma, \Gamma, \delta, q_0, \qacc, \qrej)$ be a deterministic \textsf{NP} verifier machine (CSAT).
\begin{itemize}
    \item $Q$ is a finite set of states; $\Sigma$ is the input alphabet; $\Gamma \supseteq \Sigma \cup \{\epsilon\}$ is the tape alphabet, where $\epsilon$ is a fixed blank symbol.
    \item $\delta : Q \times \Gamma \to Q \times \Gamma \times \{-1,+1\}$ is the transition function.
    \item $q_0 \in Q$ is the initial state.
    \item $\qacc, \qrej \in F \subset Q$ are the accepting and rejecting final states, respectively.
\end{itemize}

We now describe the algorithm \BruteForceSimulateVerifierForAllCertificates{} (\cref{alg:compute_verifier_exp}), which simulates the verifier Turing machine $M$ for all possible certificates of size $m$, given a problem instance string $\Lfixed$. For each certificate, the machine runs in at most $p(n)$ steps, where $n = m + |\Lfixed|$ and $p$ is a polynomial function.

The verifier $M$ takes as input a fixed problem instance $\Lfixed \in \Sigma^*$ followed by a certificate string of length at most $m$, both over $\Sigma$ as illustrated in \cref{fig:verifier_tape_area}.

\begin{algorithm} \small
\SetKwFunction{GetNextNonFloorEdge}{GetNextNonFloorEdge} 
\SetKwFunction{GetNextFloorEdge}{GetNextFloorEdge}
\SetKwFunction{ProceedToNextNodes}{ProceedToNextNodes}
\caption{Brute-Force Simulate Verifier For All Certificates (EXP version)} \label{alg:compute_verifier_exp}
\Input{
$\Lfixed$: problem instance string ending with delimiter \texttt{\#}, $m$: certificate length,\\ 
$q_0$: initial state of verifier TM, $\Sigma$: input alphabet, $\delta$: transition function, \\
$Q$: Set of states of machine, $\qacc$: Accepting state, $\qrej$: Rejecting state
}
\Output{
  The decision result: \texttt{Yes} if any certificate makes verifier accept, otherwise \texttt{No}
}
\Function{\BruteForceSimulateVerifierForAllCertificates{$\Lfixed, m, q_0, \Sigma, \delta, Q, \qacc, \qrej$}} {
\StateC{Let $\Gamma \gets \Sigma \cup \{\epsilon\}$} \Comment{empty symbol is fixed as $\epsilon$}
\State{Let $G$ be a dynamic computation graph constructed during simulation}
\State{Let $S \gets$ an empty dynamic array of transition cases}
\State{Let $W \gets$ an empty list of edges}
\StateC{Let $s_0 \gets \Lfixed[0]$} \Comment{Problem Instance is not empty string}
\StateC{Let $v_0$ be the unique node in $V_{0,0}^{q_0,s_0}$}\Comment{$v_0$:Initial vertex for input symbol $s$}
\If{\ProceedToNextNodes{$G, v_0, S, W, \Gamma, \Lfixed, m, \qacc, \qrej$}} {
        \State{\Return \texttt{Yes}}
}
\State{\Return \texttt{No}}
}
\Function{\ProceedToNextNodes($G, v, S, W, \Gamma, \Lfixed, m, \qacc, \qrej$)}{
    \State{Let $i \gets \indexOf(v)$; Let $i' \gets \nextIndex(v)$}
    \StateC{Set $S[i] \gets$ the transition case containing $v$} \Comment{Update surface $S$ at $i$th element}
    \If{$\state(v) \in \{ \qacc, \qrej \} $} {
        \State{\Return $\state(v) = \qacc$} 
    }
    \ElseIf(\tcc*[f]{Case for the first visit to certificate area}){$S[i']$ is undefined \textbf{and} $|\Lfixed| \le i'<|\Lfixed|+m$}{
        \ForAll{$s' \in \Gamma$}{
            \State{Let $S' \gets$ copy of $S$ and let $W' \gets$ copy of $W$}
            \State{Let $(v,v') \gets$ \GetNextFloorEdge{$V(G), v, s'$}}
            \State{Append $e=(v,v')$ to $W'$ and Add $e$ to $G$}
            \If{\ProceedToNextNodes{$G, v', S', W', \Gamma, \Lfixed, m, \qacc, \qrej$}} {
                \State{\Return \True}
            }
        } 
        \State{\Return \False}
    }
    \ElseIf(\tcc*[f]{Fixed tape area}){$S[i']$ is undefined \textbf{and} \textbf{not} $(|\Lfixed| \le i'<|\Lfixed|+m)$}{
        \StateC{$s' \gets \Lfixed[i']$ if $0 \le i' < |\Lfixed|$ otherwise $s' \gets \epsilon$} \Comment{$|\Lfixed|$: length of $\Lfixed$}
        \State{Let $(v,v') \gets$ \GetNextFloorEdge{$V(G), v, s'$}}
    }
    \Else{ 
        \State{Let $(v,v') \gets$ \GetNextNonFloorEdge{$V(G), v, S$}}
    }
    \State{Append $e=(v,v')$ to $W$ and Add $e$ to $G$}
    \State{\Return \ProceedToNextNodes{$G, v', S, W, \Gamma, \Lfixed, m, \qacc, \qrej$}}
} 
\Function(\tcc*[f]{$V=V(G)$}){\GetNextFloorEdge($V, v, s'$)} {
\State{Let $(i', q') \gets (\nextIndex(v), \nextState(v))$}
\State{Let $v' \gets$ the unique node $u \in V_{i',0}^{q',s'}$}
\State{Return $(v,v')$}
}
\Function(\tcc*[f]{$V=V(G)$}){\GetNextNonFloorEdge($V, v, S$)}{
\State{Let $(i', q') \gets (\nextIndex(v), \nextState(v))$}
\State{Let $P' \gets$ the element(transition case) in surface $S$ with index $i'$}
\State{Let $T' \gets V_{i',t'}^{q', s'}$ with $s'=\output(P'), t'=\tier(P')+1$}
\State{Let $v'$ be the node in $T'$ such that $\lastSymbol(v')=\symbol(P')$, $\lastState(v')=\state(P')$}
\State{Return $(v,v')$}
}
\end{algorithm}
\noindent
The recursive simulation algorithm systematically traverses the layered graph constructed from Turing machine configurations. To distinguish between the inputs, let $L = X\#Y$ be the input string of the verifier Turing machine $M$. The simulation algorithm, however, does not take $L$ as a single input; instead, it receives the fixed problem instance $\Lfixed = X\#$ and the certificate length $m$ as separate parameters to explore all possible certificates $Y \in \Sigma^m$. Tier-0 nodes in the graph directly encode the initial tape contents derived from these inputs, and transitions propagate across tiers as the simulation unfolds.

\begin{lemma}[Correctness of Verifier Simulation(EXP version)]
\Cref{alg:compute_verifier_exp} correctly simulates the behavior of the verifier Turing Machine $M$ for all certificates of length $m$ on the given problem instance $\Lfixed$.
It returns \texttt{Yes} if and only if there exists a certificate that causes
$M$ to accept, and it returns \texttt{No} if and only if no such certificate exists.

Moreover, before termination, the computation graph $G$ contains all configurations (nodes and transitions) that appear in any computation walk of $M$ on all possible certificates of length $m$.

\begin{proof}
Let $G$ be the computation graph constructed by \cref{alg:compute_verifier_exp}.
Let $W = (v_0, v_1, \dots, v_k)$ be a walk generated by the algorithm during the
simulation of the verifier $M$ on input $\Lfixed=X\#$ with some certificate
$Y \in \Sigma^m$.
Then $W$ is a computation walk that represents the transition sequence of $M$
on input $X\#Y$.
Conversely, for any certificate string $Y$, if there exists a computation walk
representing the transition sequence of $M$ on input $X \# Y$, then such a
walk is generated by the algorithm.

\paragraph{Soundness)}  
We show that any computation walk $W = (v_0, \dots, v_k)$ constructed and recorded in $G$ corresponds to a valid execution of $M$ on some input $X \# Y$ for some certificate $Y \in \Sigma^m$.

We proceed by induction on the length $k$ of the computation walk $W$. Note that $\output(T)$ is the output symbol of transition for the $(\state(T), \symbol(T))$ by construction of the dynamic computation graph where $T$ is a transition case on the surface. 
\begin{itemize}
\item \textbf{Base Case ($k=0$):}  

The computation walk consists of a single node $W = (v_0)$.
The algorithm initializes $v_0$ as the unique computation node
\[
v_0 \in V_{0,0}^{q_0, s_0},
\]
where $q_0$ is the initial state of $M$, the tape head index is $0$,
and $s_0$ is the initial tape symbol at index $0$, determined by the
input string $\Lfixed$ (or $\epsilon$ if $\Lfixed$ is empty).

The surface $S$ is initialized such that $S[0]$ stores the transition
case corresponding to $v_0$, and $S[i]$ is undefined for all $i \neq 0$.
No transition has been applied, and thus the computation walk contains
no edges.

This configuration exactly represents the initial configuration of the
verifier $M$ on input $X \# Y$, before any transition is taken where $X$ is an NP instance and $Y$ is its certificate.
Hence, the computation walk of length $0$ is valid.

\item \textbf{Inductive Step:}  
Assume all computation walks of length $\leq k$ recorded in $G$ correspond to valid executions of $M$.

Let $W = (v_0, \dots, v_k, v_{k+1})$ be a walk extended in $G$ by some recursive call. There are four cases depending on whether the surface transition from $v_k$ to $v_{k+1}$ is deterministic or nondeterministic:

Given node $v$ with tape head at position $i = \indexOf(v)$, define $i' = \nextIndex(v)$ as the next tape cell to visit and $q'=\nextState(v)$ as the next state of the Turing Machine. 
Consider recursive call \ProceedToNextNodes{} with depth $k$. 
\begin{itemize}
  \item \textbf{Case 1 (Deterministic transition with first visiting the cell at fixed input region):} If $S[i']$ is \textbf{undefined} but $i'$ lies in the fixed input region or empty symbol region (i.e., $i' < |\Lfixed|$ or $i' \ge |\Lfixed| + m$), the algorithm deterministically sets $s' = \Lfixed[i']$ (or $\epsilon$ for empty cells), computes the unique next node $v'$ via \GetNextFloorEdge{}, and proceeds recursively without branching.
   Thus, node $v_{k+1}$ is the computation node $v'$ for tier $0$, symbol $\Lfixed[i']$(or $\epsilon$ for empty cells) and the current state.

  \item \textbf{Case 2 (Deterministic transition with re-visiting the cell):} If $S[i']$ is \textbf{defined} (the tape symbol at $i'$ is known from prior steps), then the next node $v'$ is deterministically computed via \GetNextNonFloorEdge{} using the transition case $S[i']$ on surface $S$,
 and the simulation proceeds recursively without branching.
   Thus, node $v_{k+1}$ is the computation node $v'$ for symbol $\output(S[i'])$ and the current state with its index-precedent $S[i']$.

  \item \textbf{Case 3 (Nondeterministic certificate branching with first visiting the cell at certificate input region):} If $S[i']$ is \textbf{undefined} and $i'$ lies in the certificate region (i.e., $|\Lfixed| \le i' < |\Lfixed| + m$), the algorithm iterates over all possible symbols $s' \in \Gamma$ for this cell.  
For each such $s'$, \GetNextFloorEdge{} computes the unique next node $v'$. The algorithm then recursively calls \ProceedToNextNodes{} using an updated surface $S'$, which is a copy of $S$ with the element $S'[i']$ set to the transition case of $v'$. Simultaneously, a copy of the computation walk $W'$ is created with the new edge $(v, v')$ appended. This process ensures that each nondeterministic branch is explored independently within its own distinct configuration context.

    Thus, node $v_{k+1}$ is the computation node for all possible certificate symbols and tier $0$ without index-precedent transition cases.

  \item \textbf{Case 4 (Final state):}  
  $v_k$ is a halting configuration with $q_k \in F = \{\qacc, \qrej\}$.  
  No further transitions are made, and $v_k$ becomes the endpoint of the computation walk $W$, returning \True for the accepting state or \False for the rejecting state.  
  If it returns \True, all recursive calls return as well, and when the initial \ProceedToNextNodes{} call returns, it outputs \texttt{Yes}.

\end{itemize}
In all cases, $S[i']$ is updated with the current transition case, ensuring that $\output(S[i'])$ is the correct tape symbol determined by $\state(S[i'])$ and $\symbol(S[i'])$.

Moreover, $v_{k+1}$ is a valid next node of $v_k$ according to $M$'s transition semantics.  
By the inductive hypothesis, the subwalk $(v_0, \dots, v_k)$ is valid, and the surface tracks the latest transition case, preserving both the preceding state and the symbol before overwritten as a current symbol.  
Furthermore, the edge $(v_k, v_{k+1})$ is appended to the computation walk $W$ or its copy $W'$, and also added to the graph $G$.  
Therefore, $W$ is a valid computation walk of length $k+1$.
\end{itemize}
Finally, the algorithm outputs \texttt{Yes} only if there exists a valid computation walk whose final node represents an accepting state.  
This guarantees that any \texttt{Yes} returned is always correct, completing the proof—that is, soundness holds.

\paragraph{Completeness)}  

Suppose, for contradiction, that there exists a valid computation walk
\[
W = (v_0, v_1, \dots, v_k)
\]
of the verifier $M$ on input $X \# Y$ for some certificate
$Y \in \Sigma^m$, such that the simulation algorithm fails to construct
$W$ in the computation graph $G$ generated by
\cref{alg:compute_verifier_exp}.

More precisely, assume that at least one of the following holds:
\begin{enumerate}
  \item The configuration corresponding to $v_k$ admits a valid next
        transition in $M$, but the algorithm does not generate the
        corresponding next computation node.
  \item The edge $(v_{k-1}, v_k)$ belongs to the valid transition
        sequence of $M$ but is not added to the computation graph $G$.
\end{enumerate}
We derive a contradiction in both cases.

First, consider the case where the configuration represented by the final
node $v_k$ is not halting and admits a valid next transition of $M$.

If the transition falls under Case~1 or Case~2 of the soundness proof
(deterministic transitions), then by determinism of the verifier,
there exists a unique next configuration. Since $v_k$ is not a halting
node, the algorithm deterministically computes the corresponding
next computation node and appends it to the current walk, which
contradicts the assumption that $W$ is maximal.

If the transition falls under Case~3 of the soundness proof
(nondeterministic certificate branching), then the algorithm enumerates
all possible symbols in $\Gamma$ for the newly visited certificate cell
and generates a next computation node for each such symbol.
Hence, all valid next nodes of $v_k$ are constructed and added to the
computation graph $G$, again contradicting the maximality of $W$.

Next, consider the case where the edge $(v_{k-1}, v_k)$ exists in the
valid transition sequence of $M$ but is not contained in $G$.

By construction of the algorithm, whenever an edge is appended to a
computation walk (or its copy), the same edge is simultaneously added
to the computation graph $G$. Therefore, it is impossible for an edge
to appear in a valid computation walk without being included in $G$.
This yields a contradiction.

In all cases, we reach a contradiction. Hence, every valid computation
walk of $M$ on input $X \# Y$ is fully generated and recorded in the
computation graph $G$ by the simulation algorithm.

Finally, since the algorithm exhaustively explores all valid computation
walks, if no computation walk ends in an accepting state, then no
certificate leads $M$ to accept. Therefore, when the algorithm returns
\texttt{No}, the answer is correct.

This completes the proof of completeness.
\end{proof}
\end{lemma}
\begin{remark}[Note on the Role of Tier]  
Although this simulation algorithm does not rely on the tier structure for its correctness or completeness, we maintain it as part of the computation graph definition.  
This is because the tier structure becomes crucial in corresponding polynomial algorithm in \cref{subsec:simulate_verifier_poly}, where we design a polynomial-time simulation algorithm that exploits the layered nature of computation to achieve efficiency.  
In particular, tiers will help organize configurations according to the number
of transitions executed at each tape cell, enabling bottom-up reasoning across the graph.
\end{remark}

\section{Detailed Proofs of Structural Lemmas}
\subsection{Computation Model Related Proofs}

\begin{lemma}\label{lem:sink_or_source_is_ex_pendant}
Let $e$ be an edge incident to a source or sink node in the footmark graph. Then $e$ is an ex-pendant edge. (A \emph{source node} has only outgoing edges, and a \emph{sink node} has only incoming edges.)
\end{lemma}

\begin{proof}
Let $e=(u,v)$. We consider each case separately:

\begin{itemize}
    \item \textbf{Case 1: $v$ is a source node.} \\
    If $v$ is a source node, it has only outgoing edges. By the properties of a Deterministic Turing Machine (DTM), all outgoing transitions from a given configuration (state and symbol) must be unique. This implies that all outgoing edges from $v$ share the same direction and, consequently, the same index. Therefore, there exists no edge incident to $v$ with index $i-1$ or $i+1$. Thus, $e$ is either left-pendant or right-pendant, and hence ex-pendant.

    \item \textbf{Case 2: $v$ is a sink node.} \\
    Suppose $v$ is a sink node, meaning it has only incoming edges. Suppose, for contradiction, that $v$ also has an incoming edge $f$ with the opposite direction to $e$. 
    Let $W_e$ and $W_f$ be computation walks containing $e$ and $f$, respectively. By \Cref{lem:property_of_index_predecessor}, the index-predecessors $\ipred_{W_e}(e)=({\vdown{v}}_1, w_1)$ and $\ipred_{W_f}(f)=({\vdown{v}}_2, w_2)$ must also have opposite directions to $e$ and $f$, respectively. 
    This implies that ${\vdown{v}}_1, {\vdown{v}}_2 \in \IPrec(v)$ are incident to edges with different directions. However, this contradicts the property that all nodes in $\IPrec(v)$ must lead to transitions with the same direction, as established in \Cref{def:index-precedent_nodes}. Hence, every edge incident to a sink node must be either left-pendant or right-pendant, and is therefore an ex-pendant edge.
\end{itemize}
\end{proof}

\subsection{Feasible Graph Related Proofs}

\subsubsection{Proof of \Cref{lem:floor_edge_condition}} \label{proof:lem:floor_edge_condition} 
\begin{proof}[Proof of Floor Edge Condition] 
We prove both directions of the equivalence.
\begin{itemize}
\item \textbf{(If direction)} Suppose, for contradiction, that $e=(u,v)$ is not a floor edge.
Then there exists $\vdown{e}$ which is an index-predecessor of $e$ in some computation walk $W$ by definition of floor edge. Let $s$ be the start node of $W$. 
Recall that any computation walk is a path, which means all computation nodes are distinct.
And there also exists node $\vdown{v}$ incident to $\vdown{e}$ with $\indexOf(\vdown{v})=\indexOf(v), \tier(\vdown{v})<\tier(v)$ other than $v$ in the subwalk from $s$ to $u$. 
Since $\tier(v)>\tier(\vdown{v})$, $\tier(v)>0$, even if $\tier(\vdown{v})=0$. This contradicts the assumption that $\tier(v) = 0$.

\item \textbf{(Only if direction)}  
Suppose, for contradiction, that $\tier(v) > 0$.  
We will show that $e = (u, v)$ cannot be a floor edge.

Since $tier(v) > 0$, there exists a node $\vbot{v}$ such that $\indexOf(\vbot{v}) = \indexOf(v)$ and $\tier(\vbot{v}) = 0$; that is, $\vbot{v}$ represents the first visit to the tape cell indexed by $\indexOf(v)$.

Let $s$ and $t$ denote the start and final nodes of the computation walk $W$, respectively.

Then the computation walk $W$ can be decomposed into three subwalks:  
$W_1$ — from $s$ to $\vbot{v}$,  
$W_2$ — from $\vbot{v}$ to $u$,  
$W_3$ — from $u$ to $t$.  

We consider two cases based on the direction of the edge $e$ in $W$:
\begin{itemize}
\item \emph{Case 1: $e$ is directed toward the origin.}  
In $W_2$, there must exist an edge $\vdown{e}$ with $\indexOf(\vdown{e}) = \indexOf(e)$ that precedes $e$ in $W$ due to $|\indexOf(u)|>|\indexOf(\vbot{v})|$, making $\vdown{e}$ an index-predecessor of $e$—contradicting the assumption that $e$ is a floor edge.

\item \emph{Case 2: $e$ is directed away from the origin.}  
In $W_2$, there must again exist an edge $\vdown{e}$ with $\indexOf(\vdown{e}) = \indexOf(e)$ due to $|\indexOf(u)|<|\indexOf(\vbot{v})|$ that appears before $e$ in $W$, serving as an index-predecessor—again contradicting the assumption.
\end{itemize}
In both cases, $e$ must have an index-predecessor with the same index, so it cannot be a floor edge.
\end{itemize}
\begin{figure}[htbp]
  \centering
  \begin{subfigure}[t]{0.45\textwidth}
    \centering
    \includegraphics[width=0.9\textwidth]{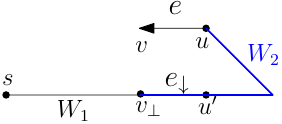}
    \caption{Floor edge case direction toward origin}
    \label{fig:floor_edge_toward_origin}
    \Description{This diagram depicts the case where $e=(u,v)$ is directed away from the origin ($|\indexOf(v)| > |\indexOf(u)|$, moving left-to-right). Similarly, after visiting the floor tier at $\vbot{v}$, the walk $W_2$ proceeds toward a higher index and tier to reach $u$. During this progression, the walk necessarily executes a transition $\vdown{e}$ corresponding to the index of $e$. The presence of this $\vdown{e}$ in $W_2$ proves that $e$ is not the first occurrence of its index in the walk, thus it cannot be a floor edge.}
  \end{subfigure}
  \hfill
  \begin{subfigure}[t]{0.45\textwidth}
    \centering
    \includegraphics[width=0.9\textwidth]{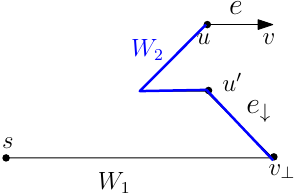}
    \caption{Floor edge case direction away from origin}
    \label{fig:floor_edge_away_from_origin}
    \Description{This diagram depicts the case where $e=(u,v)$ is directed away from the origin ($|\indexOf(v)| > |\indexOf(u)|$, moving left-to-right). Similarly, after visiting the floor tier at $\vbot{v}$, the walk $W_2$ proceeds toward a higher index and tier to reach $u$. During this progression, the walk necessarily executes a transition $\vdown{e}$ corresponding to the index of $e$. The presence of this $\vdown{e}$ in $W_2$ proves that $e$ is not the first occurrence of its index in the walk, thus it cannot be a floor edge.}
  \end{subfigure}
  \caption{Comparison of index-predecessor existence based on edge direction at $\tier > 0$}
  \label{fig:floor_edge_condition}
  \Description{This figure Illustrates why an edge $e=(u,v)$ with $\tier(v) > 0$ cannot be a floor edge by identifying an index-predecessor $\vdown{e}$ within the subwalk $W_2$ (from the floor node $\vbot{v}$ to the source node $u$). In both cases, the computation walk $W$ is partitioned into three segments: $W_1$ (from start $s$ to $\vbot{v}$), $W_2$ (from $\vbot{v}$ to $u$), and $W_3$ (from $u$ to final node $t$).}
\end{figure}
\end{proof}

\subsubsection{Proof of \Cref{lem:ceiling_edge_characterization}} \label{app:proof_ceiling_edge}
\begin{proof}[Characterization of Ceiling Edges]
Let $d = \dir(e_f)$ be the direction of the walk $W$.  

We prove the claim in two parts, depending on the sign of $d \cdot (\indexOf(e_f) - \indexOf(e))$. 
For each case, we first show that if $e$ is a ceiling edge, it must be ceiling-adjacent to another ceiling edge $e_c$ that is closer to $e_f$.
 Given the structural uniqueness—where each index in $W$ admits at most one ceiling edge—this adjacency relation fully characterizes all ceiling edges in the walk, thereby establishing the "only if" direction as well.
\begin{itemize}
\item \textbf{Case 1: $d \cdot (\indexOf(e_f) - \indexOf(e)) \ge 0$.}  

Let $(c_k, c_{k+d}, \dots, c_i, \dots, c_{m-d}, c_m)$ denote the subsequence of $W$, where $\indexOf(c_j) = j, c_k=e$, and $c_m = e_f$ is the final edge of $W$. Suppose that $c_j$ is ceiling-adjacent to $c_{j+d}$ for all $j$ in the following range:
\begin{itemize}
  \item If $d > 0$, then for all $j$ such that $k \le j < m$;
  \item If $d < 0$, then for all $j$ such that $k \ge j > m$.
\end{itemize}

We proceed by induction on $|m - i|$, the number of steps from $c_i$ to $c_m$.
\begin{itemize}
\item \textbf{Base case ($m = i$):}  
Then $c_i = c_m = e_f$. Since $\term(c_m)$ has no outgoing edge in $W$, we have $\ISucc_W(e_f) = \emptyset$, and thus $c_i$ is trivially a ceiling edge.

\item \textbf{Inductive Hypothesis:}  
Assume that the ceiling edge $c_j = c_{i+d}$ satisfies:
\begin{enumerate}
  \item $c_j$ is a ceiling edge;
  \item every edge $f$ in the subwalk $W_j$ from $c_j$ to $c_m$ satisfies $d \cdot (\indexOf(f) - \indexOf(c_j)) \ge 0$.
\end{enumerate}

\item \textbf{Inductive Step:}  
We now show that $c_i$ also satisfies the above conditions.

Let $c_{i+d}$ be the next ceiling edge after $c_i$ in $W$. Since $W$ is a computation walk, $c_i$ is ahead of $c_{i+d}$ in $W$.

Since $c_i$ is ceiling-adjacent to $c_{i+d}$, we consider two cases based on the type of adjacency between $c_i$ and $c_{i+d}$:

\begin{itemize}
  \item[(a)] \textit{ Direct adjacency:}  
  If $c_i$ is directly adjacent to $c_{i+d}$ and both involve non-folding nodes, then their indices differ by $\pm1$, and $c_i$ has no index-successors in $W$.  
  Therefore, $c_i$ is a ceiling edge, and $W_{i+d}$ contains no edge $f$ such that $d \cdot (\indexOf(f) - \indexOf(c_i)) < 0$.

  \item[(b)] \textit{ Indirect adjacency via folding edges:}  
  In this case, $c_i$ is not directly adjacent to $c_{i+d}$, but there exists a sequence of folding nodes from $c_i$ to an edge adjacent to $c_{i+d}$ (as per \cref{def:ceiling_adjacent}).  

  Suppose, for contradiction, that there exists an index-successor edge $(v', \vup{u})$ of $c_i = (u, v)$ in a subwalk from $c_i$ to $c_{i+d}$. Then:\\
  - $\indexOf(u) = \indexOf(\vup{u}) \text{ and } \tier(\vup{u}) = \tier(u) + 1 \quad (\text{due to } u \in \IPrec_H(\vup{u}))$. \\  
This follows because the index-successor of an edge $e$ in any computation walk must belong to the index-succedent of $e$ within the subgraph $H$ (by the definition of index-successor and index-succedent; see \cref{def:index-precedent_index-succedent_edges}).

   However, by the definition of a ceiling-adjacent edge, all nodes with $\indexOf(v)$ in the walk $W$ are folding nodes. Furthermore, since $W$ is a computation walk (viewed as an edge sequence), no other nodes with $\indexOf(v)$ exist in the subwalk of $W$ between $c_i$ and $c_{i+d}$ (and consequently in $H$). This implies there are no edges in $H$ directed toward any node with $\indexOf(u)$ within this specific interval (\cref{lem:folding_node_property}). This contradicts the assumption that an index-successor $(v', \vup{u})$ exists with $\indexOf(\vup{u}) = \indexOf(u)$ and $\tier(\vup{u}) = \tier(u) + 1$.

  Therefore, there is no index-successor in the subwalk $W_{i+d}$, since by the inductive hypothesis, all edges $f \in W_{i+d}$ satisfy $d \cdot (\indexOf(f) - \indexOf(c_{i+d})) \ge 0$.  

  Therefore, $c_i$ is a ceiling edge of $W$, and every edge $f \in W_i$ satisfies $d \cdot (\indexOf(f) - \indexOf(c_i)) \ge 0$.
  Furthermore, since $c_i$ and $c_{i+d}$ lie on the same computation walk $W$, and the inductive hypothesis guarantees that $d \cdot (\indexOf(f) - \indexOf(c_i)) \ge 0$
  for all intermediate edges, the subwalk from $c_i$ to $c_{i+d}$ itself constitutes a path satisfying the connectivity requirement in \cref{def:ceiling_adjacent}.

\end{itemize}

In both cases, $c_i$ is the unique ceiling edge with edge index $i$ satisfying the desired property, and all edges $f \in W_i$ satisfy $d \cdot (\indexOf(f) - \indexOf(c_i)) \ge 0$. 
Since each index admits at most one ceiling edge, any ceiling edge must be ceiling-adjacent to another ceiling edge in the walk, and thus the "only if" condition is satisfied.

Therefore, by induction, the stated properties hold for all indices $i$ in the given range.

\item \textbf{Case 2: $d \cdot (\indexOf(e_f) - \indexOf(e)) < 0$.}
Let $c_{m+d}$ be the edge ceiling-adjacent to $e_f = c_m$.
 Then, by a similar argument as in Case 1 (specifically Case (b) where $d \cdot (\indexOf(e_f) - \indexOf(e)) > 0$), $c_{m+d}$ is the unique ceiling edge with edge index $m + d$.
 Let $W'$ be the subwalk from the start of $W$ up to the edge $c_{m+d}$.
 Apply the same inductive structure as in Case 1, using the reverse direction ($-d$ as the direction).
The same reasoning applies symmetrically, ensuring that the ceiling-edge property is preserved under the reversed direction.
\end{itemize}
\item \textbf{Conclusion:}  
Through this inductive characterization, we have shown that every ceiling edge in $W$ is linked to the final edge $e_f$ through a unique sequence of ceiling-adjacent edges. 
This recursive structure ensures that the ceiling-edge property is uniquely determined for each index along the walk, 
providing a complete characterization of the "ceiling" of $W$ regardless of the walk's direction or index displacement.
\end{itemize}
\end{proof}

\subsubsection{Proof of \Cref{lem:ceiling_edge_characterization}} \label{app:proof_lem_cover_edge}
\begin{proof}[Ceiling Edges are Cover Edges] 
Let $H'$ be a supergraph of a graph $H$, meaning that $H'$ contains all vertices and edges of $H$.
First, consider any graph $H$ and its supergraph $H'$. Then the following properties hold:
\begin{itemize}
    \item If two edges are adjacent in $H$, then they are also adjacent in $H'$.
    \item If a vertex $u$ belongs to $\IPrec_H(v)$, then $u \in \IPrec_{H'}(v)$.
    \item If a vertex $v$ is a folding node in $H$, then $v$ is also a folding node in $H'$, since all outgoing edges from $v$ have the same direction due to the deterministic transition mechanism (DTM).
    \item Moreover, if there exists a path from an edge $f$ to an edge $e$ in $H$ such that no edge on the path shares the same index as $f$ except $f$ itself, then the same path exists in $H'$ with the same index property.
\end{itemize}

It follows that if $f$ is ceiling-adjacent to $e$ in $H$, then $f$ is also ceiling-adjacent to $e$ in any supergraph $H'$ of $H$.

Now let $W = (e_0, \dots, e_k)$ be any walk in $\mathcal{W}_f$, and let $c = e_i$ be a ceiling edge in $W$ for some $i < k$, with $e_k \in E_f$.  

By the definition of a ceiling edge, there exists a subsequence $(e_i, e_{i+1}, \dots, e_k)$ of ceiling edges in $W$ such that for each $j$ with $i \le j < k$, the edge $e_j$ is ceiling-adjacent to $e_{j+1}$.

Since all ceiling-adjacency relations are preserved in $G$, and since $e_k \in E_f \subseteq \widehat{C}$ (by initialization of the cover set), and cover edges are closed under backward ceiling-adjacency from $E_f$, it follows inductively that $c = e_i \in \widehat{C}$.

Therefore, every ceiling edge $c \in C_f$ is also in $\widehat{C}$.
\end{proof}

\subsubsection{Computing Cover Edge Correctness Lemma}
\begin{sublemma}[Correctness of Computing Cover Edges]
\label{sublem:correctness_cover_edges}
Given a computation graph $G$, and a set of designated final edges $E_f$, the set $C$ returned by \ComputeCoverEdges{} in \cref{alg:compute_cover_edges} contains an edge if and only if it is a cover edge in $G$ with respect to $E_f$.
\begin{proof}
Let the final edge $e_f$ be trivially included as the base case at line \ref{alg_line:compute_cover_edges:initial_cover_edge}. Define $c_0 = e_f$.
\begin{itemize}
\item \textbf{(Soundness)}:  
Assume as an induction hypothesis (IH) that the edges $c_0, c_1, \dots, c_k$ have been correctly appended to the set $C$ by the algorithm, and that each $c_{i+1}$ is ceiling-adjacent to $c_i$, and is a cover edge.  
Since all ceiling-adjacent edges are added to $Q$, $c_k$ is also added to $Q$.  
Consider the execution step when $c_k$ is retrieved from $Q$.  
If $c_{k+1}$ is already in $C$, then the proposition holds trivially.  
By the algorithm, every ceiling-adjacent edge $e$ that is either directly adjacent to $c_k$, or is weakly ceiling-adjacent to $c_k$ and admits a connecting path from $c_k$ to $e$ whose intermediate edges all have indices different from that of $c_k$, is added to $C$ at \cref{alg_line:compute_cover_edges:add_connected_cover_edge}, and $e$ becomes $c_{k+1}$ in the ceiling edge chain.
 
This implies that $e$ is a cover edge by the definition of cover edge.  
Thus, by induction, all edges appended by the algorithm form a ceiling-adjacent cover edge chain.  
Hence, the set $C$ constructed by the algorithm contains only valid cover edges and is therefore sound.

\item \textbf{(Completeness)}:  
Suppose there exists an edge $e$ that is a cover edge but is not appended by the algorithm.  
By the definition of cover edge, there exists a cover edge chain $(c_0, c_1, \dots, c_k)$ such that $c_{i+1}$ is ceiling-adjacent to $c_i$ for all $0 \le i < k$.  
Let $e = c_j$ be the first edge in this chain that is not in $C$.  
According to the algorithm, immediately after $c_{j-1}$ is retrieved from $Q$, every ceiling-adjacent edge $e$ to $c_{j-1}$ that is either directly adjacent to $c_{j-1}$, or admits a path from $c_{j-1}$ to $e$ whose intermediate edges all have indices different from $\indexOf(c_{j-1})$, is added to $C$ at \cref{alg_line:compute_cover_edges:add_connected_cover_edge}.

Therefore, $c_j$ must have been added to $C$, leading to a contradiction.  
This implies that no such missing cover edge exists.  
Hence, the algorithm includes all ceiling-adjacent cover edges reachable from $e_f$, and completeness is guaranteed.
\end{itemize}
Therefore, the algorithm correctly constructs the entire set of cover edges via a ceiling-adjacent path starting from $e_f$.
\end{proof}
\end{sublemma}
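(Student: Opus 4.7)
The plan is to prove the two directions separately using BFS-style inductions that mirror the recursive definition of a cover edge (\cref{def:cover_edge}). Throughout, I will identify the queue $Q$ with a level-indexed exploration: initially $Q$ contains $E_f$, and each subsequent enqueue corresponds to extending a cover-edge chain backwards by one ceiling-adjacency step. The invariant I maintain is that at every point in the execution, $C$ consists exactly of edges which admit some cover-edge chain terminating in $E_f$, and which moreover have been witnessed by a partial chain constructed by the algorithm.

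For soundness, I would proceed by induction on the iteration count. The base case is immediate: every $e \in E_f$ is trivially a cover edge (chain of length~$0$). For the inductive step, suppose $f$ is added at \cref{alg_line:compute_cover_edges:add_connected_cover_edge} when processing some $e \in C$. By the induction hypothesis, $e$ already admits a cover-edge chain $(e = c_0, c_1, \dots, c_k)$ with $c_k \in E_f$. Since the algorithm only adds $f$ when $f$ is ceiling-adjacent to $e$ and either adjacent to $e$ or connected to $e$ by a path (which matches the two cases in \cref{def:ceiling_adjacent}), prepending $f$ yields a valid cover-edge chain $(f, e, c_1, \dots, c_k)$, so $f$ is a cover edge. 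Hence $C \subseteq \widehat{C}$ at all times.

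For completeness, I would argue by induction on the minimum chain length $\ell(e)$ from a cover edge $e$ to some $c_k \in E_f$. The base case $\ell(e) = 0$ means $e \in E_f$, which is placed in $C$ at the initialization line. For the inductive step, let $e$ be a cover edge with chain $(e = c_0, c_1, \dots, c_k)$ of minimum length; then $c_1$ is a cover edge with $\ell(c_1) < \ell(e)$, so by the induction hypothesis $c_1$ is eventually enqueued into $Q$. When $c_1$ is later dequeued, the algorithm inspects all edges ceiling-adjacent to $c_1$, which includes $e = c_0$; since $c_0$ is either adjacent to $c_1$ or connected via the precedent-chain of \cref{def:ceiling_adjacent} (which induces a path in $G$), the conditional in the inner loop is satisfied, and $e$ is added to $C$. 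Hence $\widehat{C} \subseteq C$.

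The main subtlety — and the step I expect will require the most care — is verifying that the "path from $f$ to $e$" clause in the algorithm faithfully captures the second (indirect) case of ceiling-adjacency in \cref{def:ceiling_adjacent}. Specifically, when $f$ is ceiling-adjacent to $e$ through a chain $e_0 = e, e_1, \dots, e_n = e'$ of precedents with $f$ adjacent to $e'$, I must argue that this chain induces an actual directed path in $G$ from $f$ to $e$ (using that each $e_{i+1} \in \mathrm{Prec}_G(e_i)$ corresponds to a concrete edge-sequence through folding nodes, and that the folding-node hypothesis of \cref{def:ceiling_adjacent} guarantees the needed incidences). Termination is straightforward since each edge is enqueued at most once and $G$ is finite. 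Together, soundness and completeness yield $C = \widehat{C}$, as required.
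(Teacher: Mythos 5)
Your proposal is correct and follows essentially the same route as the paper's own proof: a BFS-style induction in both directions, showing each edge added at the enqueue step extends a ceiling-adjacent chain ending in $E_f$ (soundness), and inducting along the chain of a given cover edge to show it is eventually dequeued and added (completeness). The one subtlety you flag---that the algorithm's extra ``adjacent or path exists'' filter must be shown to hold for every ceiling-adjacent link in a cover-edge chain---is likewise left unverified in the paper's proof, so your treatment is no weaker than the original on that point.
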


\subsubsection{Equivalence of Index-Adjacency and Non-Ex-Pendency}

\begin{sublemma}[Equivalence of Bidirectional Index-Adjacency and Horizontal Non-Pendency] \label{sublem:index_adj_equivalence}
An edge $e = (u,v)$ in $G$ with $\indexOf(e)=i$ is \textbf{index-adjacent to both edge slices $E_{i-1}$ and $E_{i+1}$} if and only if it is \textbf{not horizontally $E_f$-step-pendant}.
\end{sublemma}

\begin{proof}
We prove the equivalence by analyzing the structural constraints on $E_{\mathrm{init}}$, $E_f$, and folding nodes. Let $x$ and $y$ be the nodes with indices $i$ and $i+1$ respectively in $\{u, v\}$. Note that if a node is a folding node, it necessarily has incoming and outgoing edges.

\paragraph{Group 1: $e$ is horizontally $E_f$-step-pendant.}
In these cases, $e$ fails to be index-adjacent to at least one direction.
\begin{enumerate}
   \item \textbf{$e$ is left-pendant and $e \notin E_{\mathrm{init}} \cup E_f$}: It is not adjacent to any edge with index $i-1$, and $x$ is not a folding node, so it is not index-adjacent to the edge slice with index $i-1$.
   \item \textbf{$e$ is right-pendant and $e \notin E_{\mathrm{init}} \cup E_f$}: It is not adjacent to any edge with index $i+1$, and $y$ is not a folding node, so it is not index-adjacent to the edge slice with index $i+1$.   
   \item \textbf{$e \in E_{\mathrm{init}} \cap E_f$}: Since $e$ is an initial edge and a final edge, $e$ is index-adjacent to both directions. However, it is horizontally step-pendant by \cref{def:step_pendant_edge}(1).
   \item \textbf{$e$ is both-pendant and $e \in E_{\mathrm{init}} \setminus E_f$}: $e$ is not index-adjacent to the edge slice with index $i+\dir(e)$ because it lacks a right-neighbor and $y$ is not a folding node.
   \item \textbf{$e$ is both-pendant and $e \in E_f \setminus E_{\mathrm{init}}$}: $e$ is not index-adjacent to the edge slice with index $i-\dir(e)$ because it lacks a left-neighbor and $x$ is not a folding node.
\end{enumerate}

\paragraph{Group 2: $e$ is NOT horizontally $E_f$-step-pendant.}
In these cases, $e$ is index-adjacent to both edge slices.
\begin{enumerate}
   \item \textbf{$e$ is not ex-pendant and $e \notin E_{\mathrm{init}} \cup E_f$}: It is left-adjacent to some edge or $x$ is a folding node, and it is right-adjacent to some edge or $y$ is a folding node. Thus, it is index-adjacent to edge slices with index $i-1$ and $i+1$.
   \item \textbf{$e$ is only pendant to the direction $-\dir(e)$ and $e \in E_{\mathrm{init}} \setminus E_f$}: $e$ is index-adjacent to the $(i-\dir(e))$-th edge slice due to incidency to an initial vertex, and it is index-adjacent to the $(i+\dir(e))$-th edge slice due to its adjacency to another edge.
   \item \textbf{$e$ is only pendant to the direction $+\dir(e)$ and $e \in E_f \setminus E_{\mathrm{init}}$}: $e$ is index-adjacent to the $(i+\dir(e))$-th edge slice due to its inclusion in $E_f$, and it is index-adjacent to the $(i-\dir(e))$-th edge slice due to its adjacency to another edge.
\end{enumerate}

Therefore, bidirectional index-adjacency is the structural dual to horizontal non-pendency. An edge satisfies the bidirectional inclusion criteria if and only if it is not horizontally truncated.
\end{proof}

\subsubsection{Proof of \Cref{lem:no_step_pendant_edge_in_feasible_graph}} \label{app:proof_lem_no_step_pendant}
\begin{proof}[Absence of Non-Designated Step-Pendant Edges]
Suppose, for contradiction, that $H$ contains at least one $E_f$-step-pendant edge. 

Let $H^{(j)}$ denote the state of the graph after the $j$-th call to \SweepEdges{}, with the initial state defined as $H^{(0)} = G$. Let $k \geq 1$ be the smallest integer such that an $E_f$-step-pendant edge exists in $H^{(k-1)}$ but is purportedly retained in $H^{(k)}$. Let $e=(u,v)$ be such an edge with $\indexOf(e) = i$.
 For clarity, let $H' = H^{(k)}$. We analyze the cases under which $e$ is classified as a step-pendant edge in $H'$ and show its inclusion leads to a contradiction:

\begin{enumerate}
    \item \textbf{Case 1: $e$ is a horizontally step-pendant edge.} \\
    Suppose $e = (u, v)$ with index $i$ is horizontally step-pendant. By \cref{sublem:index_adj_equivalence}, this is equivalent to $e$ failing to be index-adjacent to at least one adjacent edge slice. We analyze this through the directional sweeps of \StepUpEdges{}:
    \begin{itemize}
        \item If $e$ is \textbf{left-pendant} and is $E_f$-step-pendant (notably, $e \notin E_{\mathrm{init}} \cup E_f$), it is not index-adjacent to the left edge slice $E_{i-1}$. During the forward sweep of \StepUpEdges{} (direction $+1$), the condition at \cref{alg_line:step_up_edges:add_index_adjacent_edges} fails because $e$ lacks the necessary left-hand connectivity or a folding node trigger. Thus, $e \notin I$.
        \item If $e$ is \textbf{right-pendant} and is $E_f$-step-pendant (notably, $e \notin E_{\mathrm{init}} \cup E_f$), it is not index-adjacent to the right edge slice $E_{i+1}$. During the backward sweep of \StepUpEdges{} (direction $-1$), $e$ fails the index-adjacency check for the same structural reason. Thus, $e \notin I$.
    \end{itemize}
    In either sub-case, since only edges already in $I$ can be added to $I'$ during the \StepDownEdges{} phase, the exclusion of $e$ from $I$ ensures that $e \notin H'$. This contradicts the assumption $e \in H$.

    \item \textbf{Case 2: $\IPrec_G(e) = \emptyset$ and $e$ is not a floor edge.} \\
    During the \StepUpEdges{} phase, edges are added to $I$ only if they possess an index-precedent or are designated as floor edges. Since $e$ has no index-precedent and is not a floor edge, it cannot be added to $I$, contradicting the assumption $e \in H'$.

    \item \textbf{Case 3: $\ISucc_G(e) = \emptyset$ and $e$ is not a cover edge (see \cref{def:cover_edges}).} \\
    In this case, $e$ lacks an index-succedent edge and is not a cover edge. During the \StepDownEdges{} phase, only edges with an index-succedent or cover edges are added to $I'$. Consequently, $e$ is not included in $I'$, and therefore $e \notin H'$, again contradicting the assumption.
\end{enumerate}

In all cases, the assumption that an $E_f$-step-pendant edge $e$ is included in $H'$ leads to a contradiction. Therefore, $H$ contains no such edges.
\end{proof}

\subsubsection{Proof of \Cref{lem:only_step_extended_component_removed}} \label{app:proof_lem_only_step_extended}
\begin{proof}[No Pruning beyond Step-Pendant Edges]
Suppose, for contradiction, that there exists an edge $e_0 \in G \setminus H$ that was removed by the algorithm but does not belong to any step-extended component with respect to $E_f$ and $V_0$.

Let $H^{(i)}$ denote the state of the graph $H$ after the $i$-th call to \SweepEdges{}, and let $C^{(i)}$ be the set of removed edges up to that point. There must exist some iteration $k$ such that $e_0 \in H^{(k-1)}$ but $e_0 \notin H^{(k)}$.

First, we observe the conditions under which edges are excluded from $H$.

During \StepUpEdges{}, an edge is added to the intermediate set $I$ if it:
\begin{itemize}
    \item is index-adjacent to some edge already in $I$ (i.e., not ex-pendant), or
    \item has an index-precedent in $H^{(k-1)}$ (unless it is a floor edge), since the algorithm recursively adds all index-succedents starting from floor edges.
\end{itemize}
Thus, only step-pendant edges fail to be added to $I$ in this phase.

We examine the following possible cases for $e_0$ to reach a contradiction:
\begin{description} 
    \item[Case 1:] $e_0$ is not $E_f$-step-pendant in $H^{(k-1)}$. \\
    By the structural duality established in \cref{sublem:index_adj_equivalence}, $e_0$ is not horizontally step-pendant if and only if it is \textbf{bidirectionally index-adjacent} to its neighboring edge slices $E_{i-1}$ and $E_{i+1}$ in $H^{(k-1)}$. Furthermore, since $e_0$ is not vertically step-pendant, it possesses at least one index-precedent and one index-succedent in $H^{(k-1)}$ (unless it is a floor or cover edge).
    In the \StepUpEdges{} phase, since $e_0$ is index-adjacent to the preceding edge slice and has a valid index-precedent in $I$, it satisfies the condition at \cref{alg_line:step_up_edges:add_index_adjacent_edges} and is necessarily added to $I$. Symmetrically, $e_0$ has index-sucedent edges or $e_0$ is cover edge and is added to $I'$ in \StepDownEdges{}.
     This implies $e_0 \in I \cap I' = H^{(k)}$, which directly contradicts the assumption $e_0 \notin H^{(k)}$.

    \item[Case 2:] $e_0$ is $E_f$-step-pendant in $H^{(k-1)}$, but not step-adjacent to any edge in $C^{(k-1)}$. \\
    If $e_0$ is $E_f$-step-pendant in $H^{(k-1)}$ without being step-adjacent to any previously removed edges in $C^{(k-1)}$, its step-pendant status must have been inherent in the original graph $G$. This implies $e_0 \in E_R$. By the definition of step-extended components, $e_0$ is therefore a base edge of a component, contradicting the assumption that $e_0$ belongs to no such component.

    \item[Case 3:] $e_0$ is $E_f$-step-pendant in $H^{(k-1)}$ and is step-adjacent to some edge $e' \in C^{(k-1)}$. \\
    By the inductive hypothesis, $e'$ belongs to a step-extended component. Since $e_0$ is step-adjacent to $e'$ and is $E_f$-step-pendant in $G \setminus C^{(k-1)}$, it satisfies the recursive definition of a step-extended component. This again contradicts the initial assumption.
\end{description}

In all cases, we reach a contradiction. Thus, every edge removed by the algorithm must belong to some step-extended component, completing the proof.
\end{proof}

\subsection{Existence of Ex-Pendant Edge with No Index-Succedent}

\begin{proposition}[Acyclicity and Monotonicity of the Footmark Graph]
\label{lem:monotone_time}
Any subgraph $G$ of a grid-aligned footmark graph with initial node $v_0$ is a Directed Acyclic Graph (DAG), and for any edge $(u, v)$ in $G$, $\timeOf(v) > \timeOf(u)$.
\end{proposition}
\begin{proof}
First, we show that for any arbitrary walk $W_G = (v_0, v_1, \dots)$ in $G$ (where $W$ is not restricted to a computation walk), the property $\timeOf(v_i) = i$ holds for the $i$-th vertex $v_i$.
Suppose, for the sake of contradiction, that this property does not hold. 
Let $v_k$ be the first vertex in a walk such that $\timeOf(v_k) \neq k$. However, by the definition of the grid-aligned property, for any computation walk, $\timeOf(\nextOf(v_{k-1})) = k$. 
Since $G$ is a subgraph of the grid-aligned footmark graph, all vertices $w \in \Next(v_{k-1})$ must satisfy $\timeOf(w) = k$. This contradicts the assumption that $\timeOf(v_k) \neq k$.
From $\timeOf(v_i) = i$, it follows immediately that for any edge $(v_i, v_{i+1})$, $\timeOf(v_{i+1}) = i+1 > i = \timeOf(v_i)$. This strict monotonicity ensures that any path in $G$ progresses to strictly higher time indices. 

Since the $\timeOf$ value is uniquely defined for each node in a grid-aligned footmark graph, no vertex can be revisited, which guarantees that $G$ is a Directed Acyclic Graph (DAG), completing the proof.
\end{proof}

\begin{sublemma}[Ex-Pendant Edge with No Index-Succedent]
\label{lem:ex-pendant_edges_with_no_index-succedent}
In any subgraph $G$ of the grid-aligned footmark graph of $\mathcal{W}$, any edge incident to the node with the maximum time value possesses neither next edges nor index-succedent edges.
\end{sublemma}

\begin{proof}
Let $e = (u, v)$ be an edge such that $\timeOf(v)$ is the maximum among all edges in $G$. 
For the sake of contradiction, suppose $e$ has a next edge $e' = (u', v')$ and an index-succedent edge $e'' = (u'', v'')$ in $G$. 
Since $G$ is a subgraph of the grid-aligned footmark graph $G'$, there exists a computation walk $W_1$ in $G'$ containing some edge $e_1 = (u_1, v_1)$ such that $e' = \nextOf_{W_1}(e_1)$ with $\timeOf(v_1) = \timeOf(v)$, and a computation walk $W_2$ in $G'$ containing some edge $e_2 = (u_2, v_2)$ such that $e'' = \isucc_{W_2}(e_2)$ with $\timeOf(v_2) = \timeOf(v)$.
By the definition of the time value, time values increase strictly monotonically along the walk; thus, $\timeOf(v') > \timeOf(v)$ and $\timeOf(v'') > \timeOf(v)$, which contradicts the assumption that $\timeOf(v)$ is the maximum time value among all edges in $G$. 
Therefore, any edge incident to the node with the maximum time value possesses no next edges or index-succedent edges.
\end{proof}

\subsection{Absence of Infeasible Merging Edge with the First Splitting Edge}
\begin{sublemma}\label{lem:exisitence_of_prior_splittng_edge}
Given a subgraph $G$ of a grid-aligned footmark graph with a unique initial node $v_0$, let $W$ be a computation walk and $e=(u,v)$ be an edge on it.
Let $e'=(u',v')$ be an edge such that $u \neq u'$ but $\indexOf(e)=\indexOf(e')$. If $G$ contains no step-pendant edges, then there exists a splitting edge before $e$ on $W$.
\begin{proof}
Since $G$ is a directed acyclic graph (DAG) rooted at a unique initial node $v_0$, every edge must be reachable from $v_0$ via a valid sequence of transitions. 
Let $e_{\text{min}}=(u_m, v_m)$ be an edge such that $\timeOf(u_m)$ is minimum among all edges backward-reachable from $e'$. 
If $e_{\text{min}}$ is not incident to $v_0$, then $e_{\text{min}}$ lacks its previous edges, rendering it an ex-pendant edge—a type of step-pendant edge—which contradicts the assumption that the graph is free of such edges.
Thus, $e_{\text{min}}$ is incident to $v_0$. The existence of the path $v_0 \rightarrow e'$ implies the existence of a splitting edge $(x,y)$ before $e'$ due to $u \neq u'$. Since $\timeOf(x) < \timeOf(u)$, this splitting edge must exist strictly before $e$ on $W$.
\end{proof}
\end{sublemma}

\begin{corollary}\label{cor:no_merging_edge_with_first_splitting_edge}
Given a subgraph $G$ of a grid-aligned footmark graph with a unique initial node $v_0$, let $W$ be a computation walk and $e=(u,v)$ be the first splitting edge on $W$.
If $G$ has no step-pendant edges, then there is no merging edge $e'$ with the first splitting edge $e$ on $W$.
\begin{proof}
Suppose there exists a merging edge $e'=(u', v)$ with $e=(u, v)$, for the sake of contradiction.
If $u=u'$, then $e=e'$, which is a violation of the merging edge condition. Thus, $u \neq u'$, and there must be a splitting edge before $e$ by \cref{lem:exisitence_of_prior_splittng_edge}, which contradicts the fact that $e$ is the first splitting edge.
\end{proof}
\end{corollary}

\subsection{Absence of Infeasible First Splitting Edge}
\begin{sublemma}[Absence of Infeasible First Splitting Edge] \label{lem:no_infeasible_first_splitting_edge}
Given a subgraph $G$ of a grid-aligned footmark graph with a unique initial node $v_0$, let $e$ be the first splitting edge encountered along any valid computation walk $W$.
If $G$ contains no step-pendant edges, then any splitting edge $e'$ co-originating with $e$ cannot exist in $G$ unless both $e$ and $e'$ are floor edges.
\end{sublemma}

\begin{proof}
If only one of $e=(u, v)$ or $e'=(u, v')$ is a floor edge, it constitutes a direct violation of the grid-aligned property, as $\tier(v) \neq \tier(v')$ while both belong to $\Outgoing(u)$.

Let $W$ be a valid computation walk containing the first splitting edge $e=(u, v)$, and let $e'=(u, v')$ be a distinct edge ($e \neq e'$) branching from $u$. 

We assume, for contradiction, that $e' \in E(G)$.
Since $G$ is free of step-pendant edges, every edge must satisfy historical continuity. 
For the non-floor edge $e'$, there must exist an index-precedent edge $f'=(x, y) \in \IPrec(e')$; otherwise, $e'$ becomes a step-pendant edge.
Note that if $\IPrec(e') = \IPrec(e)$, then $e'$ must be identical to $e$ ($e'=e$).
This follows from the deterministic transition property of the DTM: given the same configuration $(\text{state}, \text{symbol})$ at the end of the precedent edges, the transition function dictates a unique next edge. 
Thus, since we assumed $e \neq e'$, it must be that $\IPrec(e') \neq \IPrec(e)$.
Let $f=\ipred_W(e) \in \IPrec(e)$ and $f' \in \IPrec(e')$ such that $f' \neq f$. By the definition of index-predecessor, $f$ is located before $e$ on $W$.
By \cref{lem:exisitence_of_prior_splittng_edge}, there exists a splitting edge before $f$, which is located strictly before $e$. This contradicts the assumption that $e$ is the first splitting edge.
Therefore, no such non-floor splitting edge $e'$ can exist in $G$, completing the proof.
\end{proof}

\subsection{Index-Precedents as Cover Edges Following Edge Removal}
\begin{sublemma} \label{index-succedent_from_cover_edge}
Let $G' = G + f$ be a directed acyclic footmarks graph containing no step-pendant edges. If $f$ is an index-succedent of $f'$, then $f'$ is a cover edge of $G$.
\end{sublemma}

\begin{proof}
Let $v' = \term(f')$, $v = \init(f)$, and $E_b = \Prev_G(f)$. 

\begin{enumerate}
    \item \textbf{Case 1 ($v = v'$):} 
    If $v = v'$, then $v$ is a folding node by definition. This implies $f' \in E_b$, satisfying the lemma as $f'$ is in the predecessor set of $f$.

    \item \textbf{Case 2 ($v \neq v'$):}
    By the definition of index-succedent, there exists an \emph{ISucc-folding node chain} $(v_0, v_1, \dots, v_n)$ such that $v_0 = v$, $v_n = v'$, where each $v_i$ ($0 < i < n$) is a folding node. 
    Due to the DTM property, this implies an \emph{IPrec-folding node chain} $(v_0, v_1, \dots, v_n)$ where $v_i \in \IPrec(v_{i+1})$. This structural chain ensures that $f'$ is weakly ceiling-adjacent to some edge $f_b \in E_b$.

    To establish $f'$ as a cover edge, consider the reachability toward the final edge set $E_f$. Since $G'$ is a directed acyclic graph, any path starting from $f'$ must terminate. 
     If there exists a path from $f'$ to $f_b$, the ceiling-adjacency condition is satisfied. If such a path does not exist, the directed acyclic nature of $G'$ necessitates that the path from $f'$ must terminate at some sink node. 
    Any such terminal edge in $G'$ that is not in $E_f$ must, by definition, be an ex-pendant edge. However, the absence of step-pendant edges in $G'$ contradicts this. 
   Therefore, $f'$ must possess a path to a final edge in $E_f$, confirming that $f'$ is a cover edge of $G$.
\end{enumerate}
\end{proof}

\subsection{Verification Walk Related Proofs}

\begin{lemma}[Correctness of \FindFirstSplittingEdgeOrFinalEdge{}] \label{lem:correctness_of_find_first_splitting_edge}
Let $G$ be a computation graph containing at least two distinct computation walks, each of which is either computing-targeted or computing-futile to the verification target edge $e_t$.
Then, \FindFirstSplittingEdgeOrFinalEdge($W$) returns the first splitting edge $e$ on $W$ that is not shared with every other computation walk in $G$. If no such splitting edge exists, it returns the final edge of $W$.
\end{lemma}
\begin{proof}The correctness of \FindFirstSplittingEdgeOrFinalEdge{} follows directly from the structural properties of computation walks. 
By \cref{rem:equivalence_of_computation_walk_path}, every computation walk $W$ is strictly monotonic in its tier attribute and thus forms a simple path. 
This ensures that the while loop starting at line \ref{alg_line:find_first_splitting_edge:while_start} performs a finite, linear traversal of $W$ without re-visiting any edge.The algorithm systematically evaluates each edge $e \in W$ in sequence. 
If a splitting edge is encountered during this traversal, the algorithm immediately terminates and returns that edge. If the traversal completes without satisfying the splitting condition, the loop terminates only when $e$ reaches the final edge of $W$, which is then returned by default.
 Since $W$ is a simple path with a well-defined terminal edge, the algorithm is guaranteed to return either the first splitting edge or the final edge of the walk, completing the proof.
 \end{proof}

\begin{sublemma}\label{lem:floor_extended_edge_for_nested_futile_walk}
Given a $e_t$-augmented footmarks $G_U$ and feasible graph $G$, which is a subgraph of $G_U$, if an extendable computing-futile edge is not a floor edge,
then the corresponding computing-futile walk to $e_t$ cannot be a nested computing-futile walk.
Conversely, any computing-futile walk whose extendable computing-futile edge is a floor edge is a nested computing-futile walk in $G$.
\begin{proof}
\begin{figure}
	\centering
	\includegraphics[width=0.5\columnwidth]{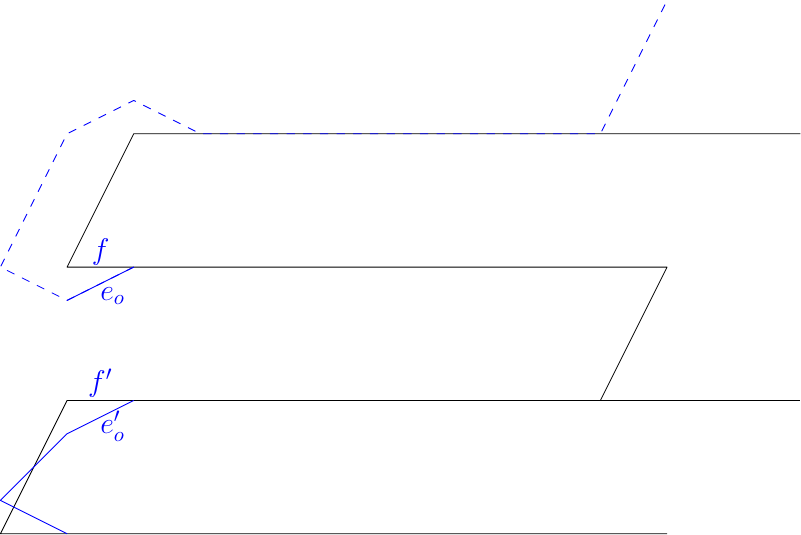}
	\caption{Non-floor extended computing-futile edge}
	\label{fig:non_floor_extended_futile_edge}
	\Description{A walk that is subsumed within a computing-futile walk, branching from a computing-targeted trajectory but terminating prematurely within the graph's observable boundary. A dotted line extends from its terminal vertex, indicating a potential continuation that remains outside the feasible structure and leads to a structural dead end.}
\end{figure}

Let $e_o$ be a non-floor extendable computing-futile edge.

First, suppose that $e_o$ is an extendable computing-futile edge of a nested computing-futile walk
$W$. Then there exists a computing-futile walk $W'$ such that $W$ is a subwalk of $W'$.
Since $e_o$ is not a floor edge, it has an index-predecessor
$\vdown{e_o} = \ipred_W(e_o)$, and there exists a splitting edge $f$ of $e_o$ such that $e_o$
appears in $W'$ but not in $W$.
Consequently, there also exists an index-predecessor
$\vdown{f} = \ipred_{W'}(f)$.

If $\vdown{f} = \vdown{e_o}$, it follows that $f = e_o$ (due to the determinism of the transition), 
which contradicts the assumption that $e_o$ is an extendable computing-futile edge (as $e_o \notin E(G)$ 
but $f$ would imply a valid transition already in $G$). 
If $\vdown{e_o} \neq \vdown{f}$, this contradicts the premise that $W$ is a subwalk of $W'$ sharing the same computation trace up to that point. 
Hence, a computing-futile walk whose extendable computing-futile edge is not a floor edge cannot be a nested computing-futile walk.

If $\vdown{e_o} \neq \vdown{f}$, this contradicts the premise that $W$ is a subwalk of $W'$ sharing a common prefix up to that point.
Hence, a computing-futile walk whose extendable computing-futile edge is not a floor edge cannot
be a nested computing-futile walk.

Second, consider a computing-futile walk $W$ whose extendable computing-futile edge $e_o$ is a floor edge.
Then there must exist another floor edge $f$, splitting from $e_o$, that belongs
to a computing-targeted walk or computing-futile walk $W'$ in the feasible graph $G$; otherwise, the
edge immediately preceding $e_o$ would be a final edge of a computing walk,
preventing $e_o$ from being extendable.
It follows that $W$ is a subwalk of $W'$, and therefore $W$ is a nested computing-futile walk by definition.
\end{proof}
\end{sublemma}

\begin{sublemma}[Correctness of \ExtendFutileWalks{}] \label{lem:add_final_edges_correctness}
Let $G_U$ be $e_t$-augmented footmarks of computation walks $\mathcal{W}$.
Let $G_{\mathrm{in}}$ be the input feasible graph, subgraph of $G_U$, with respect to a set of designated final edges $e_t$, and let $G_{\mathrm{out}}$ be the output graph produced by \ExtendFutileWalks{} in \cref{alg:pruning_an_edge_of_walk} given inputs $G_{\mathrm{in}}$ and $e_t$.  
Let $E_o$ denote the set of edges added by \ExtendFutileWalks{} (see \cref{def:extended_futile_edge}).  
Then, except for nested computing-futile walks, every computing-futile walk has at least one extended computing-futile edge in $G_{\mathrm{out}}$, and each such edge belongs to $E_o$ for verification target edge $e_t$.
\end{sublemma}
\begin{proof}
Suppose, for contradiction, that there exists a computing-futile walk $W \subseteq W' \in \mathcal{W}$ that is not nested in $G_{\mathrm{in}}$,
 yet no edge of $W'$ is added to $E_o$ by \ExtendFutileWalks{}.
 Let $e = (u,v)$ be the first edge of $W' \setminus E(G_{\mathrm{in}})$ such that $u \in V(G_{\mathrm{in}})$.
  By \cref{lem:floor_extended_edge_for_nested_futile_walk}, since $W$ is not a nested computing-futile walk,
  its extendable edge $e$ cannot be a floor edge. Thus, we must have $\text{tier}(v) > 0$ by \cref{lem:floor_edge_condition}.
  Since $W'$ is a valid computation walk in $G_U$, any non-floor edge $e=(u,v)$ must possess an index-precedent, thus $\IPrec_{G_U}(v) \neq \emptyset$. 
  
  Consequently, the edge $e$ satisfies all conditions of the \texttt{if} statement in \ExtendFutileWalks{}:
  \begin{enumerate}
  	\item $e \notin E(G_{\mathrm{in}})$ and $u$ is not a final node;
  	\item $u$ is incident to $V(G_{\mathrm{in}})$;
  	\item $\text{tier}(v) > 0$ and $\IPrec_{G_U}(v) \neq \emptyset$.
  \end{enumerate}
  Thus, $e$ must have been added to $G_{\mathrm{out}}$ and included in $E_o$, which contradicts the assumption that $W$ was not extended. Therefore, every non-nested computing-futile walk has its extendable edge included in $E_o$.
\end{proof}
 
\section{Detailed Algorithm and Time Complexity Analysis}
\subsection{Feasible Graph Related Anaylysis}
\begin{lemma}[Time Complexity of \SweepEdges{} ]
\label{lemma:sweepedges_complexity_final}

Let $G$ be a computation graph of width $w$ (i.e., the number of distinct indices) and height $h$ (i.e., the maximum number of vertices per index layer), and assume that the number of edges per index is $\bigO(h^2)$. Then $G$ contains at most $m = \bigO(wh^2)$ edges.

Then, the worst-case time complexity of the algorithm \SweepEdges{} is:
\[
\bigO\big(wh^2(h\log h + \log w \big)
\]

\begin{proof}

\textbf{StepUpEdges}:  
First, the floor edges of the edge slice are extracted to initialize the queue $Q$. As shown in \cref{sublem:floor_edge_extraction}, this requires $\bigO(h)$ time, as the number of nodes per index in a deterministic transition is bounded by $h$.
The total number of edges processed inside the \texttt{while} loop with queue $Q$ is $\bigO(h^2)$, as visited edges are not revisited. Within each iteration of the loop for a dequeued edge $e$, we perform the following operations:

\begin{itemize}

    \item \textbf{Index-Succedent Retrieval}: We call $\ISucc_G(e)$ to obtain a set $S$ of index-successor edges. According to \cref{sublem:get_edges_complexity}, this operation takes $\bigO(h \log h)$ time and returns at most $\bigO(h)$ edges.
    \item \textbf{Set Minus Operations ($S \setminus E_v$)}: For each of the $\bigO(h)$ retrieved edges, we perform a membership check against $E_v$ and filter only those not in $E_v$. Since $|E_v| = \bigO(h^2)$, each such set operation (check) takes $\bigO(\log h^2) = \bigO(\log h)$ time in an ordered set. Thus, the total cost for processing these $\bigO(h)$ successors is $\bigO(h \log h)$.
    \item \textbf{Index-Adjacency Check}: The condition for index-adjacency to the edge slice $Hs$ (considering boundary sets $V_0$ and $E_f$) is verified in $\bigO(h + \log w)$ time by \cref{sublemma:time_is-index-adjacent}.
    \item \textbf{Addition to Result Set and Visited Set}: Adding the confirmed edge $e$ to the result set $I$ or updating the visited set $E_v$ incurs an overhead of $\bigO(\log h^2) = \bigO(\log h)$ per insertion.
\end{itemize}

Therefore, the total cost is $\bigO(h^2(h \log h + \log w)) = \bigO(h^3 \log h + h^2 \log w)$, as the dominant operations are the set minus operations and the index-adjacency check.

\textbf{StepDownEdges}:  
First, initial set intersection ($C \cap I$) is performed to initialize $Q$, using the global ceiling set $C$ and the index-adjacent set $I$. Since $|I| = \bigO(h^2)$ and $|C| = \bigO(wh)$, performing this intersection via membership checks against $C$ (stored as an ordered set) takes $\bigO(h^2 \log(wh))$ time. The resulting set $C \cap I$ contains $\bigO(h^2)$ edges. Similar to \StepUpEdges{}, the total number of edges processed inside the \texttt{while} loop is $\bigO(h^2)$, as the visited set $E_v$ ensures each edge in the slice is dequeued at most once. Within each iteration for a dequeued edge $e$, the following operations are performed:
 
\begin{itemize}
    \item \textbf{Index-Precedent Retrieval}: We call $\IPrec_G(e)$ to obtain a set $P$ of index-precedent edges. According to \cref{sublem:get_edges_complexity}, this operation takes $\bigO(h \log h)$ time and returns at most $\bigO(h)$ edges. 
    \item \textbf{Set Minus Operations ($P \setminus E_v$)}: For each of the $\bigO(h)$ retrieved edges, we perform a membership check against $E_v$ and filter only those not in $E_v$. Since $|E_v| = \bigO(h^2)$, each such set operation (check) takes $\bigO(\log h^2) = \bigO(\log h)$ time in an ordered set. Thus, the total cost for processing these $\bigO(h)$ predecessors is $\bigO(h \log h)$.
    \item \textbf{Set Updates (Visited and Result Sets)}: Adding the edge $e$ to the result set $I'$ or updating the visited set $E_v$ incurs an overhead of $\bigO(\log h^2) = \bigO(\log h)$ per insertion.
    \item \textbf{Adjacency List Insertion}: Each edge in $I'$ is added to the adjacency structure of the corresponding edge slice $H$. This insertion or union operation takes at most $\bigO(h)$ time per edge to maintain the graph structure.
\end{itemize}
Therefore, the total cost is $\bigO(h^2 \log(wh) + h^2(h \log h + h)) = \bigO(h^3 \log h +h^2 \log w)$, as the dominant operations are the index-precedent retrieval and the associated set operations.

\textbf{SweepEdges (Overall)}:
The algorithm iterates through all $w$ index positions of the computation graph. The total time complexity is derived by aggregating the costs of the core operations across all slices:
\begin{itemize}
    \item \textbf{Slice-wise Traversal}: For each of the $w$ index positions, \StepUpEdges{} and \StepDownEdges{} are called exactly once. As established in the previous analyses, each call incurs a worst-case complexity of $\bigO(h^3 \log h + h^2 \log w)$.
    \item \textbf{Dynamic Graph Expansion}: The insertion of edges into the adjacency structures within \StepDownEdges{} occurs $\bigO(h^2)$ times per slice, totaling $\bigO(wh^2)$ across the entire graph. This is strictly dominated by the traversal and set-operation costs.
    \item \textbf{Total Complexity}: Summing the costs over $w$ iterations, the total time complexity is:
    \[
    \bigO\left( w \cdot \left( h^3 \log h + h^2 \log w \right) \right) = \bigO(wh^3 \log h + wh^2 \log w) = \bigO (wh^2(h\log h + \log w).
    \]
\end{itemize}
This result demonstrates that the algorithm is polynomial with respect to the width and the height of a computation graph. 
\end{proof}
\end{lemma}

\subsection{Verification Walk Related Analysis}
\begin{sublemma}[Time Complexity of \ExtendFutileWalks{}]\label{lem:add_final_edges_time_complexity}
Let the input feasible graph have width $w$, height $h$, and a total of $wh^2$ edges. Assume that all relevant edge and node sets, including $E_o$, $E(G)$, and the index-precedent set $\IPrec_G(v)$, are stored as ordered sets. Then the algorithm \ExtendFutileWalks{} runs in time $\bigO(wh^3)$.

\begin{proof}
The outer loop iterates over at most $wh^2$ edges from $E(G) \setminus E_f$, each of which may potentially be re-added.  
For each edge $(u, v)$:
\begin{itemize}
\item[-] Collecting $(u,v)$ incident to any node in $V(G)$ takes at most $\bigO(wh^2)$ time and there are at most $\bigO(wh^2)$ edges to inspect.
\item[-] Verifying whether $\IPrec_G(v) \ne \emptyset$ or $tier(v) = 0$ takes at most $\bigO(h)$ time.
\item[-] Adding an edge to the graph takes $\bigO(h)$ time, as the adjacency list of each vertex has size at most $\bigO(h)$.
\item[-] Other set operations require $\bigO(\log h)$ time due to ordered set representation.
\end{itemize}
Therefore, the total time complexity is $\bigO(wh^2 \cdot h) = \bigO(wh^3)$.
\end{proof}
\end{sublemma}

\subsection{Proof of P=NP Related Analysis}
\begin{sublemma}[Time Complexity of Collecting Boundary Edges]\label{lem:time_complexity_collecting_boundary_edges}
The \CollectBoundaryEdges{} funtion runs in $T_c=\bigO(wh^3)$ time per invocation,  
and the number of edges it collects is at most $\bigO(wh^2)$ for the computation graph $G$ of width $w$ and height $h$.
\begin{proof}
The total number of edges in $G$ is bounded by $\bigO(wh^2)$.  
The \CollectBoundaryEdges{} function iterates over all nodes and their incident edges. 
Since there are at most $wh$ nodes and each node has at most $h$ incident edges,  
the total number of edges considered is $\bigO(wh^2)$.  
Moreover, for each edge, checking whether it has index-precedent edges requires at most $\bigO(h)$ time.  
Hence, the overall time complexity of \CollectBoundaryEdges{} is $T_c = \bigO(wh^3)$ per invocation.
\end{proof}
\end{sublemma}
 
\begin{corollary}[Time Complexity of Extending Edges by Verification]\label{lem:time_complexity_extending_edges_by_verification}
Let $G$ be a dynamic complete computation graph with width $w$ and height $h$,  
and let $T_v$ denote the time complexity of the \VerifyExistenceOfWalk{} procedure.  

Then, the total time complexity of \ExtendByVerifiableEdges{} per call in \cref{alg:compute_footmarks} is bounded by
\[
\bigO\big(|E(G)| \cdot T_v\big) = \bigO(wh^2 \cdot T_v).
\]
\begin{proof}
The procedure \ExtendByVerifiableEdges{} invokes \VerifyExistenceOfWalk{} on edges in the boundary set $Q$.  
Each edge is processed at most once: once verified, it is added to $E_v$ and removed from $Q$ permanently.  
By \cref{lem:time_complexity_collecting_boundary_edges}, the total number of edges in $G$ is at most $\bigO(wh^2)$.  

Therefore, the total number of calls to \VerifyExistenceOfWalk{} is $\bigO(|E(G)|) = \bigO(wh^2)$,  
yielding the stated time complexity.
\end{proof}
\end{corollary}

\section{Computation Graph Implementation} \label{sec:appendix_computation_graph}

\begin{algorithm}[!ht]
\caption{Computation Node } \label{alg:computain_node}
\AlgDescription{This class represents a computation node}
\SetKwFunction{ComputationNode}{ComputationNode}
\Class{\ComputationNode}  {
\State{\textbf{static} \Field $delta$: the class field for transition function common to all nodes}
\StateC{\Field $index$ : the index of this vertex}\Comment{$\indexOf(v)$}
\StateC{\Field $tier$ : the tier of this  vertex}\Comment{$\tier(v)$} 
\StateC{\Field $state$ : the symbol of this  vertex}\Comment{$\state(v)$} 
\StateC{\Field $symbol$ : the symbols of this vertex}\Comment{$\symbol(v)$}
\StateC{\Field $last\_state$ : the last state of this vertex}\Comment{$\lastState(v)$}
\StateC{\Field $last\_symbol$ : the last symbol of this vertex}\Comment{$\lastSymbol(v)$}
\StateC{\Field $next\_index$ : the next index of this vertex}\Comment{$\nextIndex(v)$}
\StateC{\Field $next\_state$ : the next state of this vertex}\Comment{$\nextState(v)$}
\StateC{\Field $output$ : the output of this vertex}\Comment{$\output(v)$}
\StateC{\Field $dir$: the output of this vertex}\Comment{$\dir(v)$}
\SetKwFunction{ClassInitialize}{ClassInitialize}
\Function{\ClassInitialize{$\delta$}} {
	\State{Set $\delta \gets \delta$}
}
\SetKwFunction{Initialize}{Initialize}
\Function{\Initialize{$i, t, q, s$}} {
	\State{Set $(index,tier,state,symbol) \gets (i,t,q,s)$}
	\State{Set $(q',s',d) \gets \delta(q,s)$}
	\State{Set $(dir, next\_index) \gets (d, d+index)$}
	\State{Set $(next\_state,output) \gets (q',s')$}
}
}

\end{algorithm}

\begin{algorithm}[H]
\caption{Transition Case} \label{alg:transition_case}
\AlgDescription{This class represents a transition case containing computation nodes}
\SetKwFunction{TransitionCase}{TransitionCase}
\Class{\TransitionCase \textbf{extends} \textsf{Set}}{
\State {\textbf{static} \Field $delta$: the class filed for transition function common to all nodes }
\StateC {\Field $index$ : the index of this vertex}\Comment{$index(v)$}
\StateC {\Field $tier$ : the tier of this  vertex}\Comment{$tier(v)$} 
\StateC {\Field $state$ : the symbol of this  vertex}\Comment{$state(v)$} 
\StateC {\Field $symbol$ : the symbols of this vertex}\Comment{$symbol(v)$}
\StateC {\Field $next\_index$ : the next index of this vertex}\Comment{$next\_index(v)$}
\StateC {\Field $next\_state$ : the next index of this vertex}\Comment{$next\_state(v)$}
\StateC {\Field $output$ : the output of this vertex}\Comment{$output(v)$}
\StateC {\Field $dir$: the otput of this vertex}\Comment{$dir(v)$}
\tcp{This contains computation nodes $v$ where $\lastState(v) \in Q$ and $\lastSymbol(v) \in \Gamma$, if tier is $0$, then it contains the unique node}
\Function{\ClassInitialize{$\delta$}} {
	\State{Set $\delta \gets \delta$}
}
\Function{\Initialize{$i, t, q, s$}} {
	\State{Set $(index,tier,state,symbol) \gets (i,t,q,s)$}
	\State{Set $(d, q',s') \gets \delta(q,s)$}
	\State{Set $(dir, next\_index) \gets (d, d+index)$}
	\State{Set $(next\_state,output) \gets (q',s')$}
}
}
\end{algorithm}

\begin{algorithm}[H]
\caption{Adjacent Node Structure for Representing Incident Edges} \label{alg:computation_edge}
\AlgDescription{This class represents the adjacency relations of a node in the Dynamic Computation Graph, partitioned by index direction and incidence.}
\SetKwFunction{AdjacencyList}{AdjacencyList}
\Class{\AdjacencyList} {
    \State{\Field \texttt{v}: The computation node to which edges are incident.}
    \State{\Field \texttt{left\_incoming}: List of nodes $u$ such that there exists an incoming edge $e=(u,v)$ with $index(e)=index(v)-1$}
    \State{\Field \texttt{left\_outgoing}: List of nodes $w$ such that there exists an outgoing edge $e=(v,w)$ with $index(e)=index(v)-1$}
    \State{\Field \texttt{right\_incoming}: List of nodes $u$ such that there exists an incoming edge $e=(u,v)$ with $index(e)=index(v)$}
    \State{\Field \texttt{right\_outgoing}: List of nodes $w$ such that there exists an outgoing edge $e=(v,w)$ with $index(e)=index(v)$}
}
\end{algorithm}

\begin{algorithm}[H]
\caption{Edge Slice Structure and Floor Edge Extraction} \label{alg:edge_slice_and_floor_edges}
\AlgDescription{This class manages a collection of edges with the same index, stored as an adjacency map for efficient access.}
\SetKwFunction{EdgeSlice}{EdgeSlice}
\Class{\EdgeSlice} {
    \State{\Field \texttt{index}: The common edge index of all contained edges.}
    \State{\Field \texttt{edges}: A hash map where keys are computation nodes $v$, and values are \texttt{AdjacencyList} structures containing nodes incident to $v$.}
    
    \SetKwFunction{GetFloorEdges}{GetFloorEdges}
    \Function{\GetFloorEdges{}} {
        \State{$E_b \gets$ an empty list of edges}
        \ForAll(\tcc*[f]{$v$ is a node and $L$ is its \texttt{AdjacencyList}}){$(v, L) \in edges$} {
            \If{$tier(v) = 0$} { 
                \State{Add $(u, v)$ to $E_b$ for each $u \in L.right\_incoming$}
                \State{Add $(u, v)$ to $E_b$ for each $u \in L.left\_incoming$}
            }
        }
        \State{\Return $E_b$}
    }
}
\end{algorithm}

\begin{sublemma}\label{sublem:floor_edge_extraction}
In an \textsc{EdgeSlice} structure, the function \GetFloorEdges{} returns all floor edges in $\bigO(h)$ worst-case time, where $h$ is the maximum tier of the incident nodes.
\end{sublemma} 
\begin{proof}
An \textsc{EdgeSlice} consists of edges $(u, v)$ sharing a common edge index. By the definition of the computation graph $G$, the number of nodes $v$ contained within a single index-specific slice is bounded by the maximum degree of the nodes, which is $\bigO(h)$. 
The function iterates over the hash map of nodes in the slice to identify floor nodes. Since the total number of nodes in a slice is $\bigO(h)$, investigating each node requires $\bigO(h)$ time. Although the total number of edges in a slice could theoretically reach $\bigO(h^2)$, the number of nodes $v$ such that $\text{tier}(v) = 0$ is bounded by a constant, as there are only a fixed number of possible symbol and state configurations at that tier for a given index. Consequently, the subset of edges incident to these constant number of \text{tier}-$0$ nodes is strictly limited by their respective degrees. Given the deterministic transitions of the verifier, the number of such "floor" entries for a specific index remains $\bigO(h)$. Thus, the traversal and collection process is completed in $\bigO(h)$ time.
\end{proof}

\begin{algorithm}[H]
\caption{Dynamic Computation Graph Structure} \label{alg:dynamic_computation_graph}
\AlgDescription{A graph structure that manages computation nodes and edge slices for local consistency verification.}
\SetKwFunction{IsFoldingNode}{IsFoldingNode}
\SetKwFunction{IsAdjacent}{IsAdjacent}
\SetKwFunction{IsMergingEdge}{IsMergingEdge}
\SetKwFunction{DynamicComputationGraph}{DynamicComputationGraph}
\Class{\DynamicComputationGraph} {
    \State{\Field $V$: Multi-dimensional dynamic array of computation nodes $v_{i,t}^{q,s}$.}
    \State{\Field \texttt{edgeSlices}: Dynamic array of \texttt{EdgeSlice} objects, indexed by edge index $i$.}

    \Function{\IsAdjacent{$u,v$}} {
        \State{$i \gets \min(\indexOf(u), \indexOf(v))$}
        \StateC{$L \gets \texttt{edgeSlices}[i].\text{edges}[u]$ \Comment{$L$ is the \texttt{AdjacencyList} for node $u$}}
        \State{\Return $v$ is in ($L.left\_incoming$ \textbf{or} $L.left\_outgoing$ \textbf{or} $L.right\_incoming$ \textbf{or} $L.right\_outgoing$)}
    }

    \Function{\IsFoldingNode{$u$}} {
        \State{$i \gets \min(\indexOf(u), \nextIndex(u))$}
        \State{$L \gets \texttt{edgeSlices}[i].\text{edges}[u]$}
        \If{($L.left\_incoming$ is not empty \textbf{and} $L.left\_outgoing$ is not empty)} {
            \State{\Return \True}
        }
        \ElseIf{($L.right\_incoming$ is not empty \textbf{and} $L.right\_outgoing$ is not empty)} {
            \State{\Return \True}
        }
        \State{\Return \False} 
    }

    \Function{\IsMergingEdge{$e=(u,v)$}} {
        \State{$i \gets \indexOf(e)$}
        \If{$e \notin \text{this graph}$} { \Return \False }
        \StateC{$L_v \gets \texttt{edgeSlices}[i].\text{edges}[v]$ \Comment{\texttt{AdjacencyList} of the terminal node $v$}}
        
        \If{$\text{index}(u) < \text{index}(v)$} {
            \State{\Return $|L_v.left\_incoming| > 1$}
        } \Else {
            \State{\Return $|L_v.right\_incoming| > 1$}
        }
    }
}
\end{algorithm}
\begin{lemma} \label{lem:edge_property_time}
Let $N$ be the number of nodes within a single edge slice in the dynamic computation graph $G$. In a map-based implementation of adjacency lists, the operations \IsAdjacent, \IsFoldingNode, and \IsMergingEdge as defined in \cref{alg:dynamic_computation_graph} each execute in $O(h)$ time in the worst case, where $h$ denotes the maximum tier of the incident nodes.
\end{lemma}

\begin{proof}
Each operation involves querying the adjacency list for a specific node. Since these lists are implemented via hash maps, retrieving or iterating through incident edges requires $O(h)$ time in the worst-case scenario of hash collisions or map traversals.
\end{proof}

\begin{remark}
By \cref{lem:folding_node_property}, edges incident to a folding node share the same index; thus, it suffices to investigate only two of the four available adjacency lists to identify a merging edge. While the current map-based implementation may incur $O(h)$ time in the worst case, the \textbf{amortized time complexity} remains \textbf{constant}. Furthermore, these operations can be strictly implemented in $O(1)$ time by employing a fixed-index dynamic array.
\end{remark}

\section{Primitive Functions}
\begin{algorithm}[H]
\caption{Primitive Graph Operations} \label{alg:primitive_functions}
\AlgDescription{Basic functions used in the construction and analysis of the feasible graph.}
\SetKwFunction{IPrecedent}{IPrecedent}
\SetKwFunction{ISuccedent}{ISuccedent}
\SetKwFunction{IncomingEdges}{IncomingEdge}
\SetKwFunction{OutgoingEdges}{OutgoingEdge}
\SetKwFunction{GetPrevEdges}{GetPrevEdges}
\SetKwFunction{GetNextEdges}{GetNextEdges}
\SetKwFunction{GetIPrecedents}{GetIPrecedents}
\SetKwFunction{GetISuccedents}{GetISuccedents}
 
\Function{\IncomingEdges{$G, v$}} {
    \StateC{$i \gets \indexOf(v)$} \Comment{Combine nodes from the current and the left tape cells}
    \State{$L \gets G.E[i][v].\texttt{right\_incoming} + G.E[i-1][v].\texttt{left\_incoming}$}
    \StateC{\Return $\{ (w, v) \mid w \in L \}$} \Comment{Return as a set of edges}
}

\Function{\OutgoingEdges{$G, v$}} {
    \StateC{$i \gets \indexOf(v)$} \Comment{Combine nodes going to the current and the left tape cells}
    \State{$L \gets G.E[i][v].\texttt{right\_outgoing} + G.E[i-1][v].\texttt{left\_outgoing}$}
    \StateC{\Return $\{ (v, w) \mid w \in L \}$} \Comment{Return as a set of edges}
}

\Function{\GetPrevEdges{$G, e$}} {
    \State{$(u, v) \gets e$}
    \StateC{\Return \IncomingEdges($G, u$)}\Comment{Edges entering the source of $e$}
}

\Function{\GetNextEdges{$G, e$}} {
    \State{$(u, v) \gets e$}
    \StateC{\Return \OutgoingEdges}($G, v$) \Comment{Edges leaving the target of $e$}
}

\Function{\IPrecedent{G,v}} {
	\State{Let $(i, t, q, s) \gets (\indexOf(v), \tier(v)-1, \lastState(v), \lastSymbol(v)) $}
	\State{\Return V[i][t][q][s]}
}
\Function{\ISuccedent{G,v}} {
\State{Let $(i, t) \gets (\indexOf(v), \tier(v)+1)$}
\State{Let $s \gets \output(v)$}
\State{Let $Q \gets$ Set of possible states of computation graph $G$}
\State{Let $S \gets$ an empty set of computation nodes}
\ForAll(\tcc*[f]{Constant size}){$q \in Q$} {
	\ForAll{$w$ in $V[i][t][q][s]$} {
		\If{$(\lastState(w)), \lastSymbol(w))=(\state(v),\symbol(v))$} {
			\State{Add $w$ to $S$}
                }
        }
}
\State{\Return $S$}
}
\end{algorithm}

\begin{sublemma}\label{sublem:primitive_complexity}
The auxiliary functions for graph traversal, specifically \IncomingEdges{} and \OutgoingEdges{}, operate in $\bigO(h \log h)$ time and return $\bigO(h)$ elements. The functions \IPrecedent{} and \ISuccedent{} operate in $\bigO(1)$ time and return $\bigO(1)$ elements, where $h$ is the maximum tier of incident nodes.
\end{sublemma}

\begin{proof}
The time complexity for each function is derived as follows: 
\begin{itemize}
    \item \textbf{\IncomingEdges{} and \OutgoingEdges{}}: These functions aggregate nodes from adjacency lists representing transitions across tape indices $i$ and $i-1$. Since the number of incident edges per node is bounded by $h$, these functions run in $\bigO(h \log h)$ due to the ordered set conversion.
    \item \textbf{\GetPrevEdges{} and \GetNextEdges{}:} As simple wrappers for the above functions, they inherit the $\bigO(h \log h)$ complexity and return $\bigO(h)$ elements.
    \item \textbf{\IPrecedent{}:} This function performs a direct lookup in a multidimensional vertex table using the node attributes. Due to the deterministic nature of the Turing machine, the returned vertex set is of constant size, leading to $\bigO(1)$ complexity.
    \item \textbf{\ISuccedent{}:} This function iterates over the finite set of states $Q$. Because $|Q|$ is a constant determined by the DTM transition function and each lookup in $V[i][t][q][s]$ yields a constant number of nodes, the nested loops perform $\bigO(1)$ operations in total. 
\end{itemize}
Consequently, all functions are computationally efficient within the structural constraints of the DTM.
\end{proof}
 
\begin{algorithm}[!ht]
\caption{Retrieving Index-Precedent and Index-Succedent Edges}
\Function{\GetIPrecedents{$G, e$}} {
    \State{$P \gets$ An empty Ordered Set }
    \State{$(u, v) \gets e$}
    \IfC{Floor edges have no precedents}{$\tier(v) = 0$}{ \Return $\emptyset$}  

    \StateC{Let $V_v \gets \emptyset$} \Comment{Visited vertex set}
    \StateC{Let $V_{ipc} \gets \emptyset$} \Comment{IPrec folding node chain}
    \State{Let $Q$ be a queue initialized with $\IPrec_G(u)$}
    \While{$Q$ is not empty} {
        \State{Dequeue $w$ from $Q$}
        \If{$w \in V_v$} {
            \State{\Continue}
        }
        \If{$w$ is a folding node}	{
			\State{Enqueue all nodes of $\IPrec_G(w)$ }
		}
		\Else {
			\State{Add $w$ to $V_{ipc}$}
		}
	}

    \ForAllC{Constant size of Transition Case}{$v' \in \IPrec_G(v)$} {
        \ForAll{edge $e' = (v', u')$ in $\Outgoing(v')$} {
            \If{$u = u'$ \textbf{or} $u' \in V_{ipc}$} {
                \State{Add $e'$ to $P$} 
            }
        }
    }
    \State{\Return $P$}
}
 
\Function{\GetISuccedents{$G, e$}} {
    \State{$S \gets$ An empty Ordered Set}
    \State{$(u, v) \gets e$}

    \StateC{Let $V_v \gets \emptyset$} \Comment{Visited vertex set}
    \StateC{Let $V_{isc} \gets \emptyset$} \Comment{ISucc folding node chain}
    \State{Let $Q$ be a queue initialized with $ISucc_G(v)$}
    \While{$Q$ is not empty} {
        \State{Dequeue $w$ from $Q$}
        \If{$w \in V_v$} {
            \State{\Continue}
        }
        \If{$w$ is a folding node}	{
			\State{Enqueue all nodes of $ISucc_G(w)$ }
		}
		\Else {
			\State{Add $w$ to $V_{isc}$}
		}
	}
    \ForAllC{Constant size of Transition Case}{$u' \in \ISucc_G(u)$} {
        \ForAll{edge $e' = (v', u')$ in $\Incoming(u')$} {
            \If{ $v = v'$ \textbf{or} $v' \in V_{isc}$} {
                \State{Add $e'$ to $S$} 
            }
        }
    }
    \State{\Return $S$}
}
\end{algorithm}

\begin{sublemma}\label{sublem:get_edges_complexity}
The functions \GetIPrecedents{} and \GetISuccedents{} both operate in $\bigO(h \log h)$ worst-case time and return a set containing $\bigO(h)$ edges, where $h$ is the maximum number of tiers in the computation graph.
\end{sublemma}

\begin{proof} 
We analyze the complexity of \GetIPrecedents{} based on the primitive operations defined in \cref{sublem:primitive_complexity}.

The algorithm first traverses the \emph{IPrec-folding node chain} via BFS. Since each node $w$ in the chain is visited at most once and the \IPrecedent{} operation for each node takes $\bigO(1)$ time, the traversal complexity is dominated by the set operations (e.g., maintaining the \emph{visited} set), which require $\bigO(\log h)$ per node (using ordered set). Thus, the total traversal cost is $\bigO(h \log h)$.

Following this, the algorithm iterates over the set $\IPrec_G(v)$, which contains a constant number of nodes ($\bigO(1)$). For each node $v' \in \IPrec_G(v)$, the algorithm examines up to $h$ outgoing edges, as established in \cref{sublem:primitive_complexity}, identifying $\bigO(h)$ candidate edges in total.

To prevent duplicated edges, the confirmed edges are stored in an ordered set. Employing a comparison-based data structure ensures each insertion requires $\bigO(\log h)$ time. With $\bigO(h)$ such insertions, the edge construction phase requires $\bigO(h \log h)$ time.

Similarly, for \GetISuccedents{}, both the BFS traversal and the subsequent construction of the ordered set of edges are bounded by $\bigO(h \log h)$, given that \ISuccedent{} operations and the inspection of incoming edges also adhere to the same complexity bounds.

Therefore, both functions identify $\bigO(h)$ edges within $\bigO(h \log h)$ total time, confirming their computational efficiency within the established primitive bounds.
\end{proof}

\begin{algorithm}[H]
\caption{GetWeakCeilingAdjacentEdges} \label{alg:weakly_ceiling_adjacent_edges}
\AlgDescription{Returns the set of edges ceiling adjacent to $e_0$ toward $E_f$ in computation graph $G$.}
\SetKwFunction{GetWeakCeilingAdjacentEdges}{GetWeakCeilingAdjacentEdges}
\Function{\GetWeakCeilingAdjacentEdges{$G, e_0, E_f$}} {
    \StateC{Let $C \gets \emptyset$}\Comment{Collected wealky-ceiling-adjacent edges}
    \State{Let $(u_0,v_0) \gets e_0$}
    \StateC{Let $V_v \gets \emptyset$} \Comment{Visited vertex set}
    \State{Let $Q$ be a queue initialized with $u_0$}
    \If{$e_0 \in E_f$} {
        \State{Enqueue $v_0$ to $Q$}
    }
    \While{$Q$ is not empty} {
        \State{Dequeue $u$ from $Q$}
        \If{$u \in V_v$} {
            \State{\Continue}
        }
        \State{Add $u$ to $V_v$}
        \If{$u$ is a folding node \textbf{or} $u=v_0$} {
            \State Let $P \gets \IPrec_G(u)$
            \ForAll{$v \in P$ such that $v \notin V_v$} {
                \State{ Enqueue $v$ to $Q$ }
            }
	}
        \Else {
            \State{Add all incoming edges of $u$ to $C$}
        }
    }
    \State{\Return $C$}
}
\end{algorithm}

\begin{sublemma}[Time Complexity of Weak Ceiling-Adjacent Edge Computation]
\label{sublem:time_complexity_weakly_ceiling_adjacent_edges}
Let $G$ be a computation graph of height $h$ and width $w$, and let $e_0 \in E(G)$ be
a given edge.
Then \GetWeakCeilingAdjacentEdges{} (\Cref{alg:weakly_ceiling_adjacent_edges})
terminates in time $\bigO((h^2+\log w) \log h)$.
\end{sublemma}

\begin{proof}
The algorithm performs a backward traversal over nodes $u$ of the computation graph,
starting from the initial endpoint of $e_0$ and proceeding along backward folding paths.

First, note that the membership test $e_0 \in E_f$ at the initialization step
can be implemented in time $\bigO(\log |E_f|)$ using an ordered set.
Since $|E_f| \le |E(G)| = \bigO(w h^2)$, this check costs
$\bigO(\log (w h^2))$ time.

Since the computation graph has height $h$, it contains at most $\bigO(h)$ nodes.
Each node is enqueued and dequeued at most once due to the visited node set $V_v$.
Therefore, the while-loop iterates at most $\bigO(h)$ times.

For each dequeued node $u$, the algorithm performs one of the following operations:
\begin{itemize}
    \item If $u$ is a folding node or $u = \term(e_0)$, it enumerates
    $\IPrec_G(u)$.
    The number of index-precedent nodes is bounded by a constant determined by the
    tape alphabet and the deterministic transition function, hence
    $|\IPrec_G(u)| = \bigO(1)$.
    Each membership test or insertion into $V_v$ costs $\bigO(\log h)$ time.
    
    \item Otherwise, all incoming edges of $u$ are added to the set $C$.
    Each node has at most $\bigO(h)$ incoming edges.
    For each such edge, membership testing and insertion into $C$ costs
    $\bigO(\log |C|) = \bigO(\log (h^2))$ time since each edge slice have at most $\bigO(h^2)$ edges.
\end{itemize}

Thus, each iteration of the while-loop costs
\[
\bigO(h \cdot \log h).
\]
Since there are at most $\bigO(h)$ iterations, the total running time is
\[
\bigO(h^2 \log h).
\]

Because $\bigO(\log (w h^2)) = \bigO(\log h + \log w)$, the algorithm terminates in time
\[
\bigO(\log h + \log w) + \bigO(h^2 \log h)
= \bigO\bigl((h^2 + \log w)\log h\bigr).
\]

\end{proof}

\begin{algorithm}[H]
\caption{Check Index Adjacency} \label{alg:check_index_adjacency}
\AlgDescription{Returns true if edge $f$ is index-adjacent to $Es$, final edge set $E_f$, and initial vertex set $V_0$.}
\SetKwFunction{IsIndexAdjacent}{IsIndexAdjacent}
\Function{\IsIndexAdjacent{$G, f, Es, E_f, V_0$}} {
    \State{Let $(u,v) \gets f$}
    \StateC{Let $idx \gets$ the index of edge slice $E_i$} \Comment{All edges in $E_i$ share the same edge index}
    
    \If{edge list of $u$ in $E_i$ is not empty \textbf{or} edge list of $v$ in $E_i$ is not empty} {
        \StateC{\Return \True} \Comment{Condition 1: adjacency to edge in $E_i$}
    }

    \If{$idx = \indexOf(f) - \dir(f)$ \textbf{and} ($G$.IsFoldingNode($u$)  \textbf{or} $u \in V_0$)} {
        \StateC{\Return \True} \Comment{Condition 2: outgoing folding edge or initial node}
    }

    \If{$idx = \indexOf(f) + \dir(f)$ \textbf{and} ($G$.IsFoldingNode($v$) \textbf{or} $f \in E_f$} { 
        \StateC{\Return \True} \Comment{Condition 3: incoming folding edge or final edge }
    }
    \State{\Return \False}
}
\end{algorithm}

\begin{sublemma}
\label{sublemma:time_is-index-adjacent}
The function \IsIndexAdjacent() runs in $\bigO(h+\log w)$ time, where $h$
denotes the height of the graph and $w$ its width.

\begin{proof}
The function evaluates the following four conditions:

\begin{itemize}
    \item The first condition iterates over the edge lists of $u$ and $v$ in $E_i$
    and compares each edge with $f$ whose index is $idx$.
    Each edge slice contains at most $4h$ edge lists, and each such list
    contains at most $h$ edges.
    Since the number of edge lists per slice is bounded by a constant factor
    of $h$, the total cost of this step is $\bigO(h)$.

    \item The second and third conditions verify whether the source or target of $f$ satisfies the folding-node property
     alongside the required index relation. These involve linear-time traversals of the incident adjacency lists, 
     thus taking $\bigO(h)$ time each (see \cref{lem:edge_property_time}).

    \item One condition verifies whether $u \in V_0$ and whether
    $idx = \indexOf(f) - \dir(f)$. Assuming constant-time access and membership
    testing, this step also runs in $\bigO(1)$ time.

    \item The final condition checks whether $(u,v) \in E_f$ and whether
    $idx = \indexOf(f) + \dir(f)$. This requires at most $\bigO(\log(hw))$ time.
\end{itemize}

Overall, the total running time is $\bigO(h + \log w)$, dominated by the
$\bigO(h)$ cost of the first condition.
\end{proof}
\end{sublemma}

\end{document}